\definecolor{Gray}{gray}{0.85}
\newtheorem{lemma}{Lemma}
\newtheorem{theorem}{Theorem} 
\newtheorem{example}{Example}
\newtheorem{definition}{Definition}
\newcommand{\DLS}{{\sf DL}_{{\sf ni}}}
\title{Separating Data Examples by \\[1mm]
 Description
  Logic Concepts
  with
  Restricted Signatures}
\author{Jean Christoph Jung$^{1}$
	\and
	Carsten Lutz$^{2}$ \and
	Hadrien Pulcini$^3$
	\and
	Frank Wolter$^3$
	\affiliations
$^1$Institute of Computer Science, University of Hildesheim, Germany\\
$^2$Department of Computer Science, University of Bremen, Germany\\
$^3$Department of Computer Science, University of Liverpool, UK\\
\emails
jungj@uni-hildesheim.de, clu@uni-bremen.de,
\{H.Pulcini,wolter\}@liverpool.ac.uk
}
\begin{document}

\date{}

\maketitle

\begin{abstract}
We study the separation of positive and negative data examples in
terms of description logic concepts in the presence of an ontology. In contrast to previous work, we add a signature that specifies a subset of the symbols that can be used for separation, and we admit
individual names in that signature. We consider
weak and strong versions of the resulting problem that differ in how
the negative examples are treated and we distinguish between separation with and without
helper symbols. Within this framework, we compare the separating power
of different languages and investigate the complexity of deciding
separability. While weak separability is shown to be closely related
to conservative extensions, strongly separating concepts coincide with
Craig interpolants, for suitably defined encodings of the data and
ontology. This enables us to transfer known results from those fields
to separability.
Conversely, we obtain original results on separability that can be transferred backward.
For example, rather surprisingly, conservative extensions and weak
separability in $\mathcal{ALCO}$ are both \ThreeExpTime-complete.
\end{abstract}

\section{Introduction}

\newcommand\blfootnote[1]{%
	\begingroup
	\renewcommand\thefootnote{}\footnote{#1}%
	\addtocounter{footnote}{-1}%
	\endgroup
}

There are several applications that fall under the broad term of
supervised learning and seek to compute a logical expression that
separates positive from negative examples given in the form of labeled
data items in a knowledge base (KB). A prominent example is concept
learning for description logics (DLs) where inductive logic programming methods 
are applied to construct separating concepts that can then
be used, for instance, in ontology
engineering~\cite{DBLP:journals/ml/LehmannH10}.
Another example is reverse engineering of database queries (or query by example, QBE)~\cite{martins2019reverse}
which has also been studied in the presence of a DL ontology
\cite{GuJuSa-IJCAI18,DBLP:conf/gcai/Ortiz19}. A closed world
semantics is adopted for QBE in databases while an open
world semantics is required 
in the presence of ontologies; the latter is the case also in reverse
engineering of SPARQL queries~\cite{DBLP:conf/www/ArenasDK16}. Further
applications are entity
comparison in RDF graphs, where one aims to find meaningful
descriptions that separate one entity from
another~\cite{DBLP:conf/semweb/PetrovaSGH17,DBLP:conf/semweb/PetrovaKGH19}
and generating referring expressions (GRE) where the aim
is to describe a single data item by a logical expression such as a DL
concept, separating it from all other data items~\cite{DBLP:journals/coling/KrahmerD12,DBLP:conf/kr/BorgidaTW16}.

A fundamental problem common to all these applications is to decide
whether a separating formula exists at all. There are several degrees
of freedom in defining this problem. The first concerns the negative
examples: is it enough that they do not entail the separating formula
(\emph{weak separability}) or are they required to entail its negation
(\emph{strong separability})?  Another one concerns the question
whether additional helper symbols are admitted in the separating
formula (\emph{projective separability}) or not (\emph{non-projective
separability}). The emerging family of problems has recently been
investigated in \cite{DBLP:conf/ijcai/FunkJLPW19,KR}, concentrating on
the case where the separating expression is a DL concept or a formula
from a fragment of first-order logic (FO) such as the guarded fragment
(GF) and unions of conjunctive queries (UCQs).
 
In this paper, we add a signature $\Sigma$ (set of concept, role, and
individual names) that is given as an additional input and require
separating expressions to be formulated in~$\Sigma$.  This makes it
possible to `direct' separation towards expressions based on desired
features and accordingly to exclude features that are not supposed to
be used for separation. For example, consider an online book store
where a user has labeled some books with \emph{likes} (positive
examples) or \emph{dislikes} (negative examples). A ``good''
separating expression might include relevant features of books like genre
or language, but exclude information about the author's age or gender. 

The aim of this paper is to investigate the effect of adding a signature
to the framework, and in particular to compare the
separating power of different languages and determine the
computational complexity of deciding separability. 
We focus on the case in which both the knowledge base and the
separating expressions are formulated in DLs between $\mathcal{ALC}$
and its extension $\mathcal{ALCIO}$ with inverse roles and
nominals. DLs with nominals are of particular interest to
  us as separating expressions formulated in such DLs may refer to
  individual names in the signature $\Sigma$. Returning to the book
  store example, one can use the standard DL representation of
  specific authors (`Hemingway') and languages (`English') as
  individuals in separating expressions. To understand the robustness of
our results, we also discuss in how far they extend to the guarded
fragment (GF) and the two-variable fragment (FO$^{2}$) of FO.

We start with weak projective separability. We first observe that
helper symbols, which must be `fresh' in that they do not occur in the
given knowledge base, increase the ability to separate and lead to
more succinct separating expressions.
We concentrate on the case where helper symbols are
concept names because admitting individual names leads to
undecidability of the separability problem while admitting roles names
either does not make a difference (in $\mathcal{ALC}$ and
$\mathcal{ALCI}$) or makes a difference but is polynomial time
reducible to separation without role names as helper symbols
(in $\mathcal{ALCO}$ and $\mathcal{ALCIO}$).
To investigate further the relationship between
non-projective and projective weak separability, we then introduce
the extension of UCQs in which compound DL concepts are admitted in 
atoms
and show that in some important cases, non-projective weak separability in
that language coincides with projective weak separability in
the original description logic. In thise sense, helper concept names
are thus `captured' by UCQs.

We next investigate the complexity of projective weak separability
with signature for the DLs above. A fundamental observation is that,
due to the presence of the signature, the problem to decide
\emph{projective conservative extensions} at the ontology level is
polynomial time reducible to the complement of projective weak
separability. Here, `projective' refers to the fact that
conservativity is also required for expressions using fresh concept
names. Conservative extensions have been studied in detail in the
context of modular ontologies~\cite{DBLP:journals/jair/GrauHKS08,DBLP:conf/rweb/BotoevaKLRWZ16}. The
projective version is motivated by the requirement of \emph{robustness
under vocabulary
extensions} in applications with frequent changes in the
ontology~\cite{KonevLWW09}. It coincides with the non-projective
one for DLs with the Craig Interpolation property (CIP) such as $\ALC$
and $\ALCI$~\cite{JLMSW17}, but not for DLs with nominals, such as
$\mathcal{ALCO}$. 


The reduction from conservative extensions yields a \TwoExpTime lower
bound for weak projective separability in $\ALC$ and $\ALCI$
\cite{DBLP:conf/kr/GhilardiLW06,DBLP:conf/ijcai/LutzWW07}. We prove a
matching upper bound by providing a bisimulation-based
characterization of weak projective separability and then deciding the
characterization by a reduction to the emptiness problem of suitable
tree automata.


For \ALCO, we show the unexpected result that both projective
conservative extensions and projective weak separability are
\ThreeExpTime-complete. The lower bound is a substantial extension of
the \TwoExpTime-lower bound for conservative extensions in \ALC from
\cite{DBLP:conf/kr/GhilardiLW06}, and it holds for non-projective conservative
extensions and non-projective weak separability as well. The upper bound is again by an encoding into tree
automata.

We then turn to strong separability where we observe that the
projective and non-projective case coincide. We further observe that
separating expressions are identical to \emph{Craig interpolants}
between formulas that encode the KBs with the positive and negative
examples, respectively. Since FO enjoys the CIP, the existence of FO
separating formulas is equivalent to the entailment between the
encoding formulas. This entailment question is \ExpTime-complete
if the KBs are given in a DL between \ALC and
\ALCIO.
Moreover, any FO-theorem prover that computes interpolants can be used
to compute separating expressions~\cite{DBLP:conf/aplas/HoderHKV12}.

Interestingly, while DL concepts alone have a strictly weaker
separating power than FO, a version of the aforementioned extension of UCQs with DL
concepts is expressive enough to capture the separating power of FO.
Regarding the decision problem, we use recent results on the
complexity of Craig interpolant existence~\cite{AJMOW-AAAI21} to show
that for any DL between $\mathcal{ALC}$ and $\mathcal{ALCIO}$, strong
separability is \TwoExpTime-complete if one separates using
concepts from the same DL.
  
We finally consider weak and strong inseparability in the case where
both the ontology and the separating formulas are in GF or
FO$^{2}$. For GF, we prove that weak (projective) separability with
signature is undecidable which is in contrast to decidability of weak
separability when no signature restriction can be imposed on the
separating formula~\cite{KR}. For FO$^2$, already weak separability
without signature is undecidable~\cite{KR}. In the case of strong
separability, the link between Craig interpolants and strongly
separating formulas generalizes to both GF and FO$^{2}$. Both
languages fail to have the
CIP~\cite{DBLP:journals/sLogica/HooglandM02,comer1969,Pigozzi71}, but
recent results on the existence of Craig interpolants can be used to
prove that strong separability in GF is 3\ExpTime-complete and in
FO$^{2}$ is in {\sc N}\TwoExpTime and
\TwoExpTime-hard~\cite{jung2020living}.

\section{Preliminaries}
\label{sec:prelim}
Let \NC, \NR, and \NI be countably infinite sets of \emph{concept},
\emph{role}, and \emph{individual names}. A \emph{role} is a role name
$r$ or an \emph{inverse
	role} $r^-$, with $r$ a role name and $(r^-)^- = r$. A \emph{nominal} takes the
form $\{c\}$ with $c\in \NI$. An
\emph{$\mathcal{ALCIO}$-concept} is defined according to the syntax rule
$$C, D ::= \top \mid \bot \mid A \mid \{c\} \mid \neg C \mid C \sqcap D \mid \exists R.C $$
where $A$ ranges over concept names, $c$ over individual names, and $R$ over roles. We use
$C \sqcup D$ as abbreviation for $\neg (\neg C \sqcap \neg D)$,
$\forall R.C$ for $\neg \exists R.(\neg C)$, and $C \rightarrow D$ for $\neg C \sqcup D$. An \emph{$\ALCI$-concept} is
an $\mathcal{ALCIO}$-concept without nominals, an \emph{$\mathcal{ALCO}$-concept} 
an $\mathcal{ALCIO}$-concept without inverse roles, and an
\emph{$\ALC$-concept} is an $\mathcal{ALCO}$-concept without nominals.
Let $\DLS$ denote the set of languages just introduced, where
$\mn{ni}$
stands
for nominals and inverses. For $\Lmc \in \DLS$, an \emph{\Lmc-ontology} is a finite set of \emph{concept inclusion
	(CIs)} $C \sqsubseteq D$ with $C$ and~$D$ \Lmc-concepts.

A \emph{database} $\Dmc$ is a finite set of \emph{facts} of the form $A(a)$ or
$r(a,b)$ where $A \in \NC$, $r \in \NR$, and $a,b\in \NI$. 
An \emph{$\Lmc$-knowledge base ($\Lmc$-KB)} takes the form $\Kmc=(\Omc,\Dmc)$, where $\Omc$ is an $\Lmc$-ontology and $\Dmc$ a database. 
We assume w.l.o.g.~that any nominal used in $\Omc$ also occurs in $\Dmc$.

A \emph{signature} $\Sigma$ is a set of concept, role, and individual names,
uniformly referred to as \emph{symbols}. $\Sigma$ is called \emph{relational} if it does
not contain individual names. We use $\text{sig}(X)$ to
denote the set of symbols used in any syntactic object $X$ such as a
concept or an ontology. For a database $\Dmc$ we denote by $\text{ind}(\Dmc)$
the set of individual names in $\Dmc$.

Description logics are interpreted in \emph{structures} 
$$
\Amf = (\text{dom}(\Amf),(A^{\Amf})_{A\in \NC}, (r^{\Amf})_{r\in \NR}, (c^{\Amf})_{c \in \NI})
$$
with $A^{\Amf} \subseteq \text{dom}(\Amf)$,
$r^{\Amf} \subseteq \text{dom}(\Amf)^{2}$, and $c^{\Amf} \in
\text{dom}(\Amf)$.
The
\emph{extension} $C^\Amf$ of $\mathcal{ALCIO}$-concepts $C$ is then defined as usual
\cite{DL-Textbook}.
For $D \subseteq \text{dom}(\Amf)$, we use $\Amf|_D$ to denote 
the restriction of $\Amf$ to $D$. 
A \emph{pointed structure}  
takes the form $\Amf,a$ with \Amf a structure and $a\in \text{dom}(\Amf)$.   
A structure \Amf \emph{satisfies} CI
$C \sqsubseteq D$ if $C^\Amf \subseteq D^\Amf$, fact $A(a)$ if
$a^{\Amf} \in A^\Amf$, and fact $r(a,b)$ if $(a^{\Amf},b^{\Amf}) \in r^\Amf$. 
\Amf is a \emph{model} of an ontology, database, or KB if it
satisfies all CIs and facts in it. A KB is
\emph{satisfiable} if it has a model, and a concept $C$ is
\emph{satisfiable w.r.t.\ a KB \Kmc} if \Kmc has a model \Amf with
$C^\Amf\neq\emptyset$.

We use standard notation for first-order logic (FO), and consider
formulas constructed using concept names as unary relation symbols,
role names as binary relation symbols, and individual names as
constants. Equality is admitted.  It is well-known
that every DL concept $C$ is equivalent to an FO-formula
$\varphi_C(x)$ with a single free variable $x$. For a KB \Kmc, an
FO-formula $\varphi(x)$ with a single free variable $x$, and a
constant $a$, we write $\Kmc\models \varphi(a)$ if $\Amf\models
\varphi(a)$ in all models \Amf of~\Kmc.
 
We associate with every structure $\Amf$ a directed graph
$G_{\Amf}^{d}=(\text{dom}(\Amf), \bigcup_{r\in \NR}r^{\Amf})$. Let
$G_{\Amf}^{u}=(\text{dom}(\Amf),E')$ be the undirected version of $G_{\Amf}^{d}$
obtained by forgetting edge directions.  We can thus apply graph
theoretic terminology to structures.  The directed graph
$G_{\Amf}^{d}$ is relevant for the DLs $\ALC$ and $\mathcal{ALCO}$
that do not support inverse roles while the undirected graph
$G_{\Amf}^{u}$ is relevant for $\ALCI$ and $\mathcal{ALCIO}$. To
simplify notation, we often prefix a property of structures with the
language for which it is relevant. For example, if
$\Lmc\in \{\ALC,\mathcal{ALCO}\}$ then we say that $\Amf$ has
\emph{finite $\Lmc$-outdegree} if $G_{\Amf}^{d}$ has and we call
$\Amf$ \emph{$\Lmc$-rooted in $a$} if every node in $\Amf$ is
reachable from $a$ in $G_{\Amf}^{d}$. For
$\Lmc\in \{\ALCI,\mathcal{ALCIO}\}$, the two notions are defined
in the same way, but based on $G_{\Amf}^{u}$ in place of 
$G_{\Amf}^{d}$.
For $\Lmc\in \{\ALCI,\mathcal{ALCIO}\}$, \Amf is an $\Lmc$-\emph{tree}
if $G_{\Amf}^{u}$ is acyclic (also excluding self loops) and there are
no multi-edges in the sense that 
$R_1^\Amf$ and $R_2^\Amf$ are disjoint for all distinct roles $R_1,R_2$. For
$\Lmc\in \{\ALC,\mathcal{ALCO}\}$, \Amf is an $\Lmc$-\emph{tree} if,
in addition, every node in $G_{\Amf}^{d}$ has at most one incoming
edge.

Let $\Lmc\in \{\ALC,\ALCI\}$. A model $\Amf$ of an $\Lmc$-KB
$\Kmc=(\Omc,\Dmc)$ is an \emph{$\Lmc$-forest model} of $\Kmc$ if
$\Amf$ with all $r(a^{\Amf},b^{\Amf})$, $a,b\in \text{ind}(\Dmc)$, removed is the disjoint union of $\Lmc$-trees rooted at $a^{\Amf}$, $a\in \text{ind}(\Dmc)$.
$\Amf$ is an \emph{$\mathcal{ALCO}$-forest model} of an
$\mathcal{ALCO}$-KB $\Kmc=(\Omc,\Dmc)$ if $\Amf$ with all
$r(a,b^{\Amf})$, $a\in \text{dom}(\Amf)$, $b\in \text{ind}(\Dmc)$,
removed is the disjoint union of $\mathcal{ALCO}$-trees rooted at
$a^{\Amf}$, $a\in \text{ind}(\Dmc)$. The following completeness result
is well-known~\cite{DL-Textbook}.
%
\begin{lemma}\label{lem:forestm}
Let $\Lmc\in \{\ALC,\ALCI,\mathcal{ALCO}\}$, $\Kmc$ an $\Lmc$-KB, and $C$ an $\Lmc$-concept. If $\Kmc\not\models C(a)$, then there exists an $\Lmc$-forest model 
$\Amf$ for $\Kmc$ of finite $\Lmc$-outdegree with $a\not\in C^{\Amf}$.
\end{lemma}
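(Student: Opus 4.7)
The plan is to start from an arbitrary model witnessing $\Kmc \not\models C(a)$ and transform it into a forest model by a standard selective unraveling that preserves $\Lmc$-bisimulation type and therefore preserves both satisfaction of $\Kmc$ and non-membership of $a$ in $C$. Let $\Bmf$ be a model of $\Kmc$ with $a^\Bmf \notin C^\Bmf$, which exists by hypothesis, and let $\mn{cl}$ be the (finite) closure under subconcepts of the concepts appearing in $\Kmc$ and $C$. For every $d\in \mn{dom}(\Bmf)$ and every $\exists R.D \in \mn{cl}$ with $d\in (\exists R.D)^\Bmf$, I would fix a single witness $w_{d,\exists R.D}$ with $(d,w_{d,\exists R.D})\in R^\Bmf$ and $w_{d,\exists R.D}\in D^\Bmf$. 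This selector bounds the outdegree introduced by the unraveling by $|\mn{cl}|$, hence by a number linear in $|\Kmc|+|C|$.

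For $\Lmc\in\{\ALC,\ALCI\}$, define $\Amf$ to have as domain the finite sequences $b^\Bmf d_1\cdots d_n$ that start at some $b^\Bmf$ with $b\in \mn{ind}(\Dmc)$ and where each step $d_i\to d_{i+1}$ is a selected edge in the directed graph of $\Bmf$ (for $\ALC$) or in its undirected version (for $\ALCI$). Interpret concept names and role names on $\Amf$ by projecting to the last component of a sequence, retaining only the selected edges between consecutive components, and additionally copying the database role facts between the roots $b^\Bmf$. The last-component map is an $\Lmc$-bisimulation between $\Amf$ and $\Bmf$ (standard invariance of $\ALC$ under simulations along directed edges, of $\ALCI$ under bisimulations along undirected edges), hence every $\Lmc$-concept is preserved; in particular $\Amf\models\Kmc$ and $a^\Amf\notin C^\Amf$. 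By construction, removing the role edges between the $b^\Amf$ leaves a disjoint union of $\Lmc$-trees rooted at the named individuals, so $\Amf$ is an $\Lmc$-forest model of finite outdegree.

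The $\mathcal{ALCO}$ case is the main delicate point, because nominals force $\{c\}^\Amf$ to remain a singleton so we cannot unravel through named individuals. I would keep the substructure of $\Bmf$ induced by $\{b^\Bmf\mid b\in \mn{ind}(\Dmc)\}$ untouched (this becomes the ``core'' of the forest), and only unravel the anonymous part above each element: whenever the selected witness $w_{d,\exists R.D}$ equals some $b^\Bmf$ with $b\in\mn{ind}(\Dmc)$, do not create a new copy but instead redirect the $R$-edge from the current path element directly to $b^\Amf$, which is permitted by the definition of $\mathcal{ALCO}$-forest model. The resulting map sending each path to its last component (and each $b^\Amf$ to itself) is an $\mathcal{ALCO}$-bisimulation respecting nominals, so $\Amf\models\Kmc$ and $a^\Amf\notin C^\Amf$; removing all edges of the form $r(d,b^\Amf)$ with $b\in\mn{ind}(\Dmc)$ leaves a disjoint union of $\mathcal{ALCO}$-trees rooted at the named individuals, as required. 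Finite outdegree follows from the selector argument exactly as before, and the main obstacle—ensuring that the redirected edges still witness all existential requirements while respecting the forest shape and the uniqueness of nominals—is handled by the choice to redirect rather than duplicate whenever a witness happens to be a named individual.
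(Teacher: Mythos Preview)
Your approach is the standard selective unraveling that the paper invokes without details (it cites the result as well-known and, in the appendix, says only that ``the proof is by a standard selective unraveling procedure''); the treatment of $\mathcal{ALCO}$ by redirecting edges to named individuals rather than duplicating them is exactly right.

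One point needs tightening. With \emph{selective} unraveling the last-component projection is not an $\Lmc$-bisimulation: a non-selected $r$-successor $e$ of $\mathrm{last}(\pi)$ in $\Bmf$ has no preimage among the successors of $\pi$ in $\Amf$, so the back condition fails. What the construction actually gives you (and what suffices) is that for every node $\pi$ of $\Amf$ and every $E\in\mn{cl}$ one has $\pi\in E^{\Amf}$ iff $\mathrm{last}(\pi)\in E^{\Bmf}$, proved by induction on the structure of $E$, using the selector to supply a witness in the existential case. Since all concepts occurring in $\Kmc$ and in $C$ lie in $\mn{cl}$, this yields $\Amf\models\Kmc$ and $a^{\Amf}\notin C^{\Amf}$ as desired. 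A genuine bisimulation would require the full (non-selective) unraveling, which does not in general produce a model of finite $\Lmc$-outdegree.
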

Note that Lemma~\ref{lem:forestm} does not hold for $\mathcal{ALCIO}$, a counterexample is given in the appendix.

Besides DL-concepts, we use FO-formulas
with a single free variable as separating formulas. Of particular
importance are the following FO-fragments which combine the
expressive power of (unions of) conjunctive queries with DLs.
Let $\Lmc\in \DLS$. Then CQ$^\Lmc$ denotes the
language of all
FO-formulas $\varphi(x)=\exists \vec y\, \psi$ where $\psi$ is a
conjunction of atoms $C(t)$, $C$ an \Lmc-concept, or $r(t_1,t_2)$ with
$t,t_1,t_2$ variables or constants, and $x$ is the single free
variable of $\varphi(x)$. UCQ$^\Lmc$ contains all formulas
$\varphi(x)=\varphi_1(x)\vee\cdots\vee\varphi_n(x)$ with
$\varphi_i(x)\in \text{CQ}^\Lmc$. Clearly, CQ$^\Lmc$ and
UCQ$^\Lmc$ contain all unary conjunctive queries (CQ) and unions of unary
conjunctive queries (UCQ), respectively. Note that UCQ$^{\ALCI}$ is a
fragment of the unary negation fragment (UNFO), a decidable fragment
of FO that generalizes many modal and description
logics~\cite{DBLP:journals/corr/SegoufinC13}. 
We next define rooted versions of these languages. We may view
formulas in CQ$^{\Lmc}$ as structures, in the obvious way by
ignoring atoms $C(t)$. For $\Lmc\in \DLS$, CQ$_{r}^{\Lmc}$
denotes the formulas $\varphi(x)$ in CQ$^{\Lmc}$ that are $\Lmc$-rooted in
$x$ and similar for UCQ$_{r}^{\Lmc}$. Finally note that, although the
languages UCQ$_{r}^{\Lmc}$ and UCQ$^{\Lmc}$ are
not syntactically closed under conjunction, every conjunction is again
equivalent to a formula in the respective language.

For any $\Lmc\in \DLS$ and signature $\Sigma$ the definition of an
$\Lmc(\Sigma)$-bisimulation $S$ between structures $\Amf$ and $\Bmf$
is standard, for details we refer to~\cite{TBoxpaper,goranko20075}. 
We write $\Amf,d\sim_{\Lmc,\Sigma}\Bmf,e$ and call pointed structures $\Amf,d$ and $\Bmf,e$
\emph{$\Lmc(\Sigma)$-bisimilar} if there exists an
$\Lmc(\Sigma)$-bisimulation $S$ such that $(d,e)\in S$. 
We say that $\Amf,d$ and $\Bmf,e$ are \emph{$\Lmc(\Sigma)$-equivalent},
in symbols $\Amf,d\equiv_{\ALCI,\Sigma}\Bmf,e$ if $d\in C^{\Amf}$ iff
$e\in C^{\Bmf}$ for all $C\in \Lmc(\Sigma)$; $\omega$-saturated structures are defined and discussed in~\cite{modeltheory}.
 
 \begin{lemma}\label{lem:equivalence}
 	Let $\Lmc\in \DLS$. Let $\Amf,d$ and $\Bmf,e$ be pointed structures
 	of finite $\Lmc$-outdegree or $\omega$-saturated and $\Sigma$ a signature. 
 	Then
 	$$
 	\Amf,d \equiv_{\Lmc,\Sigma} \Bmf,e \quad \text{ iff } \quad
 	\Amf,d \sim_{\Lmc,\Sigma}\Bmf,e.
 	$$
 	For the ``if'' direction, the condition ``finite outdegree or $\omega$-saturated''  can be
 	dropped.
 \end{lemma}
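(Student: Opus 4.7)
The plan is to prove the two directions separately, following the standard Hennessy--Milner template for modal/description logics and adapting it to the presence of nominals and inverses as determined by $\Lmc \in \DLS$.

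For the ``if'' direction, I would fix an $\Lmc(\Sigma)$-bisimulation $S$ with $(d,e)\in S$ and show by structural induction on an $\Lmc(\Sigma)$-concept $C$ that $d\in C^\Amf$ iff $e\in C^\Bmf$. The cases $C=\top,\bot$ and $C=A$ for $A\in\Sigma\cap \NC$ are immediate from the atom clause of $S$; Boolean cases are routine. For $C=\exists R.D$, I would use the forth/back clauses of $S$ for the role $R$ (and, when $\Lmc\in\{\ALCI,\ALCIO\}$, also for $R^-$) together with the induction hypothesis. For $C=\{a\}$ with $a\in\Sigma\cap\NI$ (when $\Lmc\in\{\ALCO,\ALCIO\}$), I would invoke the clause of an $\Lmc(\Sigma)$-bisimulation requiring $(a^\Amf,a^\Bmf)\in S$ and the fact that $a^\Amf$ is the unique element realizing $\{a\}$. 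This direction needs no saturation hypothesis.

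For the ``only if'' direction, I would show that the relation
$$S=\{(x,y)\mid \Amf,x\equiv_{\Lmc,\Sigma}\Bmf,y\}$$
is itself an $\Lmc(\Sigma)$-bisimulation. The atom clause is immediate since $\Sigma$-concept names and (for nominal-capable $\Lmc$) $\Sigma$-nominals are themselves $\Lmc(\Sigma)$-concepts. The nominal clause $(a^\Amf,a^\Bmf)\in S$ for $a\in\Sigma\cap\NI$ follows because $\{a\}\in\Lmc(\Sigma)$ uniquely characterizes its interpretation. The heart of the argument is the forth clause: assume $(x,y)\in S$ and $(x,x')\in R^\Amf$ for a $\Sigma$-role $R$; I need an $R^\Bmf$-successor $y'$ of $y$ with $(x',y')\in S$. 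Consider the $\Sigma$-type $t(z)=\{C\mid C\in\Lmc(\Sigma), x'\in C^\Amf\}$. I claim the partial type $\{R(y,z)\}\cup t(z)$ is realized in $\Bmf$. Under finite $\Lmc$-outdegree this is a finite-case argument: if no $R^\Bmf$-successor $y_1,\dots,y_n$ of $y$ realizes $t$, pick $C_i\in t$ with $y_i\notin C_i^\Bmf$; then $x\in(\exists R.(C_1\sqcap\cdots\sqcap C_n))^\Amf$ while $y\notin (\exists R.(C_1\sqcap\cdots\sqcap C_n))^\Bmf$, contradicting $(x,y)\in S$. Under $\omega$-saturation the same finite-conjunction argument shows finite satisfiability of $\{R(y,z)\}\cup t(z)$ over $(\Bmf,y)$, and saturation then yields a realizer $y'$; by construction $\Amf,x'\equiv_{\Lmc,\Sigma}\Bmf,y'$. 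The back clause is symmetric. When $\Lmc\in\{\ALCI,\ALCIO\}$, the same argument applied to the inverse role $R^-$ establishes the forth/back clauses for inverse edges, using that $\exists R^-.D\in\Lmc(\Sigma)$.

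The main obstacle is the existential/forth step in the ``only if'' direction: correctly reducing the satisfaction of the full $\Sigma$-type to finitely many constraints. Finite outdegree handles this by a direct pigeonhole over successors, while $\omega$-saturation requires phrasing the desired successor as realizing a finitely satisfiable type in one variable with parameter $y$; this is exactly the setting to which $\omega$-saturation applies. The presence of inverses in $\ALCI$ and $\ALCIO$ adds no essential difficulty since $\exists R^-.D$ is itself an $\Lmc(\Sigma)$-concept, and nominals add only the uniqueness clause above, so the same compactness-style argument goes through uniformly for all $\Lmc\in\DLS$.
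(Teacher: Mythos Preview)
Your proof is the standard Hennessy--Milner argument and is correct; the paper does not actually prove this lemma but treats it as folklore, citing the modal/DL bisimulation literature where exactly this argument appears. One small correction: the nominal clause in the paper's definition of an $\Lmc(\Sigma)$-bisimulation is the local condition ``if $(d,e)\in S$ and $c\in\Sigma$, then $d=c^{\Amf}$ iff $e=c^{\Bmf}$,'' not the global requirement $(c^{\Amf},c^{\Bmf})\in S$ that you invoke twice. This does not affect your argument, since the local clause is precisely what the induction step for $\{c\}$ needs in the ``if'' direction, and in the ``only if'' direction it follows immediately from $\{c\}\in\Lmc(\Sigma)$ and the definition of $S$.
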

The definition of a $\Sigma$-homomorphism $h$ from a structure $\Amf$ to a structure $\Bmf$ is standard. Every database $\Dmc$ gives rise to a finite
structure $\Amf_{\Dmc}$ in the obvious way. A
$\Sigma$-homomorphism from database $\Dmc$ to structure $\Amf$ is a
$\Sigma$-homomorphism from $\Amf_{\Dmc}$ to $\Amf$.  
  
We combine homomorphisms and bisimulations to characterize the languages CQ$^{\Lmc}$ and CQ$_{r}^{\Lmc}$. Consider pointed structures $\Amf,d$ and $\Bmf,e$, and a subset $D$ of $\text{dom}(\Amf)$ such that $d\in D \subseteq \text{dom}(\Amf)$.
Let $\Lmc\in \DLS$ and $\Sigma$ a signature. Then
a \emph{CQ$^{\Lmc}(\Sigma)$-homomorphism} with domain $D$
	between $\Amf,d$ and $\Bmf,e$ is a $\Sigma$-homomorphism $h: \Amf_{|D} \rightarrow \Bmf$ such that $h(d)=e$ and $\Amf,c \sim_{\Lmc,\Sigma} \Bmf,h(c)$ for all $c\in D$. In this case we write $\Amf,d\rightarrow_{D,\Lmc,\Sigma} \Bmf,e$.

We write $\Amf,d \Rightarrow_{\text{CQ}_{r}^{\Lmc},\Sigma} \Bmf,e$ if
$\Amf\models \varphi(a)$ implies $\Bmf\models \varphi(b)$ for all
$\varphi(x)$ in $\text{CQ}_{r}^{\Lmc}(\Sigma)$, and we write $\Amf,d \Rightarrow_{\text{CQ}_{r}^{\Lmc},\Sigma}^{\text{mod}} \Bmf,e$
if for all finite $D\subseteq \text{dom}(\Amf)$ such that the $\Sigma$-reduct of $\Amf_{|D}$ is $\Lmc$-rooted in $d$, we have
$\Amf,d\rightarrow_{D,\Lmc,\Sigma} \Bmf,e$. The definitions for
CQ$^{\Lmc}$ are analogous except that the $\Sigma$-reduct of
$\Amf_{|D}$ need not be $\Lmc$-rooted in $d$.
\begin{restatable}{lemma}{lemequivalencetwo}
\label{lem:equivalence2}
	Let $\Lmc\in \DLS$ and let $\Amf,d$ and $\Bmf,e$ be pointed structures
	of finite $\Lmc$-outdegree or $\omega$-saturated, and $\Sigma$ a signature. 
	Then
	$$
	\Amf,d \Rightarrow_{\text{CQ}_{r}^{\Lmc},\Sigma} \Bmf,e \quad \text{ iff } \quad
	\Amf,d \Rightarrow^{\text{mod}}_{\text{CQ}_{r}^{\Lmc},\Sigma} \Bmf,e.
	$$
	This equivalence holds for $\text{CQ}^{\Lmc}$ if $\Amf$ and $\Bmf$ are $\omega$-saturated. In both cases, for the ``if''-direction, the condition ``finite outdegree or $\omega$-saturated'' can be dropped. 
\end{restatable}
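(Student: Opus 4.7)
The plan is to prove each direction separately. The ``if'' direction needs no regularity assumption, while the ``only if'' direction splits into an $\omega$-saturated case (applying to both $\text{CQ}_r^\Lmc$ and $\text{CQ}^\Lmc$) and a finite-outdegree case (only for $\text{CQ}_r^\Lmc$).

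For the ``if'' direction, take $\varphi(x)\in \text{CQ}_r^\Lmc(\Sigma)$ with $\Amf\models\varphi(d)$, witnessed by a valuation $g$, and let $D$ be the image of $g$. Rootedness of $\varphi$ at $x$ combined with $g$ preserving $\Sigma$-role atoms yields that the $\Sigma$-reduct of $\Amf|_D$ is $\Lmc$-rooted at $d$. The hypothesis $\Rightarrow^{\text{mod}}_{\text{CQ}_r^\Lmc,\Sigma}$ then delivers a $\text{CQ}^\Lmc(\Sigma)$-homomorphism $h\colon\Amf|_D\to\Bmf$ with $h(d)=e$, and $h\circ g$ witnesses $\Bmf\models\varphi(e)$: $\Sigma$-role atoms are preserved because $h$ is a $\Sigma$-homomorphism, and concept atoms $C(t)$ are preserved because $\Amf,g(t)\sim_{\Lmc,\Sigma}\Bmf,h(g(t))$ implies $\Lmc(\Sigma)$-equivalence. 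The same argument works for $\text{CQ}^\Lmc$ by dropping the rootedness claim.

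For the ``only if'' direction, fix a finite $D\subseteq\text{dom}(\Amf)$ (with the appropriate rootedness in the $\text{CQ}_r^\Lmc$ case), and consider the set of formulas $\Phi$ in variables $(y_c)_{c\in D}$ consisting of (i) $y_d=e$ together with all $\Sigma$-role atoms $r(y_c,y_{c'})$ holding in $\Amf|_D$, and (ii) all concept atoms $C(y_c)$ for $C\in\Lmc(\Sigma)$ with $c\in C^\Amf$. Every finite $\Phi_0\subseteq\Phi$ is satisfied at $d$ in $\Amf$ via $y_c\mapsto c$ and, packaged as a $\text{CQ}_r^\Lmc(\Sigma)$-formula, is by hypothesis satisfied at $e$ in $\Bmf$, giving a realizer of $\Phi_0$ in $\Bmf$. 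The task reduces to glueing these realizers into one $h\colon D\to\text{dom}(\Bmf)$ realizing all of $\Phi$: any such $h$ satisfies $h(c)\in C^\Bmf$ for every $C\in\Lmc(\Sigma)$ with $c\in C^\Amf$, hence $\Amf,c\equiv_{\Lmc,\Sigma}\Bmf,h(c)$, and Lemma~\ref{lem:equivalence} upgrades this to $\sim_{\Lmc,\Sigma}$ under either regularity assumption, making $h$ a $\text{CQ}^\Lmc(\Sigma)$-homomorphism.

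When $\Amf$ and $\Bmf$ are $\omega$-saturated, glueing follows immediately from $\omega$-saturation of $\Bmf$ applied to $\Phi$ (which has a single parameter $e$ and is finitely realized). When $\Amf$ and $\Bmf$ have finite $\Lmc$-outdegree, we exploit rootedness: any valid $h$ (with $h(d)=e$ preserving $\Sigma$-role atoms) sends each $c\in D$ to a vertex of $\Bmf$ reachable from $e$ via $\Sigma$-role steps in at most $|D|$ hops, and finite $\Lmc$-outdegree bounds this reachable set, so the set $V$ of valid candidate maps is finite. A pigeonhole argument then finishes: if every $h\in V$ fails some constraint $C_h(c_h)\in\Phi$, then $F=\{C_h(c_h):h\in V\}\subseteq\Phi$ is a finite subset realized by some $h'\in V$, contradicting the failure of $C_{h'}(c_{h'})$ at $h'$. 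The main obstacle is precisely this finite-outdegree case, which forces rootedness into play to substitute $\omega$-saturation by a finiteness-of-$V$ plus pigeonhole argument; this also explains why the unrooted $\text{CQ}^\Lmc$ case is claimed only under $\omega$-saturation, since without rootedness $V$ is generally infinite and the pigeonhole substitute breaks down.
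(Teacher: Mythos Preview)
Your proof is correct and follows essentially the same approach as the paper's: both directions use the same compositions and the same ``infinite type'' $\Phi$ (the paper calls it $q_D^\Amf$), with $\omega$-saturation or finite outdegree plus rootedness used to realize all of $\Phi$ at once. The paper merely asserts that the finite-outdegree case ``can be shown directly using the condition that the $\Sigma$-reduct of $\Amf_{|D}$ is rooted in $a$'', whereas you spell this out via the finiteness of the candidate set $V$ and a pigeonhole argument; one small point to tidy is that when forming the finite subset $F$ in that pigeonhole step (and when packaging finite $\Phi_0$ as a rooted $\text{CQ}_r^\Lmc(\Sigma)$-formula), you should always include all of part~(i), the finitely many $\Sigma$-role atoms, so that any realizer in $\Bmf$ is automatically forced into $V$.
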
	

\section{Weak Separability with Signature}
\label{sec:weaksep}
\newcommand{\mLOKB}{labeled $\Lmc$-KB}

We start with introducing the problem of (weak) separability with
signature, in its
projective and non-projective version. 
%
%
\newcommand{\LmcO}{\Lmc}
Let $\LmcO\in \DLS$. A \emph{\mLOKB} takes the form
$(\Kmc,P,N)$ with $\Kmc=(\Omc,\Dmc)$ an $\LmcO$-KB and
$P,N\subseteq \text{ind}(\Dmc)$ non-empty sets of \emph{positive}
and \emph{negative examples}.

\begin{definition}
	\label{def:separa}
	Let $\LmcO\in \DLS$, $(\Kmc,P,N)$ be a \mLOKB, and let $\Sigma \subseteq \text{sig}(\Kmc)$ be a signature. 
	An FO-formula $\varphi(x)$ 
	\emph{$\Sigma$-separates $(\Kmc,P,N)$} if $\text{sig}(\varphi) \subseteq \Sigma \cup \Sigma_{\text{help}}$ for some set $\Sigma_{\text{help}}$ of concept names disjoint from $\text{sig}(\Kmc)$ and  
	\begin{enumerate}
		\item 
		$\Kmc\models \varphi(a)$ for all $a\in P$ and 
		\item $\Kmc\not\models \varphi(a)$ for all $a\in N$.
		
	\end{enumerate}
	%
	%
	Let $\Lmc_S$ be a fragment of FO. We say that $(\Kmc,P,N)$ is
	\emph{projectively $\Lmc_S(\Sigma)$-separable} if there is an
	$\Lmc_S$-formula $\varphi(x)$ that
	$\Sigma$-separates $(\Kmc,P,N)$ and \emph{non-projectively
		$\Lmc_S(\Sigma)$-separable} if there is such a
	$\varphi(x)$ with $\text{sig}(\varphi) \subseteq
	\Sigma$.\footnote{It is worth clarifying the interplay between nominals in the
	  separating language and individual names in $\Sigma$: If
	  $\Sigma$ does not contain individual names, then
	  $\ALCO(\Sigma)$-separability coincides with $\ALC(\Sigma)$-separability; 
	  conversely, if $\Lmc_S$ does not allow for nominals,
	  $\Lmc_S(\Sigma)$-separability coincides with
	  $\Lmc_S(\Sigma\setminus \mn{N_I})$-separability.}
	%
\end{definition}
In $\Sigma$-separating formulas, concept names from $\Sigma_{\text{help}}$ should be thought of as helper symbols.  
Their availability sometimes makes inseparable KBs separable,
examples are provided below where we also discuss the effect of admitting
role or individual names as helper symbols.
We only consider FO-fragments $\Lmc_{S}$ that are closed under
conjunction. In this case, a labeled KB $(\Kmc,P,N)$ is
$\Lmc_{S}(\Sigma)$-separable if and only if all
$(\Kmc,P,\{b\})$, $b\in N$, are
$\Lmc_{S}(\Sigma)$-separable, and likewise for projective
$\Lmc_{S}(\Sigma)$-separability~\cite{KR}. 
In
what follows, we thus mostly consider labeled KBs with singleton sets
$N$ of negative examples.

%
Each choice of an ontology language $\Lmc$ and a separation language
$\Lmc_{S}$ give rise to a projective and to a non-projective
separability problem. 
\begin{center}
	\fbox{\begin{tabular}{@{\,}l@{\;}l@{\,}}
			\small{PROBLEM}: & (Projective)
			$(\Lmc,\Lmc_{S})$-separability w.\ signature\\
			{\small INPUT}: &  A \mLOKB\xspace
                                          $(\Kmc,P,N)$ \\
                & and signature $\Sigma\subseteq \text{sig}(\Kmc)$ \\
			{\small QUESTION}: & Is $(\Kmc,P,N)$ (projectively) $\Lmc_{S}(\Sigma)$-separable?  \end{tabular}}
\end{center}
\smallskip
\noindent
If $\Lmc=\Lmc_{S}$, then we simply speak of (projective) $\Lmc$-separability. 
We study the complexity of \Lmc-separability with 
signature where the KB $\Kmc$ and 
sets of examples $P$ and $N$ are all taken to be part of the 
input. All lower bounds proved in this paper still hold if $P$ and $N$ are singleton sets.

We next provide an example that illustrates the importance of the distinction between projective and non-projective separability.
\begin{example}\label{exp:gg}
	Let $\Dmc$ contain $r(a_{1},a_{2}), \ldots, r(a_{n-1},a_{n})$,
        $r(a_n,a_1)$
	and $r(b,b_{1})$,
	where 
        $n>1$. Thus, the
	individual $a_{1}$ is part of an $r$-cycle of length $n$ but $b$ is not. 
	Let $\Omc= \{\top \sqsubseteq \exists r.\top
	\sqcap \exists r^{-}.\top\}$, 
	$\Kmc=(\Omc,\Dmc)$, 
	$P=\{a_{1}\}$,
	$N=\{b\}$, and $\Sigma=\{r\}$. Then $(\Kmc,P,N)$ is non-projectively CQ$(\Sigma)$-separable (take the CQ that states that $x$ participates in a cycle of length $n$), but $(\Kmc,P,N)$ is not non-projectively $\mathcal{ALCI}(\Sigma)$-separable
	because for any $\mathcal{\ALCI}(\Sigma)$-concept $C$ either
	$\Omc \models \top \sqsubseteq C$ or $\Omc \models C
	\sqsubseteq \bot$. If, however, a helper symbol $A$ is
	allowed, then $A \rightarrow \exists r^{n}.A$ $\Sigma$-separates $(\Kmc,P,N)$. 
\end{example}	
We discuss the effect of also admitting individual names as helper symbols.
Then already for $\mathcal{ALC}$-KBs, projective inseparability becomes undecidable. The proof is inspired by reductions of undecidable tiling problems in the context of conservative extensions and modularity~\cite{DBLP:conf/ijcai/LutzWW07,DBLP:journals/jair/GrauHKS08}.
\begin{restatable}{theorem}{thmnominalhelper}
  \label{thm:nominalhelper} 
  Projective $(\ALC,\ALCO)$-separability with signature becomes undecidable when
  additionally individual names are admitted as helper symbols. 
\end{restatable}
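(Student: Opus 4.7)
The plan is to reduce from an undecidable tiling problem, such as the $\mathbb{N}\times\mathbb{N}$ domino problem: given a finite tile set $\mathcal{T}$ with horizontal and vertical matching relations $H,V$, decide whether $\mathcal{T}$ tiles $\mathbb{N}\times\mathbb{N}$. From any tile system I would construct a labeled $\mathcal{ALC}$-KB $(\mathcal{K},\{a\},\{b\})$ and a signature $\Sigma\subseteq \text{sig}(\mathcal{K})$ so that $(\mathcal{K},\{a\},\{b\})$ is projectively $\mathcal{ALCO}(\Sigma)$-separable with nominal helpers if and only if $\mathcal{T}$ does \emph{not} tile $\mathbb{N}\times\mathbb{N}$. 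Since tileability is undecidable, undecidability of projective separability follows.

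The leverage provided by fresh helper nominals comes from the semantics of entailment: for a separating $\varphi$ containing helper nominals $o_{1},\dots,o_{k}\notin \text{sig}(\mathcal{K})$, the condition $\mathcal{K}\models\varphi(a)$ requires $\varphi(a)$ to hold for \emph{every} choice of $o_{i}^{\mathcal{A}}$ in \emph{every} model $\mathcal{A}$ of $\mathcal{K}$, whereas $\mathcal{K}\not\models\varphi(b)$ only asks that \emph{some} such choice falsifies $\varphi$ at $b$. This universal-vs-existential asymmetry, together with the ability of a subconcept $\exists r.(\{o_{i,j}\}\sqcap C)$ to \emph{identify} witnesses reached along different paths in a model, makes helper nominals strong enough to encode grid confluence---a constraint that $\mathcal{ALC}$ ontologies alone cannot impose. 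Following the pattern of the conservative-extension reductions in~\cite{DBLP:conf/ijcai/LutzWW07,DBLP:journals/jair/GrauHKS08}, $\mathcal{O}$ would be designed so that every model extends $a$ into a binary tree of $h$- and $v$-successors, each node carrying a tile type and subject to $H$- and $V$-matching, while the structure required below $b$ is ``neutral'' enough that some assignment of the helper nominals can always defuse a given candidate separator. A separating $\mathcal{ALCO}(\Sigma)$-concept then corresponds to a finite obstruction to tileability, and by K\"onig's lemma such an obstruction exists precisely when $\mathcal{T}$ fails to tile $\mathbb{N}\times\mathbb{N}$.

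The main obstacle will be to control the interpretations of the fresh nominals. Since $\mathcal{K}$ places no constraint on them, the ontology $\mathcal{O}$ must force that any nominal assignment which makes the separating concept hold at $a$ in \emph{every} model genuinely encodes a correct partial tiling, rather than a degenerate collapse exploiting a convenient identification. This is typically achieved by adding tile-type disjointness, neighbour-colour propagation and ``no cheat'' axioms into $\mathcal{O}$ so that the only globally consistent nominal patterns are those matching $H$ and $V$; a further technical point is that in $\mathcal{ALC}$ the grid confluence $h\circ v = v\circ h$ is not expressible by the ontology itself and must instead be extracted from the shared occurrence of the same helper nominal along the two paths inside the separating concept. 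Conversely, given a finite non-tileability witness one must exhibit a separator of corresponding depth, which is a routine translation of the witness into an $\mathcal{ALCO}(\Sigma)$-concept naming each position of the putative partial tiling by a distinct fresh nominal and asserting the forbidden $H$- or $V$-mismatch.
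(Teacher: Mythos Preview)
Your high-level strategy---reduce from a tiling problem and exploit fresh nominals to describe grid confluence---is the right one, and your observation about the universal/existential asymmetry of helper nominals is exactly the leverage the paper uses. However, the concrete mechanism you sketch does not work as stated, and the gap is precisely the part you flag as ``the main obstacle'' without resolving. If $\mathcal{O}$ forces a binary $h/v$-\emph{tree} below $a$, then a grid-description concept $G_n$ (naming positions with helper nominals and requiring $o_{i+1,j+1}$ to be both the $v$-successor of $o_{i+1,j}$ and the $h$-successor of $o_{i,j+1}$) already fails at $a$ for purely structural reasons---trees have no confluence---regardless of whether $\mathcal{T}$ tiles. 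So $\neg G_n$ would separate even when $\mathcal{T}$ \emph{does} tile, and your K\"onig-lemma correspondence between separators and finite obstructions collapses. You also do not address the inseparability direction beyond saying $b$ is ``neutral''; you need, for every candidate $\mathcal{ALCO}$-concept with helper nominals, a model-theoretic argument that $a$ and $b$ cannot be distinguished.

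The paper's reduction avoids both problems by reversing your polarity and using a finite-rectangle tiling system $S$ (solution exists iff separable). Crucially, the ontology does \emph{not} force any tree or grid; instead it introduces a concept $Q$ axiomatised by $Q \sqsubseteq \exists r_x.Q \sqcup \exists r_y.Q \sqcup (\exists r_x.\exists r_y.P \sqcap \exists r_y.\exists r_x.\neg P)$, i.e.\ ``from here there is an infinite $r_x/r_y$-path or a non-commuting cell'', together with a trigger $A_1 \sqcap D \sqsubseteq Q$ where $A_1(a)$ and $D(o)$ are in the database but $A_1(b)$ is not. The separator is then $\neg G$, where $G$ uses helper nominals to assert a concrete $n\times m$ grid rooted at $o$: at $a$, any model with $a\in G^{\Amf}$ would have $a=o$, hence $a\in Q^{\Amf}$, contradicting the finite closed grid; at $b$, one builds a model realising the solution grid with $b=o$ and $Q$ untriggered. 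For the converse, when $S$ has no solution, every model at $b$ can be turned into a $\Sigma$-isomorphic model at $a$ by reinterpreting $Q,P$ along a genuine defect path---this is where unsolvability is used. Your sketch is missing exactly this $Q$-style defect predicate tied to a database asymmetry between $a$ and $b$; without it neither direction of the reduction goes through.
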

Admitting role names as helper symbols has a less dramatic impact. For  $\mathcal{ALC}$ and $\mathcal{ALCI}$-separability they do not make any difference 
at all and for $\mathcal{ALCO}$ and $\mathcal{ALCIO}$ their effect can be 
captured by a single additional role name which enables a straightforward  polynomial reduction to separability without role names as helper symbols.
\begin{restatable}{theorem}{proprolehelp}
	\label{prop:proprolehelp}
	(1) Let $\Lmc\in \{\ALC,\ALCI\}$. Then projective $\Lmc$-separability coincides with projective $\Lmc$-separability with concept and role names as helper symbols.
	
	(2) Let $\Lmc\in \{\mathcal{ALCO},\mathcal{ALCIO}\}$ and $(\Kmc,P,N)$ be a labeled $\Lmc$-KB and $\Sigma\subseteq \text{sig}(\Kmc)$ a signature. Let $r_{I}$ be a fresh role name and let $\Kmc'$ be the extension of $\Kmc$ by the `dummy' inclusion $\exists r_{I}.\top \sqsubseteq \exists r_{I}.\top$. Then the following conditions are equivalent:
	\begin{itemize}
		\item $(\Kmc,P,N)$ is projectively $\Lmc(\Sigma)$-separable with concept and role names as helper symbols;
		\item $(\Kmc',P,N)$ is projectively $\Lmc(\Sigma\cup \{r_{I}\})$-separable.
	\end{itemize}
\end{restatable}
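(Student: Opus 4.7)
The plan.

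Part~(1): the trivial inclusion is immediate. For the converse, I would appeal to a model-theoretic characterisation of projective $\Lmc(\Sigma)$-separability (for $\Lmc\in\{\ALC,\ALCI\}$) in terms of the existence or non-existence of certain $\Sigma$-bisimulations between forest models of $\Kmc$ pointed at elements of $P$ and $N$ — essentially the bisimulation characterisation whose tree-automata version underpins the \TwoExpTime upper bound referred to in the introduction. The key point is that such a characterisation quantifies only over $\Sigma$-reducts of $\Kmc$-models and is therefore blind to symbols outside $\Sigma$, be they concept names or role names. It applies identically to projective separability and to projective separability with role-name helpers, so the two notions coincide. The characterisation itself relies on the forest-model property (Lemma~\ref{lem:forestm}) and the $\omega$-saturation/bisimulation machinery of Lemma~\ref{lem:equivalence}; attempting a naive syntactic elimination of $s$ by rewriting $\exists s.D\mapsto\bot$ and $\forall s.E\mapsto\top$ or $\bot$ is subtle because ``$C$ holds for every interpretation of $s$'' does not distribute over disjunction in general.

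Part~(2), direction $(\Leftarrow)$: the axiom $\exists r_I.\top \sqsubseteq \exists r_I.\top$ is a tautology, so $\Kmc$ and $\Kmc'$ have identical models modulo the unconstrained interpretation of $r_I$. Any $\Lmc(\Sigma\cup\{r_I\})$-separator of $\Kmc'$ is therefore simultaneously an $\Lmc(\Sigma)$-separator of $\Kmc$ that treats $r_I$ as a role helper.

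Part~(2), direction $(\Rightarrow)$: let $C$ separate $(\Kmc, P, N)$ using role helpers $s_1,\ldots,s_k$ and concept helpers. I introduce fresh concept names $A_1,\ldots,A_k$ and define $C'$ recursively by replacing each $\exists s_j.D$ with $\exists r_I.(A_j\sqcap\exists r_I.D')$ (and each $\exists s_j^-.D$ with $\exists r_I^-.(A_j\sqcap\exists r_I^-.D')$ in $\ALCIO$). The intermediate $r_I$-step inserts a fresh node per original $s_j$-edge carrying the label $A_j$, so distinct $s_j$-edges remain distinguishable even when they share endpoints — which is essential when the edge's target is a nominal. To verify that $C'$ separates $(\Kmc', P, N)$, I handle both examples: for positives, given $\Amf'\models\Kmc'$, set $s_j^\Amf := \{(x,y): \exists m.\,(x,m)\in r_I^{\Amf'},\,m\in A_j^{\Amf'},\,(m,y)\in r_I^{\Amf'}\}$ to obtain $\Amf\models\Kmc$, and prove $C^\Amf = (C')^{\Amf'}$ pointwise by induction on $C$, so $\Kmc\models C(a)$ yields $\Kmc'\models C'(a)$; for negatives, take $\Bmf\models\Kmc$ with fixed $s_j$-interpretations such that $b\notin C^\Bmf$, and build $\Bmf'$ by adding, for each $(x,y)\in s_j^\Bmf$, a fresh intermediate $m_{(x,y,j)}$ with $r_I(x,m_{(x,y,j)})$, $A_j(m_{(x,y,j)})$, and $r_I(m_{(x,y,j)},y)$; nominals are left untouched.

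The main obstacle is the $(\Rightarrow)$ construction of $\Bmf'$ in the nominal case: labelling a nominal $c$ directly with $A_j$ would spuriously merge distinct encoded $s_j$-edges into $c$, creating false witnesses for $\exists r_I.(A_j\sqcap\cdots)$; the intermediate-node detour eliminates exactly this ambiguity. One then verifies $\Bmf'\models\Omc$ by assigning each intermediate a $\Sigma$-type realised in some $\Omc$-model — sound because $r_I$, $A_j$, and (in $\ALCIO$) $r_I^-$ are fresh in $\Omc$ and impose no new ontology-level constraints. For part~(1), the main subtlety lies not in constructing an explicit $C'$ but in formulating the bisimulation characterisation so that it visibly captures the semantics of projective separability irrespective of the helper regime, which is where the forest-model property and the saturation arguments do their work.
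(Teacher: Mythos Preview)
Your plan for Part~(1) is the paper's approach: reduce both separability notions to Condition~3 of Theorem~\ref{thm:L-modeltheory0}, which mentions only~$\Sigma$ and no helpers. The step you leave implicit is why Condition~3 also characterises separability when \emph{role} helpers are allowed. The paper's ``2$\,\Rightarrow\,$3'' argument pulls back helper concept-name extensions along the functional bisimulation~$f$; for a helper role~$r$ one wants $r^{\Bmf'}=\{(d,d'):(f(d),f(d'))\in r^\Amf\}$, and the bisimulation back-condition for this pulled-back role then requires $f$ to be surjective onto~$\Amf$. The paper secures surjectivity by first forming the disjoint union of~$\Bmf$ with an anonymous copy of~$\Amf$, so that $g=f\cup\text{id}$ is surjective. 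Your phrase ``applies identically'' hides exactly this manoeuvre.

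For Part~(2)$(\Rightarrow)$ you take a genuinely different route from the paper: a syntactic translation $C\mapsto C'$ encoding each helper role~$s_j$ as two $r_I$-steps through a fresh $A_j$-labelled intermediate. The paper stays model-theoretic throughout: in the witness forest model~$\Amf$ it sets $r_I^{\Amf'}$ to the union of all roles together with edges from~$b^\Amf$ to every individual, so that any functional $(\Sigma\cup\{r_I\})$-bisimulation into~$\Amf'$ is automatically surjective and all helper roles can be pulled back as in Part~(1). Your approach would yield an explicit separating concept, which the paper's does not, and your positive-example direction is fine; but your negative-example direction has a real gap. The fresh intermediates you add to~$\Bmf$ need not satisfy~$\Omc$: if $\Omc$ contains $\top\sqsubseteq\exists r.E$, every intermediate must acquire an $r$-successor in~$E$, and ``assigning a $\Sigma$-type realised in some $\Omc$-model'' names a set of concepts, not the structure needed to realise it. For $\mathcal{ALCO}$ this is repairable by rooting a forest subtree (back-edges only \emph{to} nominals) at each intermediate via Lemma~\ref{lem:forestm}; the inductive equivalence $D^\Bmf=(D')^{\Bmf'}$ then survives at old nodes because no $\text{sig}(\Kmc)$-edge \emph{from} an old node is added. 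For $\mathcal{ALCIO}$ the repair is unclear: forest models are unavailable (cf.\ the remark after Lemma~\ref{lem:forestm}), and realising $\top\sqsubseteq\exists r^-.\top$ at an intermediate forces a new $r$-edge \emph{into} it, potentially from an old node, which breaks your induction. The paper's argument sidesteps this by never introducing fresh elements, working instead (for $\mathcal{ALCIO}$) with $\omega$-saturated models and the same surjectivity-and-pullback idea.
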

The proof uses the model-theoretic characterization of separability
given in Theorem~\ref{thm:L-modeltheory0} below. The next example
illustrates the use of a  helper role name in $\mathcal{ALCO}$.
\begin{example} Let $\Kmc=(\Omc,\Dmc)$, where 
	$
	\Omc= \{A_{0}\sqcap \exists r.\top \sqsubseteq \bot, 
	B \sqsubseteq \forall r.A\}
	$ 
	and $\Dmc= \{r(c,a),A_{0}(a),A_{0}(b)\}$. Let $\Sigma= \{c,B,A\}$. 
	Then $(\Kmc,\{a\},\{b\})$ is not projectively
        $\mathcal{ALCIO}(\Sigma)$-separable, but the $\ALCO(\Sigma)$-concept
	$\exists r_{I}.(\{c\} \sqcap B) \rightarrow A$ separates $(\Kmc,\{a\},\{b\})$ using the helper symbol $r_{I}$.
	%
\end{example}
We next make first observations regarding the separating power of
several relevant separating languages.
In~\cite{DBLP:conf/ijcai/FunkJLPW19,KR}, projective and non-projective
separability are studied without signature restrictions, that
is, 
all symbols used in the KB except individual names can appear in
separating formulas. We call this the \emph{full relational
  signature}. Surprisingly, it turned out that in this case many
different separation languages have exactly the same separating
power. In particular, a labeled $\ALCI$-KB is FO-separable iff it is
UCQ-separable,  and projective and non-projective separability
coincide. 
No such
result can be expected for separability with signature restrictions,
as illustrated by the next example.
\begin{example}
	Let $\Kmc=(\Omc,\Dmc)$, where $\Omc = \{A \sqsubseteq \exists
	r.B \sqcap \exists r.\neg B\}$ and $\Dmc= \{A(a),r(b,c)\}$. Let $P=\{a\}$, $N=\{b\}$,
	and $\Sigma=\{r\}$. Clearly, the formula
	$$
	\exists y \exists y' (r(x,y) \wedge r(x,y') \wedge \neg (y=y'))
	$$
	$\Sigma$-separates $(\Kmc,P,N)$, but $(\Kmc,P,N)$ is not
        UCQ$(\Sigma)$-separable. 
\end{example}
It is also shown in~\cite{DBLP:conf/ijcai/FunkJLPW19,KR} that for
labeled \ALCI-KBs and with the full relational signature,
UCQ-separability (projectively or not) coincides with projective
$\mathcal{ALCI}$-separability. The next example shows that with
restricted signatures, it is not even true that non-projective
$\mathcal{ALCI}$-separability implies UCQ-separability.
\begin{example}
	Let $\Omc = \{A \sqsubseteq \forall r.B\}$ and $\Dmc= \{A(a),C(b)\}$. Let $P=\{a\}$, $N=\{b\}$,
	and $\Sigma=\{r,B\}$. Clearly, the \ALC-concept $\forall r.B$
	$\Sigma$-separates $(\Omc,\Dmc,P,N)$, but $(\Omc,\Dmc,P,N)$ is not
	UCQ$(\Sigma)$-separable.
\end{example}
Conversely, it follows from Example~\ref{exp:gg} that UCQ-separability
does not imply non-projective $\mathcal{ALCI}$-separability, even with
the full relational signature. Interestingly, in the projective case,
this implication holds even with restricted signatures: every
UCQ($\Sigma$)-separable labeled $\mathcal{ALCI}$-KB is also
projectively $\mathcal{ALCI}(\Sigma)$-separable. This follows from
more powerful equivalences proved below (Theorem~\ref{thm:equival}).

In this paper, we mainly focus on projective separability. In fact, it
emerges from \cite{DBLP:conf/ijcai/FunkJLPW19,KR} that insisting on
non-projective separability is a source of significant technical
difficulties while not always delivering more natural separating
concepts. As our main aim is to study the impact of signature
restrictions on separability, which is another source of significant
technical challenges, we prefer to leave out the first such
source and stick with projective separability.

We close this introduction with the observation that 
in contrast to the case of full relational signatures,
FO-separability with signature is undecidable for labeled $\ALC$-KBs. We prove this
using the same technique as for
Theorem~\ref{thm:nominalhelper}. Undecidability applies even when one
separates in the decidable extension $\mathcal{ALCFIO}$ of
$\mathcal{ALCIO}$ with unqualified  number restrictions of the form $(\leq 1\; r)$.

%
\begin{restatable}{theorem}{thmfoundec}
	\label{thm:foundec}
	$(\mathcal{ALC},\Lmc_{S})$-separability with signature is
	undecidable for any fragment $\Lmc_{S}$ of FO that contains $\mathcal{ALCFIO}$, both in the projective and non-projective case.
\end{restatable}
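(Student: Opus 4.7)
The plan is to mirror the reduction used in the proof of Theorem~\ref{thm:nominalhelper} from an undecidable tiling problem (e.g., the $\mathbb{N}\times\mathbb{N}$-tiling problem). Specifically, I would construct in polynomial time, from a set $T$ of tile types, a labeled $\ALC$-KB $(\Kmc,\{a\},\{b\})$ and a signature $\Sigma\subseteq\text{sig}(\Kmc)$ such that $(\Kmc,\{a\},\{b\})$ is (projectively) $\Lmc_{S}(\Sigma)$-separable if and only if $T$ does \emph{not} tile $\mathbb{N}\times\mathbb{N}$.

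The ontology $\Omc$, in $\ALC$, forces from each of $a$ and $b$ a grid-like structure via ``right''/``up'' role names and tile-type concept names in $\Sigma$, while encoding the tile-matching conditions through auxiliary concept names in $\text{sig}(\Kmc)\setminus\Sigma$. The database $\Dmc$ seeds $a$ and $b$ so that they are rooted in grids differing only through these hidden symbols. In Theorem~\ref{thm:nominalhelper} the undecidability stemmed from fresh helper \emph{individual} names used to pin specific grid positions; here, their pinning role is simulated by $\mathcal{ALCFIO}(\Sigma)$-concepts of the form $\exists R^{-}.(\{c\}\sqcap (\leq 1\,R))$, with $c\in\Sigma$ a nominal already present in $\Dmc$. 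Since $\Lmc_{S}\supseteq\mathcal{ALCFIO}$, such concepts are available in the separating language and faithfully reproduce the pinning effect of fresh helper individuals in the earlier proof.

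For correctness, one direction is a standard compactness argument: if $T$ does not tile $\mathbb{N}\times\mathbb{N}$, then every model of $\Kmc$ exhibits a finite tiling defect below $a$ that is absent below $b$, so an $\ALC(\Sigma)$-concept of bounded depth already separates, hence so does some $\Lmc_{S}(\Sigma)$-formula. Conversely, from an infinite tiling I would construct a model $\Amf$ of $\Kmc$ in which $\Amf,a$ and $\Amf,b$ are $\Lmc_{S}(\Sigma)$-indistinguishable, using a back-and-forth system tailored to $\mathcal{ALCFIO}$ that exploits the uniformity supplied by the tiling, together with an $\omega$-saturation step (as in Lemma~\ref{lem:equivalence}) to lift the argument from the $\mathcal{ALCFIO}$-core to arbitrary $\Lmc_{S}\subseteq$~FO via the standard equivalence of elementary and bisimulation-style games on saturated structures. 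Because helper concept names from $\Sigma_{\text{help}}$ may be interpreted freely over $\Amf$ without disturbing the $\Sigma$-equivalence, undecidability holds uniformly in the projective and non-projective setting.

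The main obstacle is containing the combined power of nominals, inverses, and functional number restrictions: each of these features can inspect global structure that the $\ALC$-ontology does not directly regulate, so the reduction must ensure that every usable pinning in $\mathcal{ALCFIO}(\Sigma)$ witnesses a local failure of the tile-matching condition. This requires making the grids below $a$ and $b$ perfectly uniform in role arities and nominal reach, so that any $\mathcal{ALCFIO}(\Sigma)$-distinguishing observation is forced to factor through the hidden auxiliary concepts encoding tile matching. A secondary difficulty is that $\Lmc_{S}$ may contain FO features strictly beyond $\mathcal{ALCFIO}$; this is absorbed by the saturation argument above, which shows that on $\Amf$ no FO-formula over $\Sigma\cup\Sigma_{\text{help}}$ can separate $a$ from $b$ once all $\mathcal{ALCFIO}(\Sigma)$-types agree.
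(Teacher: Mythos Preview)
Your proposal has the reduction oriented the wrong way and, more importantly, the ``inseparable'' direction rests on an invalid step.

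First, the orientation. You propose that separability holds iff the tiling problem has \emph{no} solution, with an $\ALC(\Sigma)$-concept doing the separating in that case. But if an $\ALC(\Sigma)$-concept already separated, the same reduction would show that $(\ALC,\ALC)$-separability with signature is undecidable, contradicting Theorem~\ref{thm:complexity-alci}. The paper goes the other way: it uses the \emph{finite} rectangle tiling problem, and separability holds exactly when a solution $n\times m$ exists. The separating concept then \emph{describes that particular grid} and genuinely needs $\mathcal{ALCFIO}$: functionality $(\leq 1\,r_x),(\leq 1\,r_y)$ (and their inverses) plus the nominal $\{o\}$ and inverse roles are what force grid cells to close, via concepts of the form $\exists (w\cdot r_xr_yr_x^-r_y^-\cdot\overleftarrow{w}).\{o\}$. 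This is precisely where the assumption $\mathcal{ALCFIO}\subseteq\Lmc_S$ is used.

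Second, the ``inseparable'' direction. You propose to show $\mathcal{ALCFIO}(\Sigma)$-indistinguishability of $a$ and $b$ and then ``lift'' this to arbitrary $\Lmc_S\subseteq$~FO via $\omega$-saturation. That lift does not exist: $\omega$-saturation connects bisimilarity with modal equivalence (Lemma~\ref{lem:equivalence}), but there is no bridge from $\mathcal{ALCFIO}$-equivalence to FO-equivalence---FO can express patterns (arbitrary counting, non-guarded joins) that $\mathcal{ALCFIO}$ cannot. The paper avoids this entirely by proving something much stronger: for every model $\Amf$ of $\Kmc$ there is a model $\Bmf$ of $\Kmc$ with $(\Amf,b^{\Amf})$ \emph{$\Sigma$-isomorphic} to $(\Bmf,a^{\Bmf})$, obtained by swapping $a$ and $b$ and re-interpreting the non-$\Sigma$ symbols $Q,P$ (which is possible precisely because the tiling system has no solution, so an infinite path or a non-closing cell can always be found). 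Isomorphism immediately gives inseparability in any $\Lmc_S\subseteq$~FO, projective or not, with no back-and-forth or saturation needed.
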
	
\section{Model-Theoretic Criteria and Equivalence Results}
\label{sec:modelandequi}
We provide powerful model-theoretic criteria that underly the decision
procedures given later on. Moreover, we use these criteria to establish
equivalences between projective separability and non-projective
separability in more expressive languages that shed light on the role of
helper symbols.
%

We start with the model-theoretic criteria using \emph{functional}
bisimulations.  For $\Lmc\in \DLS$ we write
$\Amf,a \sim_{\Lmc,\Sigma}^{f} \Bmf,b$ if there exists an
$\Lmc(\Sigma)$-bisimulation $S$ between $\Amf$ and $\Bmf$ that
contains $(a,b)$ and is \emph{functional}, that is,
$(d,d_{1}),(d,d_{2})\in S$ implies $d_{1}=d_{2}$. Note that
$\Amf,a \sim_{\Lmc,\Sigma}^{f} \Bmf,b$ implies that there is a
homomorphism from $\Amf,a$ to $\Bmf,b$ if \Amf is connected
and $\Lmc=\ALCI$, but not otherwise.
\begin{restatable}{theorem}{thmLmodeltheorynull}
	\label{thm:L-modeltheory0}
	Let $\Lmc\in \{\ALC,\ALCI,\mathcal{ALCO}\}$. Assume that $(\Kmc,P,\{b\})$ is a labeled $\Lmc$-KB with $\Kmc=(\Omc,\Dmc)$
	and $\Sigma\subseteq \text{sig}(\Kmc)$. Then the following
	conditions are equivalent:
	\begin{enumerate}
		
		\item $(\Kmc,P,\{b\})$ is projectively
		$\Lmc(\Sigma)$-separable.
		
		\item there exists an $\Lmc$-forest model $\Amf$ of $\Kmc$ of finite
		$\Lmc$-outdegree and a set $\Sigma_{\text{help}}$ of concept names disjoint from $\text{sig}(\Kmc)$ such that for all models $\Bmf$ of
		$\Kmc$ and all $a\in P$: $\Bmf,a^{\Bmf}
		\not\sim_{\Lmc,\Sigma\cup \Sigma_{\text{help}}} \Amf, b^{\Amf}$.
		
		\item there exists an $\Lmc$-forest model $\Amf$ of $\Kmc$ of finite
		$\Lmc$-outdegree such that for all models $\Bmf$ of $\Kmc$ and all
		$a\in P$: $\Bmf,a^{\Bmf} \not\sim_{\Lmc,\Sigma}^{f} \Amf,
		b^{\Amf}$. 
	\end{enumerate}
\end{restatable}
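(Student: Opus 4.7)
My plan is to prove the equivalences as $(1)\Leftrightarrow (2)$ and $(2)\Leftrightarrow (3)$, reusing the same $\Lmc$-forest model $\Amf$ across directions. The conceptual content of $(2)\Leftrightarrow (3)$ is that condition~(3) trades away the helper concept names that~(2) grants in exchange for strengthening bisimilarity to \emph{functional} bisimilarity, and I will show this trade is exact.

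For $(1)\Rightarrow (2)$: given a separator $C\in\Lmc(\Sigma\cup\Sigma_{\text{help}})$, I apply Lemma~\ref{lem:forestm} in the signature $\text{sig}(\Kmc)\cup\Sigma_{\text{help}}$ to the assumption $\Kmc\not\models C(b)$ to obtain an $\Lmc$-forest model $\Amf$ of finite $\Lmc$-outdegree with $b^{\Amf}\notin C^{\Amf}$. Any $\Lmc(\Sigma\cup\Sigma_{\text{help}})$-bisimulation from $\Bmf,a^{\Bmf}$ to $\Amf,b^{\Amf}$ for $a\in P$ would preserve $C$, contradicting $\Kmc\models C(a)$. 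For $(2)\Rightarrow (1)$ I use compactness with a type argument: set $T:=\{D\in\Lmc(\Sigma\cup\Sigma_{\text{help}}):b^{\Amf}\notin D^{\Amf}\}$ and show, for each $a\in P$, that the FO-theory $\Kmc\cup\{\neg D(a):D\in T\}$ is unsatisfiable. A model of this theory, passed to an $\omega$-saturated elementary extension $\Bmf^\ast$, would yield $\Bmf^\ast,a^{\Bmf^\ast}\equiv_{\Lmc,\Sigma\cup\Sigma_{\text{help}}}\Amf,b^{\Amf}$ (using closure of $\Lmc$ under negation to turn the one-sided type inclusion into full equivalence), which Lemma~\ref{lem:equivalence} lifts to bisimilarity since $\Amf$ has finite outdegree and $\Bmf^\ast$ is $\omega$-saturated, contradicting~(2). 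Closure of $T$ under disjunction then allows me to extract a single $D_a\in T$ with $\Kmc\models D_a(a)$ by compactness; the separator is $\bigsqcup_{a\in P} D_a$, which still lies in $T$ and hence is not entailed at $b$, as witnessed by $\Amf$.

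For $(2)\Rightarrow (3)$, any candidate functional $\Lmc(\Sigma)$-bisimulation $S:\Bmf,a^{\Bmf}\sim^f_{\Lmc,\Sigma}\Amf,b^{\Amf}$ lifts by extending $\Bmf$ to $\Bmf'$ via $d\in A^{\Bmf'}$ iff the unique $\Amf$-image of $d$ under $S$ (if any) lies in $A^{\Amf}$, for each $A\in\Sigma_{\text{help}}$; functionality makes this well-defined, $\Bmf'\models\Kmc$ because $\Sigma_{\text{help}}$ is disjoint from $\text{sig}(\Kmc)$, and $S$ is now an $\Lmc(\Sigma\cup\Sigma_{\text{help}})$-bisimulation contradicting~(2). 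For $(3)\Rightarrow (2)$, I equip $\Amf$ with a countable set $\Sigma_{\text{help}}$ of fresh concept names whose joint interpretation separates every pair of distinct elements of $\text{dom}(\Amf)$; this is possible because a forest model of finite $\Lmc$-outdegree with finitely many roots has countable domain. Then any $\Lmc(\Sigma\cup\Sigma_{\text{help}})$-bisimulation to such an $\Amf$ is forced to be functional---two distinct $\Amf$-targets of a single $\Bmf$-source would disagree on some $A\in\Sigma_{\text{help}}$, forcing the source to both satisfy and fail $A$ in $\Bmf$---so its $\Sigma$-restriction is a functional $\Lmc(\Sigma)$-bisimulation contradicting~(3).

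The main obstacle I expect is the compactness/saturation step in $(2)\Rightarrow (1)$: one must use closure of $\Lmc$ under negation to upgrade the type inclusion coming from $\{\neg D(a):D\in T\}$ into full $\Lmc(\Sigma\cup\Sigma_{\text{help}})$-equivalence before invoking Lemma~\ref{lem:equivalence}, and verify that the outdegree hypothesis is met after elementary extension. The $(2)\Leftrightarrow (3)$ bridge is more combinatorial, but its correctness hinges on restricting helpers to concept names: admitting role or individual helpers would break the clean lift of functional $\Sigma$-bisimulations to full bisimulations, consistent with the role-name and nominal-helper phenomena around Theorems~\ref{prop:proprolehelp} and~\ref{thm:nominalhelper}.
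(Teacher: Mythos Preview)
Your proposal is correct and follows essentially the same route as the paper's proof: the $(1)\Leftrightarrow(2)$ step via forest-model completeness (Lemma~\ref{lem:forestm}) plus a type/compactness argument using $\omega$-saturation and Lemma~\ref{lem:equivalence}, and the $(2)\Leftrightarrow(3)$ step by pulling back helper concept names along a functional bisimulation (for $(2)\Rightarrow(3)$) and by expanding $\Amf$ with fresh concept names that distinguish all domain elements (for $(3)\Rightarrow(2)$). The only differences are cosmetic: you dualize the type set (concepts false at $b^{\Amf}$ rather than true) and phrase the $(3)\Rightarrow(2)$ expansion more abstractly, whereas the paper simply takes one singleton concept name $A_d$ per element $d\in\text{dom}(\Amf)$.
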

The equivalence between Points~1 and~2 of Theorem~\ref{thm:L-modeltheory0} is a direct consequence of the following characterization in the non-projective case (which can be proved
using Lemmas~\ref{lem:forestm} and~\ref{lem:equivalence}): a labeled $\Lmc$-KB
$(\Kmc,P,\{b\})$ is non-projectively $\Lmc(\Sigma)$-separable
iff there exists an $\Lmc$-forest model $\Amf$ of $\Kmc$ of finite
$\Lmc$-outdegree such that for all models $\Bmf$ of $\Kmc$ and all $a\in P$:
$\Bmf,a^{\Amf} \not\sim_{\Lmc,\Sigma} \Amf,b^{\Amf}$. Due to cycles in the databases the general bisimulations used in this criterion and in 
Point~2 of Theorem~\ref{thm:L-modeltheory0} are hard to encode in an automata based decision procedure. Moreover, in Point~2 one has to ``guess'' the
number of helper symbols needed. The criterion given in Point~3, in contrast, is much better suited for this purpose
and does not speak about helper symbols.

The equivalence of 2.\ and 3.\ is surprisingly straightforward to show as one can work
with the same model $\Amf$. As Lemma~\ref{lem:forestm} fails to hold
for $\Lmc=\mathcal{ALCIO}$, Theorem~\ref{thm:L-modeltheory0} also does
not hold for this choice of \Lmc. An example that illustrates the
situation is given in the appendix.

As a first important application of Theorem~\ref{thm:L-modeltheory0},
we show that projective $\ALCI$-separability is equivalent to
non-projective separability in UCQ$_{r}^{\ALCI}$ and that projective
$(\ALC,\mathcal{ALCO})$-separability is equivalent to non-projective
$(\ALC,\text{UCQ}_{r}^{\mathcal{ALCO}})$-separability. 
%
The following example illustrates why the languages UCQ$_{r}^{\Lmc}$
can non-projectively separate labeled KBs that cannot be separated
non-projectively in a natural way in languages from $\DLS$.
\begin{example}
	Let $\Kmc=(\Omc,\Dmc)$, where $\Omc = \{B \sqsubseteq \forall
	t.A\}$ and $\Dmc$ is depicted below:
	 
	\begin{center}

		\tikzset{every picture/.style={line width=0.5pt}} 
		
		\begin{tikzpicture}[x=0.75pt,y=0.75pt,yscale=-1,xscale=1]
			
			\draw  [fill={rgb, 255:red, 0; green, 0; blue, 0 }  ,fill opacity=1 ] (262.1,270.25) .. controls (262.1,269.28) and (261.32,268.5) .. (260.35,268.5) .. controls (259.38,268.5) and (258.6,269.28) .. (258.6,270.25) .. controls (258.6,271.22) and (259.38,272) .. (260.35,272) .. controls (261.32,272) and (262.1,271.22) .. (262.1,270.25) -- cycle ;
			\draw  [fill={rgb, 255:red, 0; green, 0; blue, 0 }  ,fill opacity=1 ] (302.1,270.25) .. controls (302.1,269.28) and (301.32,268.5) .. (300.35,268.5) .. controls (299.38,268.5) and (298.6,269.28) .. (298.6,270.25) .. controls (298.6,271.22) and (299.38,272) .. (300.35,272) .. controls (301.32,272) and (302.1,271.22) .. (302.1,270.25) -- cycle ;
			\draw  [fill={rgb, 255:red, 0; green, 0; blue, 0 }  ,fill opacity=1 ] (361.1,270.25) .. controls (361.1,269.28) and (360.32,268.5) .. (359.35,268.5) .. controls (358.38,268.5) and (357.6,269.28) .. (357.6,270.25) .. controls (357.6,271.22) and (358.38,272) .. (359.35,272) .. controls (360.32,272) and (361.1,271.22) .. (361.1,270.25) -- cycle ;
			\draw  [fill={rgb, 255:red, 0; green, 0; blue, 0 }  ,fill opacity=1 ] (401.1,270.25) .. controls (401.1,269.28) and (400.32,268.5) .. (399.35,268.5) .. controls (398.38,268.5) and (397.6,269.28) .. (397.6,270.25) .. controls (397.6,271.22) and (398.38,272) .. (399.35,272) .. controls (400.32,272) and (401.1,271.22) .. (401.1,270.25) -- cycle ;
			\draw  [fill={rgb, 255:red, 0; green, 0; blue, 0 }  ,fill opacity=1 ] (431.1,270.25) .. controls (431.1,269.28) and (430.32,268.5) .. (429.35,268.5) .. controls (428.38,268.5) and (427.6,269.28) .. (427.6,270.25) .. controls (427.6,271.22) and (428.38,272) .. (429.35,272) .. controls (430.32,272) and (431.1,271.22) .. (431.1,270.25) -- cycle ;
			\draw  [fill={rgb, 255:red, 0; green, 0; blue, 0 }  ,fill opacity=1 ] (461.1,260.25) .. controls (461.1,259.28) and (460.32,258.5) .. (459.35,258.5) .. controls (458.38,258.5) and (457.6,259.28) .. (457.6,260.25) .. controls (457.6,261.22) and (458.38,262) .. (459.35,262) .. controls (460.32,262) and (461.1,261.22) .. (461.1,260.25) -- cycle ;
			\draw  [fill={rgb, 255:red, 0; green, 0; blue, 0 }  ,fill opacity=1 ] (461.1,280.25) .. controls (461.1,279.28) and (460.32,278.5) .. (459.35,278.5) .. controls (458.38,278.5) and (457.6,279.28) .. (457.6,280.25) .. controls (457.6,281.22) and (458.38,282) .. (459.35,282) .. controls (460.32,282) and (461.1,281.22) .. (461.1,280.25) -- cycle ;
			\draw    (265.18,267.92) -- (292.56,268.07) ;
			\draw [shift={(295.56,268.08)}, rotate = 180.31] [fill={rgb, 255:red, 0; green, 0; blue, 0 }  ][line width=0.08]  [draw opacity=0] (5,-2.5) -- (0,0) -- (5,2.5) -- (3.5,0) -- cycle    ;
			\draw    (432.85,269.58) -- (452.54,262.61) ;
			\draw [shift={(455.36,261.61)}, rotate = 520.49] [fill={rgb, 255:red, 0; green, 0; blue, 0 }  ][line width=0.08]  [draw opacity=0] (5,-2.5) -- (0,0) -- (5,2.5) -- (3.5,0) -- cycle    ;
			\draw    (433.02,271.92) -- (452.78,278.77) ;
			\draw [shift={(455.61,279.75)}, rotate = 199.12] [fill={rgb, 255:red, 0; green, 0; blue, 0 }  ][line width=0.08]  [draw opacity=0] (5,-2.5) -- (0,0) -- (5,2.5) -- (3.5,0) -- cycle    ;
			\draw    (265.18,273.92) -- (292.56,274.07) ;
			\draw [shift={(295.56,274.08)}, rotate = 180.31] [fill={rgb, 255:red, 0; green, 0; blue, 0 }  ][line width=0.08]  [draw opacity=0] (5,-2.5) -- (0,0) -- (5,2.5) -- (3.5,0) -- cycle    ;
			\draw    (364.18,267.92) -- (391.56,268.07) ;
			\draw [shift={(394.56,268.08)}, rotate = 180.31] [fill={rgb, 255:red, 0; green, 0; blue, 0 }  ][line width=0.08]  [draw opacity=0] (5,-2.5) -- (0,0) -- (5,2.5) -- (3.5,0) -- cycle    ;
			\draw    (364.18,273.92) -- (391.56,274.07) ;
			\draw [shift={(394.56,274.08)}, rotate = 180.31] [fill={rgb, 255:red, 0; green, 0; blue, 0 }  ][line width=0.08]  [draw opacity=0] (5,-2.5) -- (0,0) -- (5,2.5) -- (3.5,0) -- cycle    ;
			
			\draw (256.41,253.56) node [anchor=north west][inner sep=0.75pt]  [font=\small] [align=left] {$\displaystyle a$};
			\draw (295.96,253.11) node [anchor=north west][inner sep=0.75pt]  [font=\small] [align=left] {$\displaystyle c$};
			\draw (353.41,250.33) node [anchor=north west][inner sep=0.75pt]  [font=\small] [align=left] {$\displaystyle b_{1}$};
			\draw (394.96,250.67) node [anchor=north west][inner sep=0.75pt]  [font=\small] [align=left] {$\displaystyle d$};
			\draw (422.07,250.33) node [anchor=north west][inner sep=0.75pt]  [font=\small] [align=left] {$\displaystyle b_{2}$};
			\draw (455.85,243.78) node [anchor=north west][inner sep=0.75pt]  [font=\small] [align=left] {$\displaystyle e$};
			\draw (454.96,286.89) node [anchor=north west][inner sep=0.75pt]  [font=\small] [align=left] {$\displaystyle f$};
			\draw (469.54,252.87) node [anchor=north west][inner sep=0.75pt]  [font=\small] [align=left] {$\displaystyle B$};
			\draw (469.76,275.42) node [anchor=north west][inner sep=0.75pt]  [font=\small] [align=left] {$\displaystyle B$};
			\draw (308.85,263.91) node [anchor=north west][inner sep=0.75pt]  [font=\small] [align=left] {$\displaystyle B$};
			\draw (275.21,255.76) node [anchor=north
			west][inner sep=0.75pt]  [font=\small]
			[align=left] {$\displaystyle r$};
			\draw (375.41,255.76) node [anchor=north
			west][inner sep=0.75pt]  [font=\small]
			[align=left] {$\displaystyle r$};
			\draw (438.61,247.56) node [anchor=north
			west][inner sep=0.75pt]  [font=\small]
			[align=left] {$\displaystyle r$};
			\draw (275.21,276.36) node [anchor=north
			west][inner sep=0.75pt]  [font=\small]
			[align=left] {$\displaystyle s$};
			\draw (375.41,277.16) node [anchor=north
			west][inner sep=0.75pt]  [font=\small]
			[align=left] {$\displaystyle s$};
			\draw (438.01,278.76) node [anchor=north
			west][inner sep=0.75pt]  [font=\small]
			[align=left] {$\displaystyle s$};

		\end{tikzpicture}
	\end{center}
Let $P=\{a\},N=\{b_{1},b_{2}\}$, and $\Sigma=\{r,s,t,A\}$. Then 
$$
\exists y\, r(x,y) \wedge s(x,y) \wedge (\forall t.A)(y) \in \text{CQ}_{r}^{\ALC}
$$
$\Sigma$-separates
$(\Kmc,P,N)$. The `simplest' $\ALC$-concept $\Sigma$-separating
$(\Kmc,P,N)$ is $(\exists r.\forall t.A) \sqcap (\forall r.X
\rightarrow \exists s.X)$, where $X$ is fresh.
\end{example}

We next state the announced equivalences. Informally spoken, they show
that admitting helper concept names corresponds to `adding rooted UCQs'.
\begin{restatable}{theorem}{thmequival}
   \label{thm:equival}
	Let $(\Lmc,\Lmc_{S})$ be either $(\ALCI,\ALCI)$ or $(\ALC, \mathcal{ALCO})$ and let $(\Kmc,P,\{b\}\})$ be a labeled $\Lmc$-KB and $\Sigma\subseteq \text{sig}(\Kmc)$ a signature. Then the following conditions are equivalent:
\begin{enumerate}
	\item $(\Kmc,P,\{b\})$ is projectively $\Lmc_{S}(\Sigma)$-separable;
	\item $(\Kmc,P,\{b\})$ is non-projectively UCQ$_{r}^{\Lmc_{S}}(\Sigma)$-separable. 
\end{enumerate}  
\end{restatable}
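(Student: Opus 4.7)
Plan. My plan is to establish both directions via the model-theoretic characterization in Theorem~\ref{thm:L-modeltheory0}~(Point~3), combined with the rooted-homomorphism characterization of $\text{CQ}_r^{\Lmc_S}(\Sigma)$-entailment from Lemma~\ref{lem:equivalence2}. The central auxiliary claim on which both directions hinge is the following: for pointed structures of finite $\Lmc$-outdegree or $\omega$-saturated, a functional $\Lmc_S(\Sigma)$-bisimulation from $(\Bmf,a)$ to $(\Amf,b)$ exists if and only if $\Bmf,a\Rightarrow_{\text{CQ}_r^{\Lmc_S},\Sigma}\Amf,b$. For the mismatched case $(\Lmc,\Lmc_S)=(\ALC,\mathcal{ALCO})$, one uses a mild variant of Theorem~\ref{thm:L-modeltheory0} with $\Lmc_S(\Sigma)$-bisimulations in place of $\Lmc(\Sigma)$-bisimulations; this adaptation goes through because $\ALC$-KBs still admit $\ALC$-forest models of finite outdegree by Lemma~\ref{lem:forestm}, and the proof strategy for Theorem~\ref{thm:L-modeltheory0} is not tied to the KB language coinciding with the separating language.

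Direction (2)$\Rightarrow$(1). Given a separator $\varphi\in\text{UCQ}_r^{\Lmc_S}(\Sigma)$, I fix a model $\Amf$ of $\Kmc$ with $\Amf\not\models\varphi(b^\Amf)$; by Lemma~\ref{lem:forestm} I may take $\Amf$ to be an $\Lmc$-forest model of finite $\Lmc$-outdegree. Assume toward a contradiction that a functional $\Lmc_S(\Sigma)$-bisimulation $f$ exists from $(\Bmf,a^\Bmf)$ to $(\Amf,b^\Amf)$ for some $\Bmf\models\Kmc$ and $a\in P$. Since $\Kmc\models\varphi(a)$, there is an assignment in $\Bmf$ witnessing $\varphi(a^\Bmf)$, with all witnesses $\Lmc_S$-reachable from $a^\Bmf$ by the rootedness of $\varphi$ and hence in $\text{dom}(f)$. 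Composing the assignment with $f$ preserves $\Sigma$-role atoms ($f$ is a $\Sigma$-homomorphism) and $\Lmc_S(\Sigma)$-concept atoms (by Lemma~\ref{lem:equivalence} applied to $\Lmc_S$-bisimilar pairs $(d,f(d))$), yielding a witnessing assignment for $\varphi(b^\Amf)$ in $\Amf$ and contradicting $\Amf\not\models\varphi(b^\Amf)$. The (adapted) Theorem~\ref{thm:L-modeltheory0} then delivers projective $\Lmc_S(\Sigma)$-separability.

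Direction (1)$\Rightarrow$(2). Fix the forest model $\Amf$ from Theorem~\ref{thm:L-modeltheory0}~Point~3 witnessing projective $\Lmc_S(\Sigma)$-separability. By the central auxiliary claim, applied after passing to $\omega$-saturated elementary extensions, we get $\Bmf,a^\Bmf\not\Rightarrow_{\text{CQ}_r^{\Lmc_S},\Sigma}\Amf,b^\Amf$ for every $\Bmf\models\Kmc$ and $a\in P$; hence there is some $\varphi_{\Bmf,a}\in\text{CQ}_r^{\Lmc_S}(\Sigma)$ with $\Bmf\models\varphi_{\Bmf,a}(a^\Bmf)$ and $\Amf\not\models\varphi_{\Bmf,a}(b^\Amf)$. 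A standard FO-compactness argument now extracts a single UCQ separator: for each $a\in P$, the theory $\Kmc\cup\{\neg\varphi(a) : \varphi\in\text{CQ}_r^{\Lmc_S}(\Sigma),\,\Amf\not\models\varphi(b^\Amf)\}$ is unsatisfiable (else any model would supply a $\Bmf$ contradicting the previous step), and compactness yields a finite $\Psi_a$ with $\Kmc\models\bigvee_{\psi\in\Psi_a}\psi(a)$ and $\Amf\not\models\psi(b^\Amf)$ for every $\psi\in\Psi_a$. The final separator is $\varphi=\bigvee_{a\in P}\bigvee_{\psi\in\Psi_a}\psi$, which separates $(\Kmc,P,\{b\})$ non-projectively in $\text{UCQ}_r^{\Lmc_S}(\Sigma)$.

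Main obstacle. The crux is the backward direction of the central auxiliary claim, i.e.\ turning $\Bmf,a\Rightarrow_{\text{CQ}_r^{\Lmc_S},\Sigma}\Amf,b$ into an actual functional $\Lmc_S(\Sigma)$-bisimulation. Via Lemma~\ref{lem:equivalence2} the hypothesis supplies bisim-preserving rooted $\Sigma$-homomorphisms $h_D\colon\Bmf|_D\to\Amf$ for all finite rooted $D\ni a$, but these maps need not be pairwise compatible; assembling them into a single functional bisimulation demands a back-and-forth construction that exploits finite $\Lmc$-outdegree of $\Amf$ (or $\omega$-saturation) to make globally consistent choices, and one must explicitly verify the \emph{back} condition, which is not immediate from homomorphism but follows from each image being $\Lmc_S(\Sigma)$-bisimilar to its preimage via Lemma~\ref{lem:equivalence}. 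A secondary subtlety is proving the variant of Theorem~\ref{thm:L-modeltheory0} needed for the $(\ALC,\mathcal{ALCO})$ case, where the bisimulation must respect nominals $\{c\}$ appearing in $\Lmc_S(\Sigma)$-atoms even though the KB is nominal-free.
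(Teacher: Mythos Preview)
Your direction (2)$\Rightarrow$(1) is fine and essentially matches the paper. The problem is (1)$\Rightarrow$(2): your ``central auxiliary claim'' is \emph{false}, not merely hard. Take $\Sigma=\{r\}$, let $\Amf$ be the tree with root $b$ and two leaf $r$-children $c_1,c_2$, and let $\Bmf$ be the tree with root $a$ and a single leaf $r$-child $e$. Both are finite (hence $\omega$-saturated) and of finite outdegree. One has $\Bmf,a\Rightarrow_{\text{CQ}_r^{\ALCI},\Sigma}\Amf,b$ (for every rooted finite $D\subseteq\{a,e\}$ the map $a\mapsto b,\ e\mapsto c_1$ is a $\Sigma$-homomorphism with $\Lmc_S(\Sigma)$-bisimilar images), yet there is no functional $\ALCI(\Sigma)$-bisimulation $(\Bmf,a)\to(\Amf,b)$: the back condition at $(a,b)$ forces the single successor $e$ to be sent to both $c_1$ and $c_2$. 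Your back-and-forth sketch cannot succeed here because the back step demands more preimages than $\Bmf$ provides whenever the outdegree of $\Amf$ exceeds that of $\Bmf$; neither $\omega$-saturation of $\Bmf$ nor finite outdegree of $\Amf$ helps.

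The paper never asserts your auxiliary equivalence. Instead it first gives a separate model-theoretic characterization of non-projective $\text{UCQ}_r^{\Lmc_S}(\Sigma)$-separability (Lemma~\ref{lem:helps}, via a bounded-size rooted $D$ with $\Bmf,a^\Bmf\not\rightarrow_{D,\Lmc_S,\Sigma}\Amf,b^\Amf$; your compactness step is absorbed into this lemma), and then proves the equivalence of that condition with Condition~3 of Theorem~\ref{thm:L-modeltheory0}. For the nontrivial direction it does \emph{not} build a functional bisimulation in $\Bmf$; rather, given a $\text{CQ}^{\Lmc_S}(\Sigma)$-homomorphism on the database part of $\Bmf$, it \emph{replaces} $\Bmf$ by a $k$-unfolding $\Bmf'$ (with $k$ the maximum $\Lmc$-outdegree of $\Amf$), which manufactures enough successor copies to satisfy the back condition, and then assembles a functional bisimulation from $\Bmf'$ to $\Amf$. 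Since unfolding preserves the $\Lmc$-theory, $\Bmf'$ is still a model of $\Kmc$, contradicting Condition~3. Your argument can be repaired along exactly these lines, but the repair is precisely this model-replacement step, not a back-and-forth in the original $\Bmf$.
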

\begin{proof} \
The proof has two main steps. First, using Lemma~\ref{lem:equivalence2},
one can characterize non-projective UCQ$_{r}^{\Lmc_{S}}(\Sigma)$-separability 
in terms of CQ$^{\Lmc_{S}}(\Sigma)$-homomorphisms. Namely, $(\Kmc,P,\{b\})$ is non-projectively UCQ$_{r}^{\Lmc_{S}}(\Sigma)$-separable iff there exist an $\Lmc_{S}$-forest model $\Amf$ of $\Kmc$ of finite
	$\Lmc_{S}$-outdegree and $n>0$ such that for all models $\Bmf$ of $\Kmc$ and all $a\in P$, $\Bmf,a^{\Bmf}\not\rightarrow_{D,\Lmc_{S},\Sigma} \Amf,
	b^{\Amf}$, for some $D$ with $|D|\leq n$ such that the $\Sigma$-reduct
	of $\Bmf_{|D}$ is $\Lmc_{S}$-rooted in $a^{\Bmf}$. Secondly, one can prove that this characterization is equivalent to Condition~3 of Theorem~\ref{thm:L-modeltheory0}. Observe, for example, that functional $\Sigma$-bisimulations give rise to the combination of $\Sigma$-homomorphisms and $\Sigma$-bisimulations given in the characterization above.
\end{proof}
We observe that the equivalences of Theorem~\ref{thm:equival}
do not hold when the ontology contains nominals.
\begin{example}
	Let $\Kmc=(\Omc,\Dmc)$, where 
	$
	\Omc= \{\{a\} \sqsubseteq \forall r.\{a\}, 
	\top \sqsubseteq \exists r.\top\},
	$
    and $\Dmc = \{A(a),r(b,b)\}$. Let $\Sigma=\{r\}$. Then $(\Kmc,\{a\},\{b\})$ is 
	projectively separated by the $\ALC(\Sigma)$-concept
        $X\rightarrow \forall r.X$ with $X$ a fresh concept name, but it is not
	non-projectively \text{UCQ}$_{r}^{\mathcal{ALCO}}(\Sigma)$-separable.
\end{example}
It remains open whether there is any natural fragment of FO such that
a labeled $\mathcal{ALCO}$-KBs is non-projectively separable in the fragment if
and only if it is projectively separable in $\mathcal{ALCO}$.

\section{The Complexity of Weak Separability}
\label{sec:complexity}

We study the decidability and computational complexity of projective
$\Lmc$-separability for $\Lmc\in\{\ALC,\ALCI,\ALCO\}$.  The
results established in this section are closely related to
conservative extensions of ontologies and we also observe new
results for that problem. For \Lmc-ontologies 
$\Omc$ and $\Omc'$, 
we say that $\Omc \cup \Omc'$
is a \emph{conservative extension of $\Omc$ in $\Lmc$} if,  for all concept inclusions
$C \sqsubseteq D$ with $C,D$ $\Lmc$-concepts that use only symbols from
$\text{sig}(\Omc)$: if 
$\Omc \cup \Omc'$ entails $C \sqsubseteq D$ then already
$\Omc$ entails $C \sqsubseteq D$. \emph{Projective conservative extensions in $\Lmc$} are
defined in the same way except that $C$ and $D$ may additionally use
fresh concept names, that is, concept names that are not in
$\text{sig}(\Omc \cup \Omc')$. If $\Omc \cup \Omc'$ is not a
conservative extension of $\Omc$ in $\Lmc$, then there exists an \Lmc-concept
$C$ that uses only symbols from $\text{sig}(\Omc)$ and is satisfiable
w.r.t.~$\Omc$, but not w.r.t.\ $\Omc \cup \Omc'$. We
call such a concept $C$ a \emph{witness concept} for $\Omc$ and
$\Omc'$.

%
%
\begin{restatable}{lemma}{lemcereduction}\label{lem:ce-reduction}
  Let $\Lmc\in \DLS$. Then deciding conservative extensions in $\Lmc$
  can be reduced in polynomial time to the complement of
  $\Lmc$-separability, both in the projective and non-projective case.
\end{restatable}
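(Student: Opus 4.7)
The plan is to reduce deciding non-conservativity in $\Lmc$ to deciding $\Lmc$-separability via a two-world construction. Given $\Lmc$-ontologies $\Omc$ and $\Omc'$, I build in polynomial time a labeled $\Lmc$-KB $(\Kmc, \{a_P\}, \{a_N\})$ with signature $\Sigma = \text{sig}(\Omc)$ such that $(\Kmc, \{a_P\}, \{a_N\})$ is (projectively) $\Lmc(\Sigma)$-separable iff $\Omc \cup \Omc'$ is not a (projective) conservative extension of $\Omc$ in $\Lmc$. The idea is that models of $\Kmc$ should split, via a fresh concept name $W$, into two role-disjoint worlds: the $W$-world contains $a_P$ and forms a model of $\Omc \cup \Omc'$, while the $\neg W$-world contains $a_N$ and forms a model of $\Omc$.

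Concretely, using fresh concept names $W, A_P, A_N$, I set $\Dmc = \{A_P(a_P), A_N(a_N)\}$ and let $\Omc^*$ consist of: (i) $A_P \sqsubseteq W$ and $A_N \sqsubseteq \neg W$; (ii) $W \sqsubseteq \forall R.W$ and $\neg W \sqsubseteq \forall R.\neg W$ for every role $R$ occurring in $\Omc \cup \Omc'$ (taking inverses into account when $\Lmc$ has them); (iii) $W \sqcap C \sqsubseteq D$ for every $C \sqsubseteq D \in \Omc \cup \Omc'$; and (iv) $\neg W \sqcap C \sqsubseteq D$ for every $C \sqsubseteq D \in \Omc$. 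The closure axioms (ii) make $W^{\Amf}$ and $(\neg W)^{\Amf}$ role-closed in every model $\Amf$ of $\Omc^*$, so the interpretation of any $\Lmc(\Sigma)$-concept at a $W$-element depends only on the $W$-part of $\Amf$, and dually for $\neg W$. Combined with (iii) and (iv), this makes the $W$-part of every model of $\Kmc$, viewed as a standalone structure, a model of $\Omc \cup \Omc'$ containing $a_P$, and its $\neg W$-part a model of $\Omc$ containing $a_N$. Conversely, any pair consisting of an $\Omc \cup \Omc'$-model and a disjoint $\Omc$-model, with arbitrary designated elements playing the roles of $a_P^\Amf, a_N^\Amf$, assembles into a model of $\Kmc$.

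Correctness then reduces to a routine verification that, for every $\Lmc(\Sigma)$-concept $\varphi$, one has $\Kmc \models \varphi(a_P)$ iff $\Omc \cup \Omc' \models \top \sqsubseteq \varphi$ and $\Kmc \models \varphi(a_N)$ iff $\Omc \models \top \sqsubseteq \varphi$; degenerate cases where $\Omc$ or $\Omc \cup \Omc'$ is inconsistent are dispatched by a preliminary satisfiability check. Hence $\varphi$ $\Sigma$-separates $(\Kmc, \{a_P\}, \{a_N\})$ iff $\neg \varphi$ is a witness concept for the non-conservativity of $\Omc \cup \Omc'$ over $\Omc$, giving the required equivalence. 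The projective case is handled by the same construction: fresh helper concept names admitted in $\varphi$ correspond exactly to fresh concept names admitted in the witness, and up to renaming we may assume such helpers are disjoint from $\{W, A_P, A_N\} \cup \text{sig}(\Omc \cup \Omc')$.

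The main obstacle is the case $\Lmc \in \{\mathcal{ALCO}, \mathcal{ALCIO}\}$, where nominals in $\Omc \cup \Omc'$ must also appear in the database (by the paper's convention) and a naive two-world construction leaves ambiguous whether a nominal $\{c\}$ lives in the $W$- or $\neg W$-part. I would resolve this by duplicating each such nominal $c$ into two fresh individuals $c_W, c_{\neg W}$, pinned to the respective worlds by new marker facts in $\Dmc$, and by replacing $\{c\}$ in the inclusions of (iii) and (iv) by $\{c_W\}$ and $\{c_{\neg W}\}$ respectively. The resulting construction stays linear in $|\Omc \cup \Omc'|$ and works uniformly for all $\Lmc \in \DLS$.
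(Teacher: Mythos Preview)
Your two-world construction is correct and clean for $\ALC$ and $\ALCI$, and is in fact a simpler variant of the paper's relativization argument: since you use disjoint worlds $W$ and $\neg W$, you get the direct equivalences $\Kmc \models \varphi(a_P)$ iff $\Omc \cup \Omc' \models \top \sqsubseteq \varphi$ and $\Kmc \models \varphi(a_N)$ iff $\Omc \models \top \sqsubseteq \varphi$ without the $\text{atom}_\Sigma$ trick that the paper needs.

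However, your proposed fix for nominals does not work. Once you duplicate $c$ into $c_W$ and $c_{\neg W}$ and (necessarily) put both into the separation signature, the backward direction of the reduction fails: separating concepts can exploit the fact that $c_W$ and $c_{\neg W}$ live in different worlds in a way that has no counterpart as a single witness concept over $\text{sig}(\Omc)$. Concretely, take $\Omc = \{\{c\} \sqsubseteq A\}$ and $\Omc' = \emptyset$, so that $\Omc \cup \Omc'$ is trivially a conservative extension of $\Omc$. In your construction, $\varphi = \neg\{c_{\neg W}\}$ separates: $\Kmc \models \varphi(a_P)$ since $a_P$ is pinned to the $W$-world and $c_{\neg W}$ to the $\neg W$-world, while $\Kmc \not\models \varphi(a_N)$ since nothing prevents $a_N^\Amf = c_{\neg W}^\Amf$. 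No choice of which copies to include in $\Sigma$ repairs this: including only one copy makes nominals degenerate to $\bot$ in the other world, breaking the correspondence with witness concepts that genuinely use $\{c\}$.

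The underlying obstruction is that a nominal $c \in \text{sig}(\Omc)$ must be interpreted in \emph{both} the $\Omc$-model and the $\Omc \cup \Omc'$-model, which is impossible if the two worlds are disjoint. The paper resolves this by using \emph{overlapping} relativizations: it relativizes $\Omc$ to a concept $A$ and $\Omc \cup \Omc'$ to a concept $D'$ that are not complementary, forces all $\Sigma$-nominals into $A$, and triggers $D'$ at the positive example only when some $\Sigma$-atom holds there (connecting the nominals to $D'$ via a fresh role $s$). This is why the paper needs the preliminary normalization that some $E \in \text{atom}_\Sigma$ has a complementary atom, and why the correspondence it proves is the weaker three-part statement~(1)--(3) rather than your direct equivalence.
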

\begin{proof}
  \ The proof uses relativizations. Intuitively, given $\Omc$ and
  $\Omc'$ one computes a new ontology $\Omc_1$ which contains~$\Omc$
  and the relativization of $\Omc'$ to a fresh concept name $A$.
  Then, a concept $C$ is a witness concept for $\Omc$ and $\Omc'$ iff
  $\neg C$ separates (w.r.t.~$\Omc_1$) an individual that satisfies
  $A$ from an individual that does not satisfy $A$. If $\Lmc$ contains
  nominals the proof is slightly more involved.
\end{proof}

We start our analysis with the DLs $\ALC$ and
$\ALCI$.
\begin{theorem}\label{thm:complexity-alci}
  Projective \Lmc-separability with signature is \TwoExpTime-complete,
  for $\Lmc\in\{\ALC,\ALCI\}$.
\end{theorem}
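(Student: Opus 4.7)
For the lower bound, I would invoke Lemma~\ref{lem:ce-reduction}, which gives a polynomial-time reduction from deciding (projective) conservative extensions in $\Lmc$ to the complement of (projective) $\Lmc$-separability. Since deciding conservative extensions is \TwoExpTime-hard in $\ALC$~\cite{DBLP:conf/kr/GhilardiLW06} and this hardness carries over to $\ALCI$~\cite{DBLP:conf/ijcai/LutzWW07}, projective $\Lmc$-separability inherits the \TwoExpTime lower bound for both $\Lmc\in\{\ALC,\ALCI\}$.

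For the upper bound, I would exploit Point~3 of Theorem~\ref{thm:L-modeltheory0}, which (using that $\Lmc$ is closed under conjunction, so we may assume $N=\{b\}$) reduces the task to deciding whether there exists an $\Lmc$-forest model $\Amf$ of $\Kmc$ of finite $\Lmc$-outdegree such that no model $\Bmf$ of $\Kmc$ and positive example $a\in P$ admit a functional $\Lmc(\Sigma)$-bisimulation $\Bmf,a^\Bmf \sim^{f}_{\Lmc,\Sigma} \Amf,b^\Amf$. The shape ``$\exists$ witness forest model $\forall$ counter-models'' is well-suited to tree automata on trees of bounded outdegree (one subtree per individual, glued by the database facts at the roots).

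I would then build a two-way alternating tree automaton $\mathcal{T}$ that accepts exactly such witness forest models. Each node $v$ of $\Amf$ is annotated with (i) the $\mathrm{sig}(\Kmc)$-type realized at $v$, which is checked against $\Omc$ and $\Dmc$ in the standard way; and (ii) a \emph{refutation label} $\Theta_v$ whose elements are $\mathrm{sig}(\Kmc)$-types realized at some point $d$ of some model $\Bmf$ of $\Kmc$ admitting a functional $\Lmc(\Sigma)$-bisimulation $\Bmf,d \sim^{f}_{\Lmc,\Sigma} \Amf,v$. The automaton enforces local coherence of $\Theta$ along each edge (mirroring the forth/back conditions of the functional bisimulation game, restricted to $\Sigma$), and rejects any $\Amf$ whose refutation label at $b^\Amf$ contains a type realized by some $a\in P$ in some model of $\Kmc$. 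Since there are doubly exponentially many candidate labels $\Theta_v$, $\mathcal{T}$ has doubly exponentially many states; its emptiness test then takes time exponential in the size of $\mathcal{T}$, yielding the claimed \TwoExpTime bound.

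The main obstacle is proving that the purely local consistency conditions on the refutation labels fully capture the global non-existence of a functional $\Lmc(\Sigma)$-bisimulation from any model $\Bmf$, and handling the forest roots, where cycles in $\Dmc$ preclude a pure tree structure; in the $\ALCI$ case, two-way alternation is needed to coordinate forward and backward successors. Both issues are handled by techniques parallel to (and extending) the automata-based decision procedure for conservative extensions in $\ALC$ from~\cite{DBLP:conf/kr/GhilardiLW06}, so I would follow that blueprint and adapt it to carry the signature~$\Sigma$ throughout the bisimulation bookkeeping.
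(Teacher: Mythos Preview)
Your lower-bound argument matches the paper exactly. The upper-bound strategy is also in the same spirit as the paper's (a 2ATA running over encodings of $\Lmc$-forest models, checking Point~3 of Theorem~\ref{thm:L-modeltheory0}), but there is a genuine gap in your complexity analysis.

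You annotate each node $v$ with a set $\Theta_v$ of $\text{sig}(\Kmc)$-types and correctly note that there are doubly exponentially many such labels. But a 2ATA with doubly exponentially many states, combined with an emptiness test that is exponential \emph{in the number of states}, gives a \ThreeExpTime upper bound, not \TwoExpTime. (Indeed, the paper's upper bound for $\ALCO$ proceeds by a ``direct'' encoding of $(\ast_a)$---via projection rather than set-of-types labels, but with the same double-exponential automaton size---and there \ThreeExpTime turns out to be tight.)

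The paper closes this gap for $\ALC$/$\ALCI$ by first replacing condition~$(\ast_a)$ with an equivalent decomposed condition, Lemma~\ref{lem:equiv}: one guesses a $\Sigma$-homomorphism $h$ from the connected component $\Dmc_{\text{con}(a)}$ of the database into $\Amf$, together with a $\Kmc$-type $t_d$ for each database individual~$d$, and then checks, independently for each $d$, the existence of a tree-shaped model $\Bmf_d$ of $\Omc$ realising $t_d$ that is (ordinarily, not functionally) $\Sigma$-bisimilar to $\Amf,h(d)$. Because the functionality constraint is absorbed into the homomorphism on the finitely many database individuals, the automaton never needs to carry a \emph{set} of types in its state: at any moment it holds only the single type of the element of $\Bmf_d$ currently being simulated, plus the fixed initial guess of the $t_d$'s and a partition of $\Dmc_{\text{con}(a)}$. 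This keeps the state set single-exponential, and the exponential emptiness test then yields \TwoExpTime.

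So the obstacle you yourself flag---handling the forest roots where cycles in $\Dmc$ preclude a pure tree argument---is exactly the crux, and the set-of-types annotation does not resolve it cheaply enough; the decomposition in Lemma~\ref{lem:equiv} is the missing idea.
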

The lower bound follows from Lemma~\ref{lem:ce-reduction} and also
holds for non-projective separability. In fact, it is known that
deciding (non-projective) conservative extensions in
$\Lmc\in \{\ALC,\ALCI\}$ is
\TwoExpTime-hard~\cite{DBLP:conf/kr/GhilardiLW06,DBLP:conf/ijcai/LutzWW07} and that
conservative extensions and projective conservative extensions
coincide in logics that enjoy Craig interpolation~\cite{JLMSW17},
which \ALC and \ALCI do.

For the upper bound, we concentrate on \ALCI; the case
of \ALC is very similar, but simpler. The idea is to use 
two-way alternating tree automata
(2ATA)~\cite{DBLP:conf/icalp/Vardi98} to decide Condition~3 of
Theorem~\ref{thm:L-modeltheory0}. More precisely, given
$(\Kmc,P,\{b\}),\Sigma$ with $\Kmc=(\Omc,\Dmc)$, we construct a 2ATA
\Amc such that the language recognized by \Amc is non-empty
if and only if there is a forest model \Amf of \Kmc as described in
Condition~3 of Theorem~\ref{thm:L-modeltheory0}. The use of tree
automata is enabled by the fact that Condition~3 refers to
\emph{forest models} of~\Kmc. Indeed, forest structures can be encoded
in labeled trees using an appropriate alphabet. Intuitively, each node
in the tree corresponds to an element in the forest structure and the
label contains its type, the connection to its predecessor, and
connections to individuals from \Dmc.

It is not difficult to devise a
2ATA \Bmc (of polynomial size) that recognizes the finite outdegree forest
models of \Kmc, see e.g.~\cite{JLMSW17}. 
Observe next that it suffices to construct, for each $a\in P$, a 2ATA
$\Amc_a$ such that $\Amc_a$ accepts \Amf iff 
\begin{description}

  \item[$(\ast_a)$] there is a model \Bmf of \Kmc with
    $\Bmf,a^\Bmf\sim^f_{\ALCI,\Sigma} \Amf,b^\Amf$.

\end{description}
Indeed, a 2ATA that recognizes the following language is as required:
\[L(\Bmc)\cap\textstyle\bigcap_{a\in P}\overline{L(\Amc_a)}\]
where $\overline L$ denotes the complement of $L$.
As complementation and intersection of 2ATAs involve only a polynomial
blowup, we obtain the desired 2ATA $\Amc$ from $\Bmc$ and the
$\Amc_a$.  

In principle, the existence of a (not necessarily functionally)
bisimilar model \Bmf can be checked using alternating automata as
follows. We assume w.l.o.g.\ that the model \Bmf is a forest model, because we
can always consider an appropriate unraveling.  Then,
the alternating automaton `virtually' traverses $\Bmf$
element-by-element, storing at each moment only the type of the
current element in its state and visiting a bisimilar element in \Amf. Alternation is crucial as the
automaton has to extend the bisimulation \emph{for all} successors of
the current element in $\Bmf$ and symmetrically \emph{for all}
successors of the currently visited element in \Amf. Functionality of the
bisimulation poses a challenge:
different parts of the run of the automaton can visit the same
individual from \Dmc in \Bmf, and functionality requires that the automaton
visits the same element in \Amf. 
In order to solve that (and get tight bounds), we replace~$(\ast_a)$
with an equivalent condition in which functional bisimulations are
carefully `compiled away'. 

We introduce some additional notation.  An \emph{extended database} is
a database that additionally may contain `atoms' of the form $C(a)$
with $C$ an \ALCI-concept. The semantics of extended databases is
defined in the expected way.  Let $\text{sub}(\Kmc)$ denote the set of
concepts that occur in $\Kmc$, closed under single negation and under
subconcepts. The \emph{$\Kmc$-type realized in a pointed structure
$\Amf,a$} is defined as $$ \text{tp}_{\Kmc}(\Amf,a) = \{ C\in
  \text{sub}(\Kmc) \mid a\in C^{\Amf}\}.  $$ A \emph{$\Kmc$-type} is
  any set $t \subseteq \text{sub}(\Kmc)$ of the form
  $\text{tp}_{\Kmc}(\Amf,a)$.  For a pointed database $\Dmc,a$, we
  write $\Dmc_{\text{con}(a)},a \rightarrow^{\Sigma}_{c} \Amf,b^{\Amf}$ if
  there is a $\Sigma$-homomorphism $h$ from the maximal connected
  component $\Dmc_{\text{con}(a)}$ of $a$ in $\Dmc$ to $\Amf$ such
  that $h(a)=b^\Amf$ and there is a $\Kmc$-type $t_{d}$ for each $d\in
  \text{ind}(\Dmc_{\text{con}(a)})$ such that:
\begin{enumerate}[label=(\roman*)]

  \item there exists a model $\Bmf_{d}$ of $\Omc$ with
    $\text{tp}_{\Kmc}(\Bmf_{d},d) = t_{d}$ and
    $\Bmf_{d},d\sim_{\ALCI,\Sigma} \Amf,h(d)$; 

  \item $(\Omc,\Dmc')$ is satisfiable, for the extended database
    $\Dmc'=\Dmc\cup \{ C(d) \mid C \in t_d, \ d\in
      \text{ind}(\Dmc_{\text{con}(a)})\}$.

\end{enumerate}
\begin{restatable}{lemma}{lemequivalci}\label{lem:equiv}
  For all forest models \Amf of \Kmc and all $a\in P$,
  Condition~$(\ast_a)$ is equivalent to $\Dmc_{\text{con}(a)},a
  \rightarrow^{\Sigma}_{c} \Amf,b^{\Amf}$.
\end{restatable}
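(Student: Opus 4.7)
The plan is to prove the two directions separately, using the $\Kmc$-types $t_d$ as the interface between the database backbone and the tree parts of forest models.

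For the ``only if'' direction, assume $\Bmf\models\Kmc$ and $S$ is a functional $\ALCI(\Sigma)$-bisimulation with $(a^\Bmf,b^\Amf)\in S$. Set $t_d := \text{tp}_\Kmc(\Bmf,d^\Bmf)$ for each $d\in\text{ind}(\Dmc_{\text{con}(a)})$; then $\Bmf$ itself witnesses satisfiability of $(\Omc,\Dmc')$, giving condition~(ii). Extract $h$ by propagating $S$ from $a^\Bmf$ along the $\Sigma$-edges of the database backbone via the forth/back conditions: for each $r(d_1,d_2)\in\Dmc$ with $r\in\Sigma$, the forth step applied at $(d_1^\Bmf,h(d_1))$ produces an $r$-successor of $h(d_1)$ in $\Amf$ that, by functionality of $S$, must coincide with $h(d_2)$. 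For individuals in $\Dmc_{\text{con}(a)}$ that are reachable from $a$ only via non-$\Sigma$-edges, enlarge $S$ by additional pairs $(d^\Bmf,e)$ that respect functionality (using that functionality is a constraint only on the $\Bmf$-side). With $\Bmf_d := \Bmf$ and the restriction of $S$ as the required bisimulation, condition~(i) holds; atomic harmony of $S$ together with the forth argument on $\Sigma$-roles yields the $\Sigma$-homomorphism property of $h$.

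For the ``if'' direction, given $h$ and $(t_d)_d$, I build $\Bmf$ by gluing. By condition~(ii) and Lemma~\ref{lem:forestm}, take a finite-outdegree forest model $\Bmf_0$ of $(\Omc,\Dmc')$; then each $d^{\Bmf_0}$ realizes $t_d$. By condition~(i), choose for each $d\in\text{ind}(\Dmc_{\text{con}(a)})$ a model $\Bmf_d$ of $\Omc$ with $\text{tp}_\Kmc(\Bmf_d,d)=t_d$ and $\Bmf_d,d\sim_{\ALCI,\Sigma}\Amf,h(d)$; by tree-unraveling (which preserves bisimulations by Lemma~\ref{lem:equivalence}) I may assume $\Bmf_d$ is an $\ALCI$-tree rooted at $d$. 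Form $\Bmf$ by keeping the database backbone of $\Bmf_0$ and its trees at individuals outside $\Dmc_{\text{con}(a)}$, while replacing the tree at each $d^{\Bmf_0}$ for $d\in\text{ind}(\Dmc_{\text{con}(a)})$ by $\Bmf_d$. Because each database individual retains its $\Kmc$-type $t_d$, and $t_d$ encodes all constraints that $\Omc$ imposes on the tree below $d$, the swap preserves $\Omc$; the backbone preserves $\Dmc$, so $\Bmf\models\Kmc$. The desired functional bisimulation assembles the backbone pairs $\{(d^\Bmf,h(d)) : d\in\text{ind}(\Dmc_{\text{con}(a)})\}$ with, on each attached tree, a functional bisimulation from $\Bmf_d$ to the subtree of $\Amf$ rooted at $h(d)$; such a functional bisimulation exists on trees because one may recursively pick a single partner for each successor without breaking back-forth.

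The main obstacle is to verify that the $\Kmc$-types really do act as a sufficient interface in both directions: on one hand, the tree-swap in the ``if'' direction must globally preserve $\Omc$, which relies on $t_d$ capturing all subconcepts of $\Kmc$ relevant below $d$ (in particular those introduced by CIs of the form $C\sqsubseteq\exists R.D$); on the other hand, the local bisimilarities supplied by~(i) must patch together with the backbone pairs into a single coherent functional bisimulation between the assembled $\Bmf$ and $\Amf$. A related subtlety is the propagation of $S$ in the ``only if'' direction across database individuals that are only non-$\Sigma$-connected to $a$, which is handled by the flexibility that distinct $\Bmf$-individuals may share an $\Amf$-image under a functional bisimulation.
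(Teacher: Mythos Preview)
Your approach is the paper's: for ``only if'', restrict the functional bisimulation to database individuals and read off the types from $\Bmf$; for ``if'', glue tree-shaped models $\Bmf_d$ along the database backbone and take the union of the bisimulations. Routing the ``if'' direction through a forest model $\Bmf_0$ of $(\Omc,\Dmc')$ is a clean way to make condition~(ii) do its work; the paper simply glues the $\Bmf_d$ directly and leaves implicit why the resulting structure still satisfies~$\Omc$.

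There is, however, one genuine gap. You claim that a functional bisimulation from the tree $\Bmf_d$ into $\Amf$ ``exists on trees because one may recursively pick a single partner for each successor without breaking back-forth.'' Picking one $\Amf$-partner per $\Bmf_d$-successor secures \emph{forth} but not \emph{back}: if $f_d(e)$ has two distinct $\Sigma$-$r$-neighbours $x_1,x_2$ in $\Amf$ while $e$ has only a single $r$-child $e''$ in $\Bmf_d$, then $f_d(e'')$ can equal at most one of them and the back condition fails at $(e,f_d(e))$. The remedy, used in the paper's proof of Theorem~\ref{thm:equival}, is to replace $\Bmf_d$ by its $k$-fold unraveling with $k$ the (finite) maximal $\ALCI$-outdegree of $\Amf$; each child then has $k$ copies and one can choose their images surjectively onto the $\Sigma$-neighbours of $f_d(e)$ while staying functional. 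The paper is equally terse here (``we may assume \ldots\ the bisimulations are functions~$f_d$''), but your stated justification is wrong as written. A lesser point: the ``enlarge $S$'' step for individuals only non-$\Sigma$-connected to $a$ is likewise unjustified---nothing guarantees a suitable $e\in\text{dom}(\Amf)$ that keeps the enlarged relation a bisimulation, and taking $\Bmf_d:=\Bmf$ for such $d$ does not work in general; this is a subtlety the paper's sketch also elides.
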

Intuitively, the homomorphism $h$ fixes the image of the
bisimulation of the individuals from \Dmc,
and a 2ATA can decide $\Dmc_{\text{con}(a)},a
\rightarrow^{\Sigma}_{c} \Amf,b^{\Amf}$ as follows. It 
first non-deterministically guesses types $t_d$,
$d\in\text{ind}(\Dmc_{\text{con}(a)})$ that satisfy Item~(ii) above and stores them in its states.
Then it gradually guesses a $\Sigma$-homomorphism from
$\Dmc_{\text{con}(a)}$ to $\Amf$. Whenever, it guesses a new image
$h(d)$ for some $d$, it verifies
the bisimulation condition in Item~(i) as described above. 
Overall, $\Amc_a$ (and thus $\Amc$) uses exponentially many states. The
\TwoExpTime upper bound follows as non-emptiness can be decided in
exponential time~\cite{DBLP:conf/icalp/Vardi98}. 


For \ALCO, we show the surprising result that separability becomes
harder than in \ALC and \ALCI, by one exponent. We establish the same
result also for the more basic problem of conservative extensions.
\begin{theorem}\label{thm:complexity-alco} Projective
  \ALCO-separability with signature and projective conservative
  extensions in \ALCO are \ThreeExpTime-complete.  \end{theorem}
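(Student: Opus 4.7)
The plan is to handle the two problems together via Lemma~\ref{lem:ce-reduction}, which reduces projective conservative extensions in $\ALCO$ to the complement of projective $\ALCO$-separability in polynomial time; hence a single upper bound argument covers both, and the \ThreeExpTime lower bound for separability can be derived from the corresponding lower bound for conservative extensions.

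For the upper bound I would extend the 2ATA argument used for Theorem~\ref{thm:complexity-alci}. I would first establish a variant of Theorem~\ref{thm:L-modeltheory0} for $\ALCO$ in which the functional $\Sigma$-bisimulation between a model $\Bmf$ and the forest model $\Amf$ is required to respect the global identity of the named individuals (as $\ALCO$-forest models exist for $\ALCO$-KBs by Lemma~\ref{lem:forestm}, this is the right shape to aim for). I would then encode $\ALCO$-forest models of $\Kmc$ as labeled trees and, on top of this, guess upfront a \emph{nominal profile} that fixes, for each nominal $\{c\}$ occurring in $\Omc$ and each $a\in P$, the $\Kmc$-type realized at $c^{\Amf}$ and at the bisimilar element in the candidate $\Bmf$. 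For each $a\in P$ I would then build a 2ATA $\Amc_a$ that, relative to the guessed profile, checks the existence of a model $\Bmf$ of $\Kmc$ with $\Bmf,a^{\Bmf}\sim^{f}_{\ALCO,\Sigma}\Amf,b^{\Amf}$, following the analogue of Lemma~\ref{lem:equiv} with the extended-database bookkeeping augmented by the nominal profile. Since the profile has singly exponential size in $|\Kmc|$, the automaton states become doubly exponentially many; the \ExpTime emptiness test for 2ATAs then yields the \ThreeExpTime upper bound.

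For the lower bound I would extend the \TwoExpTime-hardness reduction of~\cite{DBLP:conf/kr/GhilardiLW06} for conservative extensions in $\ALC$, which encodes acceptance of an exponentially space-bounded alternating Turing machine by implementing binary counters over polynomially many concept names. To lift this to $\ALCO$ and \ThreeExpTime, I would reduce from acceptance of a doubly exponentially space-bounded alternating Turing machine, so that tape cells must be addressed by a counter of doubly exponential range. The crucial new resource is nominals: polynomially many nominals provide exponentially many distinct reference points, and nesting a single-exponential counter realized by concept names (as in~\cite{DBLP:conf/kr/GhilardiLW06}) inside this outer layer yields a counter of range $2^{2^{n}}$. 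Nominals also serve to synchronize corresponding cells of consecutive configurations across the alternation tree, which is the main global coherence mechanism the reduction must enforce. The reduction transfers to separability via Lemma~\ref{lem:ce-reduction}, and since it uses only concept-name helpers it remains valid in the projective setting.

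The main obstacle will be the upper bound. Nominals act globally while a 2ATA proceeds locally, and the functional $\Sigma$-bisimulation condition of Theorem~\ref{thm:L-modeltheory0}(3), adapted to $\ALCO$, must be verified on every branch without letting the nominal profile explode beyond a single exponential. The delicate point is to show that a single upfront guess of the profile followed by local verification truly captures functionality across all branches of the forest model that pass through or reference a nominal; I expect this to require a careful generalization of the extended-database construction underlying Lemma~\ref{lem:equiv} so that, whenever the 2ATA traverses a branch that must realize a nominal, the guessed profile is checked to be consistent with all locally generated type constraints and homomorphism commitments.
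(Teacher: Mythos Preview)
Your upper bound plan has a real gap. The difficulty with nominals is not that the \emph{types} of the named individuals in $\Bmf$ and their images in $\Amf$ must be fixed, but that the \emph{position} in $\Amf$ of $f(c^{\Bmf})$ must be agreed upon by all branches of the automaton's run. In an $\ALCO$-forest model of $\Kmc$, every element may have an $r$-edge back to some $c^{\Bmf}$; when the automaton, while virtually traversing $\Bmf$, reaches $c^{\Bmf}$ along two different branches, functionality of the bisimulation forces both branches to be at the \emph{same} node of $\Amf$. A guessed type profile does not enforce this: two distinct nodes of $\Amf$ can realize the same $\Kmc$-type. The paper does not try to rescue a Lemma~\ref{lem:equiv}-style decomposition for $\ALCO$; instead it enriches the input alphabet so that the tree encoding $\Amf$ additionally carries, for each $c\in\text{ind}(\Dmc)$, a unique marked node $v_c$, and the automaton is required to be at $v_c$ whenever it is in a state representing $c^{\Bmf}$. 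The desired automaton is then obtained by projecting away the marking, and it is this projection (not the size of any profile) that costs the extra exponential. Your profile idea would need to be upgraded to positional information, but positions live in an infinite tree and cannot be stored in states; putting them in the input and projecting is exactly what the paper does.

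Your lower bound sketch also diverges from the paper and, as stated, is not convincing. The claim that ``polynomially many nominals provide exponentially many distinct reference points'' is unclear: $k$ nominals name $k$ elements, not $2^k$. The paper's reduction uses a \emph{single} nominal, and its role is not to provide reference points for addressing tape cells but to \emph{re-synchronize} exponentially many branches that were split off in the bisimilar model $\Bmf$: each branch memorizes one bit of a Counter~2 value, and the nominal at the end forces all branches to meet at a single element of $\Bmf$; since $\Amf$ is a tree, this retroactively forces all branches to have followed the same path in $\Amf$, which is what enables comparing all $2^n$ bits simultaneously and hence implementing the third (triply exponential) counter on top of the two counters already available from the $\ALC$ construction of~\cite{DBLP:conf/kr/GhilardiLW06}. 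Your plan does not identify any mechanism of this kind.
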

We show in the appendix that the lower bound also applies to
non-projective conservative extensions and, by Lemma~\ref{lem:equiv},
to non-projective  \ALCO-separability with signature. An upper bound for that case
remains open. The upper bound easily extends to the variant of
projective conservative extensions where we are interested only in the
entailment of concept inclusions $C \sqsubseteq D$ formulated in a
given subsignature $\Sigma \subseteq \text{sig}(\Omc)$,
c.f.~\cite{DBLP:conf/kr/GhilardiLW06}.

By Lemma~\ref{lem:ce-reduction}, it suffices to show the lower bound
in Theorem~\ref{thm:complexity-alco} for conservative extensions and
the upper bound for separability. We start with the former, which is
proved by reduction of the word problem of double exponentially space
bounded ATMs. The reduction strategy follows and extends the one used
in \cite{DBLP:conf/kr/GhilardiLW06} to prove that deciding conservative extensions of
\ALC-ontologies is \TwoExpTime-complete. The reduction proceeds in two
steps. First, for every $n \geq 1$ one crafts ontologies $\Omc_n$ and
$\Omc'_n$ of size polynomial in $n$ such that $\Omc_n \cup \Omc'_n$ is
not a conservative extension of $\Omc_n$, but all witness concepts for
$\Omc_n$ and $\Omc'_n$ are of size quadruple exponential in $n$. More
precisely, $\Omc_n$ and $\Omc'_n$ implement a binary counter that is
able to count the length of role paths up to $2^{2^{2^n}}$ and witness
concepts need to enforce a binary tree of that depth. The triple
exponential counter is implemented by building on a double exponential
counter which in turn builds on a single exponential counter. The two
latter counters are implemented exactly as in~\cite{DBLP:conf/kr/GhilardiLW06} and
the implementation of the triple exponential counter crucially uses a
nominal. In fact, $\Omc_n$ does not use any nominals and a single
nominal in $\Omc'_n$ suffices. The implementation of the counters is
quite subtle. For the third counter, we independently send multiple
$\Omc'_n$-types down a path in the binary tree generated by a witness
concept and use the nominal to `re-synchronize' them again later.  In
the second step of the reduction, we simulate the computation of a fixed ATM on a
given input in the binary trees of triple exponential depth generated by
witness concepts for $\Omc_n$ and $\Omc'_n$.

\smallskip

For the upper bound in Theorem~\ref{thm:complexity-alco}, we again
pursue an automata-based approach. As for \ALCI, we encode forest
structures as inputs to 2ATAs and the goal is to construct a 2ATA
$\Amc_a$ that accepts an input \Amf if and only if
Condition~$(\ast_a)$ is true, with \ALCI replaced by \ALCO. However,
instead of going via an intermediate characterization such as
Lemma~\ref{lem:equiv}, we directly use~$(\ast_a)$ (at the cost of
one exponent).
 

The problem of synchronizing different visits of the
individuals in \Dmc during the (virtual) construction of \Bmf is
addressed as follows. We first construct a 2ATA $\Amc_a'$ over an extended
alphabet. A labeled tree over that alphabet does not
only contain the structure $\Amf$, but also marks a \emph{possible}
choice of the elements in \Amf that are bisimilar to the individuals
in \Dmc. Now, when the automaton is in a state representing an
individual $d\in \text{ind}(\Dmc)$ during the construction of \Bmf, it 
ensures that the currently visited element of \Amf is marked with
$d$ in the input. 
The desired automaton
$\Amc_a$ is then obtained by projecting $\Amc_a'$ to the original
input alphabet.

The 2ATA $\Amc_a'$ can be constructed in exponential time and has at
most exponentially many states. Since projection of alternating
automata involves an exponential blow-up, $\Amc_a$ is of double
exponential size. 
%
Together with the exponential non-emptiness test, we obtain the \ThreeExpTime-upper
bound.

\section{Strong Separability with Signature}
\label{sec:dfstrong}

We discuss strong separability of labeled KBs. The crucial
difference to weak separability is that the negation of the separating
formula must be entailed at all negative examples.

\begin{definition}
	Let $\Lmc\in \DLS$, $(\Kmc,P,N)$ be a labeled $\Lmc$-KB, and let $\Sigma\subseteq \text{sig}(\Kmc)$ be a signature.
	An FO-formula $\varphi(x)$ \emph{strongly $\Sigma$-separates}
	$(\Kmc,P,N)$ if $\text{sig}(\varphi)\subseteq \Sigma$ and
	\begin{enumerate}
		
		\item 
		$\Kmc\models \varphi(a)$ for all $a\in P$ and 
		
		\item $\Kmc\models \neg\varphi(a)$ for all $a\in N$.
		
	\end{enumerate}
	Let $\Lmc_S$ be a fragment of FO. We say that $(\Kmc,P,N)$ is
	\emph{strongly $\Lmc_S(\Sigma)$-separable} if there 
	exists an
	$\Lmc_S$-formula $\varphi(x)$ that strongly $\Sigma$-separates
	$(\Kmc,P,N)$.	
\end{definition}
In contrast to weak separability, we do not consider a projective
version of strong separability as any formula $\varphi$ that strongly
$\Sigma$-separates a labeled KB $(\Kmc,P,N)$ and uses helper symbols
can easily be transformed into a strongly separating formula that uses
only symbols from $\Sigma$: simply replace any occurrence of such a
formula $A(x)$, $A\not\in \Sigma$, by $x=x$ (or a concept name $A$ by
$\top$).  Then, if $\varphi$ strongly separates $(\Kmc,P,N)$, so does
the resulting formula $\varphi'$.

Note that for languages $\Lmc_{S}$ closed under conjunction and
disjunction a labeled KB $(\Kmc,P,N)$ is strongly
$\Lmc_{S}(\Sigma)$-separable iff every
$(\Kmc,\{a\},\{b\})$ with $a\in P$ and
$b\in N$ is strongly $\Lmc_{S}(\Sigma)$-separable.  In fact, if
$\varphi_{a,b}$ strongly separates
$(\Kmc,\{a\},\{b\})$ for $a\in P$ and
$b\in N$, then
$\bigvee_{a\in P}\bigwedge_{b\in
	N}\varphi_{a,b}$ strongly separates $(\Kmc,P,N)$.
Without loss of generality, we may thus work with labeled KBs with
singleton sets of positive and negative examples.

Each choice of an ontology language $\Lmc$ and a separation language
$\Lmc_{S}$ thus gives rise to a (single) strong separability problem
that we refer to as \emph{strong $(\Lmc,\Lmc_S)$-separability},
defined in the expected way:

\begin{center}
	\fbox{\begin{tabular}{@{\,}l@{\;}l@{\,}}
			\small{PROBLEM} : & Strong $(\Lmc,\Lmc_{S})$ separability with signature\\
			{\small INPUT} : &  Labeled $\Lmc$-KB $(\Kmc,P,N)$ and \\
			       & signature $\Sigma \subseteq \text{sig}(\Kmc)$ \\
			{\small QUESTION} : & Is $(\Kmc,P,N)$ strongly $\LmcO_{S}(\Sigma)$-separable?  
	\end{tabular}}
\end{center}
If $\Lmc=\Lmc_{S}$, then we simply speak of strong $\Lmc$-separability. The study of strong separability is very closely linked to the study of interpolants and the Craig interpolation property (CIP). Given FO-formulas $\varphi(x),\psi(x)$ and a fragment $\Lmc$ of FO, we say that an $\Lmc$-formula $\chi(x)$ is an \emph{$\Lmc$-interpolant of $\varphi,\psi$} if $\varphi(x)\models
\chi(x)$, $\chi(x) \models \psi(x)$, and $\text{sig}(\chi)
\subseteq \text{sig}(\varphi)\cap \text{sig}(\psi)$.
We say that \emph{$\Lmc$ has the CIP} if for all $\Lmc$-formulas $\varphi(x),\psi(x)$ such that $\varphi(x)\models \psi(x)$, there exists an $\Lmc$-interpolant of $\varphi,\psi$. FO and many of its fragments have the CIP~\cite{craig,TenEtAl13,GabMaks}.
The link between interpolants and strongly separating formulas is easy to see: assume a labeled $\Lmc$-KB $(\Kmc,\{ a\},\{ b\})$ and a signature $\Sigma \subseteq
\text{sig}(\Kmc)$ are given. Obtain $\Kmc_{\Sigma,a}$ (and $\Kmc_{\Sigma,b}$) from $\Kmc$ 
by taking the standard translation of $\Kmc$ into FO and then
\begin{itemize}
	
	\item replacing all concept and role names $X\not\in \Sigma$ by
	fresh symbols $X_{a}$ ($X_{b}$, respectively);
		
	\item replacing all individual names $c\not\in \Sigma \cup \{a\}$  
	by fresh variables $x_{c}$ (all $c\not\in \Sigma \cup \{b\}$
	by fresh variables $y_{c}$, respectively);
	
	\item replacing $a$ by $x$ (and $b$ by $x$, respectively) for a single fresh variable $x$;
	
	\item adding $x=a$ if $a\in \Sigma$ ($x=b$ if $b\in \Sigma$,
	  respectively).
%
%
\end{itemize}	Let $\varphi_{\Kmc,\Sigma,a}(x) = \exists \vec{z}
(\bigwedge \Kmc_{\Sigma,a})$, where $\vec z$ is the sequence of free
variables in $\Kmc_{\Sigma,a}$ without the variable $x$ and
$(\bigwedge \Kmc_{\Sigma,a})$ is the conjunction of all formulas in
$\Kmc_{\Sigma,a}$. $\varphi_{\Kmc,\Sigma,b}(x)$ is defined in the same
way, with $a$ replaced by $b$. The following lemma is a direct
consequence of the construction.
\begin{lemma}\label{lem:int} Let $(\Kmc,\{a\},\{b\})$ be a labeled
  $\Lmc$-KB, $\Sigma\subseteq \text{sig}(\Kmc)$ a signature, and
  $\Lmc_{S}$ a fragment of FO. Then the following conditions are
  equivalent for any formula $\varphi(x)$ in $\Lmc_{S}$:
\begin{enumerate} \item $\varphi$ strongly $\Sigma$-separates
    $(\Kmc,\{a\},\{b\})$; \item $\varphi$ is an $\Lmc_{S}$-interpolant
      for $\varphi_{\Kmc,\Sigma,a}(x),\neg
      \varphi_{\Kmc,\Sigma,b}(x)$.  \end{enumerate}	\end{lemma}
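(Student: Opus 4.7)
The plan is to unfold the construction of $\varphi_{\Kmc,\Sigma,a}$ and $\varphi_{\Kmc,\Sigma,b}$ and match each of the three conditions defining strong $\Sigma$-separation to the corresponding condition defining an $\Lmc_{S}$-interpolant.

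First, I would check the signatures. Inspecting the four substitution rules gives $\text{sig}(\varphi_{\Kmc,\Sigma,a}) = \Sigma \cup \{X_{a} \mid X \in \text{sig}(\Kmc)\setminus \Sigma\}$ and analogously for $b$, and the two families of fresh symbols $X_{a}$ and $X_{b}$ are disjoint by construction. Thus the intersection is exactly $\Sigma$, so the interpolant signature condition collapses to $\text{sig}(\varphi) \subseteq \Sigma$, which is exactly the signature condition in the definition of strong $\Sigma$-separation.

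The heart of the argument is the semantic correspondence that, for every FO-formula $\psi(x)$ with $\text{sig}(\psi) \subseteq \Sigma$,
\[
\Kmc \models \psi(a) \iff \varphi_{\Kmc,\Sigma,a}(x) \models \psi(x),
\]
and analogously for $b$. From a model $\Amf$ of $\Kmc$ one obtains a witnessing structure $\Bmf$ with $\Bmf \models \varphi_{\Kmc,\Sigma,a}(a^{\Amf})$ by keeping $\Sigma$-symbols as in $\Amf$, interpreting each fresh symbol $X_{a}$ as $X^{\Amf}$, and witnessing each existentially quantified variable $x_{c}$ by $c^{\Amf}$. Conversely, from $\Bmf \models \varphi_{\Kmc,\Sigma,a}(d)$ one reverses this translation to define a model $\Amf$ of $\Kmc$ with $a^{\Amf} = d$; the added clause $x=a$ (in case $a \in \Sigma$) forces $d = a^{\Bmf}$ and therefore keeps $\Amf$ and $\Bmf$ in agreement on all of $\Sigma$. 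Since $\psi$ uses only $\Sigma$-symbols, its truth is preserved by these translations between $\Amf$ and $\Bmf$, and the claimed equivalence follows in both directions.

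Instantiating this correspondence with $\psi = \varphi$ converts $\Kmc \models \varphi(a)$ into $\varphi_{\Kmc,\Sigma,a}(x) \models \varphi(x)$, and instantiating it with $\psi = \neg\varphi$ converts $\Kmc \models \neg\varphi(b)$ into $\varphi_{\Kmc,\Sigma,b}(x) \models \neg\varphi(x)$, i.e.\ $\varphi(x) \models \neg\varphi_{\Kmc,\Sigma,b}(x)$. Combined with the signature observation, this yields both directions of the equivalence between items~1 and~2. The main obstacle is really just the bookkeeping in the semantic correspondence: handling the cases $a \in \Sigma$ and $a \notin \Sigma$ uniformly via the added clause $x = a$, checking that existential quantification over the fresh variables $x_{c}$ correctly recovers the interpretations of the non-$\Sigma$ individuals, and verifying that each concept inclusion and database fact of $\Kmc$ transfers between $\Amf$ and $\Bmf$ under the matched renaming.
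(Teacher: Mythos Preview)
Your proposal is correct and follows exactly the intended route: the paper itself states that the lemma ``is a direct consequence of the construction'' and gives no further proof, so what you have written is precisely the unpacking of that remark. The signature computation and the semantic correspondence $\Kmc \models \psi(a) \Leftrightarrow \varphi_{\Kmc,\Sigma,a}(x)\models\psi(x)$ for $\Sigma$-formulas $\psi$ are exactly the points one must check, and you have handled them correctly, including the role of the clause $x=a$ when $a\in\Sigma$.
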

\begin{example}\label{exm:strong1}
  To illustrate Lemma~\ref{lem:int}, 
  let $\Sigma=\{r\}$ and $\Kmc=(\Omc,\Dmc)$, with $\Omc= \{A \sqsubseteq \forall r.\neg
  A\}$ and $\Dmc=\{A(a),r(b,b)\}$.  
  Then, $\neg r(x,x)$
  strongly $(\Sigma)$-separates $(\Kmc,\{a\},\{b\})$ and 
  is an interpolant for $\varphi_{\Kmc,\Sigma,a}$, $\neg
  \varphi_{\Kmc,\Sigma,b}$ where $\varphi_{\Kmc,\Sigma,a}$,
  $\varphi_{\Kmc,\Sigma,b}$ are the following two formulas:
  \begin{align*}
    \exists x_{b}\, r(x_{b},x_{b})\wedge A_a(x)\wedge \forall yz(r(y,z) \wedge
    A_a(y)\to \neg A_a(z)),\\
    \exists y_a\, A_b(y_a)\wedge r(x,x)\wedge
  \forall yz\,(r(y,z) \wedge A_b(y)\to \neg A_b(z)).
  \end{align*}
\end{example}

Thus, the problem whether a labeled KB $(\Kmc,P,N)$ is strongly $\Lmc_{S}(\Sigma)$-separable and the computation of a strongly $\Sigma$-separating formula can be equivalently formulated as an interpolant existence problem. As FO has the CIP, we obtain the following characterization and complexity result for the existence of strongly FO$(\Sigma)$-separating formulas.
\begin{restatable}{theorem}{thmfostrongsep}
	\label{thm:fostrongsep}
	Let $\Lmc\in \DLS$. The following conditions are equivalent for any $\Lmc$-KB $(\Kmc,\{a\},\{b\})$ and signature $\Sigma\subseteq \text{sig}(\Kmc)$:
	\begin{enumerate}
		\item $(\Kmc,\{a\},\{b\})$ is strongly FO$(\Sigma)$-separable;
		\item $\varphi_{\Kmc,\Sigma,a}(x) \models \neg \varphi_{\Kmc,\Sigma,b}(x)$.
	\end{enumerate}
Strong $(\Lmc,\text{FO})$-separability with signature is \ExpTime-complete.
\end{restatable}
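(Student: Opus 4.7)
My plan splits the proof into establishing $1\Leftrightarrow 2$ and then settling the two directions of the complexity claim. For the equivalence, I would apply Lemma~\ref{lem:int} together with the Craig Interpolation Property (CIP) of FO. Lemma~\ref{lem:int} states that the strongly $\Sigma$-separating FO-formulas for $(\Kmc,\{a\},\{b\})$ are exactly the FO-interpolants of $\varphi_{\Kmc,\Sigma,a}(x)$ and $\neg\varphi_{\Kmc,\Sigma,b}(x)$, whose common signature is $\Sigma$ by construction. Since FO has the CIP, such an interpolant exists iff $\varphi_{\Kmc,\Sigma,a}(x)\models\neg\varphi_{\Kmc,\Sigma,b}(x)$, which is condition~2.

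For the \ExpTime upper bound, I would rephrase~2 as unsatisfiability of $\chi(x):=\varphi_{\Kmc,\Sigma,a}(x)\wedge\varphi_{\Kmc,\Sigma,b}(x)$ and reduce this to $\Lmc$-KB satisfiability. To build the target KB $\Kmc^{\star}$, form two disjoint renamed copies $\Kmc_a,\Kmc_b$ of $\Kmc$ that share only the symbols in $\Sigma$: non-$\Sigma$ concept and role names receive $a$- or $b$-subscripts, and non-$\Sigma$ individual names are turned into pairwise distinct fresh individuals. Then identify the renamed versions of $a$ in $\Kmc_a$ and $b$ in $\Kmc_b$ by using a single fresh individual name $x$ in both copies; existentially quantified variables from the standard translation are realized by those fresh individuals in the database. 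The formula $\chi$ is satisfiable iff $\Kmc^{\star}$ is, and $\Lmc$-KB satisfiability is in \ExpTime for every $\Lmc\in\DLS$~\cite{DL-Textbook}. Note that the required identification is achieved syntactically by name sharing and needs no extra ontological machinery, since DL semantics does not enforce the unique name assumption.

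For the matching lower bound, I would reduce $\Lmc$-KB unsatisfiability, which is already \ExpTime-hard for $\Lmc=\ALC$, to strong $(\Lmc,\text{FO})$-separability. Given $\Kmc_0$, pick a fresh concept name $A$ and fresh individual names $a,b$, and form $\Kmc:=\Kmc_0\cup\{A(a),A(b)\}$ with $\Sigma:=\text{sig}(\Kmc_0)\cup\{A\}$ (no individual names). If $\Kmc_0$ is unsatisfiable, then so is $\Kmc$, and every $\Sigma$-formula strongly separates $(\Kmc,\{a\},\{b\})$. If $\Kmc_0$ is satisfiable, extend a model to $\Amf\models\Kmc$ with $a^{\Amf}=b^{\Amf}$; then $\Amf\models\varphi(a)\leftrightarrow\varphi(b)$ for every $\Sigma$-formula $\varphi(x)$, blocking strong $\Sigma$-separation.

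The main obstacle I expect lies in the bookkeeping for the upper-bound reduction: one must verify that replacing the existentially quantified variables of $\varphi_{\Kmc,\Sigma,a}$ and $\varphi_{\Kmc,\Sigma,b}$ by fresh individual names in $\Kmc^{\star}$ faithfully preserves satisfiability, and that the interaction between the two renamed copies of $\Kmc$ is genuinely restricted to the shared $\Sigma$-symbols plus the common witness $x$. Once that is laid out, the upper bound follows from the standard \ExpTime algorithms for $\Lmc$-KB satisfiability, and combined with the hardness argument above we obtain \ExpTime-completeness.
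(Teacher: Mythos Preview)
Your argument for the equivalence $1\Leftrightarrow 2$ via Lemma~\ref{lem:int} and the CIP of FO is exactly what the paper does, and your lower-bound reduction from $\ALC$-KB unsatisfiability is correct and spells out what the paper only states in one sentence.

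For the upper bound, however, there is a small but genuine gap. The paper does \emph{not} reduce to $\Lmc$-KB satisfiability; it reduces to concept satisfiability in $\mathcal{ALCIO}^{u}$ (\ALCIO with the universal role), precisely so that the equality atoms $x=a$ and $x=b$ that appear in $\varphi_{\Kmc,\Sigma,a}$ and $\varphi_{\Kmc,\Sigma,b}$ when $a\in\Sigma$ or $b\in\Sigma$ can be expressed via the conjuncts $\{m\}\leftrightarrow\{a\}$ and $\{m\}\leftrightarrow\{b\}$. In your construction, when $a\in\Sigma$ the individual $a$ is a shared $\Sigma$-symbol and therefore occurs unrenamed in $\Kmc_b$, while you have replaced $a$ by the fresh $x$ in $\Kmc_a$. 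Nothing in $\Kmc^{\star}$ then forces $x^{\Amf}=a^{\Amf}$, so a model of $\Kmc^{\star}$ need not satisfy $\chi(x)$: the direction ``$\Kmc^{\star}$ satisfiable $\Rightarrow$ $\chi$ satisfiable'' fails. The fix is easy---either target $\mathcal{ALCIO}$-KB satisfiability and add $\{x\}\equiv\{a\}$ (and $\{x\}\equiv\{b\}$) to the ontology when needed, or perform the identification syntactically by using $a$ itself as the shared individual and, if also $b\in\Sigma$, replacing $b$ by $a$ throughout both copies. Either way you stay in \ExpTime. Once this edge case is handled, your route via KB satisfiability and the paper's route via $\mathcal{ALCIO}^{u}$ concept satisfiability are equivalent reformulations of the same idea.
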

The \ExpTime upper bound follows from the fact that the 
complement of the problem to decide $\varphi_{\Kmc,\Sigma,a}(x) \models \neg \varphi_{\Kmc,\Sigma,b}(x)$ can be equivalently formulated as a concept
satisfiability problem in the extension $\mathcal{ALCIO}^{u}$ of $\mathcal{ALCIO}$ with the universal 
role $u$. The lower bound can be proved by reduction of $\ALC$-KB satisfiability.

It follows from Theorem~\ref{thm:fostrongsep} that one can use FO
theorem provers such as Vampire to compute strongly separating
formulas~\cite{DBLP:conf/aplas/HoderHKV12}. FO is arguably too
powerful, however, to serve as a useful separation language for
labeled description logic KBs. Thus, two important questions arise:
(1) which fragment of FO is needed to obtain a strongly separating
formula in case that there is a strongly separating formula in FO? (2)
What happens if the languages $\Lmc\in \DLS$ are used as separation
languages? For~(1), one can show that none of the
languages in $\text{UCQ}^{\Lmc}$, $\Lmc\in \DLS$, is sufficient
to separate $a$ and $b$ in Example~\ref{exm:strong1}.
We next show that the need for the negation of a CQ in that
example is no accident. Indeed, by taking the closure BoCQ$^{\mathcal{ALCIO}}(\Sigma)$ of 
CQ$^{\mathcal{ALCIO}}(\Sigma)$ under negation, conjunction, and disjunction
one obtains a sufficiently powerful
language for (1), at least if the KB does not admit nominals. 
\begin{restatable}{theorem}{thmstrongFOequivalent}
	\label{thm:strongFOequivalent}
	The following conditions are equivalent for any labeled $\mathcal{ALCI}$-KB
	$(\Kmc,P,N)$ and signature $\Sigma \subseteq \text{sig}(\Kmc)$.
	\begin{enumerate}
		\item $(\Kmc,P,N)$ is strongly FO$(\Sigma)$-separable;
		\item $(\Kmc,P,N)$ is strongly BoCQ$^{\mathcal{ALCIO}}(\Sigma)$-separable.
	\end{enumerate}
\end{restatable}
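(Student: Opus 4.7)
The direction $2 \Rightarrow 1$ is immediate since $\text{BoCQ}^{\mathcal{ALCIO}}(\Sigma)$ is a fragment of FO. For the non-trivial direction $1\Rightarrow 2$, I work with the singleton case $P=\{a\}$, $N=\{b\}$, justified prior to the theorem, and reduce to interpolation via Lemma~\ref{lem:int} and Theorem~\ref{thm:fostrongsep}: it suffices to show that whenever $\varphi_{\Kmc,\Sigma,a}(x) \models \neg \varphi_{\Kmc,\Sigma,b}(x)$, there is a $\text{BoCQ}^{\mathcal{ALCIO}}(\Sigma)$-interpolant. The plan is to proceed by contraposition through a standard model-theoretic amalgamation argument.

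Assuming no such interpolant, a compactness argument produces pointed models $\mathcal{M}_1, c_1 \models \varphi_{\Kmc,\Sigma,a}$ and $\mathcal{M}_2, c_2 \models \varphi_{\Kmc,\Sigma,b}$ that satisfy exactly the same $\text{CQ}^{\mathcal{ALCIO}}(\Sigma)$-formulas; here $\text{CQ}^{\mathcal{ALCIO}}(\Sigma)$-equivalence and $\text{BoCQ}^{\mathcal{ALCIO}}(\Sigma)$-equivalence coincide on pointed structures because BoCQ is closed under negation. Passing to $\omega$-saturated elementary extensions and invoking Lemma~\ref{lem:forestm} on the corresponding renamed $\mathcal{ALCI}$-KBs, one may assume each $\mathcal{M}_i$ is an $\omega$-saturated $\mathcal{ALCI}$-forest model of its side. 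Lemma~\ref{lem:equivalence2} then supplies, in both directions, a rich family of $\Sigma$-homomorphisms from every finite $\mathcal{ALCI}$-rooted part of $\mathcal{M}_i$ into $\mathcal{M}_{3-i}$ whose nodes are moreover $\mathcal{ALCIO}(\Sigma)$-bisimilar to their images.

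From this data I would build a single structure $\mathcal{N}$ with a distinguished element $c$ such that $\mathcal{N} \models \varphi_{\Kmc,\Sigma,a}(c) \wedge \varphi_{\Kmc,\Sigma,b}(c)$, which contradicts the entailment because, by construction of $\varphi_{\Kmc,\Sigma,a}$ and $\varphi_{\Kmc,\Sigma,b}$, their non-$\Sigma$-signatures are disjoint. Exploiting the forest shape of $\mathcal{M}_1$ and $\mathcal{M}_2$ together with $\omega$-saturation, a back-and-forth construction over the homomorphism--bisimulation system produced by Lemma~\ref{lem:equivalence2} yields a common $\Sigma$-tree $\mathcal{T}$ equipped with $\Sigma$-bisimulations onto $\mathcal{M}_1,c_1$ and $\mathcal{M}_2,c_2$; the disjoint non-$\Sigma$-reducts of $\mathcal{M}_1$ and $\mathcal{M}_2$ are then pulled back along these bisimulations and mounted on $\mathcal{T}$, yielding $\mathcal{N}$. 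The bisimulation conditions ensure that each CI of either renamed ontology remains satisfied at every node of $\mathcal{N}$, and that the $\text{CQ}^{\mathcal{ALCIO}}(\Sigma)$-diagrams of $\mathcal{M}_1,c_1$ and $\mathcal{M}_2,c_2$ are jointly realised at $c$.

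The main obstacle I anticipate is the coexistence of nominals in the separating $\mathcal{ALCIO}$-concepts with their absence in the ontology language: the back-and-forth must respect a globally coherent denotation for each nominal in $\Sigma$, which is non-local information that can conflict with the tree-like amalgamation if handled naively. The intended remedy is to commit, before starting the construction, to fixed representatives in $\mathcal{T}$ for each nominal appearing in the target language, and to thread every back-and-forth step through these representatives, using $\omega$-saturation to realise the required types globally. Here the restriction to $\mathcal{ALCI}$-KBs is crucial: it preserves Lemma~\ref{lem:forestm}, which fails for $\mathcal{ALCIO}$-KBs and would otherwise entangle nominal denotations across the two sides, breaking the forest-based amalgamation.
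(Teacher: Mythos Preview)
Your overall strategy---contraposition, compactness to obtain $\text{CQ}^{\mathcal{ALCIO}}(\Sigma)$-equivalent pointed models of the two sides, passage to $\omega$-saturated extensions so that Lemma~\ref{lem:equivalence2} applies, then amalgamation into a single model satisfying both $\varphi_{\Kmc,\Sigma,a}$ and $\varphi_{\Kmc,\Sigma,b}$---matches the paper's proof exactly. The divergence is in the amalgamation step, and your version introduces an unnecessary detour that creates the very obstacle you then struggle with.

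First, the appeal to Lemma~\ref{lem:forestm} is both superfluous and problematic: that lemma yields forest models of \emph{finite $\Lmc$-outdegree}, and this property is destroyed when you pass to an $\omega$-saturated elementary extension, so ``$\omega$-saturated $\mathcal{ALCI}$-forest model'' is not something you obtain by combining the two steps. The paper never invokes forest shape here; it works directly with arbitrary $\omega$-saturated models.

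Second, the paper does not build a common tree via back-and-forth. It forms the \emph{bisimulation product} $\Cmf$: the domain is the set of pairs $(c,c')\in\text{dom}(\Amf)\times\text{dom}(\Bmf)$ with $\Amf,c\sim_{\mathcal{ALCI},\Sigma}\Bmf,c'$; $\Sigma$-relations are read componentwise, the non-$\Sigma$ relations of each side are pulled back along the corresponding projection, and each individual name $c\in\Sigma$ is interpreted as $(c^{\Amf},c^{\Bmf})$. The two projections are then $\mathcal{ALCIO}$-bisimulations for the full signatures of $\varphi_{\Kmc,\Sigma,a}$ and $\varphi_{\Kmc,\Sigma,b}$ respectively, so both renamed ontologies hold in $\Cmf$; the $\Sigma$-homomorphisms $h_a,h_b$ supplied by Lemma~\ref{lem:equivalence2} on the finite database images are used only to lift the witnessing assignments, pairing $v_a(x_c)$ with $h_a(v_a(x_c))$ and symmetrically. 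This construction dissolves your nominal worry: each $c\in\Sigma$ has a canonical global denotation $(c^{\Amf},c^{\Bmf})$ from the outset, with no threading through a back-and-forth required.
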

The proof of Theorem~\ref{thm:strongFOequivalent} uses the model-theoretic characterization given in Lemma~\ref{lem:equivalence2} and techniques introduced in \cite{DBLP:journals/corr/SegoufinC13}. Problem~(2) can be comprehensively 
solved by using recent results about the complexity of deciding
the existence of interpolants in DLs with nominals~\cite{AJMOW-AAAI21}. Rather
surprisingly,
strong separability becomes one exponential harder than for FO. While the upper bounds are direct consequences of the results in~\cite{AJMOW-AAAI21}, for the lower bounds
one has to adapt the proofs.
\begin{restatable}{theorem}{thmtwoexpdl}
	\label{thm:thmtwoexpdl}
	Let $\Lmc\in \DLS$. Then strong $\Lmc$-separability with signature is \TwoExpTime-complete.
\end{restatable}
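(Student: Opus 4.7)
My starting point is Lemma~\ref{lem:int}, which identifies strongly $\Sigma$-separating $\Lmc$-formulas with $\Lmc$-inter\-po\-lants of the FO-formulas $\varphi_{\Kmc,\Sigma,a}(x)$ and $\neg\varphi_{\Kmc,\Sigma,b}(x)$. Since strong separability for general $P,N$ reduces to the pairwise singleton case, deciding strong $\Lmc$-separability thus amounts to deciding $\Lmc$-interpolant existence between this specific pair of FO-formulas. The plan is to obtain both bounds by transferring, and adapting, the recent complexity results on interpolant existence in DLs from \cite{AJMOW-AAAI21}.

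For the \TwoExpTime upper bound, I would cast the interpolation problem produced by Lemma~\ref{lem:int} into the DL interpolation setting handled in \cite{AJMOW-AAAI21}. The formulas $\varphi_{\Kmc,\Sigma,a}$ and $\varphi_{\Kmc,\Sigma,b}$ are essentially the standard translation of $\Kmc$ with the non-$\Sigma$ concept and role names renamed apart on the two sides, and with the individuals outside $\Sigma\cup\{a\}$ (resp.\ $\Sigma\cup\{b\}$) treated existentially. This is re-encoded as an interpolant-existence instance between two renamed copies of $\Kmc$ sharing only the $\Sigma$-vocabulary plus the pointed variable~$x$; the existentially treated individuals can be accommodated by introducing fresh marker concept names or, since $\Lmc\supseteq\ALC$, by folding them into the ontology via reachability from fresh nominals. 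The decision procedure of \cite{AJMOW-AAAI21} for $\Lmc$-interpolant existence then applies and yields the \TwoExpTime bound.

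For the matching \TwoExpTime lower bound, the plan is to adapt the hardness proof for $\Lmc$-interpolant existence from \cite{AJMOW-AAAI21}. The main obstacle is a shape restriction: the hardness reduction in \cite{AJMOW-AAAI21} produces arbitrary ontology pairs, whereas strong separability demands that the two sides of the interpolation problem arise from a single labeled KB $(\Kmc,\{a\},\{b\})$ with $\Sigma\subseteq\text{sig}(\Kmc)$ via the encoding of Lemma~\ref{lem:int}. To close this gap, I would merge the two source ontologies $\Omc_1,\Omc_2$ from the hardness reduction into a single $\Omc$ by renaming their non-$\Sigma$ symbols apart, put into $\Dmc$ the distinguished individuals $a,b$ together with whatever further constants are needed to witness the existentials of the original instance, and choose $\Sigma$ as the shared signature of the original pair. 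Correctness then boils down to two routine-looking checks: that the combined KB does not create spurious cross-side interactions outside $\Sigma$ (which follows from the disjointness of the renamed vocabularies), and that the existence of a concept $C\in\Lmc(\Sigma)$ with $\Kmc\models C(a)$ and $\Kmc\models \neg C(b)$ is equivalent to the existence of the original interpolant.
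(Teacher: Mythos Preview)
Your upper-bound plan is essentially what the paper does: encode the two sides as renamed copies of $\Kmc$ and invoke the \TwoExpTime interpolant-existence procedure of \cite{AJMOW-AAAI21}. One detail you gloss over is the case $\Lmc\in\{\ALC,\ALCI\}$: the encoding naturally lands in $\Lmc\Omc$ (nominals are needed to internalize the database), and the paper argues that the resulting $\Lmc\Omc$-interpolant is already an $\Lmc$-interpolant because, when $\Lmc$ has no nominals, $\Sigma$ can be assumed relational and the two sides share no individual names.

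The lower bound, however, has a real gap. Your merge of $\Omc_1$ and $\Omc_2'$ into a single $\Omc$ does \emph{not} give the required equivalence, and your justification (``follows from the disjointness of the renamed vocabularies'') is where it breaks. The renamed vocabularies are disjoint only outside $\Sigma$; inside $\Sigma$ they coincide, so any model of $\Kmc$ must satisfy the $\Sigma$-consequences of \emph{both} $\Omc_1$ and $\Omc_2'$. In the non-separability direction you would need, from witnesses $\Amf_1\models\Omc_1$ and $\Amf_2\models\Omc_2$ of interpolant non-existence, to produce models $\Amf,\Bmf$ of the merged $\Kmc$ with $\Amf,a\sim_{\Lmc,\Sigma}\Bmf,b$. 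But $\Amf_1$ need not admit any interpretation of the $\Omc_2'$-symbols that makes it a model of $\Omc_2'$ (e.g.\ if $\Omc_2$ contains $\top\sqsubseteq A$ for some $A\in\Sigma$ that is not universal in $\Amf_1$). So you can have interpolant non-existence together with separability of your $\Kmc$, and the reduction is unsound. A relativization trick can repair this, but it is not the ``routine-looking check'' you describe.

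The paper avoids the merge entirely. It first gives the bisimulation characterization of strong $\Lmc$-separability (no models $\Amf,\Bmf$ of $\Kmc$ with $\Amf,a^{\Amf}\sim_{\Lmc,\Sigma}\Bmf,b^{\Bmf}$), and then reduces from a \emph{single-ontology} problem extracted from the lower-bound proof in \cite{AJMOW-AAAI21}: given an $\Lmc$-ontology $\Omc$ with $\Dmc=\{R(a,a)\}$, decide whether there are models $\Amf,\Bmf$ of $(\Omc,\Dmc)$ with $\Amf,a^{\Amf}\sim_{\Lmc,\Sigma}\Bmf,d$ for some $d\neq a^{\Bmf}$. Because this already uses one ontology, turning it into a labeled KB only requires adding a fresh $b$ with $A_1(a),A_2(b)$ and $A_1\sqsubseteq\neg A_2$; no cross-side $\Sigma$-interaction can arise.
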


\section{Separability with Signature in GF and FO$^{2}$}
\label{sec:GF}
In the guarded fragment, GF, of FO quantification takes the form
$$ 
\forall
\vec{y}(\alpha(\vec{x},\vec{y})\rightarrow \varphi(\vec{x},\vec{y}))
\text{ and } \exists \vec{y}(\alpha(\vec{x},\vec{y})\wedge
\varphi(\vec{x},\vec{y})) $$ 
where $\alpha(\vec{x},\vec{y})$ is an atomic formula or an equality
$x=y$ that contains all variables in
$\vec{x},\vec{y}$~\cite{ANvB98,DBLP:journals/tocl/HernichLPW20}. The
two-variable fragment, FO$^{2}$, is the fragment of FO with only two
individual variables. For GF we admit relation symbols of arbitrary
arity and equality, but no constant symbols. For FO$^{2}$ we make the
same assumptions except that we admit relation symbols of arity one
and two only. The definitions of weak projective and non-projective
separability and of strong separability are the obvious extensions of
the definitions given for description logics. Our results do not
depend on whether one admits examples that are sets of tuples of
constants of fixed but arbitary length or still only considers sets of constants.

Weak FO$^{2}$-separability is undecidable already with full relational signature, in both the projective and the non-projective case~\cite{KR}. For GF the situation is different: in both cases
weak GF-separability is \TwoExpTime-complete, thus not harder than
satisfiability. This result does not generalize to restricted
signatures. In fact, by adapting the undecidability proof for
conservative extensions given in~\cite{JLMSW17}, one can show the
following.
%
\begin{restatable}{theorem}{thmmaingf}
\label{thm:main-gf}
	Projective and non-projective $(\Lmc,\Lmc_S)$-separability with signature are
	undecidable for all $(\Lmc,\Lmc_S)$ such that $\Lmc$ contains
	GF$^3$ and $\Lmc_S$ contains \ALC.
\end{restatable}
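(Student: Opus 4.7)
The plan is to reduce an undecidable variant of the conservative extension problem for GF$^3$-ontologies, as established in~\cite{JLMSW17}, to the complement of $(\Lmc,\Lmc_S)$-separability with signature, via a relativization argument along the lines of Lemma~\ref{lem:ce-reduction}. From~\cite{JLMSW17} I borrow a computable mapping from instances of an undecidable problem to pairs $(\Omc,\Omc')$ of GF$^3$-ontologies such that $\Omc\cup\Omc'$ is a conservative extension of $\Omc$ iff the instance is negative. Inspecting that proof yields the additional property that witness concepts can always be chosen as \ALC-concepts over $\text{sig}(\Omc)$, which is exactly what is needed since $\Lmc_S$ contains \ALC.

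Given such $(\Omc,\Omc')$, I introduce a fresh unary symbol $A$ and set $\Omc_1=\Omc\cup\text{rel}_A(\Omc')$, where $\text{rel}_A(\Omc')$ is the relativization of $\Omc'$ to $A$, obtained by conjoining $A(x_i)$ to the atomic guard of each quantifier in $\Omc'$. Since no new variables are introduced, $\Omc_1$ remains in GF$^3$. Pick fresh individuals $a,b$ and a fresh concept name $B\notin\text{sig}(\Omc)$, and let $\Dmc=\{A(a),B(b)\}$, $\Kmc=(\Omc_1,\Dmc)$, $P=\{a\}$, $N=\{b\}$, and $\Sigma=\text{sig}(\Omc)$. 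The claim is that $\Omc\cup\Omc'$ fails to be a conservative extension of $\Omc$ iff $(\Kmc,P,N)$ is $\Lmc_S(\Sigma)$-separable, both projectively and non-projectively.

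The forward direction mimics the proof of Lemma~\ref{lem:ce-reduction}: any witness \ALC-concept $C$ over $\Sigma$ yields the $\Sigma$-separator $\neg C$. Every model of $\Kmc$ satisfies $A(a)$, and relativization forces the $A$-restriction of any model of $\Omc_1$ to be a model of $\Omc\cup\Omc'$; hence $a$ cannot realise the $\Omc\cup\Omc'$-unsatisfiable concept $C$, so $\Kmc\models\neg C(a)$. For $b$, one attaches a model of $\Omc\cup\{C(b)\}$ entirely outside $A$ to obtain a model of $\Kmc$ in which $b\in C^\Bmf$, giving $\Kmc\not\models\neg C(b)$.

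The main technical obstacle is the backward direction in the projective case, where a $\Sigma$-separating $\Lmc_S$-formula may use helper symbols outside $\text{sig}(\Kmc)$. I plan to argue contrapositively: if $\Omc\cup\Omc'$ is conservative over $\Omc$ in \ALC, then every satisfiable \ALC-type over $\text{sig}(\Omc)$ can be realised inside the $A$-part of some model of $\Omc_1$; since the part outside $A$ is entirely governed by $\Omc$, helper symbols may be interpreted arbitrarily there, and the conservativity assumption supplies enough flexibility inside $A$ to match those interpretations via a bisimulation coupling between $a$ and $b$. This coupling shows that any candidate separator must already agree on $a$ and $b$, giving the required contradiction. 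The care needed is in ensuring this model-surgery goes through uniformly for the ontologies produced by the construction of~\cite{JLMSW17}; once this is verified, projective and non-projective separability coincide on the reduced instance and both inherit the undecidability.
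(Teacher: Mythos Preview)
Your relativization strategy is precisely the one the paper considered and discarded. Right before the construction, the authors state that a direct reduction of conservative extensions to separability along the lines of Lemma~\ref{lem:ce-reduction} does \emph{not} work for GF, because non-conservativity in GF is witnessed by \emph{sentences}, while separability is witnessed by \emph{open} formulas. Your proposal rests on the unsubstantiated claim that ``inspecting'' the construction of~\cite{JLMSW17} shows that witnesses may always be taken to be \ALC-concepts over $\text{sig}(\Omc)$; this is the entire point, and you have not argued it. The paper does not obtain such witnesses from~\cite{JLMSW17} as a black box; it redoes the two-register-machine reduction from scratch, building a labeled GF$^3$-KB so that a halting computation \emph{is} described by an \ALC-concept.

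There is a second, independent gap in your backward direction. You argue that if $\Omc\cup\Omc'$ is conservative over $\Omc$ in \ALC, one can couple $a$ and $b$ via a ``bisimulation''. At best this shows that no \ALC-concept separates; it says nothing about separators from an $\Lmc_S$ strictly larger than \ALC (the theorem allows $\Lmc_S$ up to FO), and it does not handle the projective case, where fresh concept names on top of a bisimulation can easily re-enable separation. The paper closes this gap by a much stronger model-theoretic statement (Lemma~\ref{lem:gfundec2}): when the 2RM does not halt, for every model $\Amf$ of $\Kmc$ there is a model $\Bmf$ of $\Kmc$ with $(\Amf,b^\Amf)$ $\Gamma$-\emph{isomorphic} to $(\Bmf,a^\Bmf)$, where $\Gamma$ excludes only the non-$\Sigma$ symbols of $\Omc$. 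Isomorphism, unlike bisimulation, rules out separation by \emph{any} FO-formula, projective or not, which is exactly what the full statement of the theorem requires.
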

We now consider strong separability. For both FO$^{2}$ and GF the
complexity of deciding strong separability with full relational signature is the same as
validity, thus {\sc coNExpTime}-complete and, respectively, \TwoExpTime-complete~\cite{KR}. With restricted signatures, the situation is different, and can again be analyzed in terms of interpolant existence. The formula $\varphi_{\Kmc,\Sigma,a}(x)$ constructed in Section~\ref{sec:dfstrong} 
is not guaranteed to be in GF or FO$^{2}$ even if $\Kmc$ is a GF or, respectively, FO$^{2}$-KB. It is, however, straightforward to construct formulas in the respective fragments that can serve the same purpose (either by using constants or by introducing a fresh relation symbol as a guard for $\Dmc$ (for GF) and re-using variables (for FO$^{2}$)). Thus, strong separability in GF and FO$^{2}$-KBs is
again equivalent to interpolant existence. Points~1 and 2 of the following theorem then follow from the CIP of FO and the complexity of GF and FO$^{2}$~\cite{DBLP:journals/jsyml/Gradel99,DBLP:journals/bsl/GradelKV97}. Neither FO$^{2}$ nor GF have the CIP~\cite{comer1969,Pigozzi71,DBLP:journals/sLogica/HooglandM02}, thus separating in FO$^{2}$ and GF is less powerful than separating using FO. The complexity of interpolant existence for GF and FO$^{2}$ has recently been studied in~\cite{jung2020living} and the upper bounds in Points~3 and 4 follow directly from the complexity upper bounds for interpolant existence. The lower bounds are obtained by adapting the proofs.
\begin{restatable}{theorem}{thmgffo}
	\label{thm:gffotwo}
	\begin{enumerate}
		\item Strong $(\text{GF,FO})$-separability with signature is \TwoExpTime-complete;
		\item Strong $(\text{FO}^{2},\text{FO})$-separability with signature is {\sc coNExpTime}-complete;
		\item Strong GF-separability is {\sc 3ExpTime}-complete, for relational signatures;
		\item Strong FO$^{2}$-separability with signature is in {\sc coN2ExpTime} and \TwoExpTime-hard, for relational signatures.
	\end{enumerate}
\end{restatable}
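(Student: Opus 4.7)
The plan is to mirror the strategy used for description logics in Section~\ref{sec:dfstrong}: reduce strong separability to interpolant existence via a GF- or FO$^{2}$-analogue of the formulas $\varphi_{\Kmc,\Sigma,a}(x)$ and $\varphi_{\Kmc,\Sigma,b}(x)$, and then import the complexity bounds on entailment (for (1)--(2)) or interpolant existence (for (3)--(4)). First I would make the reduction precise: given a labelled KB $(\Kmc,\{a\},\{b\})$, I construct $\varphi^{\text{GF}}_{\Kmc,\Sigma,a}$ and $\varphi^{\text{GF}}_{\Kmc,\Sigma,b}$ in GF (and analogously for FO$^{2}$). For GF one has to be careful that the database facts remain guarded after renaming non-$\Sigma$ constants to existentially quantified variables; this is handled either by retaining the original constants (which GF admits when useful via equality) or, as suggested in the text, by introducing a fresh unary (or higher-arity) relation symbol that acts as a guard atom covering all variables in the database encoding. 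For FO$^{2}$ the construction uses variable reuse, exploiting that every database fact mentions at most two individuals and every $\ALCI$-concept translates into an FO$^{2}$-formula with at most two variables. In both cases, Lemma~\ref{lem:int} carries over verbatim: strong $\Sigma$-separability is equivalent to the existence of an interpolant in the corresponding fragment.

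For Point~(1), since FO enjoys the CIP, the existence of an FO-interpolant is equivalent to the entailment $\varphi^{\text{GF}}_{\Kmc,\Sigma,a}(x) \models \neg \varphi^{\text{GF}}_{\Kmc,\Sigma,b}(x)$. This entailment problem reduces to (un)satisfiability of a single GF-sentence, which is \TwoExpTime-complete~\cite{DBLP:journals/jsyml/Gradel99}. The lower bound is by reduction from GF-satisfiability, analogously to the lower bound in Theorem~\ref{thm:fostrongsep}: one encodes unsatisfiability of a GF-sentence as a strong FO-separation instance by choosing $\Kmc$, $P$, $N$, and $\Sigma$ so that strong FO($\Sigma$)-separability collapses to the entailment in question. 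For Point~(2), the same recipe using FO$^{2}$ encodings gives an entailment over FO$^{2}$-sentences, whose complement is FO$^{2}$-satisfiability, hence {\sc NExpTime}-complete~\cite{DBLP:journals/bsl/GradelKV97}; this yields the {\sc coNExpTime} bound matching the lower bound obtained from FO$^{2}$-satisfiability.

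For Points~(3) and~(4) one cannot rely on CIP because neither GF nor FO$^{2}$ has it~\cite{comer1969,Pigozzi71,DBLP:journals/sLogica/HooglandM02}. Instead, the reduction above turns strong GF-separability (resp.\ strong FO$^{2}$-separability) with signature directly into the GF-interpolant-existence problem (resp.\ FO$^{2}$-interpolant-existence problem). The upper bounds \ThreeExpTime and {\sc coN2ExpTime} are then direct consequences of the complexity bounds for interpolant existence established in~\cite{jung2020living}. Note that the restriction to relational signatures in (3) and (4) is aligned with the setting of~\cite{jung2020living}, where GF and FO$^{2}$ are treated without constants; in the reduction, any individual name in $\Sigma$ is handled as discussed above.

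The main obstacle will be the matching lower bounds in (3) and (4). The plan is to adapt the hardness proofs of interpolant existence from~\cite{jung2020living} to separability. Since the interpolant-existence instances used there are already essentially of the shape $\varphi \models \psi$ with prescribed shared signature, the challenge is to express both $\varphi$ and $\neg\psi$ as positive entailments $\Kmc \models \chi(a)$ and $\Kmc \models \neg\chi(b)$ respectively, inside a single labelled KB $(\Kmc,\{a\},\{b\})$ over a suitable $\Sigma$. Concretely, one composes $\varphi$ and $\psi$ into a single GF- (or FO$^{2}$-)ontology by relativising each to a fresh guard predicate that is attached to $a$ (resp.\ $b$) in the database, and uses the guard predicates (outside $\Sigma$) to play the role of the renaming in the construction of $\varphi^{\text{GF}}_{\Kmc,\Sigma,a}$. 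Verifying that this encoding preserves both directions of the equivalence with interpolant existence, and in particular that it remains in GF$^3$ (resp.\ FO$^{2}$) with the intended signature restriction, is the delicate step.
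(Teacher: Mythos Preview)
Your overall strategy is the paper's: encode the labelled KB as a pair of formulas in the target fragment, invoke Lemma~\ref{lem:int}, and then import either the satisfiability complexity (for (1)--(2), via the CIP of FO) or the interpolant-existence complexity from~\cite{jung2020living} (for (3)--(4)). That part is fine.

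There is one genuine technical slip in your FO$^{2}$ encoding. ``Variable reuse, exploiting that every database fact mentions at most two individuals'' does not suffice: after replacing the non-$\Sigma$ individuals by existentially quantified variables you must quantify over \emph{all} of them simultaneously, and a database such as $\{r(c_1,c_2),r(c_2,c_3),r(c_3,c_4)\}$ cannot be rendered in FO$^{2}$ this way. The paper's fix is different: introduce a fresh unary predicate $A_c$ for every individual $c$, add the FO$^{2}$ sentences $\exists x\,A_c(x)\wedge\forall x\forall y\,(A_c(x)\wedge A_c(y)\to x{=}y)$, and encode each fact $R(c,c')$ as $\exists x\exists y\,(A_c(x)\wedge A_{c'}(y)\wedge R(x,y))$. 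Using disjoint families of such predicates on the $a$- and $b$-side keeps their shared signature equal to~$\Sigma$. For GF your ``higher-arity guard'' remark is exactly what the paper does (an $n$-ary guard over all database variables), but note that a \emph{unary} guard would not work.

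For the lower bounds in (3) and (4), your relativisation idea is plausible but heavier than what the paper does. The paper simply generalises the DL argument for Theorem~\ref{thm:thmtwoexpdl}: take a hard instance of interpolant existence from~\cite{jung2020living} in its model-theoretic form (joint bisimilar models), add a fresh individual $b$ and fresh unary predicates $A_1,A_2$ with $A_1(a),A_2(b)$ and $A_1\sqsubseteq\neg A_2$ in the KB, and observe that the bisimulation condition of Lemma~\ref{thm:alcistrongcrit} (suitably adapted to GF/FO$^{2}$) is unchanged. No relativisation of the input formulas is needed, and nothing has to be checked about staying inside GF$^{3}$.
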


\section{Discussion}
We have started investigating separability of data examples under
signature restrictions. Our main contributions are an analysis of the
separating power of several important languages and the computational
complexity of deciding separability. 
The following table gives an overview of the complexity of separability for expressive fragments of FO with and without signature restrictions. For Horn-DLs we refer the reader
	to~\cite{DBLP:conf/ijcai/FunkJLPW19,aaaithis}. The results in the gray
	columns (weak, projective, with signature restriction and strong
	with signature restriction, respectively) are shown in this article, the
	results of the first (weak, projective, and full signature), second
	(weak, non-projective, and full signature), and fourth (strong and
	full signature) column are shown
	in~\cite{DBLP:conf/ijcai/FunkJLPW19,KR}. 
	
	\setlength{\tabcolsep}{4pt}
	\newcolumntype{a}{>{\columncolor{Gray}}c|}
	\begin{center}
		\begin{tabular}{ |c | c c a c a}
			\hline
			& \multicolumn{3}{c|}{Weak Separability} &
			\multicolumn{2}{c|}{Strong Separability} \\
			$\mathcal{L}$ & prj+full & full & prj+rstr & full & rstr \\
			\hline                     
			$\mathcal{ALC}$ & \textsc{NExp} & ? & \textsc{2Exp} & \textsc{Exp} & \textsc{2Exp}\\
			$\mathcal{ALCI}$ & \textsc{NExp} & \textsc{NExp} & \textsc{2Exp} & \textsc{Exp} & \textsc{2Exp}\\
			$\mathcal{ALCO}$ &   ?        & ? & \textsc{3Exp} & ? & \textsc{2Exp}\\
			GF & \textsc{2Exp} & \textsc{2Exp} & Undec & \textsc{2Exp} & \textsc{3Exp} \\
			FO$^2$ & Undec & Undec & Undec & \textsc{NExp} &
			\makecell{$\leq$\textsc{coN}2\textsc{Exp} \\ $\geq$\textsc{2Exp}} \\
			\hline
		\end{tabular}
	\end{center}
	The missing entries for $\mathcal{ALCO}$ are due to the fact that
	nominals are considered for the first time in this article in the
	context of separability.  We conjecture that the complexity is the
	same as for $\mathcal{ALC}$; note, however, that one has to be careful
	when defining separability problems in \ALCO under the full signature as the
	individuals that provide positive and negative counterexamples should
	be disallowed from separating concepts.
	

Further interesting theoretical problems include: what is the complexity
of weak projective separability with signature for $\mathcal{ALCIO}$,
where the bisimulation characterization given in
Theorem~\ref{thm:L-modeltheory0} does not hold? What is the complexity
of non-projective weak separability with signature 
(and conservative extensions) 
for the DLs in $\DLS$? 
From a practical viewpoint, it would be of interest to investigate
systematically the size of separating concepts and to develop
algorithms for computing them, if they exist. Recall that such an algorithm is
already provided (by the relation of separating formulas to Craig interpolants) in the case of strong separability and it would be of interest to evaluate
empirically the shape and size of Craig interpolants in FO in that
case.

\section*{Acknowledgements}
Carsten Lutz was supported by DFG CRC 1320 Ease.
Frank Wolter was supported by EPSRC grant EP/S032207/1.


\cleardoublepage

\appendix

\section{Further Preliminaries}
We remind the reader of
different kinds of bisimulations that characterize the expressive
power of the languages in $\DLS$~\cite{TBoxpaper,goranko20075}.  Let
$\Amf$ and $\Bmf$ be structures and $\Sigma$ a signature. A relation
$S \subseteq \text{dom}(\Amf) \times \text{dom}(\Bmf)$ is an
\emph{$\mathcal{ALCO}(\Sigma)$-bisimulation between $\Amf$ and $\Bmf$}
if the following conditions hold:
\begin{enumerate}
	\item if $(d,e)\in S$ and $A\in \Sigma$, then
	$d \in A^{\Amf}$ iff $e\in A^{\Bmf}$; 
	\item if $(d,e)\in S$ and $c\in \Sigma$, then
	$d=c^{\Amf}$ iff $e=c^{\Bmf}$;
	\item if $(d,e)\in S$, $r\in \Sigma$, and $(d,d')\in r^{\Amf}$, 
	then there is an $e'$ with $(e,e')\in r^{\Bmf}$ and
	$(d',e')\in S$;
	\item if $(d,e)\in S$, $r\in \Sigma$, $(e,e')\in r^{\Bmf}$, 
	then there is a $d'$ with $(d,d')\in r^{\Amf}$ and
	$(d',e')\in S$,
\end{enumerate}	
$S$ is an \emph{$\mathcal{ALCIO}(\Sigma)$-bisimulation between $\Amf$ and $\Bmf$}  
if Points~3 and 4 also hold for inverse roles over $\Sigma$. If $\Sigma$ is relational then we speak about $\mathcal{ALC}(\Sigma)$ and $\mathcal{ALCI}(\Sigma)$-bisimulations, respectively. 

Let $\Sigma$ be a signature. A \emph{$\Sigma$-homomorphism} $h$ from a
structure $\Amf$ to a structure $\Bmf$ is a function
$h:\text{dom}(\Amf)
\rightarrow \text{dom}(\Bmf)$ such that $a\in A^{\Amf}$ implies $h(a)\in A^{\Bmf}$
for all $A\in \Sigma$, $(a,b)\in r^{\Amf}$ implies $(h(a),h(b))\in r^{\Bmf}$
for all $r\in \Sigma$, and $h(c^{\Amf})=c^{\Bmf}$ for all $c\in \Sigma$.
\section{Proofs for Section~\ref{sec:prelim}}
\lemequivalencetwo*
\begin{proof} \
	Assume first that $\Amf,a \Rightarrow^{\text{mod}}_{\text{CQ}_{r}^{\Lmc},\Sigma} \Bmf,b$
	and let $\varphi(x)$ be a formula in $\text{CQ}_{r}^{\Lmc}(\Sigma)$
	such that $\Amf\models \varphi(a)$. Then there exists a mapping $h$ from the 
	set $\text{var}(\varphi)$ of variables in $\varphi(x)$ to $\Amf$ such that $h(x)=a$ and 
	\begin{itemize}
		\item If $r(y,z)$ is a conjunct of $\varphi(x)$, then $(h(y),h(z))\in r^{\Amf}$;
		\item If $C(y)$ is a conjunct of $\varphi(x)$, then $h(y)\in C^{\Amf}$.
	\end{itemize}
	Let $D$ be the image of $\text{var}(\varphi)$ under $h$. Then the $\Sigma$-reduct of $\Amf_{|D}$ is $\Lmc$-rooted in $a$ and, by definition of
	$\Amf,a \Rightarrow^{\text{mod}}_{\text{CQ}^{\Lmc},\Sigma} \Bmf,b$,
	we have a $\Sigma$-homomorphism $h'$ from $\Amf_{|D}$ to $\Bmf$ such that $h'(a)=b$ and
	$\Amf,c \sim_{\Lmc,\Sigma} \Bmf,h'(c)$ for all $c\in D$. Take the composition
	$h'\circ h$ and observe that by Lemma~\ref{lem:equivalence}, $h'\circ h(y) \in C^{\Bmf}$ if 
	$C(y)$ is a conjunct of $\varphi$. Thus, $\Bmf \models \varphi(b)$, as required. The proof for $\text{CQ}^{\Lmc}$ is the same except that the one
	does not need to observe that the $\Sigma$-reduct of $\Amf_{|D}$ is $\Lmc$-rooted in $a$.
	
	Conversely, assume that $\Amf,a \Rightarrow_{\text{CQ}_{r}^{\Lmc},\Sigma} \Bmf,b$. To show that $\Amf,a \Rightarrow^{\text{mod}}_{\text{CQ}_{r}^{\Lmc},\Sigma} \Bmf,b$,
	let $D$ be such that the $\Sigma$-reduct of $\Amf_{|D}$ is $\Lmc$-rooted at $a$. 
	Consider the set for formulas $q_{D}^{\Amf}$ that is obtained by regarding the nodes $d$ in $D$ as variables $x_{d}$ and taking $(x_{d_{1}},x_{d_{2}})$ if $(d_{1},d_{2})\in r^{\Amf}$, $r\in \Sigma$, and $C(x_{d})$ if $d\in C^{\Amf}$ for $C\in \Lmc(\Sigma)$. If follows from $\Amf,a \Rightarrow_{\text{CQ}_{r}^{\Lmc},\Sigma} \Bmf,b$ that every finite subset of
	$q_{D}^{\Amf}$ is satisfied in $\Bmf$ under an assignment mapping $x_{a}$ to $b$.
	If $\Bmf$ is $\omega$-saturated, then $q_{D}^{\Amf}$ is satisfied in $\Bmf$ by definition of $\omega$-saturatedness (and also holds if the $\Sigma$-reduct of $\Amf_{|D}$ is not rooted in $a$). If $\Bmf$ has finite outdegree then this can be shown directly using the condition that the $\Sigma$-reduct of $\Amf_{|D}$ is rooted in $a$. Let $v$ be the satisfying assignment. Then $h:D \rightarrow \Bmf$ defined by setting $h(d) = v(x_{d})$ is a
	$\Sigma$-homomorphism, $h(a)=b$, and $\Amf,c \sim_{\Lmc,\Sigma} \Bmf,h(c)$ for all $c\in D$, as required. The implication for $\text{CQ}^{\Lmc}$ follows using the comment above.
\end{proof}

We slightly extend Lemma~\ref{lem:forestm} as required later. The proof is by a standard selective unraveling procedure.

\begin{restatable}{lemma}{lemforestmodelcompl}
	\label{lem:forestmodelcompleteness}
	Let $\Lmc\in \{\ALC,\ALCI,\mathcal{ALCO}\}$	and let $\Kmc$ be an $\Lmc$-KB and $C$ an $\Lmc$-concept.
	If $\Kmc\not\models C(a)$, then there exists an $\Lmc$-forest model 
	$\Amf$ of $\Kmc$ of finite $\Lmc$-outdegree with $a\not\in C^{\Amf}$.
	
	For every model $\Amf$ of $\Kmc$ there exists an $\Lmc$-forest model $\Amf'$ of $\Kmc$ such that $\Amf',a^{\Amf'} \sim_{\Lmc}^{f} \Amf,a^{\Amf}$. If $\Amf$ is finite, then there exists such a model of finite $\Lmc$-outdegree.
\end{restatable}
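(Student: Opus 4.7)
The plan is to prove both parts by a single selective unraveling construction; the first statement then follows from the second by first picking any model $\Amf$ of $\Kmc$ with $a^\Amf\notin C^\Amf$ (which exists by assumption), unraveling it into a forest model $\Amf'$ that is functionally $\Lmc$-bisimilar to $\Amf$ at $a$, and invoking invariance of $\Lmc$-concepts under $\Lmc$-bisimulations to conclude $a^{\Amf'}\notin C^{\Amf'}$. The second statement is what I actually construct.

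Fix a model $\Amf$ of $\Kmc$. I build $\Amf'$ as follows. The roots of the forest are indexed by $\text{ind}(\Dmc)$; for each $c\in\text{ind}(\Dmc)$ set $c^{\Amf'}=[c^\Amf]$. Above each root I grow an $\Lmc$-tree whose nodes are finite sequences $[c^\Amf, R_1, d_1, R_2, d_2, \dots, R_n, d_n]$ where consecutive $d_i$'s are connected in $\Amf$ by the role $R_{i+1}$ (for $\ALC$ and $\mathcal{ALCO}$ only forward roles $R_{i+1}\in\NR$ are used; for $\ALCI$ arbitrary roles are used and for $\mathcal{ALCO}$ we additionally require $d_i\notin\{e^\Amf : e\in\text{ind}(\Dmc)\}$ for $i\ge 1$, so that role edges ending in named individuals are only those between the roots). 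The unique interpretation making the projection map $\pi:\Amf'\to\Amf$ sending each sequence to its last component a homomorphism and such that only the database-induced role edges between roots exist is taken. By construction $\Amf'$ is an $\Lmc$-forest model in the precise sense defined in Section~\ref{sec:prelim}.

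Next, $\pi$ is a functional $\Lmc$-bisimulation with $(a^{\Amf'},a^\Amf)\in\pi$: the atomic conditions hold because $\pi$ is a homomorphism, the back conditions hold by construction of successors in the tree part and by the identity on roots, and the forth conditions hold because every $\Amf$-successor of $\pi(s)$ gives rise to an $\Amf'$-successor of $s$ (with the same label). Functionality is immediate since $\pi$ is a function. By Lemma~\ref{lem:equivalence}, $\pi$ preserves the extensions of all $\Lmc$-concepts, which together with the fact that CIs are universally quantified statements about $\Lmc$-concepts gives $\Amf'\models\Omc$; since the roots of $\Amf'$ copy $\Amf$ on the database part, $\Amf'\models\Dmc$ as well. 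Hence $\Amf'$ is a model of $\Kmc$ and $\Amf',a^{\Amf'}\sim_{\Lmc}^f\Amf,a^\Amf$.

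It remains to bound outdegree when $\Amf$ is finite; here the outdegree of every node in $\Amf'$ equals that of its $\pi$-image in $\Amf$, which is finite. For the first part of the lemma, where we need finite $\Lmc$-outdegree without assuming $\Amf$ finite, I modify the tree step: at each node $s$ with image $d=\pi(s)$, rather than branching over \emph{all} $R$-successors of $d$, I pick for each existential $\exists R.D\in\text{sub}(\Kmc)\cup\text{sub}(C)$ with $d\in(\exists R.D)^\Amf$ exactly one witness $d'\in D^\Amf$ with $(d,d')\in R^\Amf$ and grow the subtree from $[s,R,d']$. The resulting pruned $\pi$ is still a functional $\Lmc$-bisimulation with respect to the relevant subconcept set, which is enough to preserve membership in $C$ and satisfaction of $\Omc$; the outdegree is bounded by $|\text{sub}(\Kmc)|+|\text{sub}(C)|$. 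The main subtlety is the $\mathcal{ALCO}$ case, where nominals are interpreted uniformly across the model: this is handled by the design decision that anonymous tree elements never serve as denotations of nominals, so the interpretation of $\{c\}$ in $\Amf'$ is the unique root $[c^\Amf]$, and the bisimulation condition for nominal concepts is inherited from $\pi$ respecting database individuals.
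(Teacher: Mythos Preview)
Your approach is the standard selective unraveling procedure, which is precisely what the paper indicates. However, your treatment of $\mathcal{ALCO}$ has a genuine gap. You claim that in $\Amf'$ ``role edges ending in named individuals are only those between the roots'', but this is wrong and breaks both the bisimulation and modelhood. Recall that in an $\mathcal{ALCO}$-forest model edges $r(d,b^{\Amf})$ from \emph{arbitrary} $d\in\text{dom}(\Amf)$ to named individuals $b^{\Amf}$ are permitted; they are exactly the edges one removes to obtain the disjoint union of trees in the definition in Section~\ref{sec:prelim}. These back-edges are essential here: if $s$ is an anonymous tree node with $\pi(s)=d$ and $(d,c^{\Amf})\in r^{\Amf}$ for some $c\in\text{ind}(\Dmc)$, then the forth condition of the $\mathcal{ALCO}$-bisimulation demands an $r$-successor $s'$ of $s$ with $\pi(s')=c^{\Amf}$. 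Since your paths never enter named individuals past the root, the only candidate is $s'=[c^{\Amf}]$, so the edge $(s,[c^{\Amf}])$ must be present in $r^{\Amf'}$. Without it, $\pi$ is not an $\mathcal{ALCO}$-bisimulation, and $\Amf'$ need not even satisfy $\Omc$: take $\Omc=\{\top\sqsubseteq\exists r.\{c\}\}$ and any model $\Amf$ with an anonymous element reachable from some database individual.

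The fix is straightforward---add the edge $(s,[c^{\Amf}])$ to $r^{\Amf'}$ whenever $(\pi(s),c^{\Amf})\in r^{\Amf}$---and with it the construction goes through as you describe; the result is then a legitimate $\mathcal{ALCO}$-forest model and $\pi$ is the required functional $\mathcal{ALCO}$-bisimulation. The same amendment applies to your selective variant for the finite-outdegree statement.
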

We now show that Lemma~\ref{lem:forestm} does not hold for $\mathcal{ALCIO}$.
Note that an $\mathcal{ALCIO}$-forest model of $\Kmc$ is a model $\Amf$ of $\Kmc$
such that $\Amf$ with all $R(a,b^{\Amf})$, $R$ a role, $a\in
\text{dom}(\Amf)$, $b\in \text{ind}(\Dmc)$ is a disjoint union of
$\mathcal{ALCI}$-trees rooted at $a$, $a\in \text{ind}(\Dmc)$. Then the concept
$$
\{a\} \sqcap A \sqcap \exists s.\top \sqcap \forall s.(\neg A \sqcap \exists r.\exists s^{-}.\{a\})
$$
is satisfiable in a model of $\Kmc=(\emptyset,\{A(a)\})$, but not in any $\mathcal{ALCIO}$-forest model of $\Kmc$ of finite $\mathcal{ALCIO}$-outdegree.

\section{Proofs for Section~\ref{sec:weaksep}}

\thmnominalhelper*

The
proof is by a reduction of the following undecidable tiling problem.
%
\begin{definition}
	A \emph{tiling system} $S=(\mathcal{T},H,V,R,L,T,B)$ consists of a finite set
	$\Tmc$ of \emph{tiles}, horizontal and vertical \emph{matching
		relations} $H,V \subseteq \Tmc \times \Tmc$, and sets $R,L,T,B \subseteq
	\Tmc$ of \emph{right} tiles, \emph{left} tiles, \emph{top} tiles, and
	\emph{bottom} tiles. A \emph{solution} to $S$ is a triple $(n,m,\tau)$
	where $n,m \geq 1$ and $\tau: \{0,\ldots,n\} \times \{0,\ldots,m\} \rightarrow \Tmc$ such
	that the following hold:
	\begin{enumerate}
		
		\item $(\tau(i,j),\tau(i+1,j)) \in H$, for all $i<n$ and $j \leq m$;
		
		\item $(\tau(i,j),\tau(i,j+1)) \in V$, for all $i\leq n$ and $j<m$;
		
		\item $\tau(0,j) \in L$ and $\tau(n,j) \in R$, for all $0 \leq j \leq m$;
		
		\item $\tau(i,0) \in B$ and $\tau(i,m) \in T$, for all $0 \leq i \leq n$.
		
	\end{enumerate}
	\vspace*{-\medskipamount}
\end{definition}
We show how to convert a tiling system $S$ into a labeled
$\mathcal{ALC}$-KB $(\Kmc,P,N)$ and signature $\Sigma$ such that
$S$ has a solution iff $(\Kmc,P,N)$ is projectively
$\mathcal{ALCO}(\Sigma)$-separable with individual names as additional helper 
symbols.

Let $S=(\Tmc,H,V,R,L,T,B)$ be a tiling system. Define an ontology \Omc containing the following inclusions.
\begin{itemize}
	
	%
	\item Every grid node is labeled with exactly
	one tile and the matching conditions are satisfied:
	$$
	\begin{array}{rcl}
		\top &\sqsubseteq& \bigsqcup_{t\in T}(t \sqcap \bigsqcap_{t' \in T,\; t'\not=t} \neg t') \\[4mm]
		\top &\sqsubseteq& \bigsqcap_{t\in T}(t \rightarrow (\bigsqcup_{(t,t') \in H} \forall r_x . t' \sqcap \bigsqcup_{(t,t') \in V} \forall r_y . t'))
	\end{array}
	$$
	
	\item The concepts \mn{left}, \mn{right}, \mn{top}, \mn{bottom} mark the borders of
	the grid in the expected way: 
	\begin{eqnarray*}	
		\mn{bottom} & \sqsubseteq & \neg \mn{top} \sqcap \forall r_{x}. \mn{bottom} \\	
		\mn{right} & \sqsubseteq & \forall r_{y}. \mn{right}\\
		\mn{left} & \sqsubseteq & \neg \mn{right} \sqcap \forall r_{y}.\mn{left}\\
		\mn{top} & \sqsubseteq & \forall r_{x}.\mn{top}\\
		\neg \mn{top} & \equiv & \exists r_{y}.\top \\
        \neg \mn{right} & \equiv & \exists r_{x}.\top	
\end{eqnarray*}
	and $\mn{bottom} \sqsubseteq \bigsqcup_{t\in B}t$, $\mn{right} \sqsubseteq \bigsqcup_{t\in R}t$, $\mn{left} \sqsubseteq \bigsqcup_{t\in L}t$, $\mn{top} \sqsubseteq \bigsqcup_{t\in T}t$.

	%
	\item There is an infinite outgoing $r_x$/$r_y$-path starting at
	$Q$ or some grid cell does not close in the part of models 
	reachable from $Q$:
	$$	
	Q \sqsubseteq \exists r_{x}. Q \sqcup \exists r_{y}.Q \sqcup
	(\exists r_{x}.\exists r_{y}.P \sqcap \exists r_{y}. \exists r_{x}.\neg P) 
	$$
	\item $Q$ is triggered by $A_{1}\sqcap D$:
	$$
	A_1 \sqcap D \sqsubseteq Q
	$$
\end{itemize}
Now let $\Dmc = \{A_{1}(a),Y(b), D(o), \mn{left}(o),\mn{bottom}(o)\}$, set
$$
\Sigma = \{ o,r_x,r_y, \mn{left}, \mn{right}, \mn{top}, \mn{bottom}\}
$$
and consider the labeled KB $(\Kmc,\{a\},\{b\})$ where $\Kmc=(\Omc,\Dmc)$.

\begin{lemma}\label{lemfisrtone}
	If  $S$ has a solution, then there is an
	$\mathcal{ALCO}(\Sigma\cup \Sigma_{\text{help}})$-concept that separates $(\Kmc,\{a\},\{b\})$, where $\Sigma_{\text{help}}$ is a set of fresh individual names.
\end{lemma}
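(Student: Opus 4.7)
\medskip
\noindent\textbf{Proof plan.}
Given a solution $(n,m,\tau)$ of the tiling system $S$, the plan is to introduce helper individual names $\Sigma_{\text{help}} = \{c_{i,j} \mid 0 \le i \le n,\ 0 \le j \le m\}$ and explicitly construct an $\mathcal{ALCO}(\Sigma \cup \Sigma_{\text{help}})$-concept $C_\tau$ that $\Sigma$-separates $(\Kmc,\{a\},\{b\})$. The helper nominals are used to anchor a closed commuting $(n+1)\times(m+1)$ grid that mirrors $\tau$, which in turn lets the separating concept exploit the interaction between the $Q$ axiom and the $P/\neg P$ obstruction.

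First, I would build an auxiliary concept $G_\tau$ that ``pins down'' the grid at $o$: it asserts that $o$ coincides with $c_{0,0}$, that for every $(i,j)$ the element $c_{i,j}$ is reachable both by $\exists r_x^{i}\exists r_y^{j}.\{c_{i,j}\}$ and by $\exists r_y^{j}\exists r_x^{i}.\{c_{i,j}\}$ (thereby forcing commutation), that the boundary positions carry the appropriate $\text{left},\text{right},\text{bottom},\text{top}$ labels, and that on the right (resp.\ top) boundary $\text{right}$ (resp.\ $\text{top}$) blocks further $r_x$- (resp.\ $r_y$-)successors. All of this is expressible in $\mathcal{ALCO}$ over $\Sigma\cup\Sigma_{\text{help}}$ since paths of bounded depth can be unfolded using $\exists r_x,\exists r_y$.

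Second, I would define $C_\tau$ so that, in any model where the grid at $o$ has been forced to close in a commuting manner by $G_\tau$, the $Q$-axiom cannot be satisfied at $o$: the first two disjuncts of $Q \sqsubseteq \exists r_x.Q\sqcup\exists r_y.Q\sqcup(\exists r_x\exists r_y.P\sqcap\exists r_y\exists r_x.\neg P)$ are ruled out because the grid is finite, and the third is ruled out because the two diagonal paths meet at a common nominal, forcing a single element to satisfy both $P$ and $\neg P$. Combining this with $A_1\sqcap D \sqsubseteq Q$ and $D(o)$, I would argue that $a$ is necessarily in $C_\tau$ in every model of $\Kmc$: intuitively, the $A_1$-label at $a$ creates the $Q$-obligation whenever $a$ coincides with $o$, so $C_\tau$ can be phrased as the $\Sigma$-expressible implication ``if the $G_\tau$-grid is realized at you, then you are not $o$''. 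For $b$, which only carries the $\Sigma$-invisible label $Y$ (and $Y$ does not occur in $\Omc$), I would construct a model of $\Kmc$ that realises $\tau$ at $o$ with the $c_{i,j}$ placed on the grid cells and in which $b$ satisfies $G_\tau$ by being identified with $o$ and the grid elements, so $C_\tau(b)$ fails; this model exists precisely because $\tau$ is a solution of $S$.

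The principal obstacle will be engineering $C_\tau$ so that the forced $Q$-obligation at $a$ (through $A_1 \sqcap D \sqsubseteq Q$) is translated into a $\Sigma$-visible separation from $b$, despite $A_1$, $D$, $Q$ and $P$ all lying outside $\Sigma$. This requires the nominals $c_{i,j}$ to do double duty: internally they enforce grid commutation so that the $P/\neg P$ disjunct of the $Q$-axiom is destroyed, and externally they let the concept refer to a common anchor $o=c_{0,0}$ without mentioning the auxiliary tile, $D$, $P$, or $Q$ symbols. Once $C_\tau$ is written correctly, verifying the two separation conditions is a routine model construction using the fact that $\tau$ provides a consistent tile assignment on the $(n+1)\times(m+1)$ grid.
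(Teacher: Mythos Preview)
Your overall strategy is the same as the paper's: use fresh individual names $c_{i,j}$ as helper nominals to write a grid concept $G$ rooted at $o$, and take (essentially) $\neg G$ as the separating concept. The argument for $b$ is fine. However, your specific description of $G_\tau$ has a real gap that breaks the argument for $a$.

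You build $G_\tau$ from \emph{existential} path conditions $\exists r_x^{i}\exists r_y^{j}.\{c_{i,j}\}$ and $\exists r_y^{j}\exists r_x^{i}.\{c_{i,j}\}$ and call this ``forcing commutation''. It does not. These conjuncts only guarantee that \emph{some} $r_x^i r_y^j$-path and \emph{some} $r_y^j r_x^i$-path from $o$ end at $c_{i,j}$; they do not prevent additional $r_x$- or $r_y$-successors at any $c_{i,j}$. Consequently, in a putative model with $a=o$ and $a\in G_\tau^\Amf$, the obligation $a\in Q^\Amf$ can be discharged: either the disjunct $\exists r_x.\exists r_y.P \sqcap \exists r_y.\exists r_x.\neg P$ is satisfied via two \emph{different} diagonal endpoints (one inside the grid, one outside), or $Q$ propagates along an infinite $r_x/r_y$-branch disjoint from the $c_{i,j}$'s. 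Nothing in the ontology forbids such extra successors at interior cells, so $\Kmc\models C_\tau(a)$ does not follow from your $G_\tau$.

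The fix, and what the paper does, is to make the successors \emph{unique}: the grid concept asserts at each $c_{i,j}$ (reached via a nominal-anchored path) the universal restrictions $\forall r_x.\{c_{i+1,j}\}$ and $\forall r_y.\{c_{i,j+1}\}$ (together with the corresponding existentials), and for boundary cells uses the border labels so that $\mn{right}$ and $\mn{top}$ kill $r_x$- and $r_y$-successors via $\neg\mn{right}\equiv\exists r_x.\top$ and $\neg\mn{top}\equiv\exists r_y.\top$. With uniqueness, at every grid cell the third $Q$-disjunct is genuinely contradictory (both diagonals land on the \emph{same} nominal), and the first two disjuncts can only push $Q$ to another grid cell; finiteness then yields the contradiction at $c_{n,m}$. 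Once you replace your existential reachability conditions by these ``only successor'' conditions, your plan coincides with the paper's proof (and your implication $G_\tau\rightarrow\neg\{o\}$ is then equivalent to the paper's $\neg G$, since $G$ already entails $\{o\}$).
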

\begin{proof} \
	Assume that $S$ has a solution consisting of a properly tiled $n\times m$ grid. We design an $\mathcal{ALCO}(\Sigma\cup \Sigma_{\text{help}})$-concept $G$ so that any model of $G$ and $\Kmc$ includes a properly tiled $n \times m$-grid with lower
	left corner $o$, where $\Sigma_{\text{help}}$ is a set of fresh individual names. The individual names in $\Sigma_{\text{help}}$ are $a_{i,j}$, $0 \leq i \leq n$, $0\leq j \leq m$. Then let $G$ be the $\mathcal{ALCO}(\Sigma\cup \Sigma_{\text{help}})$-concept that states that $G$ is true at the bottom left corner of a $r_{x}/r_{y}$ grid in which the nodes are given by the 
	interpretation of the individual names $a_{i,j}$ and such that
\begin{itemize}
	\item $a_{i+1,j}$ is the only $r_{x}$-successor of $a_{i,j}$;
	\item $a_{i,j+1}$ is the only $r_{y}$-successor of $a_{i,j}$;
	\item the borders of the grid satisfy the respective concepts $\mn{left}, \mn{right}$, $\mn{top}$, $\mn{bottom}$;
	\item $o=a_{0,0}$.
\end{itemize}
Thus, the grid can be depicted as follows:

\begin{center}
	\tikzset{every picture/.style={line width=0.5pt}} 

\begin{tikzpicture}[x=0.75pt,y=0.75pt,yscale=-1,xscale=1]
	
	\draw  [fill={rgb, 255:red, 0; green, 0; blue, 0 }  ,fill opacity=1 ] (262.1,270.25) .. controls (262.1,269.28) and (261.32,268.5) .. (260.35,268.5) .. controls (259.38,268.5) and (258.6,269.28) .. (258.6,270.25) .. controls (258.6,271.22) and (259.38,272) .. (260.35,272) .. controls (261.32,272) and (262.1,271.22) .. (262.1,270.25) -- cycle ;
	\draw  [fill={rgb, 255:red, 0; green, 0; blue, 0 }  ,fill opacity=1 ] (261.9,238.05) .. controls (261.9,237.08) and (261.12,236.3) .. (260.15,236.3) .. controls (259.18,236.3) and (258.4,237.08) .. (258.4,238.05) .. controls (258.4,239.02) and (259.18,239.8) .. (260.15,239.8) .. controls (261.12,239.8) and (261.9,239.02) .. (261.9,238.05) -- cycle ;
	\draw    (260.38,265.52) -- (260.28,246.05) ;
	\draw [shift={(260.27,243.05)}, rotate = 449.7] [fill={rgb, 255:red, 0; green, 0; blue, 0 }  ][line width=0.08]  [draw opacity=0] (5,-2.5) -- (0,0) -- (5,2.5) -- (3.5,0) -- cycle    ;
	\draw    (265.18,270.72) -- (286,270.27) ;
	\draw [shift={(289,270.2)}, rotate = 538.76] [fill={rgb, 255:red, 0; green, 0; blue, 0 }  ][line width=0.08]  [draw opacity=0] (5,-2.5) -- (0,0) -- (5,2.5) -- (3.5,0) -- cycle    ;
	\draw  [fill={rgb, 255:red, 0; green, 0; blue, 0 }  ,fill opacity=1 ] (295.3,270.25) .. controls (295.3,269.28) and (294.52,268.5) .. (293.55,268.5) .. controls (292.58,268.5) and (291.8,269.28) .. (291.8,270.25) .. controls (291.8,271.22) and (292.58,272) .. (293.55,272) .. controls (294.52,272) and (295.3,271.22) .. (295.3,270.25) -- cycle ;
	\draw    (265.18,237.72) -- (286,237.27) ;
	\draw [shift={(289,237.2)}, rotate = 538.76] [fill={rgb, 255:red, 0; green, 0; blue, 0 }  ][line width=0.08]  [draw opacity=0] (5,-2.5) -- (0,0) -- (5,2.5) -- (3.5,0) -- cycle    ;
	\draw  [fill={rgb, 255:red, 0; green, 0; blue, 0 }  ,fill opacity=1 ] (295.3,237.25) .. controls (295.3,236.28) and (294.52,235.5) .. (293.55,235.5) .. controls (292.58,235.5) and (291.8,236.28) .. (291.8,237.25) .. controls (291.8,238.22) and (292.58,239) .. (293.55,239) .. controls (294.52,239) and (295.3,238.22) .. (295.3,237.25) -- cycle ;
	\draw  [fill={rgb, 255:red, 0; green, 0; blue, 0 }  ,fill opacity=1 ] (351.3,269.25) .. controls (351.3,268.28) and (350.52,267.5) .. (349.55,267.5) .. controls (348.58,267.5) and (347.8,268.28) .. (347.8,269.25) .. controls (347.8,270.22) and (348.58,271) .. (349.55,271) .. controls (350.52,271) and (351.3,270.22) .. (351.3,269.25) -- cycle ;
	\draw    (293.58,265.05) -- (293.48,245.58) ;
	\draw [shift={(293.47,242.58)}, rotate = 449.7] [fill={rgb, 255:red, 0; green, 0; blue, 0 }  ][line width=0.08]  [draw opacity=0] (5,-2.5) -- (0,0) -- (5,2.5) -- (3.5,0) -- cycle    ;
	\draw  [fill={rgb, 255:red, 0; green, 0; blue, 0 }  ,fill opacity=1 ] (260.36,182.44) .. controls (259.39,182.43) and (258.59,183.19) .. (258.58,184.16) .. controls (258.56,185.13) and (259.33,185.92) .. (260.3,185.94) .. controls (261.26,185.96) and (262.06,185.19) .. (262.08,184.22) .. controls (262.09,183.26) and (261.32,182.46) .. (260.36,182.44) -- cycle ;
	\draw  [fill={rgb, 255:red, 0; green, 0; blue, 0 }  ,fill opacity=1 ] (351.3,183.25) .. controls (351.3,182.28) and (350.52,181.5) .. (349.55,181.5) .. controls (348.58,181.5) and (347.8,182.28) .. (347.8,183.25) .. controls (347.8,184.22) and (348.58,185) .. (349.55,185) .. controls (350.52,185) and (351.3,184.22) .. (351.3,183.25) -- cycle ;
	\draw    (265.18,183.72) -- (285.35,183.56) ;
	\draw [shift={(288.35,183.53)}, rotate = 539.55] [fill={rgb, 255:red, 0; green, 0; blue, 0 }  ][line width=0.08]  [draw opacity=0] (5,-2.5) -- (0,0) -- (5,2.5) -- (3.5,0) -- cycle    ;
	\draw  [fill={rgb, 255:red, 0; green, 0; blue, 0 }  ,fill opacity=1 ] (350.9,238.05) .. controls (350.9,237.08) and (350.12,236.3) .. (349.15,236.3) .. controls (348.18,236.3) and (347.4,237.08) .. (347.4,238.05) .. controls (347.4,239.02) and (348.18,239.8) .. (349.15,239.8) .. controls (350.12,239.8) and (350.9,239.02) .. (350.9,238.05) -- cycle ;
	\draw    (349.38,265.52) -- (349.28,246.05) ;
	\draw [shift={(349.27,243.05)}, rotate = 449.7] [fill={rgb, 255:red, 0; green, 0; blue, 0 }  ][line width=0.08]  [draw opacity=0] (5,-2.5) -- (0,0) -- (5,2.5) -- (3.5,0) -- cycle    ;
	\draw  [fill={rgb, 255:red, 0; green, 0; blue, 0 }  ,fill opacity=1 ] (295.9,183.05) .. controls (295.9,182.08) and (295.12,181.3) .. (294.15,181.3) .. controls (293.18,181.3) and (292.4,182.08) .. (292.4,183.05) .. controls (292.4,184.02) and (293.18,184.8) .. (294.15,184.8) .. controls (295.12,184.8) and (295.9,184.02) .. (295.9,183.05) -- cycle ;
	\draw    (260.38,212.52) -- (260.28,193.05) ;
	\draw [shift={(260.27,190.05)}, rotate = 449.7] [fill={rgb, 255:red, 0; green, 0; blue, 0 }  ][line width=0.08]  [draw opacity=0] (5,-2.5) -- (0,0) -- (5,2.5) -- (3.5,0) -- cycle    ;
	\draw    (349.38,212.52) -- (349.28,193.05) ;
	\draw [shift={(349.27,190.05)}, rotate = 449.7] [fill={rgb, 255:red, 0; green, 0; blue, 0 }  ][line width=0.08]  [draw opacity=0] (5,-2.5) -- (0,0) -- (5,2.5) -- (3.5,0) -- cycle    ;
	\draw    (320.18,270.72) -- (341,270.27) ;
	\draw [shift={(344,270.2)}, rotate = 538.76] [fill={rgb, 255:red, 0; green, 0; blue, 0 }  ][line width=0.08]  [draw opacity=0] (5,-2.5) -- (0,0) -- (5,2.5) -- (3.5,0) -- cycle    ;
	\draw    (320.18,183.72) -- (340.35,183.56) ;
	\draw [shift={(343.35,183.53)}, rotate = 539.55] [fill={rgb, 255:red, 0; green, 0; blue, 0 }  ][line width=0.08]  [draw opacity=0] (5,-2.5) -- (0,0) -- (5,2.5) -- (3.5,0) -- cycle    ;
	
	\draw (262.67,214) node [anchor=north west][inner sep=0.75pt]  [rotate=-90] [align=left] {$\displaystyle \dotsc $};
	\draw (314.42,218.9) node [anchor=north west][inner sep=0.75pt]  [rotate=-136] [align=left] {$\displaystyle \dotsc $};
	\draw (317.67,274.2) node [anchor=north west][inner sep=0.75pt]  [rotate=-180] [align=left] {$\displaystyle \dotsc $};
	\draw (317.67,186.2) node [anchor=north west][inner sep=0.75pt]  [rotate=-180] [align=left] {$\displaystyle \dotsc $};
	\draw (351.67,214) node [anchor=north west][inner sep=0.75pt]  [rotate=-90] [align=left] {$\displaystyle \dotsc $};
	\draw (222.47,262.93) node [anchor=north west][inner sep=0.75pt]  [font=\footnotesize] [align=left] {$\displaystyle \{a_{0,0}\}$};
	\draw (354.27,262.33) node [anchor=north west][inner sep=0.75pt]  [font=\footnotesize] [align=left] {$\displaystyle \{a_{n,0}\}$};
	\draw (221.47,232.53) node [anchor=north west][inner sep=0.75pt]  [font=\footnotesize] [align=left] {$\displaystyle \{a_{0,1}\}$};
	\draw (217.67,176.53) node [anchor=north west][inner sep=0.75pt]  [font=\footnotesize] [align=left] {$\displaystyle \{a_{0,m}\}$};
	\draw (354.27,176.13) node [anchor=north west][inner sep=0.75pt]  [font=\footnotesize] [align=left] {$\displaystyle \{a_{n,m}\}$};
	\draw (354.27,230.93) node [anchor=north west][inner sep=0.75pt]  [font=\footnotesize] [align=left] {$\displaystyle \{a_{n,1}\}$};
	\draw (255.27,274.93) node [anchor=north west][inner sep=0.75pt]   [align=left] {$\displaystyle o$};

\end{tikzpicture}
\end{center}	
	We show that $\Kmc\models \neg G(a)$ and $\Kmc\not\models \neg G(b)$, thus $G$ separates $(\Kmc,\{a,\{b\})$.
	
	Assume first for a proof by contradiction that there is a model $\Amf$ of $\Kmc$ such that $\Amf \models G(a)$. Then $a^{\Amf}=o^{\Amf}$ and so $a^{\Amf}\in (A_{1}\sqcap D)^{\Amf}$. But then $a^{\Amf}\in Q^{\Amf}$.
	This contradicts the fact that $o^{\Amf}$ is the origin of an $n\times m$-grid in $\Amf$.
	
	Now for $\Kmc \not\models \neg G(b)$. We find a model \Amf of \Kmc with
	$b^\Amf \in G^\Amf$ since the concept name $Q$ is not triggered at $b$ as $A_{1}$ is not true for $b$.
\end{proof}
The following lemma implies that if $S$ has no solution, then
$(\Kmc,\{a\},\{b\})$ is not projectively FO$(\Sigma)$-separable.
\begin{lemma}
	If $S$ has no solution, then for every model \Amf of \Kmc, there is
	a model \Bmf of \Kmc such that $(\Amf,b^\Amf)$ is
	$\Sigma$-isomorphic to $(\Bmf,a^\Amf)$.
\end{lemma}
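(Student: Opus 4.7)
My plan is to construct $\Bmf$ on the same domain as $\Amf$ and with the identical $\Sigma$-reduct, so that the identity map is a $\Sigma$-isomorphism from $\Amf$ to $\Bmf$. I would set $a^\Bmf := b^\Amf$ and $c^\Bmf := c^\Amf$ for every other individual name $c$; since $o$ is the only $\Sigma$-individual, this leaves $o^\Bmf = o^\Amf$ as required, and by construction the identity sends $b^\Amf$ to $a^\Bmf$ while preserving every $\Sigma$-symbol. What remains is to interpret the non-$\Sigma$ symbols (the tiles $t \in \mathcal{T}$ and the concept names $A_1, D, Q, P, Y$) so that $\Bmf$ satisfies $\Omc$ together with the facts of $\Dmc$.

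Most of these interpretations I would inherit from $\Amf$: take $t^\Bmf := t^\Amf$ for each tile, $Y^\Bmf := Y^\Amf$, $A_1^\Bmf := A_1^\Amf \cup \{b^\Amf\}$, and $D^\Bmf := \{o^\Amf\}$. Then the database facts $A_1(a)$, $Y(b)$, $D(o)$, $\mn{left}(o)$, $\mn{bottom}(o)$ all hold in $\Bmf$ (the last two by agreement of $\Sigma$-reducts, and $Y(b)$ because $\Amf \models Y(b)$), and every TBox inclusion that mentions only tiles, border concepts and $\Sigma$-roles --- uniqueness of tiles, the horizontal and vertical matching, border propagation, border-tile membership, and the equivalences $\neg\mn{top}\equiv\exists r_y.\top$ and $\neg\mn{right}\equiv\exists r_x.\top$ --- transfers verbatim because tile interpretations and the $\Sigma$-reduct agree with those of $\Amf$.

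The only subtle inclusions left are $A_1 \sqcap D \sqsubseteq Q$ and the $Q$-disjunction. Since $A_1^\Bmf \cap D^\Bmf = (A_1^\Amf \cup \{b^\Amf\}) \cap \{o^\Amf\}$ is empty exactly when $o^\Amf \neq b^\Amf$ and $o^\Amf \notin A_1^\Amf$, in that easy case I would simply set $Q^\Bmf = P^\Bmf = \emptyset$ and finish. Otherwise $o^\Amf$ must lie in $Q^\Bmf$, and here is where the no-solution assumption enters. I would claim that in $\Amf$ there is either an infinite $r_x/r_y$-path from $o^\Amf$, or a finite such path terminating at an element $e$ together with a set $P^* \subseteq \text{dom}(\Amf)$ that witnesses $\exists r_x.\exists r_y.P \sqcap \exists r_y.\exists r_x.\neg P$ at $e$ when $P$ is interpreted as $P^*$. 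Placing the elements of such a path into $Q^\Bmf$, and setting $P^\Bmf := P^*$ in the finite case, satisfies the $Q$-inclusion: each non-terminal element on the path has its successor on the path as an $r_x$- or $r_y$-successor in $Q^\Bmf$, and $e$ satisfies the third disjunct by choice of $P^*$.

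The hard part is establishing this claim. Suppose for contradiction that every $r_x/r_y$-path from $o^\Amf$ is finite and every reachable element is defect-free, meaning the $r_xr_y$- and $r_yr_x$-composed paths always commute. The border equivalences force each dead-end element to lie in $\mn{top}\cap\mn{right}$, so the reachable set closes up into a finite commuting grid with $o^\Amf$ at its bottom-left corner and a top-right corner. The border-propagation axioms align $\mn{left}, \mn{right}, \mn{top}, \mn{bottom}$ with the four sides; combined with the unique-tile, border-tile and matching inclusions that hold throughout $\Amf$, the tiles assigned to the reachable elements yield a map $\tau : \{0,\ldots,n\} \times \{0,\ldots,m\} \to \mathcal{T}$ that solves $S$, contradicting the hypothesis.
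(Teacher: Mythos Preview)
Your proposal is correct and follows essentially the same approach as the paper's proof: keep the $\Sigma$-reduct of $\Amf$, move $a$ to $b^\Amf$, fix the non-$\Sigma$ concept names to satisfy the database, and in the case where $o^\Amf$ is forced into $A_1 \sqcap D$ re-interpret $Q$ and $P$ using an infinite $r_x/r_y$-path or a non-closing cell, whose existence follows from the assumption that $S$ has no solution. You are in fact slightly more careful than the paper's sketch in that you also cover the subcase $o^\Amf \in A_1^\Amf$ (not just $o^\Amf = b^\Amf$), which the paper glosses over; and your choice $b^\Bmf := b^\Amf$ rather than swapping $a$ and $b$ is a harmless simplification.
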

\begin{proof} \
	(sketch)  
	If $b^\Amf \not=o^{\Amf}$, then we can simply obtain \Bmf
	from \Amf by switching $a^\Amf$ and $b^\Amf$, making $A_1$ true at
	$a^\Amf$, and $D$ at exactly $o^{\Amf}$. If $b^\Amf = o^\Amf$, then after switching we
	additionally have to re-interpret $Q$ and $P$ in a suitable way. But
	$S$ has no solution and thus when following $r_x$/$r_y$-paths from
	$o^{\Amf}$ in \Amf, we must either encounter an infinite such path or a
	non-closing grid cell as otherwise we can extract from \Amf a solution
	for $S$. Thus we can re-interpret $Q$ and $P$ as required.
\end{proof}

\proprolehelp*
\begin{proof} \
	First assume that $\Lmc\in \{\ALC,\ALCI,\ALCO\}$. We employ the characterization of projective separability given below in 
	Theorem~\ref{thm:L-modeltheory0}. Observe that the following conditions are equivalent:
	\begin{itemize}
		\item there exists an $\Lmc(\Sigma\cup (\NC\cup \NR)\setminus \text{sig}(\Kmc))$-concept $C$ such that $\Kmc\models C(a)$ for all $a\in P$ and $\Kmc\not\models C(b)$;
		\item there exists an $\Lmc$-forest model $\Amf$ of $\Kmc$ of finite
		$\Lmc$-outdegree and a set $\Sigma'$ of concept and role names disjoint from $\text{sig}(\Kmc)$ such that for all models $\Bmf$ of
		$\Kmc$ and all $a\in P$: $\Bmf,a^{\Bmf}
		\not\sim_{\Lmc,\Sigma\cup \Sigma'} \Amf, b^{\Amf}$.
	\end{itemize}
	Thus, for $\Lmc\in \{\ALC,\ALCI\}$ it suffices to show that the second condition is equivalent to the third condition of Theorem~\ref{thm:L-modeltheory0}. For a proof by contradiction
	assume that there exists an $\Lmc$-forest model $\Amf$ of $\Kmc$ satisfying Condition~2 above for $\Sigma'$
	but not Condition~3 of Theorem~\ref{thm:L-modeltheory0}. Take a model $\Bmf$ of
	$\Kmc$ and a functional $\Sigma$-bisimulation $f$ witnessing $\Bmf,a^{\Bmf}\sim_{\Lmc,\Sigma}^{f}\Amf,b^{\Amf}$ for some $a\in P$.
	We modify $\Bmf$ to obtain a model $\Bmf'$ of $\Kmc$ such that
	$\Bmf',a^{\Bmf'}\sim_{\Lmc,\Sigma\cup \Sigma'}^{f} \Amf, b^{\Amf}$ and thus obtain a contradiction. $\Bmf'$ is obtained from $\Bmf$ by assuming first that $\Bmf$
	does not interpret any symbol in $\Sigma'$ and then 
	\begin{itemize}
		\item taking the disjoint union $\Bmf\cup \Amf'$ of $\Bmf$ and a copy $\Amf'$ of $\Amf$ that does not interpret any individual names nor symbols
		in $\Sigma'$.
		\item observing that the function $g =f \cup {\sf id}$, where ${\sf id}$ maps every node in $\Amf'$ to the node in $\Amf$ of which it is a copy, is a functional and surjective $\Lmc(\Sigma)$-bisimulation between $\Bmf\cup \Amf'$ and $\Amf$.		
		\item setting $A^{\Bmf'}= g^{-1}(A^{\Amf})$ for all concept names $A\in \Sigma'$ and $r^{\Bmf'}= g^{-1}(r^{\Amf})$ for all role names $r\in \Sigma'$.
	\end{itemize}
	It is easy to see that $g$ is a functional $\Lmc(\Sigma\cup\Sigma')$-bisimulation
	between $\Bmf'$ and $\Amf$, as required.
	
	Now assume that $\Lmc=\mathcal{ALCO}$. As Condition~2 trivially implies Condition~1, it suffices to show that Condition~1 implies Condition~2.
	Assume that Condition~1 holds. Take an $\Lmc$-forest model $\Amf$ of $\Kmc$ of finite
	$\Lmc$-outdegree and a set $\Sigma'$ of concept and role names disjoint from $\text{sig}(\Kmc)$ such that for all models $\Bmf$ of
	$\Kmc$ and all $a\in P$: $\Bmf,a^{\Bmf}
	\not\sim_{\Lmc,\Sigma\cup \Sigma'} \Amf, b^{\Amf}$. Assume for a proof by contradiction that there does not exist any such model if $\Sigma'$ is replaced by $\Sigma\cup \{r_{I}\}$. Obtain $\Amf'$ from $\Amf$ dropping the interpretation of role names in $\Sigma'$ and instead setting 
	$$
	r_{I}^{\Amf'}= \{(b^{\Amf},c^{\Amf}) \mid c\in \text{ind}(\Dmc)\} \cup\bigcup_{r\in \NR}r^{\Amf}.
	$$
	Then $\Amf'$ is an $\Lmc$-forest model of $\Kmc$ of finite $\Lmc$-outdegree.
	Thus, by assumption there exists a model $\Bmf$ of $\Kmc$ and $a\in P$ such that $\Bmf,a^{\Bmf} \sim_{\Lmc,\Sigma\cup \{r_{I}\}}^{f} \Amf',b^{\Amf'}$.
	Let $f$ be the functional $\Lmc(\Sigma\cup\{r_{I}\})$-bisimulation witnessing this. Then $f$ is surjective. Now obtain $\Bmf'$ from $\Bmf$ by keeping the 
	interpretation of symbols not in $\Sigma'$ and setting $A^{\Bmf'}= f^{-1}(A^{\Amf})$ for all concept names $A\in \Sigma'$ and $r^{\Bmf'}= f^{-1}(r^{\Amf})$ for all role names $r\in \Sigma'$. Then $f$ is a functional
	$\Lmc(\Sigma\cup \Sigma')$-bisimulation between $\Bmf',a^{\Bmf'}$ and $\Amf,b^{\Amf}$ and we have derived a contradiction.
	
	Finally, assume that $\Lmc=\mathcal{ALCIO}$. Then we cannot use Theorem~\ref{thm:L-modeltheory0} as it does not hold for $\mathcal{ALCIO}$.
	However, if one replaces $\Lmc$-forest models of finite $\Lmc$-outdegree by $\omega$-saturated models, then Theorem~\ref{thm:L-modeltheory0} holds for $\mathcal{ALCIO}$. Now exactly the same proof can be done for $\mathcal{ALCIO}$ as for $\mathcal{ALCO}$ using $\omega$-saturated models instead of forest models.  	
\end{proof}

\thmfoundec*

The proof is by reduction of the same tiling problem as in the proof of Theorem~\ref{thm:nominalhelper}. If fact, given a tiling system $S$, the labeled 
KB $(\Kmc,\{a\},\{b\})$ is exactly the same KB as in the proof of Theorem~\ref{thm:nominalhelper}. The only difference is in  Lemma~\ref{lemfisrtone} about the construction
of a concept witnessing separability: this concept is now not a concept using individual names as helper symbols but a $\mathcal{ALCFIO}(\Sigma)$-concept
without helper symbols.
\begin{lemma}
	If  $S$ has a solution, then there is an
	$\mathcal{ALCFIO}(\Sigma)$-concept that non-projectively separates $(\Kmc,\{a\},\{b\})$.
\end{lemma}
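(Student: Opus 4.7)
The plan is to mirror Lemma~\ref{lemfisrtone}, replacing the helper individual names $a_{i,j}$ by $\mathcal{ALCFIO}(\Sigma)$-constructs that exploit the single nominal $\{o\}\in\Sigma$ together with the number restrictions $(\leq 1\,R)$ available in $\mathcal{ALCFIO}$ for $R\in\{r_x,r_y,r_x^-,r_y^-\}$. I would build an $\mathcal{ALCFIO}(\Sigma)$-concept $G$ and show that $\neg G$ $\Sigma$-separates $(\Kmc,\{a\},\{b\})$.

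The concept is $G=\{o\}\sqcap\text{Grid}_{n,m}$, where $\text{Grid}_{n,m}$ walks the prospective $(n{+}1)\times(m{+}1)$ grid via nested $\exists r_x,\exists r_y$ and records, at every position $(i,j)$: the applicable boundary labels from $\{\mn{left},\mn{right},\mn{bottom},\mn{top}\}$; the four functionality restrictions $(\leq 1\,r_x),(\leq 1\,r_y),(\leq 1\,r_x^-),(\leq 1\,r_y^-)$; and, whenever $i<n$ and $j<m$, the \emph{commutativity witness} $\forall r_y.\forall r_x.\bigl(\exists (r_y^-)^{j+1}.\exists (r_x^-)^{i+1}.\{o\}\bigr)$. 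The effect of the witness is that the $r_yr_x$-successor of cell $(i,j)$ must be reachable from $o$ by $r_x^{i+1}r_y^{j+1}$; since out-functionality makes such a node unique, it coincides with the $r_xr_y$-successor obtained by the forward walk, and the grid emanating from $o$ commutes at every internal cell.

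For $\Kmc\models\neg G(a)$, I would assume a model $\Amf$ with $a^{\Amf}\in G^{\Amf}$. The conjunct $\{o\}$ forces $a^{\Amf}=o^{\Amf}$, so $A_1(a),D(o)\in\Dmc$ together with $A_1\sqcap D\sqsubseteq Q$ put $a^{\Amf}$ into $Q^{\Amf}$. Applying the axiom $Q\sqsubseteq \exists r_x.Q\sqcup\exists r_y.Q\sqcup(\exists r_x.\exists r_y.P\sqcap \exists r_y.\exists r_x.\neg P)$ cell by cell, the third disjunct is blocked at every internal cell by commutativity (the two successors coincide and cannot simultaneously lie in $P$ and $\neg P$); thus $Q$ propagates strictly along $r_x$ or $r_y$ and eventually reaches the $(n,m)$-corner, which carries both $\mn{right}$ and $\mn{top}$ and hence admits neither an $r_x$- nor an $r_y$-successor via the equivalences $\neg\mn{top}\equiv\exists r_y.\top$ and $\neg\mn{right}\equiv\exists r_x.\top$ from $\Omc$, contradicting the $Q$-axiom. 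For $\Kmc\not\models\neg G(b)$, I would construct a model with $b^{\Amf}=o^{\Amf}$ in which the grid is laid out according to the solution $(n,m,\tau)$ of $S$ and $Q^{\Amf}=\emptyset$; because $A_1(b)\notin\Dmc$, the concept $A_1$ may be left out of the type of $o^{\Amf}$, so $A_1\sqcap D$ does not fire at $o^{\Amf}$ and all axioms of $\Omc$ are satisfied.

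The main obstacle I expect is verifying that the commutativity witness actually pins the grid. One has to check that conjoining the forward restrictions $(\leq 1\,r_x),(\leq 1\,r_y)$ with the backward restrictions $(\leq 1\,r_x^-),(\leq 1\,r_y^-)$ throughout $\text{Grid}_{n,m}$ makes the $r_x^{i+1}r_y^{j+1}$-successor of $o$ uniquely determined \emph{inside} every $G$-satisfying model, so that satisfaction of $\exists(r_y^-)^{j+1}.\exists(r_x^-)^{i+1}.\{o\}$ at the $r_yr_x$-successor of $(i,j)$ really forces it to be the same node as the $r_xr_y$-successor. It is this pinning step that genuinely requires the combination of nominals and functional number restrictions provided by $\mathcal{ALCFIO}$.
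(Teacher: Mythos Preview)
Your proposal is correct and takes essentially the same approach as the paper. The only difference is cosmetic, in how grid commutativity is encoded: you use a position-specific backward path $\exists(r_y^-)^{j+1}.\exists(r_x^-)^{i+1}.\{o\}$ at each cell $(i,j)$, whereas the paper states, for every word $w\in\{r_x,r_y\}^*$ with $|w|_{r_x}<n$ and $|w|_{r_y}<m$, the loop $\exists(w\cdot r_xr_yr_x^-r_y^-\cdot\overleftarrow{w}).\{o\}$; both variants pin the $r_xr_y$- and $r_yr_x$-successors to the same node via the nominal $\{o\}$ and the four functionality restrictions, and the remainder of the argument (the $Q$-propagation reaching a contradiction at the top-right corner, and the direct model for $b$) is identical.
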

\begin{proof} \
	Assume that $S$ has a solution consisting of a properly tiled $n\times m$ grid. We design an $\mathcal{ALCFIO}(\Sigma)$-concept $G$ so that any model of $G$ and $\Kmc$ includes a properly tiled $n \times m$-grid with lower
	left corner $o$. Let $F$ be the obvious concept stating that $(\leqslant 1\ r)$ holds for $r\in \{r_{x},r_{x},r_{y}^{-},r_{x}^{-}\}$ for all nodes reachable in no more than $2(n+m)$ steps along roles $r_{x},r_{x},r_{y},r_{x}^{-}$.
	For every word $w \in \{r_x,r_y\}^*$, denote by $\overleftarrow{w}$
	the word that is obtained by reversing $w$ and then adding
	$\cdot^{-}$ to each symbol. Let $|w|_r$ denote the number of
	occurrences of the symbol $r$ in $w$.  Now let $G= F \sqcap E$, where $E$ is the conjunction of
	$$
	\{o\} \sqcap \forall r^{n+1}_{x}.\bot \sqcap 
	\forall r_x^{\leq n}. \mn{bottom} \sqcap \forall r^{m+1}_{y}.\bot \sqcap \forall r_y^{\leq m}. \mn{left}
	$$
	and for every $w \in \{r_x,r_y\}^*$ such that $|w|_{r_x} < n$ and
	$|w|_{r_y} < m$, the concept 
	$$ 
	\exists (w \cdot r_x r_y r_x^- r_y^- \cdot \overleftarrow{w}) . \{ o
	\},
	$$
	where $\exists w . F$ abbreviates
	$\exists r_1 . \cdots \exists r_k .  F$ if $w=r_1 \cdots r_k$.  It is
	readily checked that $G$ indeed enforces a grid, as announced. 
	
	We show that $\Kmc\models \neg G(a)$ and $\Kmc\not\models \neg G(b)$, thus $G$ separates $(\Kmc,\{a,\{b\})$.
	
	Assume first for a proof by contradiction that there is a model $\Amf$ of $\Kmc$ such that $\Amf \models G(a)$. Then $a^{\Amf}=o^{\Amf}$ and so $a^{\Amf}\in (A_{1}\sqcap D)^{\Amf}$. But then $a^{\Amf}\in Q^{\Amf}$.
	This contradicts the fact that $o^{\Amf}$ is the origin of an $n\times m$-grid in $\Amf$.
	
	Now for $\Kmc \not\models \neg G(b)$. We find a model \Amf of \Kmc with
	$b^\Amf \in G^\Amf$ since the concept name $Q$ is not triggered at $b$ as $A_{1}$ is not true for $b$.
\end{proof}
\section{Proofs for Section~\ref{sec:modelandequi}}

\thmLmodeltheorynull*

\begin{proof} \
	``1. $\Rightarrow$ 2''. Assume that $(\Kmc,P,\{b\})$ is projectively $\Lmc(\Sigma)$-separable. Take an $\Lmc$-concept $C$ that separates $(\Kmc,P,\{b\})$ and uses symbols from $\Sigma\cup \Sigma_{\text{help}}$,
	where $\Sigma_{\text{help}}$ is a set of concept names disjoint from $\text{sig}(\Kmc)$. By Lemma~\ref{lem:forestm}, there exists a $\Lmc$-forest model $\Amf$ of $\Kmc$ of finite $\Lmc$-outdegree
	such that $b^{\Amf}\in (\neg C)^{\Amf}$. Then $\Amf$ is as required for 
	Condition~2, by Lemma~\ref{lem:equivalence}.
	
	``2 $\Rightarrow$ 1''. Assume Condition~2 holds for $\Amf$ and $\Sigma_{\text{help}}$. Let 
	$$
	t_{\Amf}(b) = \{C \in \Lmc(\Sigma\cup \Sigma_{\text{help}}) \mid b^{\Amf}\in C^{\Amf}\}.
	$$
	It follows from Lemma~\ref{lem:equivalence} that 
	$$
	\Gamma_{a}= \Kmc \cup \{C(a) \mid C \in t_{\Amf}(b)\}
	$$
	is not satisfiable, for any $a\in P$. (Otherwise an $\omega$-saturated satisfying model would contradict Condition~2.) By compactness (and closure under conjunctions) we find for every $ a\in P$ a concept $C_{a}\in t_{\Amf}(b)$ such that $\Kmc\models \neg C(b)$. Thus, the concept $\neg (\bigsqcap_{a\in P}C_{a})$ separates $(\Kmc,P,\{b\})$, as required.
	
	``2 $\Rightarrow$ 3''. Take an $\Lmc$-forest model $\Amf$ and $\Sigma_{\text{help}}$ such that Condition 2 holds. 
	We show that Condition~3 holds for $\Amf$ as well. 
	Suppose for a proof by contradiction that there exists a model $\Bmf$ of $\Kmc$ and an $a\in P$ and a functional $\Sigma$-bisimulation $f$ witnessing $\Bmf,a^{\Bmf} \sim_{\Lmc,\Sigma}^{f} \Amf,b^{\Amf}$.
	We may assume that $\Bmf$ does not interpret any symbols in $\Sigma_{\text{help}}$  
	and define $\Bmf'$ by expanding $\Bmf$ as follows: 
	for every concept name $A\in \Sigma_{\text{help}}$
	and $d\in \text{dom}(f)$, let $d\in A^{\Bmf'}$ if $f(d) \in A^{\Amf}$.
	It is easy to see that $f$ witnesses $\Bmf',a^{\Bmf'} \sim_{\Lmc,\Sigma\cup 
	\Sigma_{\text{help}}} \Amf,b^{\Amf}$, and we have derived a contradiction.
    
    ``3 $\Rightarrow$ 2''. Take an $\Lmc$-forest model $\Amf$ of $\Kmc$ of finite $\Lmc$-outdegree such that Condition~3 holds. We may assume that $\Amf$ only interprets the symbols in $\text{sig}(\Kmc)$.  
    Define $\Amf'$ by expanding $\Amf$ as follows. Take for any $d\in \text{dom}(\Amf)$ a fresh concept name $A_{d}$ and set
    $A_{d}^{\Amf'} =\{d\}$. 
    Then Condition~2 holds for $\Amf'$ and $\Sigma_{\text{help}} = \{A_{d}\mid d\in \text{dom}(\Amf)\}$. 
\end{proof}
We next show that Theorem~\ref{thm:L-modeltheory0} does not hold for $\mathcal{ALCIO}$. To this end we define a labeled $\mathcal{ALCI}$-KB
$(\Kmc,\{a\},\{b\})$ and signature $\Sigma$ such that $(\Kmc,\{a\},\{b\})$ is  weakly $\mathcal{ALCIO}(\Sigma)$-separable but there does not exist an $\mathcal{ALCIO}$-forest model $\Amf$ of $\Kmc$ of finite $\mathcal{ALCIO}$-outdegree
such that for all models $\mathfrak{B}$ of $\mathcal{K}$: $\mathfrak{A}, b^\mathfrak{A}
\not\sim^f_{\mathcal{ALCIO},\Sigma}
\mathfrak{B}, a^\mathfrak{B}$.

Let $\mathcal{K} = (\mathcal{O,D})$ with
\begin{align*}
	\mathcal{D} = \{& A(a), B(b), C(c), r_0(b,c)\}\\
	\mathcal{O} = \{& C \sqsubseteq \exists r_0^-. A \rightarrow
	A_0,\\ 
	&C \sqsubseteq \exists s.\top \sqcap \forall s.(E \sqcap \exists r.(E\sqcap \exists s^{-}.\top))\\ & A_0 \sqsubseteq \exists s.\exists s^-.\neg A_{0} \sqcup \exists s.\exists r.\exists s^{-}.\neg A_{0}, \\ 
	&B \sqcup A \sqsubseteq \neg C\}
\end{align*}
where $E$ stands for $\neg C \sqcap \neg A \sqcap \neg B$.
Let $\Sigma = \{c, r_0, s,r\}$.
\begin{lemma}
	The $\mathcal{ALCIO}(\Sigma)$-concept 
	$$D=\neg \exists r_0 (\{c\} \sqcap \forall s. (\forall s^-. \{c\} \sqcap \forall r. \forall s^-. \{c\}))
	$$ 
weakly separates $(\mathcal{K},\{a\},\{b\})$.
\end{lemma}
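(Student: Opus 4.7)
The plan is to verify the two conditions defining weak separation: (i)~$\mathcal{K} \models D(a)$ and (ii)~$\mathcal{K} \not\models D(b)$. Before doing either, I would unpack $\neg D$: a point $x$ satisfies $\neg D$ in a model $\mathfrak{A}$ iff $x$ has an $r_0$-successor equal to $c^{\mathfrak{A}}$, and $c^{\mathfrak{A}}$ satisfies $\forall s.(\forall s^-.\{c\} \sqcap \forall r.\forall s^-.\{c\})$, i.e., every $s$-successor $y$ of $c$ has $c$ as its unique $s$-predecessor, and every $r$-successor of such a $y$ likewise has $c$ as its unique $s$-predecessor.

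For (i), I would argue by contradiction. Let $\mathfrak{A} \models \mathcal{K}$ with $\mathfrak{A} \models \neg D(a)$. From $A(a) \in \mathcal{D}$ and $(a^{\mathfrak{A}}, c^{\mathfrak{A}}) \in r_0^{\mathfrak{A}}$ we get $c^{\mathfrak{A}} \in (\exists r_0^-.A)^{\mathfrak{A}}$; combined with $C(c)$ and the axiom $C \sqsubseteq \exists r_0^-.A \rightarrow A_0$ this forces $c^{\mathfrak{A}} \in A_0^{\mathfrak{A}}$. The axiom $A_0 \sqsubseteq \exists s.\exists s^-.\neg A_0 \sqcup \exists s.\exists r.\exists s^-.\neg A_0$ then provides either an $s$-successor $y$ of $c$ with an $s^-$-successor $y' \notin A_0$, or an $s$-successor $y$ with an $r$-successor $z$ and an $s^-$-successor $z' \notin A_0$. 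In both cases the guarantees on the $\forall s^-.\{c\}$ conjuncts inside $\neg D$ identify $y'$ (resp.\ $z'$) with $c^{\mathfrak{A}}$, contradicting $c^{\mathfrak{A}} \in A_0^{\mathfrak{A}}$.

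For (ii), I would exhibit a finite model $\mathfrak{A}$ of $\mathcal{K}$ in which $\neg D(b)$ holds. Concretely, take $\text{dom}(\mathfrak{A})=\{a,b,c,y,z\}$ with $a,b,c$ interpreting themselves, $A^{\mathfrak{A}}=\{a\}$, $B^{\mathfrak{A}}=\{b\}$, $C^{\mathfrak{A}}=\{c\}$, $E^{\mathfrak{A}}=\{y,z\}$, $A_0^{\mathfrak{A}}=\emptyset$, and roles $r_0^{\mathfrak{A}}=\{(b,c)\}$, $s^{\mathfrak{A}}=\{(c,y),(c,z)\}$, $r^{\mathfrak{A}}=\{(y,z),(z,y)\}$. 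I would then check the four axioms: the first is vacuous because $c$ has no $A$-predecessor under $r_0$; the second holds because both $s$-successors of $c$ lie in $E$ and each has an $r$-successor (the other one) in $E$ with $c$ as an $s^-$-successor; the third is vacuous because $A_0$ is empty; the fourth is immediate. Finally, $(b^{\mathfrak{A}}, c^{\mathfrak{A}}) \in r_0^{\mathfrak{A}}$ and the $s^-$-successors of $y,z$ (and of each other as $r$-successors) are all exactly $c^{\mathfrak{A}}$, so $\neg D(b)$ is witnessed.

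The conceptual obstacle is step (ii): the $C$-axiom forces arbitrarily many witnesses under $s$ then $r$, while $\neg D(b)$ demands that the only $s$-predecessor of any such witness be $c$. My construction resolves this by letting the $r$-successor of $y$ be another $s$-successor of $c$, namely $z$, and closing the $r$-pattern via the loop $r(z,y)$. This is precisely the phenomenon that breaks the $\mathcal{ALCIO}$-forest model property: a forest model could not reuse $z$ as both an $r$-successor of $y$ and an $s$-successor of $c$ without creating the very non-tree edges that Lemma~\ref{lem:forestm} would forbid, which is why the example also illustrates the failure of Theorem~\ref{thm:L-modeltheory0} for $\mathcal{ALCIO}$.
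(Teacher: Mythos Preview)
Your proof is correct. Part~(i) is essentially the paper's argument verbatim: the paper also assumes $\mathfrak{A}\models\neg D(a)$, derives $(a^{\mathfrak{A}},c^{\mathfrak{A}})\in r_0^{\mathfrak{A}}$ hence $c^{\mathfrak{A}}\in A_0^{\mathfrak{A}}$, and then observes that the $A_0$-axiom contradicts the $\forall s.(\forall s^-.\{c\}\sqcap\forall r.\forall s^-.\{c\})$ part of $\neg D$.

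Part~(ii), however, takes a genuinely different route. The paper exhibits an \emph{infinite} model: $c$ has infinitely many $s$-successors $a_0,a_1,\ldots$ forming an infinite $r$-chain $a_0\,r\,a_1\,r\,a_2\,\ldots$, with each $a_i$ having $c$ as its sole $s$-predecessor. You instead give a \emph{finite} model on five points, closing the $r$-pattern with a $2$-cycle $y\,r\,z\,r\,y$ among the two $s$-successors of $c$. Both constructions satisfy the second ontology axiom (each $s$-successor of $c$ lies in $E$ and has an $r$-successor in $E$ with an $s^-$-edge back to $c$) and both witness $\neg D(b)$ for the same reason (all relevant $s$-predecessors collapse to $c$). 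Your model is simpler and shows that finiteness alone is not the obstruction here; what matters, as you correctly observe in your final paragraph, is that any witnessing model must reuse $s$-successors of $c$ as $r$-successors of one another, which is precisely the non-forest structure that makes Lemma~\ref{lem:forestm} and Theorem~\ref{thm:L-modeltheory0} fail for $\mathcal{ALCIO}$. One small note: $E$ is an abbreviation for $\neg C\sqcap\neg A\sqcap\neg B$, not a concept name, so writing $E^{\mathfrak{A}}=\{y,z\}$ is a derived fact rather than part of the definition of~$\mathfrak{A}$; this is harmless but worth stating as a consequence rather than a stipulation.
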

\begin{proof} \
	We first show that $\mathcal{K} \vDash D(a)$.
	Assume there is a model $\Amf$ of $\Kmc$ with $a^{\Amf}\not\in D^{\Amf}$.
	Then $(a^{\Amf},c^{\Amf}) \in
	r^\mathfrak{A}_0$. Then by
	definition of $\mathcal{K}$ we have
	$c^\mathfrak{A} \in
	A^\mathfrak{A}_0$ thus
	$c^\mathfrak{A} \in (\exists s.
	\exists s^-. \neg A_0 \sqcup \exists
	s.\exists r.\exists s^{-}.\neg
	A_{0})^\mathfrak{A}$, contradicting $c^\mathfrak{A} \in
	(\forall s. (\forall s^-. \{c\}
	\sqcap \forall r. \forall s^-.
	\{c\}))^\mathfrak{A}$. On the other hand, the model depicted below
	
	\begin{center}

	\tikzset{every picture/.style={line width=0.5pt}} 
	
	\begin{tikzpicture}[x=0.75pt,y=0.75pt,yscale=-1,xscale=1]
		
		\draw  [fill={rgb, 255:red, 0; green, 0; blue, 0 }  ,fill opacity=1 ] (297.1,249.25) .. controls (297.1,248.28) and (296.32,247.5) .. (295.35,247.5) .. controls (294.38,247.5) and (293.6,248.28) .. (293.6,249.25) .. controls (293.6,250.22) and (294.38,251) .. (295.35,251) .. controls (296.32,251) and (297.1,250.22) .. (297.1,249.25) -- cycle ;
		\draw    (303,250) -- (337.56,250.05) ;
		\draw [shift={(340.56,250.06)}, rotate = 180.08] [fill={rgb, 255:red, 0; green, 0; blue, 0 }  ][line width=0.08]  [draw opacity=0] (5,-2.5) -- (0,0) -- (5,2.5) -- (3.5,0) -- cycle    ;
		\draw  [fill={rgb, 255:red, 0; green, 0; blue, 0 }  ,fill opacity=1 ] (348.1,250.25) .. controls (348.1,249.28) and (347.32,248.5) .. (346.35,248.5) .. controls (345.38,248.5) and (344.6,249.28) .. (344.6,250.25) .. controls (344.6,251.22) and (345.38,252) .. (346.35,252) .. controls (347.32,252) and (348.1,251.22) .. (348.1,250.25) -- cycle ;
		\draw  [fill={rgb, 255:red, 0; green, 0; blue, 0 }  ,fill opacity=1 ] (346.29,204.17) .. controls (345.32,204.16) and (344.53,204.94) .. (344.53,205.91) .. controls (344.52,206.87) and (345.3,207.66) .. (346.26,207.67) .. controls (347.23,207.67) and (348.02,206.9) .. (348.03,205.93) .. controls (348.03,204.96) and (347.25,204.17) .. (346.29,204.17) -- cycle ;
		\draw  [fill={rgb, 255:red, 0; green, 0; blue, 0 }  ,fill opacity=1 ] (385.29,203.77) .. controls (384.33,203.76) and (383.54,204.54) .. (383.53,205.5) .. controls (383.52,206.47) and (384.3,207.26) .. (385.27,207.27) .. controls (386.23,207.27) and (387.02,206.49) .. (387.03,205.53) .. controls (387.04,204.56) and (386.26,203.77) .. (385.29,203.77) -- cycle ;
		\draw  [fill={rgb, 255:red, 0; green, 0; blue, 0 }  ,fill opacity=1 ] (423.96,203.7) .. controls (422.99,203.69) and (422.2,204.47) .. (422.2,205.43) .. controls (422.19,206.4) and (422.97,207.19) .. (423.94,207.2) .. controls (424.9,207.2) and (425.69,206.42) .. (425.7,205.46) .. controls (425.7,204.49) and (424.93,203.7) .. (423.96,203.7) -- cycle ;
		\draw    (352.39,205.55) -- (376.8,205.42) ;
		\draw [shift={(379.8,205.4)}, rotate = 539.69] [fill={rgb, 255:red, 0; green, 0; blue, 0 }  ][line width=0.08]  [draw opacity=0] (5,-2.5) -- (0,0) -- (5,2.5) -- (3.5,0) -- cycle    ;
		\draw    (391.39,205.43) -- (415.81,205.3) ;
		\draw [shift={(418.81,205.29)}, rotate = 539.69] [fill={rgb, 255:red, 0; green, 0; blue, 0 }  ][line width=0.08]  [draw opacity=0] (5,-2.5) -- (0,0) -- (5,2.5) -- (3.5,0) -- cycle    ;
		\draw    (346.23,243.76) -- (346.62,216.08) ;
		\draw [shift={(346.67,213.08)}, rotate = 450.81] [fill={rgb, 255:red, 0; green, 0; blue, 0 }  ][line width=0.08]  [draw opacity=0] (5,-2.5) -- (0,0) -- (5,2.5) -- (3.5,0) -- cycle    ;
		\draw  [fill={rgb, 255:red, 0; green, 0; blue, 0 }  ,fill opacity=1 ] (242.1,249.25) .. controls (242.1,248.28) and (241.32,247.5) .. (240.35,247.5) .. controls (239.38,247.5) and (238.6,248.28) .. (238.6,249.25) .. controls (238.6,250.22) and (239.38,251) .. (240.35,251) .. controls (241.32,251) and (242.1,250.22) .. (242.1,249.25) -- cycle ;
		\draw    (351.17,245.08) -- (379.63,214.29) ;
		\draw [shift={(381.67,212.08)}, rotate = 492.75] [fill={rgb, 255:red, 0; green, 0; blue, 0 }  ][line width=0.08]  [draw opacity=0] (5,-2.5) -- (0,0) -- (5,2.5) -- (3.5,0) -- cycle    ;
		\draw    (355.67,247.08) -- (416.55,213.05) ;
		\draw [shift={(419.17,211.58)}, rotate = 510.79] [fill={rgb, 255:red, 0; green, 0; blue, 0 }  ][line width=0.08]  [draw opacity=0] (5,-2.5) -- (0,0) -- (5,2.5) -- (3.5,0) -- cycle    ;
		
		\draw (290.74,253.1) node [anchor=north west][inner sep=0.75pt]  [font=\small] [align=left] {$\displaystyle {b}$};
		\draw (342.24,253.6) node [anchor=north west][inner sep=0.75pt]  [font=\small] [align=left] {$\displaystyle c$};
		\draw (429.63,203.54) node [anchor=north west][inner sep=0.75pt]  [rotate=-0.13] [align=left] {$\displaystyle \dotsc $};
		\draw (312.62,255.57) node [anchor=north west][inner sep=0.75pt]  [font=\footnotesize] [align=left] {$\displaystyle r_{0}$};
		\draw (237.07,253.76) node [anchor=north west][inner sep=0.75pt]  [font=\small] [align=left] {$\displaystyle {a}$};
		\draw (333.67,223.33) node [anchor=north west][inner sep=0.75pt]  [font=\footnotesize] [align=left] {$\displaystyle s$};
		\draw (352.67,218.83) node [anchor=north west][inner sep=0.75pt]  [font=\footnotesize] [align=left] {$\displaystyle s$};
		\draw (380.67,215.83) node [anchor=north west][inner sep=0.75pt]  [font=\footnotesize] [align=left] {$\displaystyle s$};
		\draw (359.67,190.33) node [anchor=north west][inner sep=0.75pt]  [font=\footnotesize] [align=left] {$\displaystyle r$};
		\draw (397.17,190.33) node [anchor=north west][inner sep=0.75pt]  [font=\footnotesize] [align=left] {$\displaystyle r$};
		\draw (234.17,228.33) node [anchor=north west][inner sep=0.75pt]  [font=\small] [align=left] {$\displaystyle A$};
		\draw (291.17,228.83) node [anchor=north west][inner sep=0.75pt]  [font=\small] [align=left] {$\displaystyle B$};
		\draw (353.17,248.08) node [anchor=north west][inner sep=0.75pt]  [font=\small] [align=left] {$\displaystyle C$};

	\end{tikzpicture}
\end{center}		
	 
	clearly satisfies $D(b)$. The fact that it is a model
	of $\mathcal{K}$ is also
	straightforward, as its extension of
	$A_0$ is empty.

\end{proof}

\begin{lemma}
	For every $\mathcal{ALCIO}$-forest model $\mathfrak{A}$ of
	$\mathcal{K}$ of finite $\mathcal{ALCIO}$-outdegree there exists a model
	$\mathfrak{B}$ of $\mathcal{K}$ such
	that $\mathfrak{B}, a^\mathfrak{B}
	\sim^f_{\mathcal{ALCIO},\Sigma}
	\mathfrak{A}, b^\mathfrak{A}$.
\end{lemma}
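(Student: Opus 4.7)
My plan is to build $\mathfrak{B}$ from $\mathfrak{A}$ by adjoining a single fresh element $a^\ast$ to $\text{dom}(\mathfrak{A})$, intended to play the role of $a^{\mathfrak{B}}$ and designed so that its $\Sigma$-neighbourhood mirrors that of $b^{\mathfrak{A}}$ in $\mathfrak{A}$. Concretely, I set $\text{dom}(\mathfrak{B}) = \text{dom}(\mathfrak{A}) \cup \{a^\ast\}$, $a^{\mathfrak{B}} = a^\ast$, $b^{\mathfrak{B}} = b^{\mathfrak{A}}$, $c^{\mathfrak{B}} = c^{\mathfrak{A}}$, $A^{\mathfrak{B}} = \{a^\ast\}$, $B^{\mathfrak{B}} = B^{\mathfrak{A}}$, $C^{\mathfrak{B}} = C^{\mathfrak{A}}$; for each role $R \in \Sigma$ I put
$$R^{\mathfrak{B}} = R^{\mathfrak{A}} \cup \{(a^\ast,z) : (b^{\mathfrak{A}},z) \in R^{\mathfrak{A}}\} \cup \{(z,a^\ast) : (z,b^{\mathfrak{A}}) \in R^{\mathfrak{A}}\} \cup \{(a^\ast,a^\ast) : (b^{\mathfrak{A}},b^{\mathfrak{A}}) \in R^{\mathfrak{A}}\};$$
and finally I set $A_0^{\mathfrak{B}} = \{x \in C^{\mathfrak{A}} : (b^{\mathfrak{A}},x) \in r_0^{\mathfrak{A}}\}$, the minimal interpretation compelled by the first inclusion of $\mathcal{O}$.

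With $S = \{(a^\ast,b^{\mathfrak{A}})\} \cup \{(x,x) : x \in \text{dom}(\mathfrak{A})\}$ as the candidate functional $\Sigma$-bisimulation, functionality is immediate and the atomic and nominal clauses are trivial (since $\Sigma$ contains no concept names and $a^\ast \neq c^{\mathfrak{A}}$, $b^{\mathfrak{A}} \neq c^{\mathfrak{A}}$); the forth and back clauses for each $R \in \Sigma$ and its inverse, at both $(a^\ast,b^{\mathfrak{A}})$ and the identity pairs, reduce to reading off the definition of $R^{\mathfrak{B}}$. For $\mathfrak{B} \models \mathcal{K}$ the database facts, the inclusion $A \sqcup B \sqsubseteq \neg C$, and the first inclusion of $\mathcal{O}$ are straightforward by construction. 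The second inclusion of $\mathcal{O}$ inherits from $\mathfrak{A}$: the only potentially new $s$-successor of a $C^{\mathfrak{A}}$-node would be $a^\ast$ via an added edge $(z,a^\ast) \in s^{\mathfrak{B}}$ with $z \in C^{\mathfrak{A}}$, but this would require $(z,b^{\mathfrak{A}}) \in s^{\mathfrak{A}}$ and so force $b^{\mathfrak{A}} \in E^{\mathfrak{A}}$, contradicting $b^{\mathfrak{A}} \in B^{\mathfrak{A}}$.

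The main obstacle is the third inclusion $A_0 \sqsubseteq \exists s.\exists s^{-}.\neg A_0 \sqcup \exists s.\exists r.\exists s^{-}.\neg A_0$ at each $x \in A_0^{\mathfrak{B}}$. Any such $x$ lies in $C^{\mathfrak{A}}$, so the second inclusion yields in $\mathfrak{A}$, hence in $\mathfrak{B}$, a chain $x \xrightarrow{s} d \xrightarrow{r} e \xrightarrow{s^{-}} f$; it suffices to show $(b^{\mathfrak{A}},f) \notin r_0^{\mathfrak{A}}$, for then $f \notin A_0^{\mathfrak{B}}$ and the chain witnesses the second disjunct. This is exactly where the forest hypothesis on $\mathfrak{A}$ is used. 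Since $d,e \in E^{\mathfrak{A}}$, both are non-individuals; since $r^{\mathfrak{A}}$ and $s^{\mathfrak{A}}$ have disjoint interpretations in any $\mathcal{ALCI}$-tree, $f \neq d$; and since edges into non-individual nodes are tree-edges, $d,e,f$ are three consecutive tree-descendants of $x$ in the unique $\mathcal{ALCI}$-tree containing $x$. Because $c$ is the only individual that can satisfy $C$, the fact $(b^{\mathfrak{A}},x) \in r_0^{\mathfrak{A}}$ together with tree-disjointness forces $x$ to be either $c$ itself or a depth-$1$ child of $b^{\mathfrak{A}}$ in the tree of $b$. In the first case $f$ lives in the tree of $c$, disjoint from the tree of $b$; in the second case $f$ sits at tree-depth $4$ in the tree of $b$ while parent-child adjacency in an $\mathcal{ALCI}$-tree would force depth $1$. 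Either way $(b^{\mathfrak{A}},f) \notin r_0^{\mathfrak{A}}$, so the chain $d \to e \to f$ supplies the required witness at $x$, completing the verification that $\mathfrak{B} \models \mathcal{K}$ and establishing the lemma.
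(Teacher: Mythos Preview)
Your proof is correct, but it takes a genuinely different route from the paper's.

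The paper keeps $\text{dom}(\mathfrak{B})=\text{dom}(\mathfrak{A})$, sets $a^{\mathfrak{B}}=b^{\mathfrak{B}}=b^{\mathfrak{A}}$, and---crucially---\emph{shrinks} the concept interpretations by putting $C^{\mathfrak{B}}=A_0^{\mathfrak{B}}=\{c^{\mathfrak{A}}\}$. The bisimulation is then just the identity map, and the only non-trivial check is the third inclusion at the single point $c^{\mathfrak{A}}$. That check is done by contradiction: if it fails, one obtains an infinite $r$-chain of non-individual $s$-successors of $c$, which either produces a cycle among non-individuals (violating the forest property) or forces $c$ to have infinite degree.

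You instead keep $C^{\mathfrak{B}}=C^{\mathfrak{A}}$ unchanged, adjoin a fresh point $a^\ast$ whose $\Sigma$-neighbourhood duplicates that of $b^{\mathfrak{A}}$, and take the minimal $A_0^{\mathfrak{B}}$ compelled by the first axiom. This forces you to verify the third inclusion at every $x\in A_0^{\mathfrak{B}}$, not just at $c$, and you do this by a direct depth analysis in the forest (locating the witness $f$ at tree-depth $3$ or $4$ and observing that $(b^{\mathfrak{A}},f)\in r_0^{\mathfrak{A}}$ would violate adjacency constraints). Your approach is sound and arguably more constructive---it exhibits the witness for the disjunct explicitly rather than arguing by contradiction---but it is heavier: the paper's trick of collapsing $C^{\mathfrak{B}}$ to a singleton sidesteps the whole case analysis on the position of $x$ and yields a two-line verification of the third inclusion.
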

\begin{proof} \
	Assume $\Amf$ is given. We construct $\Bmf$ as follows. Let $\text{dom}(\mathfrak{B}) =
	\text{dom}(\mathfrak{A})$,
	$a^\mathfrak{B} = b^\mathfrak{B} =
	b^\mathfrak{A}$, $c^\mathfrak{B} =
	c^\mathfrak{A}$, $A^\mathfrak{B} =
	\{a^\mathfrak{B}\}$,
	$A_0^\mathfrak{B} = C^\mathfrak{B}=
	\{c^\mathfrak{B}\}$, $C^\mathfrak{B}
	= \{c^{\Amf}\}$, and $B^{\Bmf}= B^{\Amf}$. Let
	$\rho^\mathfrak{B} =
	\rho^\mathfrak{A}$ for all role
	names $\rho$. There is a $\Sigma$-isomorphism between $\mathfrak{A},b^\mathfrak{A}$ and $\mathfrak{B},a^\mathfrak{B}$, as $\mathfrak{B}$ only differs from $\mathfrak{A}$ with respect to symbols outside of $\Sigma$. 
	It is clear that $\mathfrak{B}$ is a model of $\mathcal{D}$. We then check that $\mathfrak{B}$ satisfies each inclusion of $\mathcal{O}$.
	The first inclusion is clearly satisfied as $C^\mathfrak{B} = A_0^\mathfrak{B} = \{c^\mathfrak{B}\}$.
	The second and fourth inclusion are clearly satisfied by $\mathfrak{B}$ as they are by $\mathfrak{A}$.
	The third inclusion is satisfied: we
	have $A_0^\mathfrak{B} =
	\{c^\mathfrak{B}\}$. Assume the third inclusion is not satisfied. Then, by the second
	inclusion and $C(c)\in \Dmc$, there is an infinite $r^{\Amf}$-chain of nodes distinct from $c^{\Amf},a^{\Amf},b^{\Amf}$ all of which are in relation $(s^{-})^{\Amf}$ to $c^{\Amf}$. Then either $\Amf$ is not an $\mathcal{ALCIO}$-forest model as it contains an $r^{\Amf}$-cycle of nodes distinct from interpretations of individual names or it does not have finite $\mathcal{ALCIO}$-outdegree as the outdegree of $c^{\Amf}$ is infinite.
\end{proof}
We now show that Theorem~\ref{thm:L-modeltheory0} cannot be repaired 
for $\mathcal{ALCIO}$ by admitting infinite outdegree forest models.
To this end we define a labeled $\mathcal{ALCI}$-KB
$(\Kmc,\{a\},\{b\})$ and signature $\Sigma$ such that $(\Kmc,\{a\},\{b\})$ is  not projectively weakly $\mathcal{ALCIO}(\Sigma)$-separable but there 
exists an $\mathcal{ALCIO}$-forest model $\Amf$ of $\Kmc$ of such that for all models $\mathfrak{B}$ of $\mathcal{K}$: $\mathfrak{B}, a^\mathfrak{B}
\not\sim^f_{\mathcal{ALCIO},\Sigma}
\mathfrak{A}, b^\mathfrak{A}$.

Let $\mathcal{K} = (\mathcal{O,D})$ with 
\begin{align*}
	\mathcal{D} = \{& A(a), B(b), C(c), r_0(b,c)\}\\
	\mathcal{O} = \{& C \sqsubseteq \exists r_0^-. A \rightarrow
	A_0,\\ & A_0 \sqsubseteq (\exists s. \top \sqcap \forall s.
	\exists r. \exists s^-. A_0) \rightarrow \exists s. B', \\ &
	B'
	\sqsubseteq \exists r^-.B'\\
	&B \sqcup A \sqsubseteq \neg C\}
\end{align*}
Let $\Sigma = \{r_0, s, r, c\}$. 

\begin{lemma} $(\mathcal{K}, \{a\}, \{b\})$ is not weakly projectively
	$\mathcal{ALCIO}(\Sigma)$-separable.
\end{lemma}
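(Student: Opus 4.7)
Suppose for contradiction that some $\mathcal{ALCIO}(\Sigma\cup\Sigma_{\text{help}})$-concept $D$ weakly separates $(\mathcal{K},\{a\},\{b\})$, where $\Sigma_{\text{help}}$ is a set of concept names disjoint from $\text{sig}(\mathcal{K})$. Since $\mathcal{K}\not\models D(b)$, pick an $\omega$-saturated model $\mathfrak{A}$ of $\mathcal{K}$ with $b^{\mathfrak{A}}\notin D^{\mathfrak{A}}$. I will construct a model $\mathfrak{B}$ of $\mathcal{K}$ with $a^{\mathfrak{B}}\notin D^{\mathfrak{B}}$, contradicting $\mathcal{K}\models D(a)$.

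Let $\mathfrak{B}$ have the same domain and the same interpretation of every symbol in $\Sigma\cup\Sigma_{\text{help}}$ as $\mathfrak{A}$. Reinterpret the remaining (non-$\Sigma\cup\Sigma_{\text{help}}$) symbols by $a^{\mathfrak{B}}=b^{\mathfrak{B}}=b^{\mathfrak{A}}$, $A^{\mathfrak{B}}=B^{\mathfrak{B}}=\{b^{\mathfrak{A}}\}$, $C^{\mathfrak{B}}=\{c^{\mathfrak{A}}\}$, $A_{0}^{\mathfrak{B}}=\{c^{\mathfrak{A}}\}$, and $B'^{\mathfrak{B}}=\{y\in\text{dom}(\mathfrak{A})\mid y\text{ has an infinite backward }r\text{-chain in }\mathfrak{A}\}$. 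The identity on $\text{dom}(\mathfrak{A})$ is then an $\mathcal{ALCIO}(\Sigma\cup\Sigma_{\text{help}})$-bisimulation between $\mathfrak{B}$ and $\mathfrak{A}$ relating $a^{\mathfrak{B}}=b^{\mathfrak{A}}$ to $b^{\mathfrak{A}}$; by the ``if'' direction of Lemma~\ref{lem:equivalence}, $a^{\mathfrak{B}}\notin D^{\mathfrak{B}}$ then follows from $b^{\mathfrak{A}}\notin D^{\mathfrak{A}}$. It therefore suffices to check $\mathfrak{B}\models\mathcal{K}$.

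Satisfaction of the data, of $B\sqcup A\sqsubseteq\neg C$ (using $b^{\mathfrak{A}}\ne c^{\mathfrak{A}}$, which follows from $B(b)\in\mathcal{D}$ and $B\sqsubseteq\neg C$ in $\mathfrak{A}$), of CI~1 (whose only non-trivial instance is at $c^{\mathfrak{A}}\in A_{0}^{\mathfrak{B}}$), and of CI~3 (if $y$ admits an infinite backward $r$-chain $\dots\to z_{1}\to y$ then so does $z_{1}$, so $z_{1}\in B'^{\mathfrak{B}}$) are immediate. The delicate case is CI~2 at $c^{\mathfrak{A}}$. Because $A_{0}^{\mathfrak{B}}=\{c^{\mathfrak{A}}\}$ coincides with the extension of the nominal $\{c\}$ in $\mathfrak{B}$, the antecedent of CI~2 reduces to the pure $\mathcal{ALCIO}(\Sigma)$-concept $\varphi=\exists s.\top\sqcap\forall s.\exists r.\exists s^{-}.\{c\}$, whose truth value at $c^{\mathfrak{A}}$ is the same in $\mathfrak{A}$ and in $\mathfrak{B}$. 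If $\varphi$ fails, CI~2 is vacuous. Otherwise, starting from any $s$-successor $y_{0}$ of $c^{\mathfrak{A}}$, iterating the antecedent produces a forward $r$-chain $y_{0}\to y_{1}\to y_{2}\to\cdots$ of $s$-successors of $c^{\mathfrak{A}}$, so each $y_{n}$ is both an $s$-successor of $c^{\mathfrak{A}}$ and carries a backward $r$-chain of length at least $n$. Consequently the one-variable type comprising $\exists s^{-}.\{c\}$ together with the formulas asserting the existence of a backward $r$-chain of length $n$ for each $n\ge 0$ is finitely satisfiable in $\mathfrak{A}$. $\omega$-saturation realises it by some $x_{0}$; iterating $\omega$-saturation—at an $r$-predecessor of $x_{0}$, at the $r$-predecessor of that predecessor, and so on—then builds an infinite backward $r$-chain at $x_{0}$ inside $\text{dom}(\mathfrak{A})$. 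Thus $x_{0}\in B'^{\mathfrak{B}}$, and together with $(c^{\mathfrak{A}},x_{0})\in s^{\mathfrak{A}}$ this witnesses $c^{\mathfrak{A}}\in(\exists s.B')^{\mathfrak{B}}$, so CI~2 holds.

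The main obstacle is precisely this verification of CI~2. Placing $b^{\mathfrak{A}}$ in $A^{\mathfrak{B}}$ is unavoidable if $a^{\mathfrak{B}}=b^{\mathfrak{A}}$ is to satisfy $A$, and together with the $r_{0}$-edge from $\mathcal{D}$ this forces $c^{\mathfrak{A}}\in A_{0}^{\mathfrak{B}}$ via CI~1, so the antecedent of CI~2 at $c^{\mathfrak{A}}$ cannot be sidestepped, and the infinite backward $r$-chain demanded by the consequent together with CI~3 must be produced inside $\mathfrak{A}$. A forest model of finite $\mathcal{ALCIO}$-outdegree would in general fail to supply such a chain; the $\omega$-saturation of $\mathfrak{A}$ is what makes the argument go through, and this is precisely the structural feature that causes the characterisation of Theorem~\ref{thm:L-modeltheory0} not to extend to $\mathcal{ALCIO}$.
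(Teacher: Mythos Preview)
Your proof is correct and follows essentially the same strategy as the paper: change the interpretation of the non-$\Sigma$ symbols so that $a$ sits where $b$ used to, and verify that the resulting structure is still a model of $\mathcal{K}$, the only delicate point being the production of an $s$-successor of $c$ that starts an infinite backward $r$-chain. The only notable difference is where the saturation step occurs: the paper works with an \emph{arbitrary} model $\mathfrak{A}$ and, in the hard case, passes to an elementary extension in which the type describing an infinite backward $r$-chain below an $s$-successor of $c$ is realized, whereas you front-load this by choosing $\mathfrak{A}$ to be $\omega$-saturated from the start and then realize the same type (and iterate to build the chain) inside $\mathfrak{A}$ itself. Your choice of $B'^{\mathfrak{B}}$ as the set of all elements with an infinite backward $r$-chain is a clean, uniform alternative to the paper's explicit chain and makes CI~3 immediate. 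Both variants are equally valid and yield the same conclusion.
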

\begin{proof} \
	Let $\mathfrak{A}$ be a model of $\mathcal{K}$. We show
	there exists a model 
	$\mathfrak{B}$ of $\mathcal{K}$ such that
	$\mathfrak{B},a^\mathfrak{B} \equiv_{\mathcal{ALCIO},\Sigma\cup \Sigma'}
	\mathfrak{A},b^\mathfrak{A}$, where $\Sigma'$ is the set of all concept names that do not occur in $\Kmc$.
	Let the model $\mathfrak{B}_0$ be defined in the same way as $\Amf$ except
	that $a^{\Bmf_{0}}=b^{\Amf}$, $A^{\mathfrak{B}_0} = \{a^{\mathfrak{B}_0}\}$
	$A_0^{\mathfrak{B}_0} = C^{\mathfrak{B}_0} = \{c^{\mathfrak{B}_0} \}$, and
	$B^{\mathfrak{B}_0} = \{b^{\mathfrak{B}_0}\}$, and we define the extension of $B'$ according to a case distinction.
	
	Case 1. $c^{\mathfrak{B}_0} \notin (\exists s. \top \sqcap \forall s. \exists r. \exists s^-. A_0)^{\Bmf_{0}}$. Then set $B'^{\Bmf_0}=\emptyset$. Then $\Bmf_{0}$ is a model of $\Kmc$ and the identity is a $\Sigma$-isomorphism between $\mathfrak{B}_0$ and $\mathfrak{A}$ mapping $a^{\Bmf_{0}}$ to $b^\mathfrak{A}$ and we are done.
	
	Case 2. Otherwise. As $A_0^{\mathfrak{B}_0} =
	\{c^{\mathfrak{B}_0}\}$, the set
	$$
	t(x)=\{s(c,x)\} \cup \{r(y_{1},x), r(y_{2},y_{1}), r(y_{3},y_{2}),\ldots\}
	$$ 
	is finitely satisfiable in
	$\mathfrak{B}_0$, so it is realized in an elementary
	extension $\mathfrak{B}_1$ of $\mathfrak{B}_0$. That
	implies there exists an infinite $r^-$-chain $a'_1, a'_2,
	\dots$ in $\mathfrak{B}_1$ with $(c^{\mathfrak{B}_1},a'_1)
	\in s^{\mathfrak{B}_1}$. Let $\mathfrak{B}$ be obtained from $\mathfrak{B}_1$ by defining the extension of $B'$ as $\{a'_i : i \geq
	1\}$. Then $\mathfrak{B}$ is a model of $\mathcal{K}$ and $\mathfrak{A},b^\mathfrak{A} \equiv_{\mathcal{ALCIO},\Sigma\cup \Sigma'} \mathfrak{B},a^\mathfrak{B}$.
\end{proof}

\begin{lemma} There exists a model $\mathfrak{A}$ of $\mathcal{K}$ such that $\mathfrak{B},a^\mathfrak{B} \nsim_{\mathcal{ALCIO}, \Sigma} \mathfrak{A},b^\mathfrak{A}$ for all models $\mathfrak{B}$ of $\mathcal{K}$.
\end{lemma}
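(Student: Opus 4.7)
The plan is to build an explicit (non-forest) model $\mathfrak{A}$ whose infinite-branching $s$-structure at $c^\mathfrak{A}$, combined with the nominal $c\in\Sigma$, forces any bisimilar model $\mathfrak{B}$ of $\mathcal{K}$ to generate an infinite $r^-$-chain of $B'$-elements that the finite $r^-$-tails of $\mathfrak{A}$ cannot carry. Concretely, take $\mathfrak{A}$ with domain $\{a^\mathfrak{A},b^\mathfrak{A},c^\mathfrak{A}\}\cup\{d_i : i\geq 0\}$, with $A^\mathfrak{A}=\{a^\mathfrak{A}\}$, $B^\mathfrak{A}=\{b^\mathfrak{A}\}$, $C^\mathfrak{A}=\{c^\mathfrak{A}\}$, $A_0^\mathfrak{A}=(B')^\mathfrak{A}=\emptyset$, $r_0^\mathfrak{A}=\{(b^\mathfrak{A},c^\mathfrak{A})\}$, $s^\mathfrak{A}=\{(c^\mathfrak{A},d_i):i\geq 0\}$, and $r^\mathfrak{A}=\{(d_i,d_{i+1}):i\geq 0\}$. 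Since $b^\mathfrak{A}$ (the only $r_0^-$-predecessor of $c^\mathfrak{A}$) lacks $A$, the first inclusion is vacuous at $c^\mathfrak{A}$; the remaining inclusions hold trivially because $A_0$ and $B'$ have empty extension, so $\mathfrak{A}\models\mathcal{K}$.

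Suppose for contradiction that some model $\mathfrak{B}$ of $\mathcal{K}$ admits $\mathfrak{B},a^\mathfrak{B}\sim_{\mathcal{ALCIO},\Sigma}\mathfrak{A},b^\mathfrak{A}$ via a bisimulation $S$. Transferring $r_0(b^\mathfrak{A},c^\mathfrak{A})$ across $S$ and using the nominal clause for $c\in\Sigma$ yields $(c^\mathfrak{A},c^\mathfrak{B})\in S$ and $r_0(a^\mathfrak{B},c^\mathfrak{B})$ in $\mathfrak{B}$; together with $A(a^\mathfrak{B})\in\mathcal{D}$, the first inclusion of $\mathcal{O}$ forces $A_0(c^\mathfrak{B})$. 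To verify the premise of the second inclusion at $c^\mathfrak{B}$, take any $s$-successor $d'$ of $c^\mathfrak{B}$: the backward condition yields $(d_i,d')\in S$ for some $i$; applying the forward condition to $r(d_i,d_{i+1})$ produces an $r$-successor $e'$ of $d'$ with $(d_{i+1},e')\in S$; and applying the forward condition to $s^-(d_{i+1},c^\mathfrak{A})$, together with the nominal clause for $c\in\Sigma$, places $c^\mathfrak{B}$ as an $s^-$-successor of $e'$. Since $A_0(c^\mathfrak{B})$ holds, this realizes $\exists r.\exists s^-.A_0$ at $d'$, so the premise holds at $c^\mathfrak{B}$ and the second inclusion forces some $s$-successor $d''$ of $c^\mathfrak{B}$ into $B'$, with $(d_{i_0},d'')\in S$ for some $i_0\geq 0$.

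Iterating $B'\sqsubseteq\exists r^-.B'$ yields $B'$-elements $d''=g_0,g_1,g_2,\ldots$ in $\mathfrak{B}$ with each $g_{k+1}$ an $r^-$-successor of $g_k$. A short induction using $r^-\in\Sigma$ and the fact that each $d_j$ with $j\geq 1$ has unique $r^-$-successor $d_{j-1}$ in $\mathfrak{A}$ shows $(d_{i_0-k},g_k)\in S$ for $0\leq k\leq i_0$. Thus $g_{i_0}$ is $S$-related to $d_0$, which has no $r^-$-successor in $\mathfrak{A}$; the backward bisim condition then prohibits $g_{i_0}$ from having any $r^-$-successor in $\mathfrak{B}$, contradicting the inclusion $B'\sqsubseteq\exists r^-.B'$ applied at $g_{i_0}$. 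Hence no model $\mathfrak{B}$ of $\mathcal{K}$ is $\mathcal{ALCIO}(\Sigma)$-bisimilar to $\mathfrak{A}$ at $(a^\mathfrak{B},b^\mathfrak{A})$. The main subtlety is the middle step: the forward bisim condition must pin $c^\mathfrak{B}$, rather than some other element, as the $s^-$-successor of $e'$ that carries $A_0$, and this crucially relies on $c$ being a nominal in $\Sigma$.
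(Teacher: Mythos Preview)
Your proof is correct and follows essentially the same approach as the paper: the model $\mathfrak{A}$ is identical (up to renaming $a_i\leftrightarrow d_i$), and both arguments derive $A_0(c^{\mathfrak B})$ from the nominal, force the premise of the second inclusion via $\forall s.\exists r.\exists s^-.\{c\}$, obtain a $B'$-witness among the $s$-successors of $c^{\mathfrak B}$, and then use the infinite $r^-$-chain generated by $B'\sqsubseteq\exists r^-.B'$ against the finite $r^-$-tails below each $d_i$ in $\mathfrak{A}$. The paper compresses the middle step by invoking invariance of the $\mathcal{ALCIO}(\Sigma)$-concept $\exists r_0.(\{c\}\sqcap\exists s.\top\sqcap\forall s.\exists r.\exists s^-.\{c\})$, whereas you unpack the same reasoning as explicit forward/backward bisimulation steps; your final contradiction (tracing $(d_{i_0-k},g_k)\in S$ down to $d_0$) is in fact a cleaner justification of the paper's concluding sentence ``a $\Sigma$-bisimulation \ldots\ is then impossible''.
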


\begin{proof} \
	Consider the model $\mathfrak{A}$ depicted above. An explicit definition is given by setting 
	$$
	\text{dom}(\mathfrak{A})= \{a,b,c\} \cup \{a_i : i \geq 0\}
	$$
	and 
	\begin{align*}
		A_0^\mathfrak{A} &= \emptyset & (B')^\mathfrak{A} &= \emptyset\\
		A^\mathfrak{A} &= \{a\} = a^\mathfrak{A} &  r_0^\mathfrak{A} &= \{b,c\}\\
		B^\mathfrak{A} &= \{b\} = b^\mathfrak{A} &  r^\mathfrak{A} &= \{(a_i, a_{i+1}) : i \geq 0\}\\
		C^\mathfrak{A} &= \{c\} = c^\mathfrak{A} &  s^\mathfrak{A} &= \{(c,a_i) : i \geq 0\} 
	\end{align*}
	It is immediate that $\Amf$ is a model of $\Kmc$.
	If $\mathfrak{B},a^\mathfrak{B} \sim_{\mathcal{ALCIO}, \Sigma}
	\mathfrak{A},b^\mathfrak{A}$, then $b^\mathfrak{A} \in
	(\exists r_0. (\{c\} \sqcap \exists s. \top \sqcap \forall s.
	\exists r. \exists s^-. \{c\}))^\mathfrak{A}$ implies
	$a^\mathfrak{B} \in (\exists r_0. (\{c\} \sqcap \exists s. \top
	\sqcap \forall s. \exists r. \exists s^-. \{c\}))^\mathfrak{B}$
	as $c,s,r,r_0 \in \Sigma$. The latter implies $a^\mathfrak{B}
	\in (\exists r_0. (A_0 \sqcap \exists s. \top \sqcap \forall s.
	\exists r. \exists s^-. A_0))^\mathfrak{B}$ as $c^\mathfrak{B}
	\in A_0^\mathfrak{B}$ in virtue of $\{A(a), C(c),r_0(a,c)\}
	\subseteq \mathcal{D}$ and the first inclusion $C \sqsubseteq \exists r_0^-. A
	\rightarrow A_0$ of $\mathcal{O}$. Then, by the second inclusion 
	we get that $a^\mathfrak{B} \in (\exists r_0. \exists s.
	B')^\mathfrak{B}$ while $b^\mathfrak{A} \notin (\exists r_0.
	\exists s. B')^\mathfrak{A}$. By the third inclusion,
	$a^\mathfrak{B}$ then has a $r_0$ successor with an $s$
	successor from which starts an infinite $r^-$ chain, while
	$b^\mathfrak{A}$ does not. A $\Sigma$-bisimulation including
	$(a^\mathfrak{B}, b^\mathfrak{A})$ is then impossible, as
	$\{r_0,s,r\} \subseteq \Sigma$.
\end{proof}

The following lemma provides a model-theoretic characterization of
$(\Lmc,\text{UCQ}_{r}^{\Lmc_{S}})$-separability, for some pairs
$(\Lmc,\Lmc_{S})$.
\begin{lemma}\label{lem:helps}
Let $(\Lmc,\Lmc_{S})$ be either $(\ALCI,\ALCI)$ or $(\ALC, \mathcal{ALCO})$ and let $(\Kmc,P,\{b\}\})$ be a labeled $\Lmc$-KB and $\Sigma\subseteq \text{sig}(\Kmc)$ a signature. Then the following conditions are equivalent:
	\begin{enumerate}		
		\item $(\Kmc,P,\{b\})$ is non-projectively UCQ$_{r}^{\Lmc_{S}}(\Sigma)$-separable; 
        \item there exists an $\Lmc_{S}$-forest model $\Amf$ of $\Kmc$ of finite
        $\Lmc_{S}$-outdegree such that there exists an $n$ such that for all models $\Bmf$ of 
        $a\in P$: there exist $D\subseteq \text{dom}(\Bmf)$ of cardinality not
        exceeding $n$ such that the $\Sigma$-reduct of $\Amf_{|D}$ is $\Lmc_{S}$-rooted in $a^{\Bmf}$ and $\Bmf,a^{\Bmf}\not\rightarrow_{D,\mathcal{ALCO},\Sigma}
        \Amf,b^{\Amf}$. 	
\end{enumerate} 
\end{lemma}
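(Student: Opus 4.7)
My plan is to prove the equivalence in both directions, using Lemma~\ref{lem:equivalence2} as the bridge between the syntactic statement~(1) and the semantic statement~(2). For (1)~$\Rightarrow$~(2), let $\varphi(x) = \bigvee_{i} \varphi_i(x)$ with each $\varphi_i \in \text{CQ}_r^{\Lmc_S}(\Sigma)$ be a non-projectively separating formula, and let $n$ be the maximum number of variables occurring in any $\varphi_i$. From $\Kmc \not\models \varphi(b)$ I pick a model $\Bmf_0 \models \Kmc$ with $\Bmf_0 \not\models \varphi(b^{\Bmf_0})$, and use a blocking-style selective unraveling to obtain an $\Lmc_S$-forest model $\Amf$ of $\Kmc$ of finite $\Lmc_S$-outdegree with $\Amf, b^\Amf \sim_{\Lmc_S}^f \Bmf_0, b^{\Bmf_0}$. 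A short verification shows that a functional $\Lmc_S$-bisimulation transports any satisfying assignment of a $\text{CQ}_r^{\Lmc_S}(\Sigma)$-formula in $\Amf$ to one in $\Bmf_0$ (role atoms by functionality and the forth condition of bisimulation; concept atoms by Lemma~\ref{lem:equivalence}), so by contraposition $\Amf \not\models \varphi(b^\Amf)$. Now for any model $\Bmf$ of $\Kmc$ and $a \in P$, since $\Bmf \models \varphi(a^\Bmf)$, some $\varphi_i$ has a satisfying assignment $v$ in $\Bmf$ with $v(x) = a^\Bmf$; letting $D$ be the image of $v$ yields $|D| \leq n$ and the $\Sigma$-reduct of $\Bmf_{|D}$ is $\Lmc_S$-rooted at $a^\Bmf$. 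If a $\text{CQ}^{\Lmc_S}(\Sigma)$-homomorphism $h$ witnessed $\Bmf, a^\Bmf \rightarrow_{D, \Lmc_S, \Sigma} \Amf, b^\Amf$, then $h \circ v$ would satisfy $\varphi_i$ at $b^\Amf$ in $\Amf$, a contradiction.

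For (2)~$\Rightarrow$~(1), fix $\Amf$ and $n$ as in (2). For each $a \in P$, let $\mathcal{G}_a$ be the set of all $\psi(x) \in \text{CQ}_r^{\Lmc_S}(\Sigma)$ with at most $n$ variables satisfying $\Amf \not\models \psi(b^\Amf)$. I claim that the FO-theory $T_a := \Kmc \cup \{\neg \psi(a) \mid \psi \in \mathcal{G}_a\}$ is inconsistent. Otherwise, pass to an $\omega$-saturated elementary extension $\Bmf$ of a model of $T_a$; then $\Bmf \models \Kmc$, and~(2) applied to $\Bmf$ and $a$ yields some $D_a \subseteq \text{dom}(\Bmf)$ of size at most $n$ whose $\Sigma$-reduct is $\Lmc_S$-rooted at $a^\Bmf$ and with $\Bmf, a^\Bmf \not\rightarrow_{D_a, \Lmc_S, \Sigma} \Amf, b^\Amf$. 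Since $\Bmf$ is $\omega$-saturated and $\Amf$ has finite $\Lmc_S$-outdegree, the contrapositive of the nontrivial direction of Lemma~\ref{lem:equivalence2} (read with $\Bmf$ as source and $\Amf$ as target) produces a $\psi \in \text{CQ}_r^{\Lmc_S}(\Sigma)$ of at most $|D_a| \leq n$ variables with $\Bmf \models \psi(a^\Bmf)$ and $\Amf \not\models \psi(b^\Amf)$, so that $\psi \in \mathcal{G}_a$ contradicts $\Bmf \models \neg \psi(a)$. Compactness then yields a finite $\mathcal{F}_a \subseteq \mathcal{G}_a$ with $\Kmc \models \bigvee_{\psi \in \mathcal{F}_a} \psi(a)$, and $\varphi(x) := \bigvee_{a \in P} \bigvee_{\psi \in \mathcal{F}_a} \psi(x)$ is a non-projectively $\text{UCQ}_r^{\Lmc_S}(\Sigma)$-separating formula for $(\Kmc, P, \{b\})$.

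The most delicate point is the construction of the finite-outdegree $\Lmc_S$-forest model in (1)~$\Rightarrow$~(2): Lemma~\ref{lem:forestmodelcompleteness} guarantees finite outdegree only from finite originals, whereas a witness for $\Kmc \not\models \varphi(b)$ need not be finite and UCQ-satisfaction is not in general preserved under bisimulation. I expect to handle this with a selective unraveling that realizes at each node only the finitely many existential demands enforced by $\Omc$, and to observe that since functional $\Lmc_S$-bisimulations transport satisfying assignments of rooted CQs from the unraveling back to the original model, no spurious CQ-satisfactions are created. Extracting the bound $|D_a| \leq n$ on the number of variables of the witness $\psi$ in (2)~$\Rightarrow$~(1) directly from the proof of Lemma~\ref{lem:equivalence2}, and correctly bookkeeping the source/target roles of $\Amf$ and $\Bmf$ in that application, also require some care.
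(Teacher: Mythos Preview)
Your overall strategy matches the paper's: obtain the forest model in (1)$\Rightarrow$(2) via unraveling, and handle (2)$\Rightarrow$(1) by compactness plus an $\omega$-saturated witness together with Lemma~\ref{lem:equivalence2}. There is, however, one genuine gap.

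In (1)$\Rightarrow$(2) you claim the selective unraveling produces a finite-outdegree $\Amf$ together with a functional $\Lmc_S$-bisimulation $\Amf,b^\Amf \sim^f_{\Lmc_S} \Bmf_0,b^{\Bmf_0}$, and you transport satisfying assignments of $\varphi$ back to $\Bmf_0$ via this bisimulation. These two requirements are incompatible in general: if $\Bmf_0$ has a node whose $r$-successors realise infinitely many pairwise distinct $\Lmc_S(\Sigma)$-types while sharing the same $\text{sub}(\Kmc)$-type, then the back condition of any functional $\Lmc_S$-bisimulation forces $\Amf$ to have infinite outdegree there. A selective unraveling based on $\text{sub}(\Kmc)$-types only yields a $\text{sub}(\Kmc)$-type-preserving map, and the $\Lmc_S(\Sigma)$-concept atoms of $\varphi$ lying outside $\text{sub}(\Kmc)$ need not be preserved, so your transport argument breaks. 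The fix is standard and easy: enlarge $\text{sub}(\Kmc)$ by the finitely many concept atoms occurring in $\varphi$ and unravel with respect to these enlarged types; the resulting node map then preserves exactly the concepts needed. (The paper is equally terse here---it cites Lemma~\ref{lem:forestmodelcompleteness}, which is stated for $\Lmc$-concepts rather than UCQs---but this is the argument it has in mind.)

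For (2)$\Rightarrow$(1) your direct argument is correct, but restricting $\mathcal{G}_a$ to formulas with at most $n$ variables is unnecessary and obliges you to mine the proof of Lemma~\ref{lem:equivalence2} for a variable bound. The paper's contrapositive argument simply takes $\Gamma$ over \emph{all} $\psi\in\text{UCQ}_r^{\Lmc_S}(\Sigma)$ with $\Amf\models\neg\psi(b)$; then an $\omega$-saturated model of $\Gamma$ satisfies $\Bmf,a^{\Bmf}\Rightarrow_{\text{CQ}_r^{\Lmc_S},\Sigma}\Amf,b^{\Amf}$ by construction, and Lemma~\ref{lem:equivalence2} applies verbatim to contradict Condition~2.
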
 
\begin{proof} \
``1 $\Rightarrow$ 2''. Assume that Condition~1 holds and take a formula $\varphi(x)$ in UCQ$_{r}^{\Lmc_{S}}(\Sigma)$ such that $\Kmc\models \varphi(a)$ for all $a\in P$ and $\Kmc\not\models \varphi(b)$. Then there exists a model $\Amf$ of
$\Kmc$ such that $\Amf\not\models \varphi(b)$. By Lemma~\ref{lem:forestmodelcompleteness}, we may assume that $\Amf$ is an $\Lmc_{S}$-forest model of $\Kmc$ of finite $\Lmc_{S}$-outdegree. Then Condition~2 follows for $n$ the number of variables in $\varphi$, using Lemma~\ref{lem:equivalence2}.	

``2 $\Rightarrow$ 1''. The proof is indirect. Assume that Condition~1 does not hold.
Let $\Amf$ be any $\Lmc_{S}$-forest model of $\Kmc$ of finite $\Lmc_{S}$-outdegree and set
$$
\Gamma = \Kmc \cup \{ \neg \varphi(x) \mid \varphi(x)\in \text{UCQ}_{r}^{\Lmc_{s}}(\Sigma),\Amf\models \neg \varphi(b)\}.
$$ 
Then, by compactness, $\Gamma$ is satisfiable with $x=a$ for some $a\in P$. To show this, assume that it is not the case. Then for any $a\in P$ there exists a finite subset $\Gamma_{a}'$ of $\Gamma$ such that $\Gamma_{a}'$ is not satisfiable with $x=a$. Then $\Gamma'= \bigcup_{a\in P}\Gamma_{a}'$ is not satisfiable with $x=a$, for any $a\in P$. We may assume that $\Gamma'= \Kmc \cup \{\neg \varphi_{1}(x),\cdots,\neg \varphi_{n}(x)\}$. Then $\Kmc\models \varphi_{1}\vee \cdots \vee \varphi_{n}(a)$ for all $a\in P$. Observe that $\varphi_{1}\vee \cdots \vee \varphi_{n}\in \text{UCQ}_{r}^{\Lmc_{s}}$. Thus, as we assume that Condition~1 does not hold,
$\Kmc\models \varphi_{1}\vee \cdots \vee \varphi_{n}(b)$. Hence $\Amf\models  \varphi_{1}\vee \cdots \vee \varphi_{n}(b)$ and so there exists $i$
such that $\Amf\models \varphi_{i}(b)$. We have derived a contradiction.

Take an $\omega$-saturated model $\Bmf$ of $\Kmc$ satisfying $\Gamma$ in some $a\in P$. We have by definition $\Bmf,a^{\Bmf} \Rightarrow_{\text{CQ}_{r}^{\Lmc_{S}},\Sigma} \Amf,b^{\Amf}$.
By Lemma~\ref{lem:equivalence2},
$\Bmf,a^{\Bmf} \Rightarrow^{\text{mod}}_{\text{CQ}_{r}^{\Lmc_{S}},\Sigma} \Amf,b^{\Amf}$. This contradicts Condition~2, as required.
\end{proof}

\thmequival*

\begin{proof} \
	We first assume that $(\Lmc,\Lmc_{S})=(\ALCI,\ALCI)$. It suffices to show
	that Point~3 of Theorem~\ref{thm:L-modeltheory0} and Point~2 of Lemma~\ref{lem:helps} are equivalent.
	Assume first that Point~3 of Theorem~\ref{thm:L-modeltheory0} holds for $\Amf$. 
	For a model $\Cmf$ of $\Kmc$ we denote by $\Dmc_{\Sigma}^{\Cmf,a}$
the maximal connected component of $a^{\Cmf}$ in $\Cmf_{|\Dmc^{\Cmf}}$, 
where $\Dmc^{\Cmf}=\{c^{\Cmf} \mid c \in \text{ind}(\Dmc)\}$.

We show that for all models $\Bmf$ of $\Kmc$ and all $a\in P$: $\Bmf,a^{\Bmf}\not\rightarrow_{\Dmc_{\Sigma}^{\Bmf,a},\ALCI,\Sigma}\Amf,b^{\Amf}$. Then Point~2 of Lemma~\ref{lem:helps} holds for $n=|\Dmc|$.
For a proof be contradiction assume that there is a model $\Bmf$ of $\Kmc$ and an $a\in P$ such that $h:\Bmf,a^{\Bmf}\rightarrow_{\Dmc_{\Sigma}^{\Bmf,a},\ALCI,\Sigma}\Amf,b^{\Amf}$.
We aim to convert $\Bmf$ and $h$ into a new model $\Bmf'$ of $\Kmc$ and a functional bisimulation witnessing $\Bmf',a^{\Bmf'} \sim_{\Lmc,\Sigma}^{f} \Amf,b^{\Amf}$ and thus derive a contradiction.
To this end, we require the notion of a $k$-unfolding of a structure in which we do not only unfold into a tree-like structure but also take $k$ copies of every successor. In detail, we define the \emph{$k$-unfolding} $\Bmf^{\leq k}_{d}$ of a structure $\Bmf$ at $d\in \text{dom}(\Bmf)$ as follows, for any $k>0$. The domain of
$\Bmf^{\leq k}_{d}$ is the set $W$ of all words $w=d_{0}R_{0}(d_{1},i_{1}) \cdots R_{n-1}(d_{n},i_{n})$ such that $d_{0}=d$, $(d_{i},d_{i+1}) \in R_{i}^{\Bmf}$ for all $i<n$, and $i_{j}\leq k$ for all $j\leq n$, where all $R_{i}$ are roles. Let $\text{tail}(w)= d_{n}$. The interpretation of concept names and role names is as expected: we set $w\in A^{\Bmf^{\leq k}_{d}}$ if $\text{tail}(w)\in A^{\Bmf}$ and we set for $w_{1},w_{2}\in W$, $(w_{1},w_{2})\in R^{\Bmf^{\leq k}_{d}}$ if $w_{2}$ is obtained from $w_{1}$ by concatenating $w_{1}$ and some $R_{n+1}(d_{n+1},i_{n+1})$. 
	
	Now let $k$ be the maximum over the $\ALCI$-outdegrees of the
	nodes in $\Amf$. We define a new model $\Bmf'$ of $\Kmc$ by taking $\Bmf$, removing all nodes $d$ not in $\Dmc^{\Bmf}$ from it, and instead attaching $\Bmf_{d}^{\leq k}$ to $d$,
	for any $d\in \Dmc^{\Bmf}$. Now one can easily show that there is a functional $\ALCI(\Sigma)$-bisimulation $f$ between $\Bmf',a^{\Bmf'}$ and $\Amf,b^{\Amf}$:
	to define $f$ take the homomorphism $h$ and extend it with functional bisimulations witnessing $\Bmf_{d}^{\leq k},d \sim^{f}_{\ALCI,\Sigma} \Amf, h(d)$ for every $d\in \Dmc_{\Sigma}^{\Bmf,a}$.  

Assume now that Point 2 of Lemma~\ref{lem:helps} holds
for $\Amf$. We show that Point~3 of Theorem~\ref{thm:L-modeltheory0} holds for $\Amf$.
The proof is indirect. Assume Point~3 does not hold for $\Amf$. Thus, there exists a model $\Bmf$ of $\Kmc$ and $a\in P$ such that $\Bmf,a^{\Bmf} \sim_{\ALCI,\Sigma}^{f} \Amf,
b^{\Amf}$. Then we can regard the restriction of $f$ to any subset $D$ of $\text{dom}(\Bmf)$ as a $\Sigma$-homomorphism
$h$ for which clearly $\Bmf,c \sim_{\ALCI,\Sigma} \Amf,h(c)$ for all $c\in D$. 
Thus, $\Bmf,a^{\Bmf}\rightarrow_{D,\ALCI,\Sigma}\Amf,b^{\Amf}$. But then
Point~2 of Lemma~\ref{lem:helps} does not hold for $\Amf$.

\medskip

We now assume that $(\Lmc,\Lmc_{S})=(\ALC,\ALCO)$. Again it suffices to show
that Point~3 of Theorem~\ref{thm:L-modeltheory0} and Point~2 of Lemma~\ref{lem:helps} are equivalent.

Assume first that Point~3 of Theorem~\ref{thm:L-modeltheory0} holds for $\Amf$. 
We show that $\Amf$ witnesses Point~2 of Lemma~\ref{lem:helps}. The proof is indirect. Assume that for all $n>0$ there exists a model $\Bmf$ of $\Kmc$ and $a\in P$ such that for all $D\subseteq \text{dom}(\Bmf)$ of cardinality not
	exceeding $n$ such that the $\Sigma$-reduct of $\Amf_{|D}$ is $\mathcal{ALCO}$-rooted in $a^{\Bmf}$ we have  $\Bmf,a^{\Bmf}\rightarrow_{D,\mathcal{ALCO},\Sigma}
	\Amf,b^{\Amf}$. 
	
	Let $R$ denote the set of individuals $c\in \text{ind}(\Dmc) \cap \Sigma$ such that there is an $\ALC(\Sigma)$-path from $b^{\Amf}$ to $c^{\Amf}$ in $\Amf$.
	For any $c\in R$ let $n_{c}$ be the length of the shortest such path and let $m= \sum_{c\in R}n_{c}|\text{ind}(\Dmc)|$. Let $\Bmf$ 
	be a model of $\Kmc$ and $a\in P$ such that for all $D\subseteq \text{dom}(\Bmf)$ of cardinality not
	exceeding $m$ and such that the $\Sigma$-reduct of $\Bmf_{|D}$ is $\mathcal{ALCO}$-rooted in $a^{\Bmf}$ we have $\Bmf,a^{\Bmf}\rightarrow_{D,\mathcal{ALCO},\Sigma}
	\Amf,b^{\Amf}$.
	
	Now choose $D_{0}\subseteq \text{dom}(\Bmf)$ minimal such that 
	the $\Sigma$-reduct of $\Bmf_{|D}$ is rooted in $a^{\Bmf}$ and 
	$D_{0}$ contains all $c^{\Bmf}$ with $c\in \text{ind}(\Dmc)\cap \Sigma$ such that 
	there is an $\ALC(\Sigma)$-path from $a^{\Bmf}$ to $c^{\Bmf}$. Note that
	as $\Bmf,a^{\Bmf} \sim_{\mathcal{ALCO},\Sigma}\Amf,b^{\Amf}$
	the individuals we obtain are exactly those in $R$ and the cardinality
    of $D_{0}$ does not exceed $\sum_{c\in R}n_{c}$. Obtain $D$ from $D_{0}$
    by adding all nodes $c^{\Bmf}$ with $c\in \text{ind}(\Dmc)$ such that there
    exists an $\mathcal{ALC}(\Sigma)$-path from a node in $D_{0}$ through
    $\Dmc^{\Bmf}$ to $c^{\Bmf}$. Then the cardinality of $D$ does not exceed $m$.
    Thus, we have $\Bmf,a^{\Bmf}\rightarrow_{D,\mathcal{ALCO},\Sigma}
    \Amf,b^{\Amf}$. Let $h$ be the $\Sigma$-homomorphism witnessing this.
	
We define the \emph{directed $k$-unfolding omitting $\Sigma$-individuals}, $\Bmf^{d,\leq k}_{d}$, of a structure $\Bmf$ at $d\in \text{dom}(\Bmf)$ as follows, for any $k>0$. The domain of
$\Bmf_{d}$ is the set $W$ of all words $w=d_{0}r_{0}(d_{1},i_{1}) \cdots r_{n-1}(d_{n},i_{n})$ such that $d_{0}=d$, $(d_{i},d_{i+1}) \in r_{i}^{\Bmf}$ for all $i<n$, and $i_{j}\leq k$ for all $j\leq n$, where all $r_{i}$ are role names and $d_{n}$ does not interpret an individual name in $\Sigma$ if $r_{0},\ldots,r_{n-1}$ are all in $\Sigma$. Let $\text{tail}(w)= d_{n}$. The interpretation of concept names and role names is as expected: we set $w\in A^{\Bmf^{d,\leq k}_{d}}$ if $\text{tail}(w)\in A^{\Bmf}$ and we set for $w_{1},w_{2}\in W$, $(w_{1},w_{2})\in r^{\Bmf^{d,\leq k}_{d}}$ if $w_{2}$ is obtained from $w_{1}$ by concatenating $w_{1}$ and some $r_{n+1}(d_{n+1},i_{n+1})$.

Now let $k$ be the maximal $\mathcal{ALC}$-outdegree of a node in $\Amf$ and define a model $\Bmf'$ of $\Kmc$ by taking $\Bmf$, removing all nodes $d$ not in $\Dmc^{\Bmf}$ from it, and instead attaching $\Bmf_{d}^{d,\leq k}$ to $d$ for any $d\in \Dmc^{\Bmf}$. Moreover, add
$(w,c^{\Bmf})$ to the interpretation of $r$ if $(\text{tail}(w),c^{\Bmf})\in r^{\Bmf}$ and $c\in \Sigma$. 

Then one can show that there is a functional $\ALCO(\Sigma)$-bisimulation $f$ between $\Bmf',a^{\Bmf}$ and $\Amf,b^{\Amf}$ by taking the homomorphim $h$ and extend it with the functional bisimulations witnessing $\Bmf_{d}^{d,\leq k},d \sim^{f}_{\ALCO} \Amf, h(d)$ for every $d\in D$.  

The implication from Point~2 of Lemma~\ref{lem:helps} to Point~3 of Theorem~\ref{thm:L-modeltheory0} can be proved in the same way
as for $\mathcal{ALCI}$.
\end{proof}

\section{Proofs for Section~\ref{sec:complexity}}

\lemcereduction*

%
%
%

\begin{proof} 
  \ Assume $\Lmc$-ontologies $\Omc$ and $\Omc'$ are
  given. Let $\Sigma$ be the signature of $\Omc$. Let
  $\text{atom}_{\Sigma}$ denote the set of concepts $A$ with
  $A\in \Sigma\cap \NC$, $\{a\}$ with $a\in \Sigma\cap \NI$, and $\exists r.\top$ with
  $r\in \Sigma$. If $\Lmc$ admit inverse roles, then we also add
  $\exists r^{-}.\top$, for $r\in \Sigma$. We may assume that
  there exists a concept name $A\in \text{atom}_{\Sigma}$ such
  that $\Omc\models A \equiv \neg C$ for some $C\in
  \text{atom}_{\Sigma}$. Indeed, if no such $A$ exists, pick any
  $X\in \text{atom}_{\Sigma}$, add $A\sqsubseteq \neg X$, $X
  \sqsubseteq \neg A$ to $\Omc$ to obtain
  $\Omc_{1}$, and add $A$ to $\Sigma$. Then
  clearly $\Omc_{1}\cup \Omc'$ is a conservative extension of $\Omc_{1}$
  in $\Lmc$ (projectively or, respectively, non-projectively)
  iff  $\Omc\cup \Omc'$ is a conservative extension of $\Omc$ in
  $\Lmc$ (projectively or, respectively, non-projectively).

  We first consider the case $\Lmc=\ALCO$ which subsumes the case
  $\Lmc=\ALC$. The \emph{relativization} $C_{|A}$ of a concept $C$ to a concept name $A$ is defined by setting
  \begin{eqnarray*}
  	\top_{|A} & = & A\\
  	\bot_{|A} & = & \bot\\
   B_{|A} & = & B \sqcap A\\
   \{c\}_{|A} & = &\{c\} \sqcap A\\
   (\neg C)_{|A} & = & A \sqcap \neg (C_{|A})\\
   C \sqcap D)_{|A} & = & C_{|A} \sqcap D_{|A}\\
   (\exists R.C)_{|A} & = & A \sqcap \exists R.(A \sqcap C_{|A})
  \end{eqnarray*}
  The relativization of an inclusion $C \sqsubseteq D$ to $A$ is defined as
  $C_{|A} \sqsubseteq D_{|A}$. Observe that the relativization of an inclusion to a concept name $A$ is satisfied in $\Amf$ whenever $A^{\Amf}=\emptyset$. 
  Define the \emph{directed relativization}
  $\Omc^{A}$ of  $\Omc$ to a fresh concept name $A$ by
  relativizing all inclusions in $\Omc$ to $A$ and also adding $$
  \{c\} \sqsubseteq A, $$ for all $c\in \Sigma$, and $$ A \sqsubseteq
  \forall r.A $$ for all $r\in \Sigma$.
  Next define a database $\Dmc$ by taking fresh individual names $a$
  and $b$ used as the positive and negative example, a fresh concept
  name $D$, and a fresh role name $s$ and include in $\Dmc$: $A(b)$,
  $D(a)$, and $$ s(a,c), $$ for all individual names $c$ in
  $\Omc \cup \Omc'$. Next we take the directed relativization $(\Omc\cup \Omc')^{D'}$
  to a fresh concept name $D'$, but in this case instead of including
  the individual names in $\Omc\cup \Omc'$ in $D'$ we include $D'
  \sqsubseteq \forall s.D'$.

  Finally, we obtain $\Omc^{\ast}$ as the union of $\Omc^{A}$ and
  $(\Omc\cup \Omc')^{D'}$ and $$ D \sqcap E \sqsubseteq D', 
  $$ for all $E\in \text{atom}_{\Sigma}$. Let $\Kmc= (\Omc^{\ast},\Dmc)$. 

    \medskip \noindent\textit{Claim.} $\Omc \cup \Omc'$ is a conservative
  extension of $\Omc$ in $\mathcal{ALCO}$ iff $(\Kmc,\{a\},\{b\})$
  is not $\mathcal{ALCO}(\Sigma)$-separable, for both the projective
  and non-projective case.

  \medskip\noindent\textit{Proof of the Claim.} We consider the
  projective case. The non-projective case is similar and omitted.
  Consider an $\mathcal{ALCO}(\Sigma\cup
  \Sigma_{\text{help}})$-concept $C$, where $\Sigma_{\text{help}}$ is
  a set of fresh concept names.  We show the following equivalences:
  \begin{enumerate}[label=(\arabic*)]

    \item $C$ is satisfiable w.r.t.~$\Omc$ iff there exists a
      model $\Amf$ of $\Kmc$ such that $b^{\Amf}\in C^{\Amf}$.

    \item If $C$ is satisfiable
      w.r.t.~$\Omc\cup \Omc'$, then there exists a model $\Amf$ of $\Kmc$
      such that $a^{\Amf}\in C^{\Amf}$. 

    \item Let $E \in \text{atom}_{\Sigma}$. If there exists a model
      $\Amf$ of $\Kmc$ such that $a^{\Amf}\in (E \sqcap C)^{\Amf}$,
      then $E \sqcap C$ is satisfiable w.r.t.~$\Omc\cup \Omc'$.

  \end{enumerate}
  For (1), assume that $C$ is satisfiable w.r.t.~$\Omc$. Take a
  model $\Amf$ of $\Omc$ satisfying $C$ in $d$. We define a model
  $\Bmf$ of $\Kmc$ satisfying $C$ in $b^{\Amf}$: to define
  $\text{dom}(\Bmf)$, we add to $\text{dom}(\Amf)$ the individual $a$
  and all individuals $c\in \text{sig}(\Omc\cup \Omc')\setminus \Sigma$.
  Then we 
  set $b^{\Bmf}=d$ and interpret
  $a$ and the individuals in $\text{sig}(\Omc\cup \Omc') \setminus \Sigma$
  by themselves, interpret $A$ by the domain of $\Amf$, add the pairs
  $(a,c^{\Amf})$, $c\in \text{sig}(\Omc\cup \Omc')$ to $s^{\Bmf}$, and set
  $D^{\Bmf}=a$ and $D'^{\Bmf}=\emptyset$. Then $\Bmf$ is a model of
  $\Omc$ satisfying $C$ in $b^{\Bmf}$.	

  The converse direction of (1) is clear.

  For~(2), suppose that $C$ is satisfiable w.r.t.~$\Omc \cup \Omc'$.  Take a
  model $\Amf$ of $\Omc\cup \Omc'$ satisfying $C$ in $d$. We define a model
  $\Bmf$ of $\Kmc$ satisfying $C$ in $a^{\Bmf}$: take $\Amf$ and set
  $\text{dom}(\Bmf)= \text{dom}(\Amf)$. We interpret $b$ arbitrarily
  and set $A^{\Bmf}=D^{\Bmf}=D'^{\Bmf}= \text{dom}(\Bmf)$. Finally, we
  add the pairs  $(a^{\Amf},c^{\Amf})$, $c\in \text{sig}(\Omc\cup \Omc')$,
  to $s^{\Bmf}$.  Then $\Bmf$ is a model of $\Omc\cup \Omc'$ satisfying $C$
  in $a^{\Bmf}$.	

    For~(3), let $E\in \text{atom}_{\Sigma}$ and assume that $E \sqcap
    C$ is satisfied in a model $\Amf$ of $\Kmc$ at $a^{\Amf}$. Then
    $a^{\Amf}\in (D \sqcap E)^{\Amf}$ and therefore $a^{\Amf} \in
    D'^{\Amf}$ since $D \sqcap E \sqsubseteq D'\in \Omc\cup \Omc'$. Hence $E
    \sqcap C$ is satisfiable w.r.t.~$\Omc\cup \Omc'$ as it is satisfied in
    $\Amf$ w.r.t.~the directed relativization of $\Omc\cup \Omc'$ to $D'$.
    \medskip

    Now, we finish the proof of the Claim. Suppose first that
    $(\Kmc,\{a\},\{b\})$ is projectively
    $\mathcal{ALCO}(\Sigma)$-separable and let $C$ be a separating
    concept.  Then, there is a model \Amf of \Kmc such that $b\in
    (\neg C)^\Amf$. By Point~1, $\neg C$ is satisfiable w.r.t.\ $\Omc$.
    Moreover, there is no model $\Amf$ of \Kmc with $a\in (\neg
    C)^\Amf$. By Point~2, $\neg C$ is not satisfiable w.r.t.\
    $\Omc\cup \Omc'$. Hence, $\neg C$ is a witness concept for $\Omc,\Omc\cup \Omc'$.

    Conversely, assume that $\Omc\cup \Omc'$ is not a projective
    conservative extension of $\Omc$ in $\mathcal{ALCO}$ and let
    $C$ witness this. By our assumption on $\text{atom}_{\Sigma}$,
    there exists an $E\in \text{atom}_{\Sigma}$ such that $E \sqcap C$
    is also satisfiable w.r.t.~$\Omc$, but not satisfiable
    w.r.t.~$\Omc\cup \Omc'$. Thus, by Point~1, there exists a model $\Amf$
    of $\Kmc$ such that $b^{\Amf}\in (E\sqcap C)^{\Amf}$ and, by
    Point~3, there does not exist a model $\Amf$ of $\Kmc$ such that
    $a^{\Amf}\in (E\sqcap C)^{\Amf}$. Thus, $(\Kmc,\{a\},\{b\})$ is
    projectively $\mathcal{ALCO}(\Sigma)$-separable, namely by $\neg
    (E\sqcap C)$.

    \smallskip This finishes the proof of the Claim.

    \medskip
  The proof above is easily adapted for $\mathcal{ALCI}$ and
  $\mathcal{ALCIO}$. In fact, one only has to add to the directed
  relativization of $\Omc^{A}$ the inclusions $A \sqsubseteq
  \forall r^-.A$, for $r$ any role in $\Sigma$.  
\end{proof}

\subsubsection{Preliminaries for Tree Automata}

A \emph{tree} is a non-empty (and potentially infinite) set of words
$T \subseteq (\mathbb{N} \setminus 0)^*$ closed under prefixes.  A
node $w\in T$ is a \emph{successor} of $v\in T$ if $w=v\cdot i$ for
some $i\in\mathbb{N}$. Moreover, $w$ is an \emph{ancestor} of $v$ if
$w$ is a prefix of $v$. A tree is binary if every node has either zero
or two successors.  For an alphabet $\Theta$, a \emph{$\Theta$-labeled
tree} is a pair $(T,\tau)$ with $T$ a tree and $\tau:T \rightarrow
\Theta$ a node labeling function.

A {\em two-way alternating tree automaton (2ATA) over binary trees} is
a tuple $\Amc = (Q,\Theta,q_0,\delta,\Omega)$ where $Q$ is a finite
set of {\em states}, $\Theta$ is the finite {\em input alphabet},
$q_0\in Q$ is the {\em initial state}, $\delta$ is a {\em transition
function} as specified below, and $\Omega:Q\to \mathbb{N}$ is a {\em
priority function}. The transition function maps a state $q$ and some
input letter $\theta\in \Theta$ to a \emph{transition condition
$\delta(q,\theta)$} which is a positive Boolean formula over the truth
constants $\mn{true}$ and $\mn{false}$ and transitions of the form
$q$, $\langle-\rangle q$, $[-] q$, $\Diamond q$, $\Box q$ where $q \in
Q$.  Informally, the transition $q$ expresses that a copy of the
automaton is sent to the current node in state $q$, $\langle - \rangle
q$ means that a copy is sent in state $q$ to the predecessor node,
which is then required to exist, $[-] q$ means the same except that
the predecessor node is not required to exist, $\Diamond q$ means that
a copy is sent in state $q$ to some successor, and $\Box q$ that a
copy is sent in state $q$ to all successors. The semantics is defined
in terms of runs in the usual way~\cite{DBLP:conf/icalp/Vardi98}, see
below. We use $L(\Amc)$ to denote the set of all $\Theta$-labeled
binary trees accepted by \Amc.  The \emph{emptiness problem}, which
asks whether $L(\Amc)=\emptyset$ for a given 2ATA \Amc, can be decided
in time exponential in the number of states
of~\Amc~\cite{DBLP:conf/icalp/Vardi98}.

We make precise the semantics of 2ATAs. Let
$\Amc=(Q,\Theta,q_0,\delta,\Omega)$ be a 2ATA and $(T,L)$ a
$\Theta$-labeled tree. A {\em run for \Amc on $(T,L)$} is a $T\times
Q$-labeled tree $(T_r,r)$ such that:
\begin{itemize}

  \item $\varepsilon\in T_r$ and $r(\varepsilon)=(\varepsilon,q_0)$;

  \item For all $y\in T_r$ with $r(y)=(x,q)$ and $\delta(q,L(x))=\vp$,
    there is an assignment $v$ of truth values to the transitions in $\vp$
    such that $v$ satisfies $\vp$ and:
    \begin{itemize}

      \item if $v(p)=1$, then $r(y')=(x,p)$ for some successor
	$y'$ of $y$ in $T_r$;

      \item if $v(\langle - \rangle p)=1$, then $x \neq \varepsilon$
	and there is a successor $y'$ of $y$ in $T_r$ with
	$r(y')=(x\cdot -1,p)$;

      \item if $v([-] p)=1$, then $x=\varepsilon$ or there is a
	successor $y'$ of $y$ in $T_r$ such that 
	$r(y')=(x\cdot -1,p)$;

      \item if $v(\Diamond p)=1$, then there is a successor $x'$ of
	$x$ in $T$ and a successor $y'$ of $y$ in $T_r$ such that
	$r(y')=(x',p)$;

      \item if $v(\Box p)=1$, then for every successor $x'$ of
	$x$ in $T$, there is a successor $y'$ of $y$ in $T_r$ such that
	$r(y')=(x',p)$.

    \end{itemize}

\end{itemize}
Let $\gamma=i_0i_1\cdots$ be an infinite path in $T_r$ and denote, for
all $j\geq 0$, with $q_j$ the state such that $r(i_0\cdots
i_j)=(x,q_j)$. The path $\gamma$ is {\em accepting} if the largest
number $m$ such that $\Omega(q_j)=m$ for infinitely many $j$ is even.
A run $(T_r,r)$ is accepting, if all infinite paths in $T_r$ are
accepting. Finally, a tree is accepted if there is some accepting run
for it.

\medskip It is well-known that 2ATAs are closed under complementation,
intersection, and projection. 
\begin{lemma} \label{lem:closureproperties}
  Given 2ATAs $\Amc_1,\Amc_2$ over alphabet $\Theta$ and a mapping
  $h:\Theta\to\Theta'$, we can compute:
  \begin{itemize}

  \item in polynomial time a 2ATA $\overline{\Amc_1}$ such that
    $L(\overline{\Amc_1})=\overline{L(\Amc_1)}$ and the number of
    states of $\overline{\Amc_1}$ equals the number of states of
    $\Amc_1$;

  \item in polynomial time a 2ATA $\Amc$ such that
    $L(\Amc)=L(\Amc_1)\cap L(\Amc_2)$ and the number of states of
    $\Amc$ is $1+n_1+n_2$, $n_i$ the number of states of $\Amc_i$; 

  \item in exponential time a 2ATA $\Amc_h$ such that
    $L(\Amc_h)=\{(T,h(\tau))\mid (T,\tau) \in L(\Amc_1)\}$ and the
    number of states of $\Amc_h$ is exponential in the number of
    states of $\Amc_1$.

\end{itemize}
\end{lemma}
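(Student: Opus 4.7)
The plan is to treat the three closure properties separately, relying on classical constructions for two-way alternating parity tree automata and then verifying the stated state and time bounds.

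For complementation, I would use the standard dualization. Given $\Amc_1=(Q,\Theta,q_0,\delta,\Omega)$, define $\overline{\Amc_1}=(Q,\Theta,q_0,\overline\delta,\overline\Omega)$, where $\overline\delta(q,\theta)$ is obtained from $\delta(q,\theta)$ by simultaneously swapping $\land$ with $\lor$, $\mn{true}$ with $\mn{false}$, $\Diamond p$ with $\Box p$, and $\langle-\rangle p$ with $[-]p$, and where $\overline\Omega(q)=\Omega(q)+1$. Correctness is the memoryless determinacy of parity games that underlies acceptance of 2ATAs: swapping the roles of the two players in the acceptance game exactly corresponds to complementing the accepted language. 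The construction is obviously in polynomial time and preserves the number of states.

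For intersection, I would introduce a single fresh initial state $q_0$ and set $\delta(q_0,\theta) = \delta_1(q_0^1,\theta) \land \delta_2(q_0^2,\theta)$ for every $\theta \in \Theta$, where $q_0^i$ is the initial state of $\Amc_i$ and $\delta_i$ its transition function (taken over disjoint copies of the state sets). The priority of $q_0$ can be set to any odd value so that no accepting run loops there. A straightforward argument on runs shows that $L(\Amc)=L(\Amc_1)\cap L(\Amc_2)$, the total number of states is $1+n_1+n_2$, and the construction is polynomial.

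The projection construction is the main obstacle, and the reason for the exponential blowup. Given $\Amc_1$ over $\Theta$ and $h:\Theta \to \Theta'$, one cannot simply replace each label by its image, because the alternating automaton may explore many labels simultaneously on different branches of a run and projection requires that these choices be consistent on the actual underlying tree. The standard workaround is to first convert $\Amc_1$ into an equivalent nondeterministic parity tree automaton $\Amc_1^{\mn{nd}}$ using the Miyano--Hayashi style simulation adapted to 2ATAs on binary trees; this step is well known to require exponentially many states but to be computable in exponential time (see, e.g., the references already cited for 2ATAs). Since $\Amc_1^{\mn{nd}}$ is nondeterministic, projection is now a local operation on the transition function: define $\Amc_h$ from $\Amc_1^{\mn{nd}}$ by replacing every transition labeled $\theta$ with a transition labeled $h(\theta)$, taking the disjunction whenever several preimages $\theta$ collide on the same label $h(\theta)$. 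A run of $\Amc_h$ on $(T,h(\tau))$ then corresponds exactly to a preimage $(T,\tau')$ with $h\circ\tau' = h\circ\tau$ together with a run of $\Amc_1^{\mn{nd}}$ on it, giving $L(\Amc_h)=\{(T,h\circ\tau)\mid (T,\tau)\in L(\Amc_1)\}$. Finally, since a nondeterministic tree automaton can be viewed as a 2ATA of the same size, one recasts $\Amc_h$ as a 2ATA; the whole construction is in exponential time and yields an automaton whose state count is exponential in that of $\Amc_1$, matching the claimed bounds.
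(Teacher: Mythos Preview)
The paper does not actually supply a proof of this lemma: it is stated immediately after the sentence ``It is well-known that 2ATAs are closed under complementation, intersection, and projection,'' and no argument is given. So there is nothing in the paper to compare your proposal against beyond the implicit pointer to the literature (in particular the Vardi 1998 reference already cited for 2ATAs).

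Your sketch is essentially the standard one and is correct in outline. Complementation by dualizing the transition formulas and shifting priorities, and intersection via a fresh conjunctive initial state, are exactly the expected constructions and match the stated bounds. For projection your high-level strategy is also the right one: de-alternate first, then project locally on the resulting nondeterministic automaton. One small inaccuracy worth fixing: the name ``Miyano--Hayashi'' refers specifically to the breakpoint construction for alternating (co-)B\"uchi \emph{word} automata and does not directly apply to two-way alternating \emph{parity tree} automata. The relevant de-alternation here is Vardi's construction for 2ATAs (the very reference the paper cites for emptiness), which removes two-wayness and alternation simultaneously with a single-exponential blowup in the number of states; after that, projection is indeed a local rewriting of the transition function as you describe. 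With that correction, your argument matches the folklore proof the paper is invoking.
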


\subsubsection{Encoding of \Lmc-Forest Models}

Let $\Lmc\in \{\ALCI,\ALCO\}$ and fix an \Lmc-KB $\Kmc=(\Omc,\Dmc)$.
In order to work with tree automata, we need to encode \Lmc-forest
models of \Kmc of finite \Lmc-outdegree as input to the tree automata.
Since 2ATAs run over binary trees, we need to appropriately encode the
arbitrary outdegree, which is done similar to~\cite{JLZ-KR20}.

More precisely, we use the alphabet $\Theta$ defined by
\[\Theta = \{\circ\}\cup \text{Rol}(\Kmc)\times
2^{\text{ind}(\Dmc)\cup (\text{sig}(\Kmc)\cap \mn{N_C})}\times
2^\Fmc,\]
where $\text{Rol}(\Kmc)$ denotes the set of all role names that occur
in \Kmc and their inverses, and $\Fmc$ is the set of all pairs $(r,a)$
with $r$ a role name that in $\text{sig}(\Kmc)$ and $a\in
\text{ind}(\Dmc)$.  Intuitively, a node $w\in T$ with
$\tau(w)=(R,M,F)$ encodes an element that satisfies precisely the
concepts in $M$; moreover, $F$ describes its connections to 
elements in $\text{ind}(\Dmc)$ and $R$ is the ``incoming role''.
The symbol `$\circ$' is a label for dummy nodes that
we need for encoding arbitrary finite outdegree into binary trees: we simply
introduce as many intermediate $\circ$-labeled nodes as needed to
achieve the required outdegree at each node. 

More formally, let $(T,\tau)$ be a $\Theta$-labeled tree. 
For each $w \in T$ with
$\tau(w) \neq \circ$, let $w^\uparrow$ denote the unique ancestor $w'$
of $w$ in $T$
(if existing) such that $\tau(w') \neq
\circ$ and $\tau(w'')=\circ$ for all $w''$ between $w'$ and $w$.
We call $(T,\tau)$ \emph{well-formed} if 
\begin{enumerate}

  \item for every
    $a\in\text{ind}(\Dmc)$, there is a unique element $w_{a}
    \in T$ such that $\tau(w_a)=(R,M,F)$ for some $R,F$ and $a\in M$;

  \item for every $w\in T$ with $\tau(w)\neq \circ$, either $w=w_a$,
    for some $a$, and all ancestors of $w$ are labeled with $\circ$,
    or $w$ has an ancestor $w_a$, for some $a$.

\end{enumerate}
%
%
A well-formed $\Theta$-labeled tree $(T,\tau)$ gives rise to a
structure $\Amf_\tau$ with $\text{dom}(\Amf_\tau) = \{ w \in T
\mid \tau(w)\neq \circ\}$ as follows: 
$$ \begin{array}{r@{\;}c@{\;}l}
  a^{\Amf_\tau} &=& w_{a} \\[1mm]
  A^{\Amf_\tau} &=& \{ w\in T \mid \tau(w)=(S,M,F) \text{ and } A \in
M \} \\[1mm]
  r^{\Amf_\tau} &=& \{ (w^\uparrow,w) \mid
  \tau(w)=(r,M,F)\text{ and }
    w^\uparrow\text{ defined}\} \cup{} \\[1mm] 
    && \{ (w,w^\uparrow) \mid \tau(w)=(r^-,M,F)\text{ and }w^\uparrow\text{ defined}\} \cup{} \\[1mm] 
    && \{ (w,w_a) \mid \tau(w)=(S,M,F)\text{ and } (r,a)\in
  F\} 
    %
  %
\end{array} $$
for all $a \in \mn{N_I}$, $A \in \mn{N_C}$, and $r \in \mn{N_R}$ (and
for $a\in\mn{N_I}\setminus \text{ind}(\Dmc)$, $w_a$ denotes an
arbitrary element of $\Amf$).
 
Conversely, every finite outdegree forest model \Amf of \Kmc can be encoded
(up to isomorphism) as well-formed $\Theta$-labeled tree. 
To show this, we
start with a not-necessarily binary well-formed $\Theta$-labeled tree
$(T,\tau)$ that encodes $\Amf$; $(T,\tau)$ can easily be made binary
by introducing intermediate $\circ$-labeled nodes. Let $D=\{a^\Amf\mid
  a\in \text{ind}(\Dmc)\}$ and associate sets $M_d,F_d$ to every
  element $d\in \text{dom}(\Amf)$ by taking
$$ \begin{array}{r@{\;}c@{\;}l}
  M_d &=& \{A\in\text{sig}(\Kmc)\mid d\in A^\Amf\}\cup \{a\in
    \text{ind}(\Dmc) \mid d=a^\Amf\} \\[1mm]
  F_d &=& \{(r,e)\mid (d,e)\in r^\Amf, e\in D\} 
\end{array}$$
To start the construction of $(T,\tau)$, we set $\tau(\varepsilon)=\circ$ and
add a successor $w_d$ of $\varepsilon$ for every $d\in D$ and label it
with $\tau(w_d)=(S,M_d,F_d)$ for an arbitrary role name $S$.

For the rest of the construction, let $\Amf_d,d\in D$ be the \Lmc-trees
which exist since $\Amf$ is \Lmc-forest model of \Kmc.  Recall that
$\Amf_d$ is rooted at $d$. Now $(T,\tau)$ is obtained by exhaustively applying the
following rule:
\begin{itemize}
  \item[$(\ast)$] If $w_e$ is defined, for an
element $e$ of some $\Amf_d$ and $f$ is a successor of $e$ in $\Amf_d$
with $w_f$ undefined, add a fresh successor $w_f$ of $w_e$ to $T$ and
set $\tau(w_f)=(R,M_f,F_f)$ where $R$ is the unique role such
that $(e,f)\in R^\Amf$. 
\end{itemize}

It is not difficult to construct 2ATAs that accept precisely the
forest models of \Kmc, see for
example~\cite{JLMSW17} for full details of a
similar automata construction.

\begin{lemma}\label{lem:automata1}
  For $\Lmc\in\{\ALC,\ALCI,\ALCO\}$ and \Lmc-KBs \Kmc,
  we can construct in time polynomial in $||\Kmc||$
  2ATAs $\Amc_0,\Amc_\Kmc$ such that: 
  \begin{enumerate}

    \item $\Amc_0$ accepts precisely the well-formed $\Theta$-labeled
      trees;

    \item $\Amc_\Kmc$ accepts a well-formed $\Theta$-labeled tree
      $(T,\tau)$ iff $\Amf_\tau$ is a finite outdegree forest model of
      \Kmc.

  \end{enumerate}
\end{lemma}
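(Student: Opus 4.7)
The construction follows the standard DL-automata paradigm; I would outline the main ingredients, keeping the state space polynomial in $||\Kmc||$ throughout. For $\Amc_0$ I plan to decompose well-formedness into three checks: (a)~each label is either $\circ$ or of the prescribed shape $(R,M,F)$; (b)~for every $a\in\text{ind}(\Dmc)$ there is exactly one node with $a$ in its label; (c)~every non-$\circ$ node either is an anchor (with all ancestors labeled $\circ$) or descends from some anchor. Condition~(a) is purely local on a single label. For~(b), I would use for each $a$ a ``locate'' state launched at the root that descends until finding $a$, together with a ``forbid $a$'' state universally propagated to every other node that rejects on any further occurrence of $a$. For~(c), I would introduce a state $q_{\mathsf{above}}$ that propagates from the root strictly along $\circ$-labeled nodes and a state $q_{\mathsf{below}}$ that is universally propagated into the subtree below each anchor, requiring every non-$\circ$ node to be visited by at least one of them. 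The total state count is $O(|\text{ind}(\Dmc)|)$.

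For $\Amc_\Kmc$ I plan to use the tableau-style approach with one state $q_C$ per $C\in \text{sub}(\Kmc)$ (with $\text{sub}(\Kmc)$ closed under single negation and subconcepts) meaning ``$C$ holds at the current node''. The transition function is defined inductively: $\delta(q_A, (R,M,F))$ is $\mn{true}$ iff $A\in M$; $\delta(q_{C\sqcap D}, \cdot) = q_C \wedge q_D$; and $\delta(q_{\exists r.C}, (R,M,F))$ is a disjunction over (i)~``some real tree-successor, reached through a chain of $\circ$-nodes, has incoming role $r$ and satisfies $q_C$'', encoded with an auxiliary descend-through-dummies state, and (ii)~``$(r,a)\in F$ for some $a$ with $q^a_C$ holding''; negations and universal cases are handled dually using the closure of $\text{sub}(\Kmc)$ under single negation. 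To evaluate ``$C$ holds at $w_a$'', I introduce navigation states $q^a_C$ that ascend to the root via $[-]$ transitions and then descend to the unique $w_a$ guaranteed by $\Amc_0$, applying $q_C$ upon arrival. Every CI $C \sqsubseteq D \in \Omc$ is enforced by a globally propagated state $q_{\mathsf{CI}}$ that at every non-$\circ$ node demands $q_{\neg C} \vee q_D$ for each such CI and continues into every successor. Facts in $\Dmc$ are enforced from the initial state: $A(a)$ triggers $q^a_A$, and $r(a,b)$ requires $(r,b)\in F$ at $w_a$, again verified by navigation. For $\ALCI$, the only change is the additional handling of $\exists r^-.C$ via $\langle-\rangle$ transitions together with the incoming-role information carried in each label; for $\ALCO$, the clause $\delta(q_{\{a\}}, (R,M,F)) = \mn{true}$ iff $a\in M$ exploits the uniqueness of $w_a$ already enforced by $\Amc_0$.

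The total state count for $\Amc_\Kmc$ is $O(|\text{sub}(\Kmc)|\cdot |\text{ind}(\Dmc)|)$, and transitions are computable in polynomial time from $\Kmc$, yielding the claimed polynomial construction. The main subtlety, and what I expect to be the only delicate part, lies in the $F$-component: multiple branches of an alternating run can visit the same anchor $w_a$ and must reason about it consistently. This is handled automatically by letting $q^a_C$ check concept satisfaction \emph{directly} through the same inductive states at $w_a$, rather than guessing a type at $w_a$ that different branches might guess differently; consistency is then enforced by the run semantics. A routine verification that a well-formed tree $(T,\tau)$ is accepted by $\Amc_\Kmc$ iff $\Amf_\tau$ is a finite-outdegree forest model of $\Kmc$ then completes the argument.
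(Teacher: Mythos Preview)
Your proposal is correct and follows precisely the standard DL-automata approach that the paper has in mind; in fact the paper gives no proof of this lemma at all, stating only that ``it is not difficult to construct 2ATAs that accept precisely the forest models of \Kmc'' and referring to~\cite{JLMSW17} for full details of an analogous construction. Your outline is thus considerably more explicit than what the paper provides, and the ingredients you describe (per-subconcept states, navigation states to anchors, descent through $\circ$-dummies, local label checks for well-formedness) are exactly those used in the cited literature.
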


\subsection{\ALCI and \ALC}

We concentrate on \ALCI, the case of \ALC is similar. 

\lemequivalci*
 
\begin{proof}
  \ 
  ``if''. Take a forest model $\Amf$ and some $a\in P$ with $\Dmc_{\text{con}(a)},a
  \rightarrow^{\Sigma}_{c} \Amf,b^{\Amf}$.
  Moreover, fix a $\Sigma$-homomorphism $h$ and
  \Kmc-types $t_{d}$, $d\in
  \text{ind}(\Dmc)$ that witness that.
  Take models $\Bmf_{d}$ of $\Omc$ such that
  $\Bmf_{d},d \sim_{\ALCI,\Sigma} \Amf,h(d)$. We may assume that the
  $\Bmf_{d}$ are tree-shaped with root $d$ and that the bisimulations
  are functions $f_{d}$.  Now attach to every $d\in \text{ind}(\Dmc)$
  the model $\Bmf_{d}$ and obtain $\Bmf$ by adding $(d,d')$ to
  $r^{\Bmf}$ if $r(d,d')\in \Dmc$. Then \[f= \bigcup_{d\in
  \text{ind}(\Dmc)}f_{d}\] is a functional
  $\ALCI(\Sigma)$-bisimulation between $\Bmf$ and $\Amf$.

  ``only if''. Take a forest model $\Amf$, some $a\in P$, and a model
  $\Bmf$ of \Kmc such that $\Bmf,a^{\Bmf}\sim_{\ALCI,\Sigma}^{f}
  \Amf,b^{\Amf}$.  Let $f$ be a functional
  $\ALCI(\Sigma)$-bisimulation witnessing that. The restriction
  $h$ of $f$ to $\text{ind}(\Dmc)$ and types $t_d=\text{tp}_\Kmc(\Bmf,d^\Bmf)$,
  for all $d\in\text{ind}(\Dmc)$ witness $\Dmc_{\text{con}(a)},a
  \rightarrow^{\Sigma}_{c} \Amf,b^{\Amf}$.
\end{proof}

\begin{lemma} \label{lem:automata-alci} 
  Let $\Lmc\in\{\ALC,\ALCI\}$.  For each $a\in P$, there is a 2ATA
  $\Amc_a$ such that $\Amc_a$ accepts a well-formed labeled tree
  $(T,\tau)$ iff $\Dmc_{\text{con}(a)},a \rightarrow^{\Sigma}_{c}
  \Amf,b^{\Amf}$.  Moreover, $\Amc_a$ can be constructed in double
  exponential time in $||\Kmc||$ and has exponentially many states.
\end{lemma}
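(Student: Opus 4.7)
The plan is to construct $\Amc_a$ so that, on input $(T,\tau)$, it decides the criterion from Lemma~\ref{lem:equiv}, namely $\Dmc_{\text{con}(a)}, a \rightarrow^{\Sigma}_{c} \Amf_\tau, b^{\Amf_\tau}$. The construction has two preprocessing steps and an automaton-design step. For preprocessing, I would first enumerate all tuples $\bar t=(t_d)_{d\in\text{ind}(\Dmc_{\text{con}(a)})}$ of \Kmc-types and keep exactly those for which Item~(ii) of the $\rightarrow^{\Sigma}_{c}$ definition holds, i.e.\ $(\Omc, \Dmc\cup \{C(d)\mid C\in t_d, d\in\text{ind}(\Dmc_{\text{con}(a)})\})$ is satisfiable; there are exponentially many tuples and each check is an \ALCI-KB satisfiability test, so this step takes double-exponential time. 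Second, I would tabulate the type-transition relation $t\to_R t'$ saying that some model of $\Omc$ contains an $R$-edge from a type-$t$ element to a type-$t'$ element, by polynomially many \ALCI satisfiability tests.

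Given these tables, the automaton $\Amc_a$ nondeterministically fixes a surviving tuple $\bar t$ in its initial state, walks from the tree root to the unique node $w_b$ whose label contains $b$, and then launches two parallel tasks via alternation. The bisimulation task uses states $q_\mathrm{bisim}(t)$, one per $\Omc$-consistent \Kmc-type, and at a node $w$ with label $(R_w,M,F)$ verifies the atomic condition $A\in t\Leftrightarrow A\in M$ for $A\in\Sigma\cap\NC$; existentially matches each $\exists R.C\in t$ with $R\in\Sigma$ by choosing a type $t'\ni C$ with $t\to_R t'$ together with an $R$-neighbour of $w$ (a child, the parent via an inverse, or a database edge recorded in $F$) and recursing in $q_\mathrm{bisim}(t')$; and universally ensures every neighbour $w'$ of $w$ via a role $R\in\Sigma$ is matched by some type $t'$ with $t\to_R t'$. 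This implements Item~(i) locally. The homomorphism task uses matching states $q_\mathrm{match}(d,\bar t)$, one per individual $d$ and surviving tuple $\bar t$: at a node $w$ the automaton behaves as $q_\mathrm{bisim}(t_d)$, additionally forces $w=w_c$ when $d=c\in\Sigma\cap\NI$, checks $w\in A^{\Amf_\tau}$ for every $A(d)\in\Dmc$ with $A\in\Sigma$, and for every atom $r(d,d')\in\Dmc$ with $r\in\Sigma$ existentially picks an $r$-neighbour $w'$ of $w$ and recurses in $q_\mathrm{match}(d',\bar t)$ at $w'$. The initial state ultimately enters $q_\mathrm{match}(a,\bar t)$ at $w_b$.

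The main obstacle is to guarantee that the independent existential choices of the matching task really witness a \emph{functional} $\Sigma$-homomorphism $h$, since two atoms $r(d_1,d')$ and $s(d_2,d')$ incident to $d'$ could otherwise lead to different images of $d'$. I would address this by first constructing an auxiliary 2ATA $\Amc_a'$ over the extended alphabet $\Theta'=\Theta\times 2^{\text{ind}(\Dmc_{\text{con}(a)})}$ whose extra coordinate marks each tree node with the set of individuals for which it is the image, then projecting to $\Theta$ via Lemma~\ref{lem:closureproperties}. A uniqueness check on the markings (a polynomial-state alternating walk) forces $h$ to be a function, while the matching states then only need to verify that the existentially chosen neighbour $w'$ is indeed the node marked with $d'$. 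The soundness direction is then routine: given a genuine functional $h$, the existential choices in $\Amc_a'$ simply follow $h$ and the bisimulation subautomaton accepts at each $h(d)$ by Item~(i). Tracking state counts carefully, the annotation-dependent part of $\Amc_a'$ is polynomial while the bisimulation subautomaton and the initial tuple guess contribute the exponential part; after projection, $\Amc_a$ still has exponentially many states, and combined with the double-exponential preprocessing of valid tuples the full construction fits within the claimed double-exponential time and exponential-state bounds.
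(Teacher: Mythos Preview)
Your approach has a genuine gap in the state-count analysis. You propose to build an auxiliary automaton $\Amc_a'$ over the extended alphabet $\Theta\times 2^{\text{ind}(\Dmc_{\text{con}(a)})}$ and then obtain $\Amc_a$ by projection via Lemma~\ref{lem:closureproperties}. But that lemma states that projection of a 2ATA incurs an \emph{exponential} blowup in the number of states. Since $\Amc_a'$ already has exponentially many states (the bisimulation states $q_{\mathrm{bisim}}(t)$ alone account for that), the projected automaton $\Amc_a$ would have \emph{doubly} exponentially many states, not exponentially many as the lemma requires. Your informal remark that ``the annotation-dependent part of $\Amc_a'$ is polynomial'' does not help: projection acts on the whole automaton, and there is no general way to separate the blowup along such a decomposition.

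This is not an accident: the paper uses exactly your projection idea for \ALCO (Lemma~\ref{lem:automata-alco}) and explicitly accepts the extra exponent there, which is why the \ALCO upper bound is \ThreeExpTime. For \ALCI the paper deliberately avoids projection. Instead, it guesses in the state a partition $\Dmc_0,\Dmc_1,\dots,\Dmc_m$ of $\Dmc_{\text{con}(a)}$ together with a mapping $h$ of the ``interface'' individuals in $\Dmc_0$ to $\text{ind}(\Dmc)$, and then verifies in a single top-down traversal of the input tree that each piece $\Dmc_i$ can be homomorphically embedded below the designated root $h(a_i)^{\Amf_\tau}$, carrying in the state the \emph{set} of individuals from $\Dmc_i$ mapped to the currently visited node. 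This keeps the number of states single-exponential (the dominant contribution is the type tuple and the subset of individuals), and crucially avoids any projection step. The bisimulation check for Item~(i) is then spawned at each visited node exactly as you describe. If you want to salvage your construction, you need to replace the projection-based functionality check by something that the automaton can maintain directly in its (exponential-size) state; the partition-plus-top-down-traversal trick is one way to do this.
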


\begin{proof}
  \ We sketch the construction of the automata $\Amc_a$, $a\in P$. Let
  $(T,\tau)$ be a well-formed input tree.
  As the first step, $\Amc_a$ non-deterministically guesses the following: 
  \begin{itemize}

    \item types $t_d$, $d\in\text{ind}(\Dmc_{\text{con}(a)})$, such
      that $(\Omc,\Dmc')$ is satisfiable where  $\Dmc'=\Dmc\cup \{
	C(d) \mid C \in t_d, \ d\in
	\text{ind}(\Dmc_{\text{con}(a)})\}$;

    \item a partition $\Dmc_0,\Dmc_1,\ldots,\Dmc_m$ of
      $\Dmc_{\text{con}(a)}$ such that
      $\text{ind}(\Dmc_i)\cap\text{ind}(\Dmc_j) = \emptyset$ for
      $1\leq i<j\leq m$;
	
    \item a mapping $h$ from $\text{ind}(\Dmc_0)$ to
      $\text{ind}(\Dmc)$ such that $h(a)=b$ and there are
      $a_1,\dots,a_m$ with $h(c)=a_i$
      for all $c \in \text{ind}(\Dmc_0 \cap \Dmc_i)$ and
      $1 \leq i \leq m$.


  \end{itemize}
  The entire guess is stored in the state of the automaton. Note that
  the first item above ensures that Item~(ii) from the definition of
  $\Dmc_{\text{con}(a)},a \rightarrow^{\Sigma}_{c}
  \Amf_\tau,b^{\Amf_\tau}$ is satisfied for the guessed types
  $t_d$. Moreover, elements $d^\Bmf$ for $d\in\text{ind}(\Dmc_0)$ are intuitively mapped to
  $h(d)^\Amf$ while elements $d^\Bmf$ for
  $d\in\text{ind}(\Dmc_i)\setminus\text{ind}(\Dmc_0)$ are mapped to
  the trees below $a_i$.

  After making its guess, the automaton verifies first $h$ is a
  $\Sigma$-homomorphism from $\Dmc_0$ to $\Amf_\tau$ by sending out
  copies for all facts in $\Dmc_0$ to the respective elements in
  $\Amf_\tau$. Then, it verifies that  this homomorphism can be
  extended to a homomorphism from $\Dmc_{\text{con}(a)}$ to
  $\Amf_\tau$ that satisfies Item~(i) from the definition of
  $\Dmc_{\text{con}(a)},a \rightarrow^{\Sigma}_{c}
  \Amf_\tau,b^{\Amf_\tau}$. To this end, it does a top-down traversal
  of $\Amf_\tau$ checking that each $\Dmc_i$ can be homomorphically
  mapped to the subtree of $\Amf_\tau$ below $h(a_i)^{\Amf_\tau}$. During the
  traversal, the automaton memorizes in its state the set of
  individuals from $\Dmc_i$ that are mapped to the currently visited
  element.

The automaton additionally makes sure that Item~(i) from the definition
of
$\Dmc_{\text{con}(a)},a \rightarrow^{\Sigma}_{c}
\Amf_\tau,b^{\Amf_\tau}$ is satisfied, in the following way. During
the top-down traversal, it spawns copies of itself to verify that,
whenever it has decided to map a
$d \in \text{ind}(\Dmc_{\text{con}(a)})$ to the current element, then
there is a tree-shaped model $\Bmc_d$ of \Omc with
$\text{tp}_{\Kmc}(\Bmf_{d},d) = t_{d}$ and a bisimulation that
witnesses $\Bmc_d,d \sim_{\ALCI,\Sigma} \Amf_\tau,c$. This is done by
`virtually' traversing $\Bmc_d$ elements-by-element, storing at each
moment only the type of the current element in a state. This is
possible because $\Bmf_d$ is tree-shaped. At the beginning, the
automaton is at an element of $\Bmc_d$ of type $t_d$ and knows that
the bisimulation maps this element to the node of $\Amf_\tau$
currently visited by the automaton. It then does two things to verify
the two main conditions of bisimulations.  First, it transitions to
every neighbor of the node of $\Amf_\tau$ currently visited, both
upwards and downwards, and carries out in its state the corresponding
transition in $\Bmc_d$, in effect guessing a new type. Second, it
considers the current type of $\Bmc_d$ and guesses successor types
that satisfy the existential restrictions in it. For every required
successor type,  it then guesses a neighbor of the currently visited
node in $\Amf_\tau$ to which the successor is mapped. The two steps
are alternated, exploiting the alternation capabilities of the
automaton. Some extra bookkeeping in states is needed for the root
node of the input tree as it represents more than one element
of $\Amf_\tau$.


It can be verified that only exponentially many states are required
and that the transition function can be computed in double exponential
time in $||\Kmc||$. 
\end{proof}

We can now finish the proof of the upper bound in
Theorem~\ref{thm:complexity-alci}. By
Lemmas~\ref{lem:automata1},~\ref{lem:equiv},
and~\ref{lem:automata-alci} and Theorem~\ref{thm:L-modeltheory0},
$(\Kmc,P,\{b\})$ is projectively $\Lmc(\Sigma)$-separable iff
\[L(\Amc_0)\cap L(\Amc_\Kmc)\cap \textstyle\bigcap_{a\in
P}\overline{L(\Amc_a)}\neq\emptyset\] where $\overline{L(\Amc_a)}$
denotes the complement of $L(\Amc_a)$. By Lemmas~\ref{lem:automata1}
and \ref{lem:automata-alci} all these automata can be constructed in
double exponential time and their number of states is single
exponential in $||\Kmc||$ for $\Lmc\in\{\ALC,\ALCI\}$.  By
Lemma~\ref{lem:closureproperties}, we can compute in polynomial time
an automaton that accepts precisely the language on the left-hand side
of the above inequality. It remains to recall that non-emptiness of
2ATAs can be decided in time exponential in the number of states.

\subsection{Upper Bound for \ALCO}

We give here only the upper bound for
Theorem~\ref{thm:complexity-alco}. The lower bound is proved for
conservative extensions in Section~\ref{sec:lower-alco}.

We use the same definition of \Kmc-types as in
\ALCI except that we assume without loss of generality that $\{c\}\in
\text{sub}(\Kmc)$, for all $c\in\text{ind}(\Dmc)$. Denote with
$\text{TP}(\Kmc)$ the set of all types. 

We work with $(\Theta\times\Theta')$-labeled trees for 
\[\Theta' = \{\circ\} \cup 2^{\text{ind}(\Dmc)}\]
A $(\Theta\times\Theta')$-labeled tree $(T,\tau_1,\tau_2)$ is
\emph{well-formed} if $(T,\tau_1)$ is well-formed and for each
$c\in\text{ind}(\Dmc)$, there is exactly one $w\in T$ with $c\in
\tau_2(w)$; we denote this with $v_c$.

\begin{lemma} \label{lem:alco-wellformed}
  There is a 2ATA $\Amc_0'$ which accepts precisely the well-formed
  $(\Theta\times\Theta')$-labeled trees. 
\end{lemma}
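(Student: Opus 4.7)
The plan is to build $\Amc_0'$ as the intersection of two kinds of automata: one that verifies well-formedness of the $\tau_1$-component (reusing $\Amc_0$ from Lemma~\ref{lem:automata1}), and, for each $c\in\text{ind}(\Dmc)$, one that checks that exactly one node of the input has $c\in\tau_2(w)$. The first part is essentially free: given $\Amc_0=(Q,\Theta,q_0,\delta,\Omega)$, define $\tilde\Amc_0$ over the product alphabet $\Theta\times\Theta'$ by setting $\tilde\delta(q,(\theta_1,\theta_2))=\delta(q,\theta_1)$ and keeping $Q,q_0,\Omega$ unchanged. Clearly $\tilde\Amc_0$ accepts $(T,\tau_1,\tau_2)$ iff $\Amc_0$ accepts $(T,\tau_1)$, and the number of states is the same.

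For the uniqueness part, I would construct for each $c\in\text{ind}(\Dmc)$ a 2ATA $\Amc^c$ with two states $q_0^c,q_1^c$ and initial state $q_1^c$. State $q_0^c$ is intended to assert ``no node of the current subtree carries $c$ in $\tau_2$'' and state $q_1^c$ to assert ``exactly one node of the current subtree carries $c$''. The transitions are
\[
\delta(q_0^c,(\theta_1,\theta_2))=
\begin{cases}
\mn{false} & \text{if } c\in\theta_2,\\
\Box q_0^c & \text{otherwise,}
\end{cases}
\]
\[
\delta(q_1^c,(\theta_1,\theta_2))=
\begin{cases}
\Box q_0^c & \text{if } c\in\theta_2,\\
\Diamond q_0^c\wedge \Diamond q_1^c\wedge \Box(q_0^c\vee q_1^c) & \text{otherwise.}
\end{cases}
\]
Priorities are $\Omega(q_0^c)=0$ and $\Omega(q_1^c)=1$, so that infinite runs staying in $q_1^c$ are rejecting (forcing the unique $c$ to be found in finite depth) while infinite runs eventually remaining in $q_0^c$ (a safety condition) are accepting. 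The overall automaton is $\Amc_0'=\tilde\Amc_0\cap\bigcap_{c\in\text{ind}(\Dmc)}\Amc^c$, which by Lemma~\ref{lem:closureproperties} can be built in polynomial time and has $|Q|+2|\text{ind}(\Dmc)|+O(|\text{ind}(\Dmc)|)$ states.

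The main point to verify is the correctness of $\Amc^c$ on binary trees (every node has $0$ or $2$ successors). The subtle step is that the formula $\Diamond q_0^c\wedge\Diamond q_1^c$ for the case $c\notin\theta_2$ works correctly only because $q_0^c$ and $q_1^c$ are semantically incompatible: any node assigned both states must satisfy the constraints of both, which rules out a single child carrying both tokens and thus forces the two copies to go to distinct children. Combined with the priority assignment, a straightforward run-tree argument shows that $\Amc^c$ accepts precisely those trees in which the set $\{w\mid c\in\tau_2(w)\}$ has cardinality one: if there is a unique $c$-node, the $q_1^c$-branch of the run follows the (finite) path to it and then transitions into the safety state $q_0^c$ below; if there are zero or at least two $c$-nodes, every run is rejected, either because the $q_1^c$-branch never terminates or because some $q_0^c$-branch encounters a further occurrence of $c$. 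No deep technical obstacle arises; the only care needed is in choosing the priorities so that $q_0^c$ admits infinite accepting runs while $q_1^c$ does not.
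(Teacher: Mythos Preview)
The paper states this lemma without proof, treating it as routine, so there is no proof to compare against. Your decomposition into $\tilde\Amc_0$ plus one uniqueness automaton $\Amc^c$ per individual is exactly the natural construction, and your correctness argument for $\Amc^c$ via the semantic incompatibility of $q_0^c$ and $q_1^c$ together with the priority choice is sound.

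One small technical point: the transition $\Box(q_0^c\vee q_1^c)$ is not well-formed in the paper's 2ATA syntax, where only atoms of the form $\Box q$ for a single state $q$ are allowed (see the definition of transition conditions in the preliminaries). This is harmless for two reasons. First, you can encode it by adding a fresh state $q_{01}^c$ with the stay-transition $\delta(q_{01}^c,\cdot)=q_0^c\vee q_1^c$ and priority~$0$, and then use $\Box q_{01}^c$. Second, and more to the point, the $\Box$-conjunct is actually redundant: in a binary tree every non-leaf has exactly two successors, and your own incompatibility argument already shows that in any accepting run the copies spawned by $\Diamond q_0^c$ and $\Diamond q_1^c$ must land on \emph{distinct} children, hence both children are constrained. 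So $\Diamond q_0^c\wedge\Diamond q_1^c$ alone suffices for the ``otherwise'' case. Either fix keeps the state count polynomial and the overall argument intact.
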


\begin{lemma} \label{lem:automata-alco}
  For every $a\in P$, there is a 2ATA $\Amc_a'$ which accepts a
  well-formed $(\Theta\times\Theta')$-labeled tree $(T,\tau_1,\tau_2)$
  iff there is a forest model $\Bmf$ of \Kmc and a
  function $f:\text{dom}(\Bmf)\to \text{dom}(\Amf_{\tau_1})$ such that
  \begin{enumerate}[label=(\alph*)]

    \item $f$ witnesses $\Bmf,a^\Bmf\sim_{\ALCO,\Sigma}^f
      \Amf_{\tau_1},b^{\Amf_{\tau_1}}$;

    \item for every $c\in \text{ind}(\Dmc)$, we have $f(c^\Bmf)=v_c$. 

  \end{enumerate}
  Moreover, $\Amc_a'$ can be
  constructed in exponential time in $||\Kmc||$ and has at most
  exponentially many states.
\end{lemma}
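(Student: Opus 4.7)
The plan is to have $\Amc_a'$ virtually build a forest model $\Bmf$ of $\Kmc$ together with a functional $\ALCO(\Sigma)$-bisimulation $f:\Bmf\to\Amf_{\tau_1}$ witnessing (a) and (b). The second component $\tau_2$ of the input alphabet pins down the intended image under $f$ of every interpreted individual name of $\Bmf$, thereby removing the non-locality that functional bisimulations otherwise create when encoded as alternating tree automata. By the \ALCO-instance of Lemma~\ref{lem:forestmodelcompleteness} it suffices to search for $\Bmf$ that is itself a finite-outdegree forest model of $\Kmc$, a setting amenable to 2ATA techniques.

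The central nondeterministic guess, made in the initial state, is a ``named-core'' description of $\Bmf$: a family $\pi=(t_c)_{c\in\text{ind}(\Dmc)}$ of $\Kmc$-types together with a set $R$ of role facts $r(c,c')$ on $\text{ind}(\Dmc)$, fixing the types and the mutual connections of the $c^\Bmf$. There are $2^{O(||\Kmc||^2)}$ such guesses, which accounts for the sole exponential blowup in the state count. At state-construction time $\Amc_a'$ discards any $(\pi,R)$ for which $\Dmc_\pi=\Dmc\cup\{C(c)\mid C\in t_c\}\cup R$ is not $\Omc$-satisfiable (an \ALCO-satisfiability test per guess). Having fixed $(\pi,R)$, the automaton sends a copy for each $c\in\text{ind}(\Dmc)$ to locate $v_c$ (by a top-down 2ATA search using $\tau_2$) and to verify $c^\Bmf\sim_{\ALCO,\Sigma}^f v_c$ locally: $\Sigma$-concept names, nominals, and the $R$-facts at $c$ must agree with $\tau_1(v_c)$ and with the edge structure around $v_c$, and every $\Sigma$-edge from $v_c$ to a marked successor $v_{c'}$ must correspond to a fact $r(c,c')\in R$, and symmetrically.

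For unmarked successors and for existentials in the guessed types, the automaton invokes a tree-processing subprocedure whose state carries only the current $\Kmc$-type $t'$ of the simulated $\Bmf$-element. For each existential in $t'$ it either descends to some $r$-successor of the current input node (if that successor is marked $v_{c'}$, consistency with $t_{c'}$ is then checked), or it guesses that the $\Bmf$-successor is a named individual $c'$ and jumps to $v_{c'}$; back conditions are handled symmetrically at each visited input node. The crucial consistency device is exactly as the paper sketches: whenever the current state entails that the simulated $\Bmf$-element equals $c^\Bmf$ (for example because $\{c\}\in t'$), the automaton insists that the input node currently visited is $v_c$. Because $\pi$ and $\tau_2$ are both fixed throughout a run, this turns global functionality of $f$ into a collection of purely local checks.

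The main obstacle will be to bundle these checks into a transition function that is correct and succinct while forcing all spawned copies to agree on the common guess $(\pi,R)$ and on the chain of simulated types along shared ancestors in the input. The remaining conditions---instantiating the $\Omc$-inclusions at each simulated element and verifying type closure under subconcepts and negation---are standard and require only polynomially many state shapes on top of the $(\pi,R)$ dimension. Combining these ingredients yields a 2ATA with $2^{\mathrm{poly}(||\Kmc||)}$ states, constructible in time $2^{\mathrm{poly}(||\Kmc||)}$, as required.
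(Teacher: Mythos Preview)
Your proposal is correct and follows essentially the same approach as the paper's proof: guess the $\Kmc$-types of all $c^\Bmf$ upfront, spawn a copy of the automaton at each marked node $v_c$ in the appropriate type, and then traverse $\Amf_{\tau_1}$ while carrying only the current simulated type, using the $\tau_2$-markers to turn the global functionality constraint on $f$ into local consistency checks whenever the simulated element is (or must become) a named individual.

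The extra ingredients you add---the explicit guess of a set $R$ of inter-individual role facts and the $\Omc$-satisfiability pre-check on $\Dmc_\pi$---are not in the paper's sketch but are harmless and arguably clarifying. In the paper's version these are absorbed into the type-by-type bisimulation verification (the $\Sigma$-edges among the $c^\Bmf$ are discovered on the fly from the structure of $\Amf_{\tau_1}$ around the $v_c$, and realizability of the guessed types is enforced implicitly by the transition function). One small wording issue: ``jumps to $v_{c'}$'' is misleading for a 2ATA; what actually happens, both in your construction and the paper's, is that the current copy navigates to an $r$-successor, checks that it is marked $v_{c'}$, verifies consistency with $t_{c'}$, and then \emph{terminates}, relying on the copy already spawned at $v_{c'}$ to continue the work.
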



\begin{proof}
  \ Let $(T,\tau_1,\tau_2)$ be a well-formed
  $(\Theta\times\Theta')$-labeled tree. We sketch the function of
  $\Amc'_a$. Intuitively, $\Amc_a'$ constructs an \ALCO-forest model
  \Bmf of \Kmc and the witnessing bisimulation $f$ ``on the fly'' by
  visiting the nodes in the input in states that store the type of the
  currently visited element of \Bmf. The construction of \Bmf is
  started at the individual names, that is, at the nodes $v_c\in T$,
  $c\in \text{ind}(\Dmc)$.  Before the actual construction of \Bmf can
  start, $\Amc_a'$ guesses the types of the individual names $c\in
  \text{ind}(\Dmc)$ in the model $\Bmf$ and keeps that guess in its
  state throughout the entire run. Then it spawns a copy of itself in
  every $v_c\in T$, $c\in\text{ind}(\Dmc)$ in state corresponding to
  $t_c$.

  Whenever $\Amc_a'$ visits a node $w\in T$ in some type $t$, this
  represents an obligation to extend the bisimulation constructed so
  far to an element $e$ of type $t$ in \Bmf and the element $w$ of
  $\Amf_{\tau_1}$. This can be done similarly to what is done in the
  proof of Lemma~\ref{lem:automata-alci} since $\Bmf$ is
  forest-shaped. We need to argue, however, how to deal with the
  individuals since they can be accessed from everywhere in \Bmf. When
  the automaton tries to extend \Bmf with a successor that satisfies
  some nominal $\{c\}$, we have to make sure that that successor gets
  type $t_c$. Conversely, when we try to find a type bisimilar to some
  successor of $w$, we make sure that we can only visit nodes of the
  form $v_c$ with types $t_c$. In both cases, it is crucial that the
  automaton can stop the extension, because we have already started a
  copy of the automaton with type $t_c$ in $v_c$ (in the very
  beginning).
\end{proof}

We finish the proof of the upper bound of
Theorem~\ref{thm:complexity-alco}. By
Lemma~\ref{lem:closureproperties}, we can compute in exponential time
(in the size of $\Amc_{a}'$)  an automaton $\Amc_a$ with
\[L(\Amc_a) = \{(T,\tau_1)\mid (T,\tau_1,\tau_2)\in L(\Amc_a')\},\]
that is, $\Amc_a$ is the projection of $\Amc_a'$ to the first
component $\tau_1$. It should be clear that $\Amc_a$ accepts a forest
model \Amf of \Kmc iff there is a model $\Bmf$ of \Kmc such that
$\Bmf,a^{\Bmf}\sim^f_{\ALCO,\Sigma}\Amf,b^{\Amf}$. Thus, we can
proceed as for \ALCI and compute the desired automaton $\Amc$ such
that it accepts the language
\[L(\Amc)=L(\Amc_0)\cap L(\Amc_\Kmc)\cap \textstyle\bigcap_{a\in
P}\overline{L(\Amc_a)},\] 
which is non-empty iff $(\Kmc,P,\{b\})$ is projectively
$\ALCO(\Sigma)$-satisfiable. Since $\Amc_a$ (and thus \Amc) is a 2ATA
with double exponential many states and non-emptiness can be checked
in exponential time in the number of states, the \ThreeExpTime-upper
bound follows. 

\section{\ThreeExpTime Lower Bound for Conservative Extensions in
\ALCO}\label{sec:lower-alco}

We start with model theoretic characterizatons of conservative
extensions and projective conservative extensions in \ALC.  We define
a model $\Amf$ of an $\ALCO$-ontology $\Omc$ to be a \emph{forest
  model} of $\Omc$ of finite outdegree if it is an $\ALCO$-forest
model of the KB $\Kmc=(\Omc,\Dmc)$ of finite $\ALCO$-outdegree, where
$\Dmc= \{D_{b}(b) \mid b\in \text{ind}(\Omc)\}$ and the $D_{b}$ are
fresh concept names, one for each individual $b$ in $\Omc$.
  	
\begin{theorem}
	Let $\Omc$ and $\Omc'$ be $\ALCO$-ontologies and $\Sigma
       = \text{sig}(\Omc)$. Then 
        \begin{enumerate}
        \item 
the following conditions are equivalent:
	\begin{enumerate}
        \item $\Omc \cup \Omc'$ is a conservative extension
          of $\Omc$;
		\item for every pointed forest model $\Amf,a$ of $\Omc$ of finite outdegree there exists a pointed model $\Bmf,b$ of $\Omc\cup \Omc'$ such that 
		$\Bmf,b \sim_{\ALCO,\Sigma} \Amf,a$.
	\end{enumerate}
\item the following conditions are equivalent:
\begin{enumerate}
	\item $\Omc \cup \Omc'$ is a projective conservative extension of $\Omc$;
          
	\item for every pointed forest model $\Amf,a$ of $\Omc$ of finite outdegree there exists a pointed model $\Bmf,b$ of $\Omc\cup \Omc'$ such that 
	$\Bmf,b \sim_{\ALCO,\Sigma}^{f} \Amf,a$.
        \end{enumerate}
\end{enumerate}
\end{theorem}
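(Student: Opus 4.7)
The strategy is to mirror the proof of Theorem~\ref{thm:L-modeltheory0}, adapting it from (weak) separability to (projective) conservative extensions. Throughout let $\Sigma = \text{sig}(\Omc)$, and recall that by Lemma~\ref{lem:equivalence} $\ALCO(\Sigma)$-equivalence coincides with $\ALCO(\Sigma)$-bisimilarity whenever one side has finite outdegree and the other is $\omega$-saturated.

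For Part~1, $(a) \Rightarrow (b)$ is a compactness argument. Given a finite-outdegree pointed forest model $\Amf,a$ of $\Omc$, let $t(x) = \{C(x) \mid C \in \ALCO(\Sigma),\ a \in C^{\Amf}\}$. If $t \cup \Omc \cup \Omc'$ were unsatisfiable, compactness together with closure under conjunction would produce a finite $C_1 \sqcap \cdots \sqcap C_n \in \ALCO(\Sigma)$ inconsistent with $\Omc \cup \Omc'$; conservativity of $\Omc \cup \Omc'$ over $\Omc$ would then force the same conjunction to be inconsistent with $\Omc$, contradicting $a \in (C_1 \sqcap \cdots \sqcap C_n)^{\Amf}$. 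An $\omega$-saturated model $\Bmf$ of $\Omc \cup \Omc'$ realising $t$ at some $b$ therefore exists, and Lemma~\ref{lem:equivalence} upgrades $\Bmf,b \equiv_{\ALCO,\Sigma} \Amf,a$ to $\Bmf,b \sim_{\ALCO,\Sigma} \Amf,a$. Conversely, $(b) \Rightarrow (a)$ is contrapositive: a failure of conservativity is witnessed by $C \sqsubseteq D$ over $\Sigma$ entailed by $\Omc \cup \Omc'$ but not by $\Omc$, so $C \sqcap \neg D$ is satisfied by some element $a$ of some model of $\Omc$. Using Lemma~\ref{lem:forestm}, applied to the KB obtained by extending $\Omc$ with a fresh individual name marking that witness, one produces a forest model $\Amf,a$ of $\Omc$ of finite outdegree with $a \in (C \sqcap \neg D)^{\Amf}$; applying~(b) yields a bisimilar $\Bmf,b \models \Omc \cup \Omc'$, and since $C, D \in \ALCO(\Sigma)$, bisimulation invariance forces $b \in (C \sqcap \neg D)^{\Bmf}$, contradicting $\Omc \cup \Omc' \models C \sqsubseteq D$.

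Part~2 follows the same template, but with helper concept names internalised as unique singleton labels and with general bisimulations replaced by functional ones, exactly as in the equivalence between Points~2 and~3 of Theorem~\ref{thm:L-modeltheory0}. For $(a) \Rightarrow (b)$, expand $\Amf$ to $\Amf^+$ over the signature $\Sigma \cup \Sigma^+$, where $\Sigma^+ = \{A_d \mid d \in \text{dom}(\Amf)\}$ consists of fresh concept names with $A_d^{\Amf^+} = \{d\}$. Projective conservativity lets the compactness argument above go through with $\Sigma \cup \Sigma^+$ in place of $\Sigma$, yielding an $\omega$-saturated $\Bmf^+ \models \Omc \cup \Omc'$ with $\Bmf^+,b \sim_{\ALCO,\Sigma \cup \Sigma^+} \Amf^+,a$; because each $A_d$ is a singleton in $\Amf^+$, the back-and-forth conditions force this bisimulation to be functional (if $(d,d_1)$ and $(d,d_2)$ both lie in it, then $d \in A_{d_1}^{\Bmf^+} \cap A_{d_2}^{\Bmf^+}$, hence $d_1 = d_2$). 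Forgetting $\Sigma^+$ gives the required $\Bmf$ and the functional $\ALCO(\Sigma)$-bisimulation. For $(b) \Rightarrow (a)$, assume the projective failure via a concept $E$ over $\Sigma \cup \Sigma_{\text{help}}$, satisfiable with $\Omc$ but not with $\Omc \cup \Omc'$; start from a witness $\Amf^+$, take its $\Sigma$-reduct $\Amf$, convert $\Amf$ to a finite-outdegree forest model $\Amf_0$ of $\Omc$ with an element $a_0$ functionally $\ALCO(\Sigma)$-bisimilar to $\Amf,a$ via selective unraveling, and transport the $\Sigma_{\text{help}}$ labels along that bisimulation to obtain $\Amf_0^+$ with $a_0 \in E^{\Amf_0^+}$. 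Applying (b) furnishes a model $\Bmf \models \Omc \cup \Omc'$ together with a functional $\ALCO(\Sigma)$-bisimulation $f$ from $\Bmf,b$ to $\Amf_0,a_0$; pulling the $\Sigma_{\text{help}}$ labels back along $f$ gives an expansion $\Bmf^+ \models \Omc \cup \Omc'$ (since $\Sigma_{\text{help}}$ is disjoint from $\text{sig}(\Omc \cup \Omc')$) with $b \in E^{\Bmf^+}$, contradicting $\Omc \cup \Omc' \models \neg E$.

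The main obstacle will be the careful construction of a forest model of $\Omc$ in the two $(b) \Rightarrow (a)$ directions, because the definition pins tree roots at $\text{ind}(\Omc)$ while the witness element to $C \sqcap \neg D$ (respectively $E$) need not be a nominal: one must either adjoin a fresh nominal for the witness and then absorb the resulting extra root into the forest, or perform a selective unraveling that handles the witness and the original nominals uniformly while keeping outdegree finite. Once this is set up, the functionality in Part~2 follows from the standard trick of labelling each domain element with a private concept name, which is admissible precisely because the conservativity condition being refuted is projective and thus insensitive to fresh concept-name symbols.
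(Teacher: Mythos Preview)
The paper states this theorem without proof; it appears at the beginning of Section~\ref{sec:lower-alco} only as background for the \ThreeExpTime lower bound construction. Your approach is exactly the natural adaptation of the argument for Theorem~\ref{thm:L-modeltheory0} that the paper presumably has in mind: compactness together with Lemma~\ref{lem:equivalence} for the forward directions, bisimulation invariance of $\ALCO(\Sigma)$-concepts for the converse, and the private singleton-label trick to force functionality in the projective case (just as in the passage ``3 $\Rightarrow$ 2'' of the proof of Theorem~\ref{thm:L-modeltheory0}).

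The technical concern you flag---producing a genuine forest model of $\Omc$ that contains a witness for the satisfiable concept $C\sqcap\neg D$ (resp.\ $E$)---is real but easy to dispatch. When $\text{ind}(\Omc)\neq\emptyset$, add to the witnessing model a single edge from some $o\in\text{ind}(\Omc)$ to the witness via a \emph{fresh} role name $s\notin\Sigma$; since neither $\Omc$ nor the witness concept mentions~$s$, nothing breaks, and the standard selective unraveling from $\text{ind}(\Omc)$ now yields an $\ALCO$-forest model of $\Omc$ (in the paper's sense) of finite outdegree in which the witness survives as an element of the tree below~$o$. The degenerate case $\text{ind}(\Omc)=\emptyset$ is a definitional artefact; one simply reads ``forest model'' as admitting a single tree with an arbitrary root, which is the standard $\ALC$ picture.
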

%
%
%
\begin{theorem}
\label{thm:threeexpalco}
Given an \ALC-ontology $\Omc$ and an \ALCO-ontology $\Omc'$, it is
\ThreeExpTime-hard to decide whether $\Omc'$ is a conservative
extension of $\Omc$.
\end{theorem}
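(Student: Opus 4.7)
The plan is to reduce the word problem for alternating Turing machines with $2^{2^n}$ space to the complement of deciding conservative extensions in $\ALCO$. Since \textsc{ASpace}$(2^{2^n})$ coincides with $\ThreeExpTime$, this yields the desired lower bound. Given an ATM $M$ and input $w$ of length $n$, I will build in polynomial time an $\ALC$-ontology $\Omc_n$ and an $\ALCO$-ontology $\Omc'_n$ (using a single nominal $\{o\}$) such that $\Omc_n\cup\Omc'_n$ is not a conservative extension of $\Omc_n$ iff $M$ accepts $w$. Following the overall strategy of \cite{DBLP:conf/kr/GhilardiLW06}, this decomposes into two steps: first, I craft $\Omc_n,\Omc'_n$ so that every witness concept $C$ must force, in any model of $\Omc_n$ satisfying $C$, a full binary tree of depth $2^{2^{2^n}}$ rooted at the witnessing element; second, I superimpose on this tree an encoding of accepting $M$-computations on $w$, so that satisfiability of $C$ w.r.t.~$\Omc_n\cup\Omc'_n$ is blocked precisely when $M$ accepts.

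The heart of the argument is the construction of a counter of triple exponential range, realized along root-to-leaf paths of the binary tree. I reuse the counters from \cite{DBLP:conf/kr/GhilardiLW06} for the first two layers: an $n$-bit counter built from concept names $B_0,\dots,B_{n-1}$ gives range $2^n$, and a second, $2^n$-bit counter whose bits are addressed by the first counter gives range $2^{2^n}$. Both are forced by $\Omc_n$ with the help of bisimulation-invariance coming from the conservativity constraint, exactly as in the \TwoExpTime lower bound. The new third layer must count up to $2^{2^{2^n}}$ using $2^{2^n}$ bits, but a single element can carry only polynomially much information; the standard trick of ``synchronizing'' successive counter values at each node breaks down because the bit index itself is a double exponentially large object. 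Instead, I propagate the bits of the third counter down a path as independent copies of $\Omc'_n$-types: for each index $i<2^{2^n}$ I launch a separate ``witness'' whose role is to remember the value of bit $i$ along the whole path. These witnesses cannot be compared locally, but $\Omc'_n$ routes them all through the nominal $\{o\}$, and the axiom that all copies must agree at $\{o\}$ re-synchronizes them; this is how a single nominal suffices to compare double exponentially many bits carried by otherwise independent paths, and it is what lifts the \cite{DBLP:conf/kr/GhilardiLW06} construction by one exponent.

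Once a triple exponential binary tree with a working address mechanism is forced, the ATM simulation is standard: each level of the tree holds one configuration of $M$ on $w$ of length $2^{2^n}$, consecutive levels are linked by local axioms implementing the transition relation, existential (respectively universal) ATM states are matched to the two successors via disjunction (respectively conjunction), and acceptance of $w$ is expressed by an $\ALC$-concept witnessing inconsistency with $\Omc_n\cup\Omc'_n$. The main obstacle is the correctness of the triple exponential counter: I must show that any model of $\Omc_n$ satisfying the witness concept $C$ really realizes the intended structure, i.e.~that no alternative model can ``cheat'' by dropping synchronization while still being bisimilar (in the sense required by Theorem~\ref{thm:L-modeltheory0}) to some model of $\Omc_n\cup\Omc'_n$. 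This requires careful use of the fact that $\Omc_n$ contains no nominals, so that witness concepts in $\text{sig}(\Omc_n)$ cannot directly refer to $\{o\}$, while $\Omc'_n$ uses $\{o\}$ only to enforce global agreement of the propagated $\Omc'_n$-types; a secondary difficulty is to verify that the ATM simulation remains sound under the branching structure produced by the nominal-induced identifications, which I handle by arranging the simulation to take place strictly below the nominal in every forest model.
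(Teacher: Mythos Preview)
Your proposal is correct and follows essentially the same approach as the paper: reduction from the word problem of a $2^{2^n}$-space-bounded ATM, a two-step construction that first forces witness concepts to generate a binary tree of depth $2^{2^{2^n}}$ via a triple exponential counter built on top of the two counters from \cite{DBLP:conf/kr/GhilardiLW06}, and then a standard ATM simulation on that tree. Your description of the key new idea---launching independent $\Omc'_n$-types for the bits of the third counter and using the single nominal to re-synchronize them---matches the paper's construction precisely. Two minor points: the model-theoretic characterization you need is the one for conservative extensions (functional $\Sigma$-bisimulations into forest models of $\Omc_n$), not Theorem~\ref{thm:L-modeltheory0} which is about separability; and the nominal in the paper's construction is not placed ``above'' the simulation but is reached at the end of each Counter~2 subsequence involved in a comparison, so the ATM encoding lives alongside rather than strictly below it.
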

Since \Omc is an \ALC-ontology, we have no nominals available in
witness concepts and every witness concept must actually be an \ALC-concept.
It is 
interesting to note that we use only a single nominal in the
ontologies $\Omc'$.

The proof of Theorem~\ref{thm:threeexpalco} follows the general
outline of the \TwoExpTime lower bound for conservative
extensions in \ALC that is proved in \cite{DBLP:conf/kr/GhilardiLW06}.
%
As in that paper, we proceed in two steps. We first establish a lower
bound on the size of witness concepts and then extend the involved
ontologies to obtain the \ThreeExpTime lower bound.  In fact, the
first
step is the technically subtle one and we present it in full detail.
The second step is then rather simple and we only sketch the required
constructions. 

\subsection{Large Witness Concepts}
\label{app:largewitness}
\begin{theorem}
\label{thm:succinct}
  For every $n > 0$, there is an \ALC-ontology $\Omc_n$ and an
  \ALCO-ontology $\Omc'_n$ such that the size of $\Omc_n$ and
  $\Omc'_n$ is polynomial in $n$, $\Omc_n \cup \Omc'_n$ is not a
  conservative extension of $\Omc_n$, and every witness concept $C$
  for $\Omc_n$ and $\Omc'_n$ is of size at least $2^{2^{2^{2^n}}}$.
\end{theorem}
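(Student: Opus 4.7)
The plan is to extend the construction of Ghilardi, Lutz, and Wolter \cite{DBLP:conf/kr/GhilardiLW06} from a double exponential counter (yielding witness concepts of size $2^{2^n}$) to a triple exponential counter (yielding witness concepts of size $2^{2^{2^{2^n}}}$). Throughout, the ontologies $\Omc_n,\Omc'_n$ are designed so that a concept $C$ is a witness iff $C$ enforces, in the forest part of some model of $\Omc_n \cup \Omc'_n$ rooted at an instance of $C$, a full binary tree whose path length is the maximal value of a triple exponential counter, i.e.\ $2^{2^{2^n}}$. Since any \ALC-concept enforcing such a binary tree must itself have size at least $2^{2^{2^n}}$ along each of its $2^{2^{2^n}}$ root-to-leaf branches, one reads off the $2^{2^{2^{2^n}}}$ bound.

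I would first recall, from \cite{DBLP:conf/kr/GhilardiLW06}, the \emph{single-exponential counter}, whose $n$-bit value at a node $d$ is encoded by $n$ concept names $B_0,\dots,B_{n-1}$, together with CIs that, at a branching node, propagate the value, increment it, and compare values along edges; and the \emph{double-exponential counter}, obtained by spreading a $2^n$-bit word along a path of length $2^n$, using the single-exponential counter to address the bit positions and a careful combination of universally propagated ``current-bit'' and ``carry'' concepts to enforce correct increments between consecutive $2^n$-bit words along the tree. These two layers are reused verbatim: they are \ALC-definable and do not involve any nominal, so they go into $\Omc_n$.

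The new ingredient is the \emph{triple-exponential counter}, where a counter value is a word of $2^{2^n}$ bits spread along a path of length $2^{2^n}$ in the tree produced by the witness concept, with each bit position addressed by the double-exponential counter. The obstacle is that the \ALC/\ALCO tree model property prevents us from comparing two corresponding bits in consecutive $2^{2^n}$-bit words directly: along a single propagated type we can transport only polynomially much information, whereas a correct increment must check $2^{2^n}$ equalities. Following the sketch in the paper, I would resolve this by \emph{independently} sending, at the top of each of the $2^{2^n}$ address-levels of the double-exponential counter, several $\Omc'_n$-types down the path: one group of types transports the bit values of the ``current'' word, another group transports the bit values of the ``next'' word, and an address label identifies the position these types refer to. The crucial point is that the groups travel along the same path but originate from \emph{different} branching choices made at the top, so in a purely \ALC-ontology they could not be correlated. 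Here is where the single nominal~$o$ in $\Omc'_n$ is used: the ontology $\Omc'_n$ forces every path of maximal depth to eventually reach~$o$, and at $o$ it imposes conditions that the transported types of the two groups agree on the intended bit-wise relation (copy, flip, or flip conditioned on the carry). This ``re-synchronization'' turns a locally unverifiable constraint into a verifiable one, exactly because $o$ collapses the many descendants to a single point.

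The main obstacle, and hence the bulk of the proof, will be (i)~defining the synchronization CIs at~$o$ precisely enough that they enforce the correct increment \emph{only} when the underlying tree really has depth $2^{2^{2^n}}$ and all address/bit types are consistently propagated, while (ii)~keeping $\Omc_n$ in \ALC and $\Omc'_n$ polynomial in $n$ with a single nominal. After the construction, the two required directions are: an upper proof that there is a concrete witness concept of size $\approx 2^{2^{2^{2^n}}}$ (the ``unfolding'' of the triple counter into a tree), and a lower proof, by induction on $C$, that any witness $C$ must at each level reproduce the full address/bit structure imposed by the counter, so any model of $C$ and $\Omc_n$ that fails to satisfy $\Omc_n\cup\Omc'_n$ (which exists by hypothesis on $C$) embeds a binary tree of depth $2^{2^{2^n}}$, forcing $|C|\ge 2^{2^{2^{2^n}}}$. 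The second step requires a careful type-counting argument analogous to, but one exponent higher than, the corresponding lemma in \cite{DBLP:conf/kr/GhilardiLW06}, and is where the use of the nominal must be shown not to shortcut the tree-depth bound.
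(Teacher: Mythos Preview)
Your high-level plan is the paper's plan: three nested counters, the first two implemented in \ALC exactly as in \cite{DBLP:conf/kr/GhilardiLW06}, and a third counter of $2^{2^n}$ bits whose correctness is enforced using a single nominal in~$\Omc'_n$. That much is right.

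There is, however, a genuine gap in how you describe the role of the nominal. You say that $\Omc'_n$ ``forces every path of maximal depth to eventually reach~$o$, and at~$o$ it imposes conditions that the transported types of the two groups agree.'' This cannot work: at a single element~$o$ only a bounded amount of information is present in its type, so no comparison of $2^{2^n}$ bit pairs can happen there. The paper's mechanism is different and is the key idea. To detect a Counter~3 increment defect, $\Omc'_n$ branches (in the candidate bisimilar model $\Bmf$, not in~$\Amf$) into $2^n$ parallel copies while traversing one Counter~2 subsequence; each copy stores \emph{one} bit of Counter~2 (via an auxiliary $n$-bit counter) together with the current $X_3$-bit. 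Each branch then walks to the next Counter~3 sequence, finds its own matching Counter~2 position, and performs its own local comparison there. The nominal is used only \emph{after} all local checks: every branch is forced to end at~$\{c\}$. Because the relevant bisimulation is \emph{functional} and $\Amf$ is a tree model (since $\Omc_n$ is in \ALC), all branches in~$\Bmf$ collapsing to $c$ forces their preimages in~$\Amf$ to lie on the \emph{same} path, retroactively ensuring that the $2^n$ independent comparisons referred to a single Counter~2 subsequence in~$\Amf$. So the nominal synchronizes paths, not bits; the comparisons are distributed, not centralized.

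Your lower-bound argument also diverges from the paper. You propose an induction on the structure of~$C$; the paper instead proves a model-theoretic criterion (their Lemma~\ref{lem:threeexpcor}(2)): for every pointed forest model $\Amf,d$ of $\Omc_n$ of finite outdegree that is not ``witnessing'' (does not contain the required counting tree), one can build a pointed model $\Bmf,e$ of $\Omc_n\cup\Omc'_n$ with $\Bmf,e\sim^f_{\ALCO,\Sigma}\Amf,d$. From this and invariance of \ALCO-concepts under \ALCO-bisimulation it follows immediately that any witness concept must have a model that \emph{is} witnessing, hence must hard-code the full binary tree of depth $2^{2^{2^n}}$. This is cleaner than a structural induction on~$C$ and, crucially, it is where the ``functional'' part of the bisimulation and the tree shape of~$\Amf$ are exploited to make the nominal trick sound.
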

%
%
%
%
To prove Theorem~\ref{thm:succinct}, we craft $\Omc_n$ and $\Omc'_n$
so that $\Omc_n \cup \Omc'_n$ is not a conservative extensions of
$\Omc_n$, but every witness concept $C$ for $\Omc_n$ and $\Omc'_n$
must enforce in models of $\Omc_n$ a binary tree of depth at least
$2^{2^{2^n}}$.  This implies that $C$ is of size at least $2^m$
because $C$ is an \ALC-concept and $\Omc_n$ does not introduce
additional elements via existential restrictions. Let us be a bit more
precise about the trees. The `nodes' of the tree are actually paths of
length $2^n \cdot 2^{2^n}$ and no branching occurs inside these
paths. Thus, when counting also the intermediate domain element
on the `node paths', then the trees are really of depth
$2^n \cdot 2^{2^n} \cdot 2^{2^{2^n}}$.  We use a single role name
$r$ to attach successors, both when branching occurs and when no
branching occurs. At branching nodes, the left successor is marked
with concept name $S_L$ and the right successor is marked with concept
name $S_R$. Successors of non-branching nodes must be marked with at
least one of $S_L$ and $S_R$. And finally, the (first element of the)
root (path) is labeled with concept name $A$.

A main feature of $\Omc_n$ and $\Omc'_n$ is to implement a binary
counter that can count up to $m:=2^{2^{2^n}}$, the desired depth of the
trees (not counting intermediate domain elements). In fact, $\Omc_n$
and $\Omc'_n$ implement three binary counters that build upon each
other so that the third counter can achieve the intended counting
range.  Counter~1 has $n$ bits and counts from $0$ to $2^n-1$,
Counter~2 has $2^n$ bits and counts up to ${2^{2^n}}-1$, and Counter~3
has $2^{2^n}$ bits and counts up to ${2^{2^{2^n}}}-1$. We use
Counter~1 to describe the bit positions of Counter~2 and Counter~2 to
describe the bit positions of Counter~3.  Counter~1 and Counter~2
count modulo their maximum value plus one while Counter~3 needs to
reach its maximum value only once.

Counter~1 uses concept names $C_0,\dots,C_{n-1}$ as bits.  Thus a
counter value of Counter~1 can be represented at a single domain
element. In contrast, a value of Counter~2 is spread out accross a
sequence of $2^n$ domain elements, which we call a \emph{Counter~2
  sequence}. The bit positions of Counter~2 in such a sequence are
identified by Counter~1 and concept name $X_2$ is used to indicate the
bit value of Counter~2 at each position. A value of Counter~3, in
turn, is spread out accross a sequence of $2^{2^n}$ Counter~2
sequences, that is, $2^n \cdot 2^{2^n}$ domain
elements in total. We call this a \emph{Counter~3 sequence} and
it is these Counter~3 sequences that constitute the `nodes' of
the trees mentioned above. Each of
the Counter~2 subsequences represents one bit position of Counter~3 and stores
one bit value via the concept name $X_3$ that must be
interpreted uniformly in that subsequence. 

With a \emph{path} in an structure \Amf, we mean a sequence
$p=d_0,\dots,d_n$ of elements of $\text{dom}(\Amf)$ such that
$(d_i,d_{i+1}) \in r^\Amf$ and $d_{i+1} \in S_L^\Amf \cup S_R^\Amf$
for all $i < n$. We say that such a path is \emph{properly counting}
if the concept names $C_0,\dots,C_{n-1},X_2,X_3$ are interpreted along
the path in accordance with the counting strategy outlined above, all
three counters starting with counter value zero. We now formalize the
trees described above. A \emph{counting tree} in \Amf is a collection
$T$ of (not necessarily distinct) domain elements $d_{w,i}$,
$w \in \{L,R\}^*$ with $|w|<m$ and
$0 \leq i < 2^n \cdot 2^{2^n}$, such that the following conditions are
satisfied:
\begin{enumerate}

\item $d_{\varepsilon,0} \in A^\Amf$,

\item $(d_{w,i},d_{w,i+1}) \in r^\Amf$ and $d_{w,i+1} \in S_L^\Amf
  \cup S_R^\Amf$
  for all $d_{w,i+1} \in T$;

\item $(d_{w, 2^n \cdot 2^{2^n}},d_{wL,0}) \in r^\Amf$ and $d_{wL,0}
  \in S_L^\Amf$
  for all \mbox{$d_{wL,0} \in T$}; 

\item $(d_{w, 2^n \cdot 2^{2^n}},d_{wR,0}) \in r^\Amf$ and $d_{wR,0}
  \in S_R^\Amf$
  for all \mbox{$d_{wR,0} \in T$}; 

\item every path in \Amf that uses only elements from $T$ is properly
  counting.

\end{enumerate}
The element $d_{\varepsilon,0}$ is the \emph{root} of the counting
tree.  We say that \Amf is \emph{witnessing} if there is a $d \in
A^\Amf$
such that 
the following conditions are satisfied:
\begin{itemize}

\item $d$ is the root of a counting tree;

\item every path that starts at $d$ and is of length at most $m$ is
  properly counting.

\end{itemize}
We are now in a position to describe more concretely what we want to
achieve. Let $m' := 2^n \cdot 2^{2^n} \cdot m$.  For
$0 \leq i \leq m'$, let $S_i$ denote the set of concept names from
$S:=\{C_0,\dots,C_{n-1},X_2,X_3\}$ that the $i$-th domain element on a
path that is properly counting must satisfy. Then set
%
$$
\begin{array}{rcl}
  D_{0} &:=& \top \\ 
  D_{i+1} &:=& \exists r . (S_L \sqcap D_{i}) \sqcap \exists r . (S_R
               \sqcap D_i) \\
  C_i &:=& A \sqcap D_i \sqcap \bigsqcap_{0 \leq i \leq m'}
           \forall r^i . (\bigsqcap S_i \sqcap \neg \bigsqcup S
           \setminus S_i)
\end{array}
$$
Note that every model of $C_m'$ is witnessing.  We craft $\Omc_n$ and
$\Omc'_n$ such that the following holds.
\begin{lemma}
  \label{lem:threeexpcor}
  ~\\[-4mm]
\begin{enumerate}

\item $C_{m'}$ is a witness concept for $\Omc_n$ and $\Omc'_n$;

\item for every pointed model $\Amf,d$ of $\Omc_n$ of finite outdegree
  that is not witnessing, there is a pointed model $\Bmf,e$ of
  $\Omc_n \cup \Omc'_n$ such that $\Bmf,e \sim^f_{\ALCO,\Sigma}
  \Amf,d$ where $\Sigma=\text{sig}(\Omc_n)$.

\end{enumerate}
\end{lemma}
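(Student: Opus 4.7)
The plan is to construct $\Omc_n$ and $\Omc'_n$ by extending the double-exponential counter construction of~\cite{DBLP:conf/kr/GhilardiLW06} with a third, nominal-synchronised counter. Concretely, $\Omc_n$ should contain: (i) standard successor axioms that propagate the $n$ bits $C_0,\dots,C_{n-1}$ of Counter~1 modulo $2^n$ along the role $r$; (ii) axioms that use the Counter~1 value as a position marker to propagate the $X_2$ bits of Counter~2 along Counter~2-sequences of $2^n$ elements, incrementing modulo $2^{2^n}$ at sequence boundaries; (iii) axioms that branch (producing $S_L$- and $S_R$-successors) exactly at the end of each Counter~3-sequence so that a complete binary tree of depth $m$ is generated whenever the concepts $C_j$ are satisfied; and (iv) universal-restriction scaffolding that ensures every path that is reachable and begins with properly initialised counters remains properly counting. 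The ontology $\Omc'_n$ should then contain the nominal-based re-synchronisation axioms that actually enforce Counter~3's bit-wise consistency across both subtrees at branching nodes, together with a single axiom that derives $\bot$ when Counter~3 reaches its maximum $m$.

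For Point~1, I would show satisfiability of $C_{m'}$ w.r.t.\ $\Omc_n$ by exhibiting the canonical tree-shaped model: a full binary tree of depth $m$, each node being a Counter~3-sequence of length $2^n\cdot 2^{2^n}$, with all three counters advanced as prescribed and with the nominal~$o$ of $\Omc'_n$ interpreted outside the tree. Since $\Omc_n$ alone contains no re-synchronisation or $\bot$-trigger axioms, this is a model. Unsatisfiability w.r.t.\ $\Omc_n\cup\Omc'_n$ follows because any model of $C_{m'}$ is witnessing in the sense defined above, so Counter~3 must reach value~$m$ somewhere, triggering~$\bot$.

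For Point~2, given a pointed model $\Amf,d$ of $\Omc_n$ of finite outdegree that is not witnessing, I would first unravel $\Amf$ from~$d$ to obtain a tree-shaped model $\Bmf_0$ of $\Omc_n$, with the natural projection being a functional $\ALCO(\Sigma)$-bisimulation from $\Bmf_0,d$ to $\Amf,d$ (note that $\Sigma = \text{sig}(\Omc_n)$ is nominal-free). Since $\Amf,d$ is not witnessing, no subtree of $\Bmf_0,d$ realises a full counting tree to depth~$m$, so every $r$-path leaving $d$ either breaks the counting discipline or is shorter than $m'$. I would then define $\Bmf$ from $\Bmf_0$ by interpreting the nominal $o$ and the fresh concept names of $\Omc'_n$ to vacuously satisfy the re-synchronisation axioms and avoid the $\bot$-trigger — this is possible because the trigger is guarded by concepts that can only be forced in a full depth-$m$ counting tree. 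Since the added symbols lie outside $\Sigma$, the bisimulation lifts to $\Bmf,d \sim^f_{\ALCO,\Sigma} \Amf,d$ as required.

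The main obstacle is calibrating $\Omc'_n$ so that its re-synchronisation axioms are strong enough to actually force Counter~3 to count correctly across both $S_L$- and $S_R$-branches of each branching node (so that the $\bot$-trigger fires in models of $C_{m'}$), yet weak enough to impose no obligations beyond the skeleton already generated by $\Omc_n$ in models lacking a full depth-$m$ tree. This is the subtle step informally described in the introduction as "independently sending multiple $\Omc'_n$-types down a path and using the nominal to re-synchronise them": the guards on the re-synchronisation axioms must be activated only where both $S_L$- and $S_R$-branches have already produced Counter~2-sequences of sufficient length, and the nominal~$o$ must serve as the common pointer through which the two streams of Counter~3 bits match up. Verifying this delicate balance — analogous to but one exponent beyond the Counter~2 design in~\cite{DBLP:conf/kr/GhilardiLW06} — will be the technically demanding part of the proof.
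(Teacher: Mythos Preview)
Your proposal inverts the polarity of the construction, and this is not a harmless alternative but a genuine gap. In the paper's design, $\Omc_n$ contains \emph{no} existential restrictions at all: it only sets up the scaffolding (Counter~1 incrementation, the flags $Z_2,Z_3,E_3,L_2,L_3$, uniformity of $X_3$ across a Counter~2 sequence, etc.) via universal restrictions. The tree structure is generated entirely by the witness concept $C_{m'}$, and this is essential for the size lower bound: the paper explicitly argues that because ``$\Omc_n$ does not introduce additional elements via existential quantifiers'', any concept that fails to spell out every branch of the tree has a non-witnessing model. Your item~(iii), which has $\Omc_n$ produce $S_L/S_R$-successors, would immediately permit small witness concepts such as $A$ itself and thereby destroy the conclusion the lemma is meant to support.

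More fundamentally, the paper's $\Omc'_n$ does not enforce Counter~3 correctness and then trigger $\bot$ at maximum depth. It does the opposite: Line~(34) demands $A \sqsubseteq P_1 \sqcup P_2 \sqcup P_3$, i.e.\ that the root sits above some \emph{defect} (missing branch, Counter~2 miscount, or Counter~3 miscount), and each $P_i$ is then expanded via \emph{existential} restrictions that locate and certify the defect. The nominal $\{c\}$ is used in the $P_3$ block (Lines~(47)--(62)) not to globally synchronise correct counting, but to force the $2^n$ existential ``memory threads'' spawned in $\Bmf$---each carrying one bit of the current Counter~2 value---to converge at a single point, so that under the functional bisimulation to $\Amf$ they all land on the same path and hence witness a genuine bit mismatch. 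Point~1 then holds because in a witnessing model no $P_i$ can be satisfied at the root, contradicting Line~(34); Point~2 holds because a non-witnessing model by definition contains a defect of some type, and one builds $\Bmf$ by interpreting the fresh symbols of $\Omc'_n$ so as to \emph{point to} that defect. Your ``vacuously satisfy the re-synchronisation axioms'' step would fail under your positive design: if $\Omc'_n$ universally enforced correct incrementation, a non-witnessing model with a genuine counting error could not be extended to a model of $\Omc'_n$ at all.
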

Point~2 implies that every witness concept for
$\Omc_n$ and $\Omc'_n$ must have size at least $2^m$. In fact,
%
%
if an \ALC-concept $C$ that is satisfiable w.r.t.\ $\Omc_n$ does not
mention all paths in a counting tree, including their labeling with
$S_L$ and $S_R$, then there is a model \Amf of $C$ and $\Omc_n$ in
which there is no counting tree.  Informally, this is because $\Omc_n$
does not introduce additional elements via existential quantifiers;
for a rigorous proof of an almost identical statement, see
\cite{DBLP:conf/kr/GhilardiLW06}. We argue that such a $C$ cannot be a witness
concept. Let $d \in C^\Amf$.  By Point~2, there is a pointed model
$\Bmf,e$ of $\Omc_n \cup \Omc'_n$ such that
$\Bmf,e \sim^f_{\ALCO,\Sigma} \Amf,d$. Since \ALCO-concepts are
preserved under \ALCO-bisimulations, this implies $e \in C^\Bmf$.
Consequently, $C$ is satisfiable w.r.t.\ $\Omc_n \cup \Omc'_n$ and
is not a witness concept for $\Omc_n$ and $\Omc'_n$.

\smallskip

Now for the actual construction of $\Omc_n$ and $\Omc'_n$.
The ontology $\Omc_n$ is given in Figure~\ref{fig:tbox1}.  
\begin{figure}[t!]
  \begin{boxedminipage}[t]{\columnwidth}
\vspace*{-4mm}
  \begin{center}
\begin{eqnarray}
  \top &\sqsubseteq & \forall r. \neg A \\[2mm]
  A & \sqsubseteq & (C_1=0) \\
  \top &\sqsubseteq& \forall r. (C_1{+}{+}) \\[2mm]
  A & \sqsubseteq & F_2 \\
  F_2 \sqcap (C_1<2^n) &\sqsubseteq& \forall r . F_2 \\
  F_2 &\sqsubseteq& \neg X_2\\[2mm]
  (C_1=0) &\sqsubseteq& \neg Z_2 \\
  \neg X_2 \sqcap \neg (C_1=2^n-1) &\sqsubseteq& \forall r . Z_2 \\
  Z_2 \sqcap \neg (C_1=2^n-1) &\sqsubseteq& \forall r . Z_2 \\
  X_2 \sqcap \neg Z_2 \sqcap \neg (C_1=2^n-1)  &\sqsubseteq& \forall r
                                                             . \neg
                                                             Z_2\\[2mm]
  (C_1 < 2^n-1) \sqcap X_3  &\sqsubseteq&  \forall r .X_3 \\
  (C_1 < 2^n-1) \sqcap \neg X_3  &\sqsubseteq& \forall r . \neg X_3
  \\[2mm]
  (C_1=2^n-1) \sqcap \neg Z_2 \sqcap X_2 &\equiv& E_3 \\[2mm]
  A &\sqsubseteq& F_3 \\
  F_3 \sqcap \neg (C_1=2^n-1) &\sqsubseteq&  \forall r. F_3 \\
  F_3 \sqcap (C_1=2^n-1) \sqcap \neg E_3 &\sqsubseteq&  \forall r. F_3 \\
  F_3 &\sqsubseteq& \neg X_3 \\[2mm]
  %
  A & \sqsubseteq & \neg Z_3 \\
  E_3 & \sqsubseteq & \forall r . \neg Z_3\\
  (C < 2^n-1) \sqcap Z_3 &\sqsubseteq& \forall r . Z_3 \\
  (C < 2^n-1) \sqcap \neg Z_3 &\sqsubseteq& \forall r . \neg Z_3 \\
  \neg X_3 \sqcap (C = 2^n-1) \sqcap \neg E_3 &\sqsubseteq& \forall r . Z_3 \\
  Z_3 \sqcap (C = 2^n-1) \sqcap \neg E_3 &\sqsubseteq& \forall r . Z_3 \\
  X_3 \sqcap \neg Z_3 \sqcap (C = 2^n-1) \sqcap \neg E_3 &\sqsubseteq& \forall r .\neg Z_3 \\[2mm]
  E_3 &\sqsubseteq& L_2 \\
  (C_1 < 2^n-1) \sqcap \exists r . L_2 & \sqsubseteq& L_2 \\[2mm]
E_3 \sqcap \neg Z_3 \sqcap X_3&\sqsubseteq& L_3 \sqcap L'_3\\
  (C_1 < 2^n-1) \sqcap \exists r . L'_3 & \sqsubseteq& L_3 \sqcap L'_3\\
  (C_1 < 2^n-1) \sqcap \exists r . L_3 & \sqsubseteq& L_3 \\
  (C_1 < 2^n-1) \sqcap X_2 \sqcap \exists r . L_3 & \sqsubseteq& L_3
                                                                 \sqcap
                                                                 L'_3 \\
  (C_1 = 2^n-1) \sqcap \exists r . L'_3 & \sqsubseteq & L_3 \\
  (C_1 = 2^n-1) \sqcap X_2 \sqcap \exists r . L'_3 & \sqsubseteq & L_3
                                                                   \sqcap
                                                                   L'_3
  \\[2mm]
  S_L \sqcup S_R &\sqsubseteq& \top
\end{eqnarray}
  \end{center}
  \end{boxedminipage}
  \caption{The ontology $\Omc_n$.}
  \label{fig:tbox1}
\end{figure}
We use $C \equiv D$ as an abbreviation for $C \sqsubseteq D$ and
$D \sqsubseteq C$. Line~(1) makes sure that $A$ cannot be true at
non-root nodes of counting trees, which would enable undesired witness
concepts such as $A \sqcap \exists r^{2^n} . A$.  Lines~(2) and~(3)
guarantees that Counter~1 starts with value~0 at $A$ and is
incremented modulo $2^n$ when passing to an $r$-child where
$\forall r .  (C_1{+}{+})$ is an abbreviation for a more complex
concept (of size polynomial in $n$) that achieves this; it is standard
to work out details, see e.g.\ \cite{DBLP:conf/kr/GhilardiLW06}.  The concepts
$(C_1=0)$, $(C_1=2^n-1)$, and $(C_1 < 2^n-1)$ are also abbreviations,
with the obvious meaning.  Lines~(4)~to~(6) make sure that Counter~2
starts with value~0 in all paths that are outgoing from an instance of
$A$. Lines~(7)~to~(10) guarantee that concept name $Z_2$ is true at a
domain element in a Counter~2 sequence iff there was a zero bit
strictly earlier in that sequence. Lines~(11)-(12) ensure that the
concept name $X_3$ which represents bit values for Counter~3 is
interpreted uniformly in Counter~2 sequences (which represent a single
bit position of Counter~3). Line~(13) guarantees that concept name
$E_3$ marks exactly the final domain element on Counter~3 sequences.
Lines~(14)-(17) enforce that Counter~3 starts with value~0 in all
paths that are outgoing from an instance of $A$. Lines~(18)-(24) make
sure that concept name $Z_3$ is true at a domain element in a
Counter~3 sequence iff there was a zero bit strictly earlier in that
sequence. Lines~(25)-(26) gurantee that $L_2$ is true in the last
element of every Counter~2 sequence and Lines~(27)-(32) achive that
$L_3$ is true in the last Counter~2 subsequence of every Counter~3
sequence. In both cases, we do not need the converse (altough it would
be possible to achieve also the converse with further concept
inclusions). Note that we do not use the concept names $S_L$ and
$S_R$ as this does not turn out to be necessary. However, we want
them to be part of $\text{sig}(\Omc_1)$, which is achieved by Line~(33).

\begin{figure*}[t!]
  \begin{boxedminipage}[t]{\textwidth}
\vspace*{-4mm}
  \begin{center}
\begin{eqnarray}
  A & \sqsubseteq & P_1 \sqcup P_2 \sqcup P_3 \\[2mm]
  P_1 \sqcap E_3 \sqcap \neg Z_3
  \sqcap X_3 & \sqsubseteq &  \forall r . (S_L \rightarrow P_1)
  \sqcup \forall r . (S_R \rightarrow P_1) \\
  P_1 \sqcap E_3 \sqcap \neg Z_3
  \sqcap X_3 & \sqsubseteq &
                                                                    \bot \\[2mm]
  P_2 &\sqsubseteq& M_0 \sqcup \exists r . ((S_L \sqcup S_R) \sqcap P_2) \\
  P_2 & \sqsubseteq& \neg (L_2 \sqcap L_3)\\[2mm]
  %
  M_0 &\sqsubseteq& M \sqcap (D_1=0) \\
  M_0 \sqcap \neg Z_2 &\sqsubseteq& NX_2 \leftrightarrow X_2 \\
  M_0 \sqcap Z_2 &\sqsubseteq& NX_2 \leftrightarrow \neg X_2 \\
  M \sqcap (D_1 < 2^n-1) &\sqsubseteq& \forall r . (D_1{+}{+}) \\
  M \sqcap (D_1 < 2^n-1) \sqcap NX_2 &\sqsubseteq& \exists r. ((S_L \sqcup S_R) \sqcap M \sqcap NX_2)\\
  M \sqcap (D_1 < 2^n-1) \sqcap \neg NX_2 &\sqsubseteq& \exists
  r. ((S_L \sqcup S_R) \sqcap M \sqcap \neg NX_2) \\
  M \sqcap (D_1=2^n-1) \sqcap NX_2 &\sqsubseteq& X_2 \\
  M \sqcap (D_1=2^n-1) \sqcap \neg NX_2 &\sqsubseteq& \neg X_2 
\end{eqnarray}
  \end{center}
  \end{boxedminipage}
  \caption{The first part of ontology $\Omc'_n$.}
  \label{fig:tbox2a}
\end{figure*}
%
\smallskip We present the ontology $\Omc'_n$ in two parts, one
pertaining to Counter~2 and one pertaining to Counter~3. The first
part of $\Omc'_n$ can be found in Figure~\ref{fig:tbox2a}. We want to
achieve that for all forest models \Amf of $\Omc_n$ of finite
outdegree and all $d \in A^\Amf$, there is a pointed model $\Bmf,e$ of
$\Omc_n \cup \Omc'_n$ such that $\Bmf,e \sim^f_{\ALCO,\Sigma} \Amf,d$
if and only if one of the following holds:
\begin{enumerate}

\item \Amf does not contain a counting tree rooted at $d$;

\item Counter~2 is not properly incremented on some path in \Amf that starts at $d$;

\item Counter~3 is not properly incremented on some path in \Amf that
  starts at $d$.

\end{enumerate}
We speak of these three options as \emph{defects} of type~1,~2, and~3,
respectively. In Line~(34), we choose a defect type that is present in
the current model \Amf. More precisely, being able to make $P_i$ true
at $e$ in \Bmf corresponds to a defect of type~$i$ in~\Amf.

Lines~(35)-(36) implement defects of type~1.  To see how this works,
first assume that there is a counting tree in \Amf rooted at
$d \in A^\Amf$ and let $\Bmf,e$ be a pointed model of $\Omc'_n$ with
$\Bmf,e \sim^f_{\ALCO,\Sigma} \Amf,d$. We have to argue that
$e \notin P_1^\Bmf$. This is due to Lines~(35) and~(36) which then
require the existence of a path in the counting tree to an element $f$
that has no $r$-successor that satisfies $S_L$ or no $r$-successor
that satisfies $S_R$, such that the maximum value of Counter~3 is not
reached on the way to $f$. But no such path exists.

For the converse, assume that for some $d \in A^\Amf$, there is a no
counting tree in \Amf rooted at $d$. This implies the existence of a
word $w \in \{L,R\}^*$ of length strictly less than $m$ such that
there is no path $p$ in \Amf starting at $d$ that follows branching
pattern $L,R$ and ends in an element where Counter~3 has maximum value
and that has no $r$-successor that satisfies $S_L$ or no $r$-successor
that satisfies $S_R$. Let $\Amf|^\downarrow_d$ denote the restriction
of \Amf to the domain elements that are reachable from $d$, traveling
role names only in the forward direction. Further let \Bmf be obtained
from $\Amf|^\downarrow_d$ by making $P_1$ true on every  element
on path $p$. Then \Bmf is a model of $\Omc'_n$ with $d \in P_1^\Bmf$
and $\Bmf,d \sim^f_{\ALCO,\Sigma} \Amf,d$.

Lines~(37)-(46) verify that, if $P_2$ is chosen in Line~(34), then
there is indeed a defect of type~2. Here we use an auxiliary single
exponential counter $D_1$ based on concept names $D_0,\dots,D_{n-1}$.
Lines~(37) and (38) mark the place where incrementation of Counter~2
fails using the concept name $M_0$. 
Note that Line~(38) ensures that $M_0$ is chosen before the last
Counter~2 sequence in the last Counter~3 sequence is reached.  When a
domain element $d$ is marked with $M_0$, this means that it is a bit
of Counter~2 such that, on some path outgoing from $d$, the
corresponding bit in the subsequent Counter~2 sequence violates
incrementation. There are two ways in which this may happen: first,
there may be no 0-bit lower than the bit marked with $M_0$, but the
corresponding bit in the subsequent Counter~2 sequence is not
toggled. Second, there may be a 0-bit lower than the bit marked with
$M_0$, but the corresponding bit in the subsequent Counter~2 sequence
is toggled. These two cases are distinguished by Lines~(40) and (41).
In the first case, the value of $X_2$ is stored in $NX_2$. In the
second case, the toggled value of $X_2$ is stored in $NX_2$. The
counter $D_1$ is then reset in Line~(39) and incremented in Line~(42)
to identify the corresponding bit in the following
configuration. Through lines (43) and (44), the value of $NX_2$ is
passed on all the way to this bit.  Finally, Lines~(45) and (46)
ensure that the $X_2$-value of the corresponding bit coincides with
$NX_2$.  It is not so difficult to prove formally that this works. In
particular, if there is a path starting at some $d \in A^\Amf$ on
which Counter~2 is not properly incremented, then we can extend
$\Amf|^\downarrow_d$ in a straightforward way to a model \Bmf of
$\Omc'_n$ with $d \in P_2^\Bmf$, by interpreting the concept names in
$\text{sig}(\Omc'_n) \setminus \text{sig}(\Omc_n)$.

\begin{figure*}[t]
  \begin{boxedminipage}[t]{\textwidth}
\vspace*{-4mm}
  \begin{center}
\begin{eqnarray}
  P_3 &\sqsubseteq& M_0 \sqcup \exists r . ((S_L \sqcup S_R) \sqcap P_3)\\
  P_3 & \sqsubseteq& \neg L_3 \\[2mm]
  M_0 &\sqsubseteq& (C_1=0) \sqcap M_1 \sqcap (M_L \sqcup M_R) \\
   K &\sqsubseteq& \forall r . K \qquad \text{ for all } K \in
                   \{M_L,M_R\} \\[2mm]
  M_1 & \sqsubseteq& (D_1=C_1) \sqcap (NX_2 \leftrightarrow X_2) \sqcap
                    (NX_3 \leftrightarrow X_3)\\
  M_1 \sqcap (C_1 < 2^n-1) &\sqsubseteq& \exists r . ((S_L \sqcup S_R)
  \sqcap M_1) \sqcap \exists r . ((S_L \sqcup S_R) \sqcap M_2)\\
    M_2 \sqcap \neg E_3 &\sqsubseteq & \exists r . ((S_L \sqcup S_R) \sqcap M_2) \\[2mm]
  M_i \sqcap K&\sqsubseteq&
                                                                 \forall
                                                                 r
                                                                 . (M_i
                                                                 \rightarrow
                                                                 K) \quad
                                         \text{ for all } i \in
                            \{2,3,4\} \text{ and} \\
                            && K \in \{ C, \neg C \mid C \in
                               \{D_0,\dots,D_{n-1},NX_2,NX_3 \} \}\\[2mm]
  M_2 \sqcap M_L \sqcap E_3 &\sqsubseteq& \exists r. (S_L \sqcap M_3) \\
  M_2 \sqcap M_R \sqcap E_3 &\sqsubseteq& \exists r. (S_R \sqcap M_3)
\\[2mm]
M_3 &\sqsubseteq& \neg E_3 \sqcap (M_4 \sqcup \exists r . ((S_L \cup
S_R) \sqcap M_3) \\[2mm]
  M_4 &\sqsubseteq& (C_1 = D_1) \sqcap M_5 \sqcap (X_2 \leftrightarrow NX_2)
                                         \, \sqcap \\
      && (Z_3 \sqcap (X_3 \leftrightarrow \neg NX_3)) \sqcup
         (\neg Z_3 \sqcap (X_3 \leftrightarrow NX_3))
  \\[2mm]
  M_5 \sqcap (C_1 < 2^n-1) &\sqsubseteq& \exists r . ((S_L \sqcup S_R) \sqcap M_5) \\
  M_5 \sqcap (C_1 = 2^n-1) &\sqsubseteq& \{ c \} 
\end{eqnarray}
  \end{center}
  \end{boxedminipage}
  \caption{The second part of ontology $\Omc'_n$.}
  \label{fig:tbox2b}
\end{figure*}
\smallskip 

The part of $\Omc'_n$ that is concerned with Counter~3 is displayed in
Figure~\ref{fig:tbox2b}. It makes sure that if $P_2$ is chosen in
Line~(34), then there is indeed a defect of type~3. It is here that
using a  nominal is crucial. Lines~(47) and~(48) mark the place where
incrementation of Counter~3 fails using the concept name $M_0$. Note
that Line~(48) ensures that $M_0$ is chosen in some Counter~3 sequence
that is not the final one; we refer to it as the `current' Counter~3
sequence.  Line~(49) further makes sure that the element chosen by
$M_0$ is at the beginning of a Counter~2 sequence, which we refer to
as the `current' Counter~2 sequence. It also chooses via the concept
names $M_L$ and $M_R$ whether the defect occurs in the subsequent
Counter~3 sequence that is a left child of the current sequence, or a
right child. The chosen value is memorized `forever' in Line~(50). Our
aim is to set another marker at the beginning of a Counter~2 sequence
in a subsequent Counter~3 sequence that encodes the same Counter~2
value as the current Counter~2 sequence, and then to compare the two
$X_3$-bit values of the two Counter~2 sequences. 

To achieve this, we need to memorize for later comparison \emph{all}
(exponentially many) $X_2$-bit values of the current Counter~2
sequence. This cannot be done in a single type and thus we use
multiple types. This is implemented in Lines~(51)-(55) in which the
current Counter~2 sequence is traversed from beginning to end. In each
step, a branching takes place via Line~(52). It is important to
understand that this branching is in model \Bmf, but not necessarily
in model \Amf. Recall that we are interested in models \Bmf that have
a functional $\ALCO,\Sigma$-bisimulation to \Amf. Informally, we can
assume the two models to have the same domain and $r$-structure up to
the element $d$ in \Amf in which we have chosen to set the marker
$M_0$.\footnote{Please compare this to defects of type~1 and~2 where
  \Bmf can be assumed to have the same domain and $r$-structure as \Amf; this is also the
  case here, up to the $M_0$ marker, but not beyond.} 
When setting
$M_0$, then we `are' in an element $e$ of \Bmf that is
$\Sigma$-bisimilar to $d$. This and what follows is illustrated in
Figure~\ref{fig:threeexppic}. Now Line~(52) creates two $r$-successors
$f_1$ and $f_2$ of $e$ in \Bmf that are both $\Sigma$-bisimilar to
$r$-successors of $d$ in \Amf. We shall argue a bit later that this
must actually be the same $r$-successor of $d$. In branch $f_1$ of
\Bmf, we stay with marker $M_1$ while in branch $f_2$ of \Bmf, we
switch to marker $M_2$. The $M_1$-branch branches again at the next
point of the Counter~2 sequence while the $M_2$ path does not, and so
on. Via Line~(51), at each point of the Counter~2 sequence we memorize
in the newly generated $M_1$-branch the value of Counter~1 in the
auxiliary counter $D_1$, the value of $X_2$ in $NX_2$, and the value
of $X_3$ in $NX_3$. In contrast, the $M_2$-branches retain their
memory via Lines~(54)-(55).  At the end of the Counter~2 sequence
whose beginning is marked with $M_0$, we have thus generated $2^n$
branches in \Bmf, each storing the $X_2$-bit value for one bit
position of Counter~2, and all of them storing the $X_3$-bit value of
the current Counter~2 sequence. 

Via Line~(53), all the $M_2$-branches extend also beyond the current
Counter~2 sequence to the end of the current Counter~3 sequence.  In
Lines~(56)~and~(57), we make a step to the first element of a
subsequent Counter~3 sequence, switching to marker $M_3$. \emph{All}
$M_2$-branches in \Bmf decide to go to an $S_L$-labeled such
subsequent sequence or all decide to go to an $S_R$-labeled such
sequence, depending on whether we had initially (before the branching)
chosen marker $M_L$ or $M_R$. Via Line~(58), we proceed down the
Counter~3 sequence and set the $M_4$-marker before reaching its end.
What we want to achieve is that the $M_4$-marker is set at the
Counter~2 subsequence that carries the same Counter~2 value as the
Counter~2 sequence, at the Counter~2 bit position that the current
branch has stored in counter $D_1$. We verify independently for each
branch in \Bmf that the bit position is correct, in Lines~(59)-(60),
and there we also make sure that the $X_2$-bit value coincides with
the $X_2$-bit value stored in $NX_2$. We also use Line~(60) to make
sure that there is indeed an incrementation conflict of Counter~3 at
this position.

We are done if we can additionally guarantee that the different
branches in \Bmf have really set the $M_4$-marker at the same
Counter~2 subsequence in the same path of \Amf. So far, however, we do
not not know that this is the case, nor that the different branches
have even followed the same path of \Amf. But this is now easily
rectified: Lines~(61) and~(62) force all branches of \Bmf to further
follow the current Counter~2 sequence, until it's end, and that the
individual name $c$ is satisfied at the end of all branches. Thus the
end of all branches is the same element in \Bmf, which is functionally
bisimilar to some element of \Amf. But \Amf is a forest model and, as
$\Omc_1$ does not use nominals, it is (by definition) even a tree
model.  Consequently, on every branch of \Bmf, the $r$-predecessor of
the final element marked with $c$ is functionally bisimilar to the
same element of \Amf, and so on, all the way up to the element of \Bmf
where the $M_0$-marker was set.
%

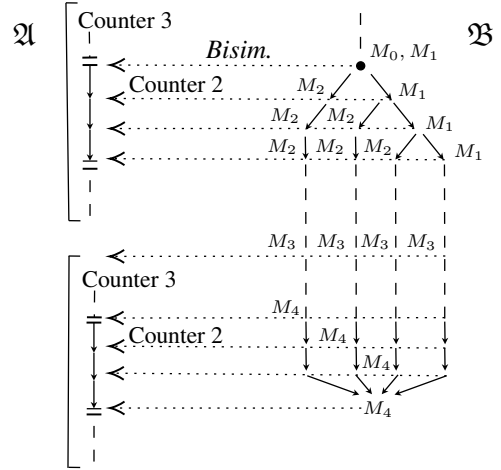
\begin{figure}[t!]
	\begin{center}
		\tikzset{every picture/.style={line width=0.5pt}} 
		
		\begin{tikzpicture}[x=0.5pt,y=0.5pt,yscale=-1,xscale=1]
			
			\draw    (489.03,61.43) -- (475.31,78.43) ;
			\draw [shift={(473.43,80.76)}, rotate = 308.9] [fill={rgb, 255:red, 0; green, 0; blue, 0 }  ][line width=0.08]  [draw opacity=0] (5,-2.5) -- (0,0) -- (5,2.5) -- (3.5,0) -- cycle    ;
			\draw    (504.43,61.86) -- (518.14,78.86) ;
			\draw [shift={(520.03,81.19)}, rotate = 231.1] [fill={rgb, 255:red, 0; green, 0; blue, 0 }  ][line width=0.08]  [draw opacity=0] (5,-2.5) -- (0,0) -- (5,2.5) -- (3.5,0) -- cycle    ;
			\draw    (470.46,84.57) -- (456.74,101.57) ;
			\draw [shift={(454.86,103.9)}, rotate = 308.9] [fill={rgb, 255:red, 0; green, 0; blue, 0 }  ][line width=0.08]  [draw opacity=0] (5,-2.5) -- (0,0) -- (5,2.5) -- (3.5,0) -- cycle    ;
			\draw    (521.57,83.57) -- (535.29,100.57) ;
			\draw [shift={(537.17,102.9)}, rotate = 231.1] [fill={rgb, 255:red, 0; green, 0; blue, 0 }  ][line width=0.08]  [draw opacity=0] (5,-2.5) -- (0,0) -- (5,2.5) -- (3.5,0) -- cycle    ;
			\draw    (510.74,83.43) -- (497.03,100.43) ;
			\draw [shift={(495.14,102.76)}, rotate = 308.9] [fill={rgb, 255:red, 0; green, 0; blue, 0 }  ][line width=0.08]  [draw opacity=0] (5,-2.5) -- (0,0) -- (5,2.5) -- (3.5,0) -- cycle    ;
			\draw    (492.74,109.14) -- (492.96,123.33) ;
			\draw [shift={(493,126.33)}, rotate = 269.14] [fill={rgb, 255:red, 0; green, 0; blue, 0 }  ][line width=0.08]  [draw opacity=0] (5,-2.5) -- (0,0) -- (5,2.5) -- (3.5,0) -- cycle    ;
			\draw    (538.46,107.14) -- (524.74,124.14) ;
			\draw [shift={(522.86,126.48)}, rotate = 308.9] [fill={rgb, 255:red, 0; green, 0; blue, 0 }  ][line width=0.08]  [draw opacity=0] (5,-2.5) -- (0,0) -- (5,2.5) -- (3.5,0) -- cycle    ;
			\draw    (543.86,107.29) -- (557.57,124.28) ;
			\draw [shift={(559.46,126.62)}, rotate = 231.1] [fill={rgb, 255:red, 0; green, 0; blue, 0 }  ][line width=0.08]  [draw opacity=0] (5,-2.5) -- (0,0) -- (5,2.5) -- (3.5,0) -- cycle    ;
			\draw    (454.74,109.14) -- (454.96,123.33) ;
			\draw [shift={(455,126.33)}, rotate = 269.14] [fill={rgb, 255:red, 0; green, 0; blue, 0 }  ][line width=0.08]  [draw opacity=0] (5,-2.5) -- (0,0) -- (5,2.5) -- (3.5,0) -- cycle    ;
			\draw  [dash pattern={on 4.5pt off 4.5pt}]  (455,130.19) -- (455.22,199.5) ;
			\draw  [dash pattern={on 4.5pt off 4.5pt}]  (493,130.19) -- (493.22,199.5) ;
			\draw  [dash pattern={on 4.5pt off 4.5pt}]  (523,130.19) -- (523.22,199.5) ;
			\draw  [dash pattern={on 4.5pt off 4.5pt}]  (560,130.19) -- (560.22,199.5) ;
			\draw  [dash pattern={on 4.5pt off 4.5pt}]  (455.22,201.83) -- (455.22,249.17) ;
			\draw  [dash pattern={on 4.5pt off 4.5pt}]  (493.22,201.83) -- (493.22,249.17) ;
			\draw  [dash pattern={on 4.5pt off 4.5pt}]  (523.22,201.83) -- (523.22,249.17) ;
			\draw  [dash pattern={on 4.5pt off 4.5pt}]  (560.22,201.83) -- (560.22,249.17) ;
			\draw    (455.22,249.17) -- (455.43,263.36) ;
			\draw [shift={(455.48,266.36)}, rotate = 269.14] [fill={rgb, 255:red, 0; green, 0; blue, 0 }  ][line width=0.08]  [draw opacity=0] (5,-2.5) -- (0,0) -- (5,2.5) -- (3.5,0) -- cycle    ;
			\draw    (493.22,249.17) -- (493.43,263.36) ;
			\draw [shift={(493.48,266.36)}, rotate = 269.14] [fill={rgb, 255:red, 0; green, 0; blue, 0 }  ][line width=0.08]  [draw opacity=0] (5,-2.5) -- (0,0) -- (5,2.5) -- (3.5,0) -- cycle    ;
			\draw    (523.22,249.17) -- (523.43,263.36) ;
			\draw [shift={(523.48,266.36)}, rotate = 269.14] [fill={rgb, 255:red, 0; green, 0; blue, 0 }  ][line width=0.08]  [draw opacity=0] (5,-2.5) -- (0,0) -- (5,2.5) -- (3.5,0) -- cycle    ;
			\draw    (560.22,249.17) -- (560.43,263.36) ;
			\draw [shift={(560.48,266.36)}, rotate = 269.14] [fill={rgb, 255:red, 0; green, 0; blue, 0 }  ][line width=0.08]  [draw opacity=0] (5,-2.5) -- (0,0) -- (5,2.5) -- (3.5,0) -- cycle    ;
			\draw    (455.22,269.17) -- (455.43,283.36) ;
			\draw [shift={(455.48,286.36)}, rotate = 269.14] [fill={rgb, 255:red, 0; green, 0; blue, 0 }  ][line width=0.08]  [draw opacity=0] (5,-2.5) -- (0,0) -- (5,2.5) -- (3.5,0) -- cycle    ;
			\draw    (493.22,269.17) -- (493.43,283.36) ;
			\draw [shift={(493.48,286.36)}, rotate = 269.14] [fill={rgb, 255:red, 0; green, 0; blue, 0 }  ][line width=0.08]  [draw opacity=0] (5,-2.5) -- (0,0) -- (5,2.5) -- (3.5,0) -- cycle    ;
			\draw    (523.22,269.17) -- (523.43,283.36) ;
			\draw [shift={(523.48,286.36)}, rotate = 269.14] [fill={rgb, 255:red, 0; green, 0; blue, 0 }  ][line width=0.08]  [draw opacity=0] (5,-2.5) -- (0,0) -- (5,2.5) -- (3.5,0) -- cycle    ;
			\draw    (560.22,269.17) -- (560.43,283.36) ;
			\draw [shift={(560.48,286.36)}, rotate = 269.14] [fill={rgb, 255:red, 0; green, 0; blue, 0 }  ][line width=0.08]  [draw opacity=0] (5,-2.5) -- (0,0) -- (5,2.5) -- (3.5,0) -- cycle    ;
			\draw    (492.89,289.5) -- (503.29,301.87) ;
			\draw [shift={(505.22,304.17)}, rotate = 229.94] [fill={rgb, 255:red, 0; green, 0; blue, 0 }  ][line width=0.08]  [draw opacity=0] (5,-2.5) -- (0,0) -- (5,2.5) -- (3.5,0) -- cycle    ;
			\draw    (524.89,289.5) -- (514.49,301.87) ;
			\draw [shift={(512.56,304.17)}, rotate = 310.06] [fill={rgb, 255:red, 0; green, 0; blue, 0 }  ][line width=0.08]  [draw opacity=0] (5,-2.5) -- (0,0) -- (5,2.5) -- (3.5,0) -- cycle    ;
			\draw    (456.22,290.17) -- (492.74,303.47) ;
			\draw [shift={(495.56,304.5)}, rotate = 200.02] [fill={rgb, 255:red, 0; green, 0; blue, 0 }  ][line width=0.08]  [draw opacity=0] (5,-2.5) -- (0,0) -- (5,2.5) -- (3.5,0) -- cycle    ;
			\draw    (561.56,290.17) -- (525.04,303.47) ;
			\draw [shift={(522.22,304.5)}, rotate = 339.98] [fill={rgb, 255:red, 0; green, 0; blue, 0 }  ][line width=0.08]  [draw opacity=0] (5,-2.5) -- (0,0) -- (5,2.5) -- (3.5,0) -- cycle    ;
			\draw  [fill={rgb, 255:red, 0; green, 0; blue, 0 }  ,fill opacity=1 ] (499.68,55.54) .. controls (499.68,53.86) and (498.32,52.5) .. (496.64,52.5) .. controls (494.96,52.5) and (493.6,53.86) .. (493.6,55.54) .. controls (493.6,57.22) and (494.96,58.58) .. (496.64,58.58) .. controls (498.32,58.58) and (499.68,57.22) .. (499.68,55.54) -- cycle ;
			\draw  [dash pattern={on 4.5pt off 4.5pt}]  (496.5,15.58) -- (496.5,50) ;
			\draw    (273.67,7.83) -- (275.87,172.73) ;
			\draw    (273.67,7.83) -- (282.44,7.94) ;
			\draw    (275.87,172.73) -- (284.64,172.84) ;
			\draw    (276,199.5) -- (275.2,359.93) ;
			\draw    (275.67,199.5) -- (284.44,199.61) ;
			\draw  [dash pattern={on 0.84pt off 2.51pt}]  (496.64,55.54) -- (309.23,54.92) ;
			\draw [shift={(307.23,54.92)}, rotate = 360.19] [color={rgb, 255:red, 0; green, 0; blue, 0 }  ][line width=0.75]    (10.93,-4.9) .. controls (6.95,-2.3) and (3.31,-0.67) .. (0,0) .. controls (3.31,0.67) and (6.95,2.3) .. (10.93,4.9)   ;
			\draw  [dash pattern={on 0.84pt off 2.51pt}]  (520.03,81.19) -- (308.77,80.21) ;
			\draw [shift={(306.77,80.2)}, rotate = 360.27] [color={rgb, 255:red, 0; green, 0; blue, 0 }  ][line width=0.75]    (10.93,-4.9) .. controls (6.95,-2.3) and (3.31,-0.67) .. (0,0) .. controls (3.31,0.67) and (6.95,2.3) .. (10.93,4.9)   ;
			\draw  [dash pattern={on 0.84pt off 2.51pt}]  (537.17,102.9) -- (309.3,103) ;
			\draw [shift={(307.3,103)}, rotate = 359.98] [color={rgb, 255:red, 0; green, 0; blue, 0 }  ][line width=0.75]    (10.93,-4.9) .. controls (6.95,-2.3) and (3.31,-0.67) .. (0,0) .. controls (3.31,0.67) and (6.95,2.3) .. (10.93,4.9)   ;
			\draw  [dash pattern={on 0.84pt off 2.51pt}]  (560.15,126.62) -- (309.03,125.81) ;
			\draw [shift={(307.03,125.8)}, rotate = 360.19] [color={rgb, 255:red, 0; green, 0; blue, 0 }  ][line width=0.75]    (10.93,-4.9) .. controls (6.95,-2.3) and (3.31,-0.67) .. (0,0) .. controls (3.31,0.67) and (6.95,2.3) .. (10.93,4.9)   ;
			\draw  [dash pattern={on 0.84pt off 2.51pt}]  (560.66,199.5) -- (308.77,199.8) ;
			\draw [shift={(306.77,199.8)}, rotate = 359.93] [color={rgb, 255:red, 0; green, 0; blue, 0 }  ][line width=0.75]    (10.93,-4.9) .. controls (6.95,-2.3) and (3.31,-0.67) .. (0,0) .. controls (3.31,0.67) and (6.95,2.3) .. (10.93,4.9)   ;
			\draw  [dash pattern={on 0.84pt off 2.51pt}]  (560.66,269.17) -- (308.77,267.81) ;
			\draw [shift={(306.77,267.8)}, rotate = 360.31] [color={rgb, 255:red, 0; green, 0; blue, 0 }  ][line width=0.75]    (10.93,-4.9) .. controls (6.95,-2.3) and (3.31,-0.67) .. (0,0) .. controls (3.31,0.67) and (6.95,2.3) .. (10.93,4.9)   ;
			\draw  [dash pattern={on 0.84pt off 2.51pt}]  (561.56,290.17) -- (308.5,288.22) ;
			\draw [shift={(306.5,288.2)}, rotate = 360.44] [color={rgb, 255:red, 0; green, 0; blue, 0 }  ][line width=0.75]    (10.93,-4.9) .. controls (6.95,-2.3) and (3.31,-0.67) .. (0,0) .. controls (3.31,0.67) and (6.95,2.3) .. (10.93,4.9)   ;
			\draw  [dash pattern={on 0.84pt off 2.51pt}]  (497.11,314.67) -- (309.03,314.6) ;
			\draw [shift={(307.03,314.6)}, rotate = 360.02] [color={rgb, 255:red, 0; green, 0; blue, 0 }  ][line width=0.75]    (10.93,-4.9) .. controls (6.95,-2.3) and (3.31,-0.67) .. (0,0) .. controls (3.31,0.67) and (6.95,2.3) .. (10.93,4.9)   ;
			\draw  [dash pattern={on 4.5pt off 4.5pt}]  (291.33,29.67) -- (291.67,49.25) ;
			\draw [shift={(291.67,49.25)}, rotate = 269.02] [color={rgb, 255:red, 0; green, 0; blue, 0 }  ][line width=0.75]    (0,5.59) -- (0,-5.59)   ;
			\draw    (291.83,54.92) -- (291.83,77.42) ;
			\draw [shift={(291.83,80.42)}, rotate = 270] [fill={rgb, 255:red, 0; green, 0; blue, 0 }  ][line width=0.08]  [draw opacity=0] (5,-2.5) -- (0,0) -- (5,2.5) -- (3.5,0) -- cycle    ;
			\draw [shift={(291.83,54.92)}, rotate = 270] [color={rgb, 255:red, 0; green, 0; blue, 0 }  ][line width=0.75]    (0,5.59) -- (0,-5.59)   ;
			\draw    (291.83,80.42) -- (291.83,100.92) ;
			\draw [shift={(291.83,103.92)}, rotate = 270] [fill={rgb, 255:red, 0; green, 0; blue, 0 }  ][line width=0.08]  [draw opacity=0] (5,-2.5) -- (0,0) -- (5,2.5) -- (3.5,0) -- cycle    ;
			\draw    (291.83,103.92) -- (291.83,124.42) ;
			\draw [shift={(291.83,127.42)}, rotate = 270] [fill={rgb, 255:red, 0; green, 0; blue, 0 }  ][line width=0.08]  [draw opacity=0] (5,-2.5) -- (0,0) -- (5,2.5) -- (3.5,0) -- cycle    ;
			\draw    (286.4,127.63) -- (297.4,127.63) ;
			\draw  [dash pattern={on 4.5pt off 4.5pt}]  (291.87,169.13) -- (291.9,133.45) ;
			\draw [shift={(291.9,133.45)}, rotate = 450.05] [color={rgb, 255:red, 0; green, 0; blue, 0 }  ][line width=0.75]    (0,5.59) -- (0,-5.59)   ;
			\draw  [dash pattern={on 4.5pt off 4.5pt}]  (295.33,225.67) -- (295.13,245.93) ;
			\draw [shift={(295.13,245.93)}, rotate = 270.57] [color={rgb, 255:red, 0; green, 0; blue, 0 }  ][line width=0.75]    (0,5.59) -- (0,-5.59)   ;
			\draw    (294.83,249.92) -- (294.83,269.79) ;
			\draw [shift={(294.83,272.79)}, rotate = 270] [fill={rgb, 255:red, 0; green, 0; blue, 0 }  ][line width=0.08]  [draw opacity=0] (5,-2.5) -- (0,0) -- (5,2.5) -- (3.5,0) -- cycle    ;
			\draw [shift={(294.83,249.92)}, rotate = 270] [color={rgb, 255:red, 0; green, 0; blue, 0 }  ][line width=0.75]    (0,5.59) -- (0,-5.59)   ;
			\draw    (294.83,272.79) -- (294.83,290.86) ;
			\draw [shift={(294.83,293.86)}, rotate = 270] [fill={rgb, 255:red, 0; green, 0; blue, 0 }  ][line width=0.08]  [draw opacity=0] (5,-2.5) -- (0,0) -- (5,2.5) -- (3.5,0) -- cycle    ;
			\draw    (294.83,293.86) -- (294.83,311.94) ;
			\draw [shift={(294.83,314.94)}, rotate = 270] [fill={rgb, 255:red, 0; green, 0; blue, 0 }  ][line width=0.08]  [draw opacity=0] (5,-2.5) -- (0,0) -- (5,2.5) -- (3.5,0) -- cycle    ;
			\draw    (289.4,315.13) -- (300.4,315.13) ;
			\draw  [dash pattern={on 4.5pt off 4.5pt}]  (294.87,355.13) -- (294.9,319.45) ;
			\draw [shift={(294.9,319.45)}, rotate = 450.05] [color={rgb, 255:red, 0; green, 0; blue, 0 }  ][line width=0.75]    (0,5.59) -- (0,-5.59)   ;
			\draw  [dash pattern={on 0.84pt off 2.51pt}]  (560.22,246.17) -- (308.32,246.46) ;
			\draw [shift={(306.32,246.47)}, rotate = 359.93] [color={rgb, 255:red, 0; green, 0; blue, 0 }  ][line width=0.75]    (10.93,-4.9) .. controls (6.95,-2.3) and (3.31,-0.67) .. (0,0) .. controls (3.31,0.67) and (6.95,2.3) .. (10.93,4.9)   ;
			\draw    (275.67,359.5) -- (284.44,359.61) ;
			
			\draw (500.93,37.33) node [anchor=north west][inner sep=0.75pt]  [font=\scriptsize] [align=left] {$\displaystyle M_{0} ,M_{1}$};
			\draw (445.6,63) node [anchor=north west][inner sep=0.75pt]  [font=\scriptsize] [align=left] {$\displaystyle M_{2}$};
			\draw (522.6,67) node [anchor=north west][inner sep=0.75pt]  [font=\scriptsize] [align=left] {$\displaystyle M_{1}$};
			\draw (543.17,91.71) node [anchor=north west][inner sep=0.75pt]  [font=\scriptsize] [align=left] {$\displaystyle M_{1}$};
			\draw (565.22,113.29) node [anchor=north west][inner sep=0.75pt]  [font=\scriptsize] [align=left] {$\displaystyle M_{1}$};
			\draw (425.89,85.43) node [anchor=north west][inner sep=0.75pt]  [font=\scriptsize] [align=left] {$\displaystyle M_{2}$};
			\draw (423.93,109.05) node [anchor=north west][inner sep=0.75pt]  [font=\scriptsize] [align=left] {$\displaystyle M_{2}$};
			\draw (468.31,85.38) node [anchor=north west][inner sep=0.75pt]  [font=\scriptsize] [align=left] {$\displaystyle M_{2}$};
			\draw (459.98,109.05) node [anchor=north west][inner sep=0.75pt]  [font=\scriptsize] [align=left] {$\displaystyle M_{2}$};
			\draw (496.31,109.38) node [anchor=north west][inner sep=0.75pt]  [font=\scriptsize] [align=left] {$\displaystyle M_{2}$};
			\draw (423.6,181.05) node [anchor=north west][inner sep=0.75pt]  [font=\scriptsize] [align=left] {$\displaystyle M_{3}$};
			\draw (460.93,181.05) node [anchor=north west][inner sep=0.75pt]  [font=\scriptsize] [align=left] {$\displaystyle M_{3}$};
			\draw (493.93,181.05) node [anchor=north west][inner sep=0.75pt]  [font=\scriptsize] [align=left] {$\displaystyle M_{3}$};
			\draw (528.93,181.38) node [anchor=north west][inner sep=0.75pt]  [font=\scriptsize] [align=left] {$\displaystyle M_{3}$};
			\draw (426.6,231.05) node [anchor=north west][inner sep=0.75pt]  [font=\scriptsize] [align=left] {$\displaystyle M_{4}$};
			\draw (496.27,307.71) node [anchor=north west][inner sep=0.75pt]  [font=\scriptsize] [align=left] {$\displaystyle M_{4}$};
			\draw (462.1,252.05) node [anchor=north west][inner sep=0.75pt]  [font=\scriptsize] [align=left] {$\displaystyle M_{4}$};
			\draw (495.1,272.55) node [anchor=north west][inner sep=0.75pt]  [font=\scriptsize] [align=left] {$\displaystyle M_{4}$};
			\draw (575,21) node [anchor=north west][inner sep=0.75pt]  [font=\Large] [align=left] {$\displaystyle \mathfrak{B}$};
			\draw (231.67,21) node [anchor=north west][inner sep=0.75pt]  [font=\Large] [align=left] {$\displaystyle \mathfrak{A}$};
			\draw (281,12.5) node [anchor=north west][inner sep=0.75pt]  [font=\small] [align=left] {Counter 3};
			\draw (377.67,34.33) node [anchor=north west][inner sep=0.75pt]   [align=left] {\textit{Bisim.}};
			\draw (319.4,61.3) node [anchor=north west][inner sep=0.75pt]  [font=\small] [align=left] {Counter 2};
			\draw (284,208.5) node [anchor=north west][inner sep=0.75pt]  [font=\small] [align=left] {Counter 3};
			\draw (319.23,250.8) node [anchor=north west][inner sep=0.75pt]  [font=\small] [align=left] {Counter 2};

		\end{tikzpicture}
		\caption{Strategy for comparing bit values of Counter~3}
		\label{fig:threeexppic}
	\end{center}
\end{figure}

\smallskip

Based on what was said above, it can be verified that
Lemma~\ref{lem:threeexpcor} indeed holds. We refrain from giving
details.

\subsection{\ThreeExpTime-Hardness}

An \emph{Alternating Turing Machine (ATM)} is of the form $\mathcal{M}
= (Q,\Sigma,\Gamma,q_0,\Delta)$. The set of \emph{states} $Q =
Q_\exists \uplus Q_\forall \uplus \{q_a\} \uplus \{q_r\}$ consists of
\emph{existential states} from $Q_\exists$, \emph{universal states}
from $Q_\forall$, an \emph{accepting state} $q_a$, and a
\emph{rejecting state} $q_r$; $\Sigma$ is the \emph{input alphabet}
and $\Gamma$ the \emph{work alphabet} containing a \emph{blank symbol}
$\square$ and satisfying $\Sigma \subseteq \Gamma$; $q_0 \in
Q_\exists$ is the \emph{starting} state; and the \emph{transition
  relation} $\delta$ is of the form
$$
\delta \; \subseteq \; Q \times \Gamma \times Q \times \Gamma
\times \{ L,R \}.
$$
We write $\delta(q,a)$ for
$\{ (q',b,M) \mid (q,a,q',b,M) \in \delta \}$.  As usual, we assume
that $q \in Q_\exists \cup Q_\forall$ implies
$\delta(q,b) \neq \emptyset$ for all $b \in \Gamma$ and
$q \in \{ q_a,q_r \}$ implies $\delta(q,b) = \emptyset$ for all
$b \in \Gamma$. For what follows, we also assume w.l.o.g.\ that for
each $q \in Q_\forall \cup Q_\exists$ and each $b \in \Sigma$, the set
$\delta(q,b)$ has exactly two elements. We assume for notational convenience that
these elements are ordered, i.e., $\delta(q,b)$ is an ordered pair
$((q',b',M'),(a'',b'',M''))$.

A \emph{configuration} of an ATM is a word $wqw'$ with $w,w' \in
\Gamma^*$ and $q \in Q$. The intended meaning is that the tape
contains the word $ww'$ (with only blanks before and behind it), the
machine is in state $q$, and the head is on the leftmost symbol of
$w'$. The \emph{successor configurations} of a configuration $wqw'$
are defined in the usual way in terms of the transition relation
$\delta$. A \emph{halting configuration} is of the form $wqw'$
with $q \in \{ q_a, q_r \}$.

A \emph{computation path} of an ATM $\mathcal{M}$ on a word $w$ is a
(finite or infinite) sequence of configurations $c_1,c_2,\dots$ such
that $c_1=q_0w$ and $c_{i+1}$ is a successor configuration of $c_i$
for $i \geq 0$.  All ATMs considered in this paper have only
\emph{finite} computation paths on any input. A halting configuration
is \emph{accepting} iff it is of the form $wq_aw'$.  For non-halting
configurations $c=w q w'$, the acceptance behaviour depends on $q$: if
$q \in Q_\exists$, then $c$ is accepting iff at least one successor
configuration is accepting; if $q \in Q_\forall$, then $c$ is
accepting iff all successor configurations are accepting.  Finally,
the ATM $\mathcal{M}$ with starting state $q_0$ \emph{accepts} the
input $w$ iff the \emph{initial configuration} $q_0w$ is accepting.
We use $L(\Mmc)$ to denote the language accepted by \Mmc, i.e.,
$L(\Mmc) = \{ w \in \Sigma^* \mid \Mmc \text{ accepts } w \}$.

To obtain a witness for the acceptance of an input by an ATM, it is
common to arrange configurations in a tree.  Such an \emph{acceptance
  tree of an ATM} $\mathcal{M}$ with starting state $q_0$ \emph{on a word
$w$} is a finite tree whose nodes are labelled with configurations such
that
\begin{itemize}

\item the root node is labelled with the initial configuration $q_0w$;

\item if a node $s$ in the tree is labelled with $wqw'$,
  $q \in Q_\forall$, then $s$ has exactly two successors, labeled with
  the two successor configurations of $wqw'$;

\item if a node $s$ in the tree is labelled with $wqw'$, $q \in
  Q_\exists$, then $s$ has exactly two successors, both
  labelled with a successor configuration of $wqw'$;\footnote{A single
    successor would of course be sufficient; we only use two
    successors (which can carry the same label) to enable a more
    uniform reduction.}

\item leaves are labelled with accepting halting configurations.

\end{itemize}
It is clear that there exists an acceptance tree of $\mathcal{M}$ on
$w$ if and only if $\mathcal{M}$ accepts $w$.

According to Theorem 3.4 of \cite{chandraAlternation1981}, there is a double
exponentially space bounded ATM $\mathcal{M}$ whose word problem is
\ThreeExpTime-hard.  We may w.l.o.g.\ assume that the length of every
computation path of $\mathcal{M}$ on any input $w \in \Sigma^n$ is
bounded by $2^{2^{2^n}}$, and all the configurations $wqw'$ in such
computation paths satisfy $|ww'| \leq 2^{2^n}$.

We prove Theorem~\ref{thm:threeexpalco} by reduction from the word
problem for \Mmc. Thus let $w \in \Sigma^*$ be an input to
$\mathcal{M}$.  We have to construct an \ALC-ontology $\Omc$ and an
\ALCO-ontology $\Omc'$ such that $\Omc \cup \Omc'$ is a conservative
extension of $\Omc$ if and only if $M$ does not accept $w$.  This can
be achieved by extending the ontologies $\Omc_n$ and $\Omc'_n$ from
Section~\ref{app:largewitness}, where $n$ is the length of $w$.  The
main idea is to do this such that models \Amf of witness concepts for
$\Omc$ and $\Omc'$ describe an acceptance tree of $\mathcal{M}$ on $w$
instead of a counting tree. In fact, such models \Amf describe a tree
that is a counting tree and at the same time a computation tree. We
again use only a single role name~$r$. Each configuration of \Mmc is
represented by a Counter~3 sequence and each tape cell of a
configuration is represented by a Counter~2 sequence. Thus, each node
of the acceptance tree is spread out over $2^n \cdot 2^{2^n}$ elements
in the model and the role name $r$ might indicate moving to the next
tape cell in the same configuration, moving to the first tape cell of
a successor configuration, and also moving to the next element in the
Counter~2 sequence that represents the current tape cell.

So we only need to represent the computation of \Mmc on $w$ on top of
a counting tree that we have already enforced by $\Omc_n$ and
$\Omc'_n$, and we can exploit the three counters that $\Omc_n$ and
$\Omc'_n$ give us. Such a representation is in fact fairly standard
and no additional technical tricks are needed, see for instance the
\TwoExpTime-hardness proof for conservative extensions in \ALC given
in \cite{DBLP:conf/kr/GhilardiLW06}. Since fully worked out concept inclusions are
nevertheless lengthy and difficult to comprehend, we only sketch the
required extensions of $\Omc_n$ and~$\Omc'_n$.

In the extension \Omc of $\Omc_n$, we use an additional concept name
$S_q$ for every state $q \in Q$, $S_a$ for every symbol $a \in
\Sigma$, and $H$ for indicating the head position. The ontology \Omc
then makes sure that all these symbols are interpreted uniformly in
Counter~2 sequences, that exactly one concept name $S_a$ is true at
every Counter~2 sequence, and that there is exactly one Counter~2
sequence in each Counter~3 sequence where $H$ is true, together with a
unique concept name $S_q$. It also guarantees that the computation
starts with the initial configuration $q_0w$. All remaining properties
of computation trees are achieved by $\Omc'$. In preparation for this,
we add more concept names to \Omc, namely primed versions $S'_q$,
$S'_q$, and $H'$ of all concept names introduced above as well as
concept names $Y_{q,a,M}$ and $Y_{q,a,M}$ for all $q \in Q$, $a \in
\Sigma$, and $M \in \{L,R\}$. Informally, a concept name $Y_{q,a,M}$
indicates that for moving to the current configuration, the Turing
machine has decided to write symbol $a$, switch to state $q$, and move
in direction $M$.  Still in $\Omc$, we make sure that the primed
concept names satisfy the same constraints as their unprimed siblings,
that the concept names $Y_{q,a,M}$ are set in successor configurations
(Counter~3 sequences) in accordance with the unprimed concept names
and the transition relation, and that if some $Y_{q,a,M}$ is set in
the current configuration, then the non-primed and the primed concept
names relate to each other accordingly.

It remains for $\Omc'$ to make sure that the transition relation of
\Mmc is respected and that the the rejecting state is not reached on
any branch. Given what was already done in \Omc, the former can be
achieved by enforcing that whenever some unprimed concept name
$S_q$, $S_a$, or $H$ is true in some Counter~2 sequence, then its
primed version is true in the Counter~2 sequence that represents
the same Counter~2 value of all subsequent Counter~3 sequences.
We refer to this as \emph{correct copying}.

Strictly speaking, the second ontology $\Omc'$ is not an extension of
\Omc because we need to replace Line~(34) with the following, which
admits five different types of defects in place of two:
$$
  A \sqsubseteq P_1 \sqcup P_2 \sqcup P_3 \sqcup P_4 \sqcup P_5.
$$
$P_4$ is for checking that some branch reaches the rejecting state.
This is easy to implement, using existential restrictions as in
Lines~(37)-(46) (rather than universal restrictions as in Lines~(35)
and~(36)). $P_5$ is for verifying that correct copying is taking
place. This is achieved by a slight variation of the concept
inclusions in Figure~\ref{fig:tbox2b} (which also use existential
restrictions). In fact, those concept inclusions can be viewed as
copying the value of an $X_3$-bit to same-value Counter~2
subsequences of subsequent Counter~3 sequences. We copy
the information stored in the concept names $S_q$, $S_a$, and $H$
instead.

\section{Proofs for Section~\ref{sec:dfstrong}}

\thmfostrongsep*
\begin{proof} \
The \ExpTime upper bound follows from the fact that the 
complement of the problem to decide $\varphi_{\Kmc,\Sigma,a}(x) \models \neg \varphi_{\Kmc,\Sigma,b}(x)$ can be equivalently formulated as a concept
satisfiability problem in the extension $\mathcal{ALCIO}^{u}$ of $\mathcal{ALCIO}$ with the universal 
role $u$. To see this, obtain $C_{\Kmc,\Sigma,a}$ 
from $\Kmc$ by taking the conjunction of the following concepts:
\begin{itemize}
	\item $\forall u.(C \rightarrow D)$, for $C \sqsubseteq D\in \Omc$;
	\item $\forall u.(\{c\}\rightarrow \exists R.\{d\})$, for $R(c,d)\in \Dmc$;
	\item $\forall u.(\{c\}\rightarrow A)$, for $A(c)\in \Dmc$;
\end{itemize}
and then replacing 
\begin{itemize}
	\item all concept and role names $X$ not in $\Sigma$ by
fresh and distinct symbols $X_{a}$;
    \item all individual names $c$ not in $\Sigma\cup \{a\}$ by fresh and distinct individual names $c_{a}$;
    \item the individual name $a$ by a fresh individual name $m$; 
    \item if $a\in \Sigma$ then add $\{m\} \leftrightarrow \{a\}$ as a conjunct. 
\end{itemize}
Define $C_{\Kmc,\Sigma,b}$ in the same way with $a$ replaced by $b$.
Then $\varphi_{\Kmc,\Sigma,a}(x) \wedge \varphi_{\Kmc,\Sigma,b}(x)$ is satisfiable if $m \wedge C_{\Kmc,\Sigma,a}\wedge C_{\Kmc,\Sigma,b}$ is satisfiable.
\end{proof}

\thmstrongFOequivalent*

\begin{proof} \
	``2. $\Rightarrow$ 1.'' is trivial. For the converse direction, assume that Condition~1. holds.
	
	Note that we can view $\Kmc_{\Sigma,a}$ as the union of $\Omc_{a}$ and $\Dmc_{a}$, where 
	\begin{itemize}
		\item $\Omc_{a}$ is a copy of $\Omc$ in which all concept and role names  $X\not\in \Sigma$ have been replaced by fresh symbols $X_{a}$;
		\item $\Dmc_{a}$ is a copy of $\Dmc$ in which every concept and role name $X\not\in \Sigma$ is replaced by $X_{a}$ and in which every individual $c\not\in \Sigma\cup \{a\}$ is replaced by a variable $x_{c,a}$ and $a$ is replaced by $x$. Moreover, $x=a$ is added if $a\in \Sigma$.   
	\end{itemize}
Thus, by taking the conjunction of all members of $\Dmc_{a}$ and existentially quantifying over all variables distinct from $x$ we obtain a formula in CQ$^{\ALCI}$.
$\Kmc_{\Sigma,b}$ can be viewed accordingly with $a$ replaced by $b$.

In what follows we write
\begin{itemize}
	\item $\Amf,d \Leftrightarrow_{\text{CQ}^{\ALCIO},\Sigma} \Bmf,e$ if
$\Amf\models \varphi(d)$ iff $\Bmf\models \varphi(e)$, for all
$\varphi(x)$ in $\text{CQ}^{\ALCIO}(\Sigma)$.
    \item $\Amf,d \Leftrightarrow_{\text{CQ}^{\ALCIO},\Sigma}^{\text{mod}} \Bmf,e$
if for all finite $D\subseteq \text{dom}(\Amf)$ 
containing $d$ we have
$\Amf,d\rightarrow_{D,\ALCIO,\Sigma} \Bmf,e$, and vice versa.
\end{itemize}
By Condition~1, we have $\varphi_{\Kmc,\Sigma,a}(x) \models \neg\varphi_{\Kmc,\Sigma,b}(x)$. Assume there does not exist a separating formula in BoCQ$^{\mathcal{ALCIO}}(\Sigma)$. We first show the following claim. 
	
	\medskip

\noindent	
	\emph{Claim 1}. There exist pointed structures $\Amf,d$ and $\Bmf,e$ such that
	$\Amf\models \varphi_{\Kmc,\Sigma,a}(d)$ and
	$\Bmf \models \varphi_{\Kmc,\Sigma,b}(e)$ and $\Amf,d \Leftrightarrow_{\text{CQ}^{\ALCIO},\Sigma} \Bmf,e$.
		
	\medskip
	For the proof of Claim 1, consider the set $\Gamma$ of all formulas $\psi(x)$ in BoCQ$^{\mathcal{ALCIO}}(\Sigma)$ such that $\varphi_{\Kmc,\Sigma,a}(x) \models \psi(x)$. By compactness and our assumption, 
	$$
	\Gamma \cup \{\varphi_{\Kmc,\Sigma,b}(x)\}
	$$ 
	is satisfiable. Take a pointed model $\Bmf,e$ of
	$\Gamma \cup \{\varphi_{\Kmc,\Sigma,b}(x)\}$. Next, let $\Psi$ be the set of all
	$\psi(x)$ in BoCQ$^{\mathcal{ALCIO}}(\Sigma)$ such that $\Bmf \models \psi(e)$.
	By compactness and assumption 
	$$
	\Psi \cup\{\varphi_{\Kmc,\Sigma,a}(x)\}
	$$
	is satisfiable. Take a pointed model $\Amf,d$ of $\Psi \cup\{\varphi_{\Kmc,\Sigma,a}(x)\}$.
	By definition, the pointed models $\Amf,d$ and $\Bmf,e$ are as required in Claim 1.
	
	\medskip
	Using $\omega$-saturated elementary extensions of the pointed models $\Amf,d$ and $\Bmf,e$ from Claim 1 we obtain the following claim using Lemma~\ref{lem:equivalence2}.
	
	\medskip
	
\noindent
	\emph{Claim 2}. There exist pointed structures $\Amf,d$ and $\Bmf,e$ such that
	$\Amf\models \varphi_{\Kmc,\Sigma,a}(d)$ and
	$\Bmf \models\varphi_{\Kmc,\Sigma,b}(e)$ and such that 
	$\Amf,d \Leftrightarrow_{\text{CQ}^{\ALCI},\Sigma}^{\text{mod}} \Bmf,e$.
	
	\medskip
	Now take assignments $v_{a}$ from the variables of $\varphi_{\Kmc,\Sigma,a}$ into 
	$\Amf$ witnessing $\Amf\models \varphi_{\Kmc,\Sigma,a}(d)$ and $v_{b}$ from
	the variables of $\varphi_{\Kmc,\Sigma,b}$ into 
	$\Bmf$ witnessing $\Bmf\models \varphi_{\Kmc,\Sigma,b}(e)$. Let $D_{a}$ and $D_{b}$ be the images of $v_{a}$ in $\Amf$ and of $v_{b}$ in $\Bmf$, respectively. By definition, we have $\Sigma$-homomorphisms 
	\begin{itemize}
		\item $h_{a}: \Amf_{|D_{a}} \rightarrow \Bmf$ mapping $d$ to $e$ and such  that $\Amf,c \sim_{\mathcal{ALCIO},\Sigma} \Bmf,h_{a}(c)$ for all $c\in D$;
		\item $h_{b}: \Bmf_{|D_{b}} \rightarrow \Amf$ mapping $e$ to $d$ and
	such that $\Bmf,c \sim_{\mathcal{ALCIO},\Sigma}\Amf, h_{b}(c)$ for all $c\in D_{b}$. 
	\end{itemize}
     We also have by definition that for any
	$c\in \text{dom}(\Amf)$ there exists a $c'\in \text{dom}(\Bmf)$ such that $\Amf,c \sim_{\mathcal{ALCIO},\Sigma} \Bmf,c'$, and vice versa.
	We now merge $\Amf$ and $\Bmf$ to a single structure by taking their bisimulation product $\Cmf$, defined as follows. 
	The domain of $\Cmf$ is 
	$$
	\{ (c,c') \in \text{dom}(\Amf) \times \text{dom}(\Bmf) \mid \Amf,c \sim_{\ALCI,\Sigma} \Bmf,c'\}
	$$
	and we set 
	\begin{itemize}
		\item $(c,c')\in A^{\Cmf}$ if $c\in A^{\Amf}$ (equivalently, if $c'\in A^{\Bmf}$) for all $A\in \Sigma$;
		\item $(c,c')\in A^{\Cmf}$ if $c\in A^{\Amf}$ for all $A \in \text{sig}(\varphi_{\Kmc,\Sigma,a})\setminus \Sigma$;
		\item 
		$(c,c')\in A^{\Cmf}$ if  $c'\in A^{\Bmf}$ for all $A \in \text{sig}(\varphi_{\Kmc,\Sigma,b})\setminus \Sigma$;
		\item $((c_{1},c_{1}'), (c_{2},c_{2}')) \in r^{\Cmf}$ if $(c_{1},c_{2})\in r^{\Amf}$ and $(c_{1}',c_{2}')\in r^{\Bmf}$ for all $r\in \Sigma$;
	    \item$((c_{1},c_{1}'), (c_{2},c_{2}')) \in r^{\Cmf}$ if $(c_{1},c_{2})\in r^{\Amf}$ for all $r\in \text{sig}(\varphi_{\Kmc,\Sigma,a})\setminus \Sigma$;
        \item$((c_{1},c_{1}'), (c_{2},c_{2}')) \in r^{\Cmf}$ if $(c_{1}',c_{2}')\in r^{\Bmf}$ for all $r\in \text{sig}(\varphi_{\Kmc,\Sigma,b})\setminus\Sigma$;
        \item $c^\Cmf = (c^{\Amf},c^{\Bmf})$ for all $c\in \Sigma$.
	\end{itemize}
We show that $\Cmf\models (\varphi_{\Kmc,\Sigma,a}\wedge \varphi_{\Kmc,\Sigma,b})(d,e)$ which contradicts the assumption that
$\varphi_{\Kmc,\Sigma,a}(x)\models \neg \varphi_{\Kmc,\Sigma,b}(x)$.
To show that $\Cmf$ is a model of $\Omc_{a}$ and $\Omc_{b}$ it suffices to show 
the following claim.

\medskip
\noindent
\emph{Claim 3}. (1) The projection $p_{a}:\Cmf \rightarrow \Amf$ defined by setting $p_{a}(c,c')=c$ is an $\mathcal{ALCIO}(\text{sig}(\varphi_{\Kmc,\Sigma,a}))$-bisimulation between $\Cmf$ and $\Amf$.

(2) The projection $p_{b}:\Cmf \rightarrow \Bmf$ defined by setting $p_{b}(c,c')=c'$ is an $\mathcal{ALCIO}(\text{sig}(\varphi_{\Kmc,\Sigma,b}))$-bisimulation between $\Cmf$ and $\Bmf$.

\medskip
The proof of Claim~3 is straightforward and omitted. It follows from Claim~3 and the assumption that $\Amf$ is a model of $\Omc_{a}$ and $\Bmf$ a model of $\Omc_{b}$ that $\Cmf$ is a model of $\Omc_{a}\cup \Omc_{b}$.

Next we lift the variable assignments $v_{a}$ and $v_{b}$ from $\Amf$ and, respectively, $\Bmf$ to $\Cmf$. Thus, we set 
\begin{itemize}
	\item $\bar v_{a}(x_{c})= (v_{a}(x_{c}),h_{a}(v_{a}(x_{c}))$ for all variables of the form $x_{c}$ in $\varphi_{\Kmc,\Sigma,a}$ and 
	\item $\bar v_{b}(y_{c})= (v_{b}(y_{c}),h_{b}(v_{b}(y_{c}))$ for all variables of the form $y_{c}$ in $\varphi_{\Kmc,\Sigma,b}$.
    \item $\bar v_{a}(x)=\bar v_{b}(x) = (v_{a}(x),v_{b}(x))$.
\end{itemize}
The following claim is straightforward now.

\medskip
\noindent
\emph{Claim 4.} $\Cmf\models_{\bar v_{a}} \Dmc_{a}(d,e)$ and $\Cmf\models_{\bar v_{b}} \Dmc_{b}(d,e)$.

\medskip
Claim 4 implies $\Cmf\models (\varphi_{\Kmc,\Sigma,a}\wedge \varphi_{\Kmc,\Sigma,b})(d,e)$ as we have established already that $\Cmf$ is
a model of $\Omc_{a}\cup \Omc_{b}$.
This concludes the proof.
\end{proof}

\thmtwoexpdl*

We first prove the upper bounds. To this end, we formulate the complexity results proved in~\cite{AJMOW-AAAI21} for interpolant existence in detail. Let $\Lmc\in \DLS$. 
Let $\Omc_{1},\Omc_{2}$ be $\Lmc$-ontologies and let $C_{1},C_{2}$ be $\Lmc$-concepts. Let $\Sigma= \text{sig}(\Omc_{1},C_{1}) \cap \text{sig}(\Omc_{2},C_{2})$. An $\Lmc$-interpolant for the \emph{$\Lmc$-tuple} $\Omc_{1},\Omc_{2},C_{1},C_{2}$ is an $\Lmc(\Sigma)$-concept $C$ such that
\begin{itemize}
	\item $\Omc_{1} \models C_{1} \sqsubseteq C$;
	\item $\Omc_{2} \models C \sqsubseteq C_{2}$.
\end{itemize}
The following is shown in~\cite{AJMOW-AAAI21}.
\begin{theorem}
	Let $\Lmc\in \DLS$. Then the problem to decide whether an $\Lmc$-interpolant
    exists for $\Lmc$-tuples $\Omc_{1},\Omc_{2},C_{1},C_{2}$ is \TwoExpTime-complete.
\end{theorem}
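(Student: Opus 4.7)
The strategy is to establish both bounds via the 2\textsc{ExpTime}-complete interpolant existence problem for $\Lmc$-tuples stated immediately below the theorem.

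For the upper bound, the plan is to reduce strong $\Lmc(\Sigma)$-separability to $\Lmc$-interpolant existence. Given a labeled $\Lmc$-KB $(\Kmc,\{a\},\{b\})$ with $\Kmc=(\Omc,\Dmc)$ and signature $\Sigma$, the idea is to mimic the FO-level translations $\varphi_{\Kmc,\Sigma,a}$ and $\varphi_{\Kmc,\Sigma,b}$ from Lemma~\ref{lem:int} at the DL level so as to build an $\Lmc$-tuple $\Omc_1,\Omc_2,C_1,C_2$. For $\Lmc\in\{\mathcal{ALCO},\mathcal{ALCIO}\}$ I would encode the database as concept inclusions using nominals: $\{c\}\sqsubseteq A$ for each $A(c)\in\Dmc$ and $\{c\}\sqsubseteq\exists r.\{d\}$ for each $r(c,d)\in\Dmc$. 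For $\Lmc\in\{\ALC,\ALCI\}$, which lack nominals, each individual $c\in\text{ind}(\Dmc)$ is simulated by a fresh concept name $A_c$, with $A_c\sqsubseteq A$ and $A_c\sqsubseteq\exists r.A_d$ replacing the facts. In both cases one renames all symbols outside $\Sigma$ by fresh copies on each side (indexed by $a$, respectively by $b$), sets $\Omc_1$ to be the renamed ontology plus the encoded database, and sets $C_1=\{a\}$ or $C_1=A_a$ accordingly; $\Omc_2,C_2$ are defined analogously for $b$. One then verifies that a strongly $\Sigma$-separating $\Lmc$-concept exists iff an $\Lmc$-interpolant exists for $\Omc_1,\Omc_2,C_1,\neg C_2$. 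One direction is immediate by interpreting any model of $\Kmc$ as a model of $\Omc_1$ with $A_c$ (or $\{c\}$) interpreted as $\{c^\Amf\}$; the converse relies on $\Lmc$-bisimulation invariance, so that arbitrary models of $\Omc_1$ satisfying $C_1$ at some element are $\Lmc(\Sigma)$-equivalent to a point of a model of $\Kmc$ realising the same separator, using the forest-model results of Section~\ref{sec:prelim}.

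For the lower bound, the plan is to adapt the 2\textsc{ExpTime}-hardness proofs for $\Lmc$-interpolant existence given in~\cite{AJMOW-AAAI21} directly to labeled KBs. Those proofs produce $\Lmc$-tuples $\Omc_1,\Omc_2,C_1,C_2$ encoding, e.g., computations of a double-exponentially space-bounded alternating Turing machine, with an interpolant witnessing acceptance over the common signature. I would transport each such instance into a labeled KB by merging $\Omc_1$ and $\Omc_2$ into a single ontology $\Omc$, using two trigger concept names that ``activate'' the fresh-symbol parts of $\Omc_1$ only below $a$ and of $\Omc_2$ only below $b$ via standard relativisation, and adding a minimal database $\Dmc$ with two fresh individuals $a,b$ whose facts force $C_1$ to hold at $a$ and $C_2$ at $b$. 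The signature $\Sigma$ is set to the common signature of the original tuple. Showing that this transformation preserves the intended equivalence is the main obstacle: one must verify that models of the joined $\Lmc$-KB $(\Omc,\Dmc)$ really project onto pairs of models of the original $\Omc_1,\Omc_2$ sharing the separator's behaviour on $a$ and $b$, without the merging introducing shortcuts that trivialise separation. This is delicate for $\mathcal{ALCO}$ and $\mathcal{ALCIO}$, where the nominals used in the hardness construction can interact with the KB's individual names, and for $\ALC,\ALCI$ where the relativisation has to be tight enough to avoid leaking information across the two sides via the shared signature $\Sigma$.
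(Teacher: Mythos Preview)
You have misidentified the statement. The theorem you are asked to prove asserts that \emph{$\Lmc$-interpolant existence} for $\Lmc$-tuples $\Omc_1,\Omc_2,C_1,C_2$ is \TwoExpTime-complete; the paper does not prove this at all but simply cites it from~\cite{AJMOW-AAAI21}. Your proposal instead sketches a proof of the \emph{separability} result (Theorem~\ref{thm:thmtwoexpdl}), which is the theorem that \emph{uses} the present statement as a black box. In particular, your very first sentence---``establish both bounds via the \TwoExpTime-complete interpolant existence problem for $\Lmc$-tuples stated immediately below the theorem''---is circular: the statement \emph{is} the interpolant existence result, so you cannot appeal to it to prove itself.

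For what it is worth, your sketch is close to how the paper actually derives Theorem~\ref{thm:thmtwoexpdl} from the present statement: the paper encodes the KB into ontologies $\Omc_{\Kmc,\Sigma,a},\Omc_{\Kmc,\Sigma,b}$ using nominals (moving to $\Lmc\Omc$ when $\Lmc$ lacks them) and shows that a strongly $\Sigma$-separating $\Lmc$-concept is exactly an $\Lmc\Omc$-interpolant for the tuple $\Omc_{\Kmc,\Sigma,a},\Omc_{\Kmc,\Sigma,b},\{m_a\},\neg\{m_b\}$; for $\Lmc\in\{\ALC,\ALCI\}$ the shared signature contains no individual names, so any $\Lmc\Omc$-interpolant is already an $\Lmc$-interpolant. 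The lower bound is obtained not by a direct reduction from interpolant existence but via the model-theoretic criterion in Lemma~\ref{thm:alcistrongcrit} together with Theorem~\ref{thm:model2exp}, which is itself extracted from the hardness proof in~\cite{AJMOW-AAAI21}. But none of this bears on the stated theorem, whose proof lives entirely in the cited paper.
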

Now assume that $\Lmc\in \DLS$ and a labeled $\Lmc$-KB $(\Kmc,\{a\},\{b\})$ with $\Kmc=(\Omc,\Dmc)$ and $\Sigma\subseteq \text{sig}(\Kmc)$ are given. Let $\Lmc\Omc$ denote the extension of $\Lmc$ by nominals (if $\Lmc$ contains nominals already then set $\Lmc\Omc=\Lmc$).
Obtain an $\Lmc\Omc$-ontology $\Omc_{\Kmc,\Sigma,a}$ from $\Kmc$ by taking the following inclusions:
\begin{itemize}
	\item all inclusions in $\Omc$;
	\item $\{c\}\sqsubseteq \exists R.\{d\}$, for $R(c,d)\in \Dmc$;
	\item $\{c\}\sqsubseteq A$, for $A(c)\in \Dmc$;
\end{itemize}
and then replacing 
\begin{itemize}
	\item all concept and role names $X$ not in $\Sigma$ by
	a fresh symbol $X_{a}$;
	\item all individuals $c$ not in $\Sigma\cup \{a\}$ by fresh and distinct individuals $c_{a}$;
	\item the individual $a$ by a fresh individual $m_{a}$. If 
	$a\in \Sigma$ then the CI $\{m_{a}\} \equiv \{a\}$ is added.
\end{itemize}
$\Omc_{\Kmc,\Sigma,b}$ is obtained from $\Kmc$ in the same way by replacing $a$ by $b$.
Observe that an $\Lmc\Omc(\Sigma)$-concept strongly separates $(\Kmc,\{a\},\{b\})$ iff it is an $\Lmc\Omc$-interpolant for the $\Lmc\Omc$-tuple $\Omc_{\Kmc,\Sigma,a}$, $\Omc_{\Kmc,\Sigma,b}$, $m_{a},\neg m_{b}$. If $\Lmc$ contains nominals, then the upper bounds stated in Theorem~\ref{thm:thmtwoexpdl} follow immediately. If $\Lmc$ does not contain nominals, then we may assume that $\Sigma$ does not contain individual names.
Then $\Omc_{\Kmc,\Sigma,a}$ and $\Omc_{\Kmc,\Sigma,b}$ do not share any
individual names and therefore an $\Lmc(\Sigma)$-concept strongly separates $(\Kmc,\{a\},\{b\})$ iff it is an $\Lmc\Omc$-interpolant for the $\Lmc\Omc$-tuple $\Omc_{\Kmc,\Sigma,a}$, $\Omc_{\Kmc,\Sigma,b}$, $m_{a},\neg m_{b}$. Thus, the upper bound follows again. 

Now we come to the lower bounds. We first give a
model-theoretic characterization of strong
$\Lmc$-separability using $\Lmc$-bisimulations. 	

\begin{lemma}\label{thm:alcistrongcrit}
	Let $\Lmc \in \DLS$. Let $(\Kmc,\{a\},\{b\})$ be a labeled $\Lmc$-KB and $\Sigma\subseteq
	\text{sig}(\Kmc)$ a signature.  Then the following conditions are
	equivalent: 
	\begin{enumerate} 
		
		\item $(\Kmc,\{a\},\{b\})$ is strongly $\Lmc(\Sigma)$-separable; 
		
		\item There are no models $\Amf$ and $\Bmf$ of $\Kmc$ such that
		$\Amf,a^{\Amf}\sim_{\Lmc,\Sigma} \Bmf,b^{\Bmf}$.
		
	\end{enumerate} 
	
\end{lemma}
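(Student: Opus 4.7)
The plan is to establish both directions by standard bisimulation-invariance and compactness arguments, reducing everything to Lemma~\ref{lem:equivalence}.

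For the direction $(1)\Rightarrow(2)$, I would argue by contrapositive. Suppose there exist models $\Amf, \Bmf$ of $\Kmc$ with $\Amf,a^\Amf \sim_{\Lmc,\Sigma} \Bmf,b^\Bmf$, and let $C\in\Lmc(\Sigma)$ be any candidate separating concept with $\Kmc\models C(a)$. Then $a^\Amf\in C^\Amf$, and by the ``if''-direction of Lemma~\ref{lem:equivalence} (which holds without the finite-outdegree or $\omega$-saturation assumption) we obtain $b^\Bmf\in C^\Bmf$, so $\Kmc\not\models \neg C(b)$. Hence no concept in $\Lmc(\Sigma)$ strongly $\Sigma$-separates $(\Kmc,\{a\},\{b\})$.

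For $(2)\Rightarrow(1)$, again by contrapositive: assume $(\Kmc,\{a\},\{b\})$ is not strongly $\Lmc(\Sigma)$-separable. Set $T_a = \{C\in\Lmc(\Sigma)\mid \Kmc\models C(a)\}$. The non-separability assumption means that for every $C\in T_a$, $\Kmc\cup\{C(b)\}$ is satisfiable; since $T_a$ is closed under conjunction, compactness yields a model $\Bmf_0$ of $\Kmc\cup\{C(b)\mid C\in T_a\}$. Take an $\omega$-saturated elementary extension $\Bmf$, and set $T_b^{\Bmf} = \{C\in\Lmc(\Sigma)\mid b^\Bmf\in C^\Bmf\}$.

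The key step is to show that $\Kmc\cup\{C(a)\mid C\in T_b^{\Bmf}\}$ is satisfiable. If not, compactness delivers $C_1,\dots,C_n\in T_b^{\Bmf}$ with $\Kmc\models \neg(C_1\sqcap\cdots\sqcap C_n)(a)$, whence $\neg(C_1\sqcap\cdots\sqcap C_n)\in T_a$; but then by construction of $\Bmf$, $b^\Bmf\in\neg(C_1\sqcap\cdots\sqcap C_n)^\Bmf$, contradicting $C_1\sqcap\cdots\sqcap C_n\in T_b^\Bmf$. Now take an $\omega$-saturated model $\Amf$ of $\Kmc\cup\{C(a)\mid C\in T_b^\Bmf\}$. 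Then $T_b^\Bmf\subseteq T_a^\Amf := \{C\in\Lmc(\Sigma)\mid a^\Amf\in C^\Amf\}$, and since both types are complete under negation, they coincide, giving $\Amf,a^\Amf\equiv_{\Lmc,\Sigma}\Bmf,b^\Bmf$. Lemma~\ref{lem:equivalence} applied to the $\omega$-saturated pair then yields $\Amf,a^\Amf\sim_{\Lmc,\Sigma}\Bmf,b^\Bmf$, contradicting~(2).

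The only mildly delicate point is the symmetric compactness step that produces $\Amf$ from~$\Bmf$; it requires carefully using the fact that~$T_a$ already ``forces'' $b^\Bmf$ to satisfy all consequences of~$\Kmc$ at~$a$, which dualises to the corresponding statement for~$a^\Amf$. Everything else is routine, and the argument is uniform in $\Lmc\in\DLS$ because each such $\Lmc$ is closed under conjunction and negation and admits the bisimulation-equivalence correspondence of Lemma~\ref{lem:equivalence}.
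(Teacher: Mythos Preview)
Your proposal is correct and follows exactly the approach the paper has in mind: the paper's own proof reads, in full, ``The proof is straightforward using Lemma~\ref{lem:equivalence},'' and what you have written is precisely the standard way to unpack that---bisimulation invariance for one direction, and a two-step compactness argument with $\omega$-saturated models for the other. There is nothing to add or correct.
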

The proof is straightforward using Lemma~\ref{lem:equivalence}. 
The following result is shown as part of the lower bound proof for interpolant existence in~\cite{AJMOW-AAAI21}:
\begin{theorem}\label{thm:model2exp}
	Let $\Lmc \in \DLS$. For $\Lmc$-ontologies $\Omc$ and database $\Dmc$ of the form $\{R(a,a)\}$ it
	is \TwoExpTime-hard to decide the following: do there exist models $\Amf$ and $\Bmf$ of $(\Omc,\Dmc)$ such that
	$\Amf,a^{\Amf}\sim_{\Lmc,\Sigma} \Bmf,d$ for some $d\not= a^{\Bmf}$.
\end{theorem}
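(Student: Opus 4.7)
The plan is to establish \TwoExpTime-hardness by reduction from a canonical \TwoExpTime-hard problem, following the lower-bound strategy of \cite{AJMOW-AAAI21} in which the present theorem appears as an intermediate step in the hardness argument for Craig interpolant existence. Natural source problems are the word problem of an exponential-space alternating Turing machine (ATM), which is \TwoExpTime-complete, or conservative-extension failure in \ALC, which is \TwoExpTime-complete by \cite{DBLP:conf/kr/GhilardiLW06}.

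Proceeding via ATMs, for input $(\Mmc, w)$ with $\Mmc$ exponential-space bounded I would construct an $\Lmc$-ontology $\Omc$ and a signature $\Sigma \subseteq \text{sig}(\Omc)$ as follows. Take a role name $R$ and an individual name $a$; the database is $\{R(a,a)\}$. The CIs of $\Omc$ force that, in every model, the element $a^\Amf$ is the anchor of an $R$-rooted computation-tree encoding of $\Mmc$ on $w$, with each node carrying a configuration (state, head position, tape content) and the transition relation of $\Mmc$ enforced along successor chains. The key design choice concerns the signature: the concept names in $\Sigma$ expose only the "externally visible" configuration data, while a family of auxiliary concept names \emph{outside} $\Sigma$ records an acceptance/rejection labelling on the tree. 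As a result, the $\Sigma$-bisimulation type of a point determines the configuration at that point, but leaves the acceptance labelling invisible to the bisimulation. I would then show the correspondence: $\Mmc$ accepts $w$ iff there exist models $\Amf, \Bmf$ of $(\Omc, \{R(a,a)\})$ and $d \neq a^\Bmf$ with $\Amf, a^\Amf \sim_{\Lmc,\Sigma} \Bmf, d$. For the forward direction, an accepting run produces $\Amf$ committed to that run's labelling at $a^\Amf$ and $\Bmf$ committed to a suitably permuted labelling at a $\Sigma$-bisimilar off-anchor point $d$, exploiting the branching freedom that acceptance grants at existential states; for the backward direction, the diverging labellings tolerated by the bisimulation are assembled into an accepting run.

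The main obstacle is engineering the ontology so that the bisimulation-at-distinct-points criterion captures ATM acceptance \emph{exactly}, and doing so uniformly across all $\Lmc \in \DLS$. Inverses let the bisimulation travel backwards along the encoded computation and thus demand symmetric configuration invariants; nominals let $\Sigma$ partially identify the anchor via $\{a\}$, which must be prevented by keeping such constants out of $\Sigma$; and when $\Lmc$ has neither inverses nor nominals, the anchor $a$ is combinatorially identifiable only via the fact $R(a,a)$, which the encoding must use as its sole asymmetry. The bulk of the technical work lies precisely in designing CIs that implement all of this simultaneously; the full construction is the one carried out in \cite{AJMOW-AAAI21}, and an independently worked-out variant tailored to this exact model-theoretic phrasing would be of independent interest.
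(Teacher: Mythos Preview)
Your proposal is well-aligned with what the paper does: the paper does not give an independent proof of this theorem but explicitly attributes it to \cite{AJMOW-AAAI21}, stating that the result ``is shown as part of the lower bound proof for interpolant existence'' there. You correctly identify this provenance and sketch the natural ATM-based reduction strategy that underlies such hardness proofs, so your approach is essentially the same as the paper's (namely, defer to \cite{AJMOW-AAAI21}).

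One minor remark: your discussion of the obstacles across different $\Lmc \in \DLS$ (inverses, nominals, the role of $R(a,a)$) is more detailed than anything the paper provides, and while plausible, it is speculative about the internals of a construction the paper does not reproduce. For the purposes of this paper, simply citing \cite{AJMOW-AAAI21} suffices; your additional commentary is reasonable but not something that can be checked against the present paper.
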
	
	\TwoExpTime-hardness of Condition~2 of Lemma~\ref{thm:alcistrongcrit}
	is a direct consequence of Theorem~\ref{thm:model2exp}. For suppose
	that $\Lmc$, $\Kmc=(\Omc,\Dmc)$, $\Sigma$, and $a$ from Theorem~\ref{thm:model2exp} are given. Let $b$ be a fresh individual name and $A_{1},A_{2}$ fresh concept names. Add $A_{1}(a)$ and $A_{2}(b)$ to $\Dmc$ to obtain $\Dmc'$ and add $A_{1} \sqsubseteq \neg A_{2}$ to $\Omc$ to obtain $\Omc'$. 
	Then exist models $\Amf$ and $\Bmf$ of $(\Omc,\Dmc)$ such that
	$\Amf,a^{\Amf}\sim_{\Lmc,\Sigma} \Bmf,d$ for some $d\not= a^{\Bmf}$
	iff there exist models $\Amf$ and $\Bmf$ of $\Kmc$ such that $\Amf,a^{\Amf} \sim_{\Lmc,\Sigma} \Bmf,b^{\Bmf}$.
\section{Proofs for Section~\ref{sec:GF}}

\thmmaingf*

The proof of Theorem~\ref{thm:main-gf} is inspired by the proof of the 
undecidability of conservative extensions and projective conservative
extensions in every extension of the three-variable
fragment GF$^3$ of GF~\cite{JLMSW17}. Unfortunately, it
is not clear how to achieve a direct reduction of conservative
extensions to separability for GF. The relativization that was used 
for languages in $\DLS$ does not work because non-conservativity in 
GF is witnessed by \emph{sentences} while separability is witnessed by \emph{open formulas}.

The proof is by a reduction from the halting problem of two-register
machines. A (deterministic) \emph{two-register machine (2RM)}\index{Two-register
	machine}\index{2RM} is a pair $M=(Q,P)$ with $Q = q_0,\dots,q_{\ell}$
a set of \emph{states} and $P = I_0,\dots,I_{\ell-1}$ a sequence of
\emph{instructions}.  By definition, $q_0$ is the \emph{initial
	state}, and $q_\ell$ the \emph{halting state}. For all $i < \ell$,
\begin{itemize}
	
	\item either $I_i=+(p,q_j)$ is an \emph{incrementation instruction}
	with $p \in \{0,1\}$ a register and $q_j$ the subsequent state;
	
	\item or $I_i=-(p,q_j,q_k)$ is a \emph{decrementation instruction}
	with $p \in \{0,1\}$ a register, $q_j$ the subsequent state if
	register~$p$ contains~0, and $q_k$ the subsequent state otherwise.
	
\end{itemize}
A \emph{configuration} of $M$ is a triple $(q,m,n)$, with $q$ the
current state and $m,n \in \mathbb{N}$ the register contents.  We
write $(q_i,n_1,n_2) \Rightarrow_M (q_j,m_1,m_2)$ if one of the
following holds:
\begin{itemize}
	
	\item $I_i = +(p,q_j)$, $m_p = n_p +1$, and $m_{1-p} =
	n_{1-p}$;
	
	\item $I_i = -(p,q_j,q_k)$, $n_p=m_p=0$, and $m_{1-p} =
	n_{1-p}$;
	
	\item $I_i = -(p,q_k,q_j)$, $n_p > 0$, $m_p = n_p -1$, and
	$m_{1-p} = n_{1-p}$.
	
\end{itemize}
The \emph{computation} of $M$ on input $(n,m) \in \mathbb{N}^2$ is
the unique longest configuration sequence $(p_0,n_0,m_0)
\Rightarrow_M (p_1,n_1,m_1) \Rightarrow_M \cdots$ such that $p_0 =
q_0$, $n_0 = n$, and $m_0 = m$.
The halting problem for 2RMs is to decide, given a 2RM $M$, whether
its computation on input $(0,0)$ is finite (which implies that its
last state is $q_\ell$).

We convert a given 2RM $M$ into a labeled GF$^3$ KB
$(\Kmc,\{a\},\{b\})$, $\Kmc = (\Omc,\Dmc)$ and signature $\Sigma$ such that $M$ halts iff
$(\Kmc,\{a\},\{b\})$ is (non-)projectively GF$(\Sigma)$-separable iff
$(\Kmc,\{a\},\{b\})$ is (non-)projectively $\ALC(\Sigma)$-separable.
Let $M=(Q,P)$ with $Q = q_0,\dots,q_{\ell}$ and
$P = I_0,\dots,I_{\ell-1}$. We assume w.l.o.g.\ that $\ell \geq 1$ and
that if $I_i=-(p,q_j,q_k)$, then $q_j \neq q_k$. In $\Kmc$,
we use the following set of relation symbols:
\begin{itemize}
	
	\item a binary symbol $N$ connecting a configuration to
	its successor configuration;
	
	\item binary symbols $R_{1}$ and $R_{2}$ that represent the
	register contents via the length of paths;
	
	\item unary symbols $q_{0},\ldots,q_{\ell}$ representing the states of $M$;
	
	\item a unary symbol $S$ denoting points where a computation starts.
	
	\item a unary symbol $D$ used to represent that there is 
	some defect;
	
	\item binary symbols $D^{+}_{p},D_{p}^{-},D_{p}^{=}$ used to
	describe defects in incrementing, decrementing, and keeping
	register $p\in \{0,1\}$;
	
	\item ternary symbols
	$H_{1}^{+},H_{2}^{+},H_{1}^{-},H_{2}^{-},H_{1}^{=},H_{2}^{=}$ used
	as guards for existential quantifiers.
	
\end{itemize}
The signature $\Sigma$ consists of the symbols from the first four
points above.

We define the ontology $\Omc$ as the set of several GF$^3$
sentences.\footnote{The formulas that are not syntactically guarded
	can easily be rewritten into such formulas.} The first sentence
initializes the starting configuration:
\[\forall x  (Sx \rightarrow (q_0x \wedge \neg \exists y \, R_0xy
\wedge \neg \exists y \, R_1xy))\]
Second, whenever $M$ is not in the final state, there is a
next configuration with the correctly updated state. For $0 \leq i <
\ell$, we include: 
$$ \begin{array}{l} 
  \forall x (q_ix \rightarrow \exists y\, Nxy) \\[1mm]
  \forall x (q_{i}x \rightarrow \forall y (Nxy \rightarrow q_{j}y))
  \hspace{1.5cm}\text{ if } I_i=+(p,q_j) \\[1mm] 
  \forall x ((q_{i}x \wedge \neg \exists y R_{p}xy) \rightarrow\forall
  y (Nxy \rightarrow q_{j}y))\\[1mm] 
  \hspace{5.6cm}\text{ if } I_i=-(p,q_j,q_k) \\[1mm] 
  \forall x ((q_{i}x \wedge \exists y R_{p}xy)  \rightarrow \forall y
  (Nxy \rightarrow q_{k}y)) \\[1mm] 
  \hspace{5.6cm}\text{ if } I_i=-(p,q_j,q_k) \end{array} 
$$
Moreover, if $M$ is in the final state, there is no successor
configuration:
\[\forall x(q_\ell x\rightarrow \neg \exists y\,Nxy).\]
%
%
The next conjunct expresses that either $M$ does not halt or the
representation of the computation of $M$ contains a defect. It
crucially uses non-$\Sigma$ relation symbols.  It takes the shape of
\[ \forall x \, (Dx \rightarrow \exists y \, (Nxy \wedge \psi xy)) \]
where $\psi xy$ is the following disjunction which ensures that there
is a concrete defect ($D_p^+,D_p^-,D_p^=$) here or some defect ($D$)
in some successor state:
\[ \begin{array}{l}
	D(y) \vee{} \\[2mm]
	\displaystyle \bigvee_{I_i=+(p,q_j)} (q_ix \wedge q_jy \wedge (D^+_pxy
	\vee D^=_{1-p}xy))\vee{}\\[6mm]
	\displaystyle \bigvee_{I_i=-(p,q_j,q_k)} (q_ix \wedge q_ky \wedge
	(D^-_pxy \vee D^=_{1-p}xy))\vee{} \\[6mm]
	\displaystyle \bigvee_{I_i=-(p,q_j,q_k)} (q_ix \wedge q_jy \wedge
	(D^=_pxy \vee D^=_{1-p}xy))
\end{array} \]
Finally, using the ternary symbols we make sure that the defects are
realized, for example, by taking:
\[ \begin{array}{l}
	\forall x \forall y \, \big(D^+_pxy \rightarrow \\[1mm]
	\hspace*{2mm} (\neg \exists z \, R_pyz \vee (\neg \exists z \, R_pxz
	\wedge \exists z \, (R_pyz \wedge \exists x R_pzx))\vee{}\\[2mm]
	\hspace*{3mm} \exists z (H_{1}^{+}xyz \wedge R_{p}xz \wedge \exists x
	(H_{2}^{+}xzy \wedge R_pyx \wedge D^+_{p}zx)))\big).
\end{array} \]
Similar conjuncts implement the desired behaviour of $D^=_p$ and
$D^-_p$; since they are constructed analogously to the last three
lines above (but using guards $H^{-}_j$ and $H^{=}_j$), details are
omitted. 

Finally, we define a database $\Dmc$ by taking 
\[\Dmc = \{S(a),D(a),S(b)\}.\]
Lemmas~\ref{lem:gfundec1} and~\ref{lem:gfundec2} below establish
correctness of the reduction and thus Theorem~\ref{thm:main-gf}. 
\begin{lemma}\label{lem:gfundec1}
	If $M$ halts, then there is an $\ALC(\Sigma)$ concept that
	non-projectively separates $(\Kmc,\{a\},\{b\})$.
	%
	%
	%
	%
	%
	%
\end{lemma}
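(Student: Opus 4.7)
The plan is to exhibit an $\ALC(\Sigma)$-concept $C_\sigma$ describing the canonical halting trace of $M$, and to take $\neg C_\sigma$ as the non-projective $\Sigma$-separator. Let $\sigma = \sigma_0,\sigma_1,\ldots,\sigma_n$ with $\sigma_k = (q_{i_k}, r_0^k, r_1^k)$ be the (finite) computation of $M$ on input $(0,0)$, so $q_{i_0}=q_0$, $r_0^0=r_1^0=0$, and $q_{i_n}=q_\ell$. Define $C_\sigma := C^{(0)}$ where
\[
C^{(k)} := q_{i_k} \sqcap \mathsf{rig}(R_0, r_0^k) \sqcap \mathsf{rig}(R_1, r_1^k) \sqcap \forall N.C^{(k+1)}
\]
for $k<n$ and $C^{(n)} := q_\ell \sqcap \mathsf{rig}(R_0, r_0^n) \sqcap \mathsf{rig}(R_1, r_1^n)$; here
\[
\mathsf{rig}(R,m) := \exists R^m.\top \sqcap \neg \exists R^{m+1}.\top \sqcap \bigsqcap_{i<m} \forall R^i.\exists R.\top
\]
is the $\ALC(\Sigma)$-concept saying that every maximal $R$-chain from the current node has length exactly $m$. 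The $\forall N$-clauses will force the $D$-cascade to remain on canonical successors, while rigidity pins down the $R_p$-chain shape and eliminates the ``short-branch'' escape routes that would otherwise let the recursive defect formulas be spuriously satisfied.

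First, I would verify $\Kmc \not\models \neg C_\sigma(b)$ by constructing a model $\Bmf$ of $\Kmc$ with $b \in C_\sigma^{\Bmf}$. Take $b$ to carry a single linear $N$-chain $b=b_0,\ldots,b_n$ mirroring $\sigma$ step by step, each $b_k$ equipped with a strictly linear $R_p$-chain of length $r_p^k$ and no other $N$- or $R_p$-successors. Since $D(b)\notin\Dmc$, the predicate $D$ together with the defect relations $D^\pm_p, D^=_p$ and the guards $H^\ast_j$ are interpreted as empty on the $b$-side, so every axiom of $\Omc$ is trivially satisfied there. The $a$-side of $\Bmf$ is built independently, with non-rigid register chains that realize the concrete defects needed to satisfy the $D(a)$-cascade. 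By inspection $b_0 \in C_\sigma^{\Bmf}$, as desired.

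The main content is to prove $\Kmc \models \neg C_\sigma(a)$. Suppose for contradiction that $\Amf \models \Kmc$ and $a \in C_\sigma^{\Amf}$. Unwinding the $\forall N$-clauses yields a sequence $a=a_0,\ldots,a_n$ along $N$ with $a_k \in C^{(k),\Amf}$ and, crucially, \emph{every} $N$-successor of $a_k$ lies in $C^{(k+1),\Amf}$. Consequently the $D$-cascade triggered by $D(a_0)$ can only propagate through canonical nodes. Since $\Omc$ forbids $N$-successors at $q_\ell$, the cascade cannot reach $a_n$ purely by propagation of $D$, and some concrete defect $D^+_p(a_k,a_{k+1})$, $D^-_p(a_k,a_{k+1})$, or $D^=_p(a_k,a_{k+1})$ must therefore hold for a suitable $k<n$ and $p\in\{0,1\}$. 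I then unwind the recursive push in the defect formula, using $\mathsf{rig}(R_p,r_p^k)$ at $a_k$ and $\mathsf{rig}(R_p,r_p^{k+1})$ at $a_{k+1}$ to bound each descent: rigidity forces every intermediate node at depth $<r_p^{k+1}$ in $a_{k+1}$'s chain to still have an $R_p$-successor, so the ``no $R_p$-successor at $y$'' disjunct never fires; and the numerical relation between $r_p^k$ and $r_p^{k+1}$ (off by one for increments/decrements, equal for preservation) rules out the ``chain of length $\geq 2$'' disjunct at the base case. All disjuncts are refuted, contradicting the assumed concrete defect.

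The principal obstacle is this final case analysis: for each of $D^+_p$, $D^-_p$, $D^=_p$ one must verify that \emph{every} descent through the rigid chain — not just the longest one — is blocked, since the push may choose an arbitrary $R_p$-successor at each level. This is essentially combinatorial bookkeeping, but its correctness rests on a tight match between the rigidity-pinned chain lengths and the shape of the first two disjuncts of each defect formula; getting this match right is what forces the use of $\mathsf{rig}$ rather than the weaker ``exact longest chain'' predicate.
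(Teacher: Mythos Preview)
Your overall strategy—encode the halting computation as an $\ALC(\Sigma)$-concept and take its negation as the separator—is exactly the paper's approach. The paper phrases it as taking the characteristic concept of the canonical computation structure up to $\ALC(\Sigma)$-bisimulation; crucially, that characteristic concept asserts $\neg q_j$ at each configuration node for every non-canonical state $q_j$. Your explicit $C^{(k)}$ omits these negative state literals, and that is a genuine gap.

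Without state exclusivity, the disjunct of $\psi$ taken at $(a_k,a_{k+1})$ need not be the one for the canonical instruction $I_{i_k}$, so your claim that ``the numerical relation between $r_p^k$ and $r_p^{k+1}$ rules out the base case'' does not apply. Concretely, let $M$ have $I_0 = +(0,q_\ell)$ and $I_1 = +(1,q_\ell)$. The canonical run is $(q_0,0,0)\Rightarrow_M (q_\ell,1,0)$, so your concepts are $C^{(0)}=q_0\sqcap\mathsf{rig}(R_0,0)\sqcap\mathsf{rig}(R_1,0)\sqcap\forall N.\,C^{(1)}$ and $C^{(1)}=q_\ell\sqcap\mathsf{rig}(R_0,1)\sqcap\mathsf{rig}(R_1,0)$. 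Build a model $\Amf$ of $\Kmc$ in which $a$ additionally satisfies $q_1$: nothing in $\Omc$ or in $C^{(0)}$ forbids this, and both $I_0$ and $I_1$ send every $N$-successor to $q_\ell$, so no state conflict arises at $a_1$. The $\psi$-disjunct for $I_1$, namely $q_1(a)\wedge q_\ell(a_1)\wedge D^+_1(a,a_1)$, is now available, and the first disjunct of the $D^+_1$-axiom (``$a_1$ has no $R_1$-successor'') holds because $a_1$ satisfies $\mathsf{rig}(R_1,0)$. Hence $\Amf\models\Kmc$ with $a\in C_\sigma^{\Amf}$, so $\Kmc\not\models\neg C_\sigma(a)$ and $\neg C_\sigma$ fails to separate. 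The fix is to conjoin $\bigsqcap_{j\neq i_k}\neg q_j$ into each $C^{(k)}$; then only the disjunct for $I_{i_k}$ can fire in $\psi(a_k,a_{k+1})$, the defect type is forced to match the canonical register update, and your rigidity analysis goes through as written.
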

%
%
%
%
%
%
\begin{proof}\ The idea is that the separating $\ALC(\Sigma)$ concept
	describes the halting computation of $M$, up to
	$\ALC(\Sigma)$-bisimulations. More precisely, assume that $M$ halts.
	We define an $\ALC(\Sigma)$ concept $C$ such that $\Kmc\models\neg
	C(a)$, but $\Kmc\not\models\neg C(b)$.  Intuitively, $C$ represents
	the computation of $M$ on input $(0,0)$, that is: if the computation
	is $(q_0,n_0,m_0),\dots,(q_k,n_k,m_k)$, then there is an $N$-path of
	length $k$ (but not longer) such that any object reachable in $i
	\leq k$ steps from the beginning of the path is labeled with $q_i$,
	has an outgoing $R_0$-path of length $n_i$ and no longer outgoing
	$R_0$-path, and likewise for $R_1$ and $m_i$. In more detail,
	consider the $\Sigma$-structure $\Amf$ with
	\begin{align*}
		\text{dom}(\Amf) = \{0,\ldots,k\} \cup {} & \{ a_{j}^{i} \mid
		0<i\leq k, 0<j<n_{i}\} \cup{} \\ 
		& \{ b_{j}^{i} \mid 0<i\leq k, 0<j< m_{i}\}
	\end{align*}
	in which
	$$ \begin{array}{rcl}
		N^{\Amf} &=& \{ (i,i+1) \mid i<k\} \\[1mm]
		R_{1}^{\Amf} &=& \bigcup_{i\leq k}\{ (i,a_{1}^{i}),
		(a_{1}^{i},a_{2}^{i}),\ldots,(a_{n_{i}-2}^{i},a_{n_{i}-1}^{i})\}
		\\[1mm]
		R_{2}^{\Amf} &=& \bigcup_{i\leq k}\{ (i,b_{1}^{i}),
		(b_{1}^{i},b_{2}^{i}),\ldots,(b_{m_{i}-2}^{i},b_{m_{i}-1}^{i})\}
		\\[1mm]
		S^{\Amf} &=& \{0\} \\[1mm]
		%
		%
		q^{\Amf} &=& \{ i \mid q_{i}=q\} \text{ for any } q\in Q.
	\end{array} $$
	Then let $C$ be the $\ALC(\Sigma)$ concept that describes $\Amf$
	from the point of $0$ up to $\ALC(\Sigma)$-bisimulations. Clearly,
	$\Kmc\cup\{C(b)\}$ is satisfiable. However, $\Kmc\cup\{C(a)\}$ is
	unsatisfiable since the enforced computation does not contain a
	defect and cannot be extended to have one. In particular, there are
	no $N$-paths of length $>k$ in any model of $\Kmc\cup\{C(a)\}$ and
	there are no defects in register updates in any model of
	$\Kmc\cup\{C(a)\}$.
	%
	%
\end{proof}
The following lemma implies that if $M$ does not halt, then
$(\Kmc,\{a\},\{b\})$ is neither projectively $\Lmc(\Sigma)$-separable
nor non-projectively $\Lmc(\Sigma)$-separable for $\Lmc=\text{GF}$ and
in fact for every logic \Lmc between GF and FO.
\begin{lemma} \label{lem:gfundec2}
	If $M$ does not halt, then for every model $\Amf$ of \Kmc, there is
	a model $\Bmf$ of \Kmc such that $(\Amf,b^\Amf)$ is
	$\Gamma$-ismorphic to $(\Bmf,a^\Bmf)$ where $\Gamma$ consists of all
	symbols except $\text{sig}(\Omc)\setminus \Sigma$.
\end{lemma}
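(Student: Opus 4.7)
Given an arbitrary model $\Amf$ of $\Kmc$, the plan is to let $\Bmf$ share its domain with $\Amf$, agree with $\Amf$ on every relation symbol of $\Gamma$, and swap the constants: $a^\Bmf := b^\Amf$ and $b^\Bmf := a^\Amf$. The identity on $\text{dom}(\Amf)$ is then automatically a $\Gamma$-isomorphism between $(\Amf, b^\Amf)$ and $(\Bmf, a^\Bmf)$, so it remains only to interpret in $\Bmf$ the symbols of $\text{sig}(\Omc) \setminus \Sigma$, i.e., the defect symbols $D, D_p^{+}, D_p^{-}, D_p^{=}$ and the auxiliary ternary guards $H_j^{+}, H_j^{-}, H_j^{=}$, so that $\Bmf \models \Omc \cup \Dmc$. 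After the constant swap the only non-trivial obligation from $\Dmc$ is $b^\Amf \in D^\Bmf$, and the $\Gamma$-only sentences of $\Omc$ are inherited from $\Amf$, so only the defect-related sentences of $\Omc$ remain to verify.

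To satisfy those, I would build a ``defect history'' in $\Bmf$ rooted at $b^\Amf$. Because $b^\Amf \in S^\Amf$ forces $b^\Amf$ into state $q_0$ with empty registers, and because the deterministic state-transition sentences of $\Omc$ together with the binary register information $\exists y\, R_p xy$ uniquely determine the state at each successor along any $N$-chain from $b^\Amf$ in $\Amf$, I would pick such an $N$-chain and add each visited element to $D^\Bmf$, propagating $D$ via the $D(y)$-disjunct of $\psi$ as long as the sequence of tuples $(\text{state}, \exists y\, R_0 xy, \exists y\, R_1 xy)$ along the chain coincides with the binary abstraction of $M$'s actual computation from $(q_0, 0, 0)$. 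If this ``binary honesty'' persists indefinitely then, since $M$ does not halt, no element along the chain lies in $q_\ell^\Bmf$, so $D$ can be propagated forever and all other defect and guard symbols can be left empty.

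Otherwise, the first divergence from $M$ occurs at some step $k$, and it takes one of finitely many canonical forms (a missing or spurious $R_p$-successor at an increment or decrement, or a failure to preserve the other register across the $N$-edge). Each form is covered by at least one concrete defect disjunct of $\psi$ at $(x_{k-1}, x_k)$; I would set the corresponding $D_p^{+}, D_p^{-}$ or $D_p^{=}$ on the critical pair and select that disjunct of $\psi$ at $x_{k-1}$, and then argue that the mismatch itself realizes the first (non-propagating) disjunct of the chosen defect axiom in $\Amf$'s $\Gamma$-structure, so the chain terminates cleanly with the $H$-guards left empty.

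The main obstacle I expect is precisely the case analysis just sketched: one must verify that every possible first divergence is genuinely covered by a concrete defect disjunct of $\psi$ whose axiom is discharged by its terminating disjunct rather than forcing further propagation through the $H$-guarded existentials, and that no element outside the constructed chain is forced into $D^\Bmf$ or into any of $D_p^{+}, D_p^{-}, D_p^{=}$. The assumption that $M$ does not halt on $(0,0)$ is used essentially in the second paragraph, to rule out the single problematic scenario in which the chain would reach $q_\ell$ while remaining faithful to $M$---a case that would leave no defect to exhibit and the obligation $b^\Amf \in D^\Bmf$ unsatisfiable.
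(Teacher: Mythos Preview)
Your overall architecture matches the paper: keep the domain and all $\Gamma$-relations, put $a^\Bmf=b^\Amf$, and then re-interpret the defect symbols so that the axiom $\forall x\,(Dx\to\exists y(Nxy\wedge\psi xy))$ is satisfied along an $N$-chain from $b^\Amf$. The paper does exactly this, inductively extending $D^\Bmf$ along an $N$-chain and, when possible, terminating by populating the defect and guard symbols.

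There is, however, a genuine gap in your termination criterion. You propagate $D$ as long as the \emph{binary abstraction} $(\text{state},\exists y\,R_0xy,\exists y\,R_1xy)$ agrees with $M$'s run, and at the first binary divergence you try to set a concrete defect on $(x_{k-1},x_k)$. This can fail: the first binary divergence need not be a local defect. Concretely, suppose $I_0=+(0,q_1)$ and $I_1=-(0,q_{j},q_2)$, with $M$'s run $(q_0,0,0)\Rightarrow(q_1,1,0)\Rightarrow(q_2,0,0)\Rightarrow\cdots$, and the chain from $b^\Amf$ in $\Amf$ has $R_0$-values $0,3,2$ and $R_1$-values $0,0,0$ at steps $0,1,2$. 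The binary abstraction agrees at steps $0,1$ and first diverges at step $2$ (model has $R_0$ nonzero, $M$ has it zero). But the transition from step~$1$ to step~$2$ is a \emph{correct} decrement ($3\to 2$) and a correct preservation of register~$1$ ($0\to 0$), so neither $D_0^-$ nor $D_1^=$ can be satisfied on $(x_1,x_2)$: the defect axioms for $D_p^{+},D_p^{-},D_p^{=}$ encode a genuine mismatch of the \emph{model's} $R_p$-chains at $x$ and $y$, not a mismatch with $M$'s values. The actual local defect sits earlier, at $(x_0,x_1)$ (increment $0\to 3$), which is where the paper's procedure---checking at each step whether \emph{some} $N$-successor has incorrectly updated counters---would stop.

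Relatedly, your claim that one can always pick a non-propagating disjunct and leave the $H$-guards empty is also not correct. Even at the first genuine local defect the mismatch may only be visible after descending several levels in the $R_p$-chains (e.g.\ increment $2\to 5$), which forces you to set $D_p^{+}$ on a chain of pairs together with the ternary guards $H_1^{+},H_2^{+}$. The paper accordingly says it sets ``the extensions of the symbols in $D_p^{+},D_p^{-},D_p^{=},H_1^{+},H_2^{+},H_1^{-},H_2^{-},H_1^{=},H_2^{=}$ so as to represent the defect''. Replacing your binary-abstraction criterion by the paper's local-defect criterion, and allowing the guard symbols to be populated along the $R_p$-chains, repairs both issues.
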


\begin{proof}
  \ Let $\Amf$ be a model of \Kmc. We obtain $\Bmf$ from \Amf by
	re-interpreting $a^\Bmf=b^\Amf$ and inductively defining the
	extensions of the symbols from 
	\begin{align*}
	  \text{sig}(\Omc)\setminus\Sigma = \{D,D^+_p,&D^-_p,D^=_p,\\
	  & H_1^+,H_2^+,H_1^-,H_2^-,H_1^=,H_2^=\}.
	\end{align*}
	We start with $D^\Bmf = \{a^\Bmf\}$ and $X^\Bmf=\emptyset$ for all
	other symbols $X$ from $\text{sig}(\Omc)\setminus \Sigma$. Then,
	whenever $d\in D^\Bmf$ we distinguish two cases: 
	\begin{itemize}
		
		\item If there is an $N$-successor $e$ of $d$ such that the
		counters below $d$ and $e$ are not correctly updated with
		respect to the states at $d,e$, set the extensions of the
		symbols in
		$D^+_p,D^-_p,D^=_p,H_1^+,H_2^+,H_1^-,H_2^-,H_1^=,H_2^=$ so as to
		represent the defect and finish the construction of $\Bmf$.
		
		\item Otherwise, choose an $N$-successor $e$ of $d$ and add
		$e$ to $D^\Bmf$.
		
	\end{itemize}
	Note that, since $M$ does not halt, we can always find such an
	$N$-successor as in the second item. 
\end{proof}

%
%

\thmgffo*

The proof of Theorem~\ref{thm:gffotwo} relies on the encoding of GF and FO$^{2}$-KBs into formulas in GF and FO$^{2}$.

For Points~1 and 2 we can use the extensions of GF and FO$^{2}$ with constants as we only require Craig interpolants in FO and the upper bounds for satisfiability of GF and FO$^{2}$ still hold for their extensions with constants. We start with GF. Assume a labeled GF-KB $(\Kmc,\{a\},\{b\})$ with $\Kmc=(\Omc,\Dmc)$ is given. Let $\Sigma$ be a signature.
Obtain $\varphi_{\Kmc,\Sigma,a}'$ from 
$\Kmc$ by
\begin{itemize}
	\item replacing all relation symbols $R\not\in\Sigma$ by fresh relation symbols $R_{a}$;
	\item replacing all individual names $c\not\in\Sigma\cup \{a\}$ by fresh individual names $c_{a}$;
	\item replacing $a$ by a fresh variable $x$ and adding $x=a$ if $a\in \Sigma$;
\end{itemize} 
and taking the conjunction of the resulting set of formulas.
Define $\varphi_{\Kmc,\Sigma,b}'$ in the same way but with $a$ replaced
by $b$. Then an FO-formula $\varphi$ strongly $\Sigma$-separates $(\Kmc,\{a\},\{b\})$ iff $\varphi$ is an FO-interpolant for $\varphi_{\Kmc,\sigma,a}',\neg\varphi_{\Kmc,\Sigma,b}'$, and we have proved Point~1. For Point~2, observe that $\varphi_{\Kmc,\sigma,a}',\varphi_{\Kmc,\Sigma,b}'$ are in FO$^{2}$
if $\Kmc$ is an FO$^{2}$-KB. Thus, the argument applies to FO$^{2}$ as well and we have proved Point~2.
	
For Points~3 and 4 we assume that $\Sigma$ is a relational signature and as we aim to apply results on the complexity of interpolant existence that have been proved for GF and FO$^{2}$ \emph{without} constants we cannot use constants in the construction of the encodings.

Assume a labeled GF-KB $(\Kmc,\{a\},\{b\})$ with $\Kmc=(\Omc,\Dmc)$ is given and $\Sigma$ is a relational signature. Consider
the formulas $\varphi_{\Kmc,\Sigma,a}$ and $\varphi_{\Kmc,\Sigma,b}$ from the main paper (defined in the obvious way for GF). To obtain GF-formulas $\varphi^{\text{GF}}_{\Kmc,\Sigma,a}$ and
$\varphi^{\text{GF}}_{\Kmc,\Sigma,b}$, take
fresh relation symbols $R_{\Dmc,a}$ and $R_{\Dmc,b}$
of arity $n$, where $n$ is the number of individuals
in $\Dmc$. Then add $R_{\Dmc,a}(\vec{y})$ to
$\Kmc_{\Sigma,a}$ when constructing
$\varphi_{\Kmc,\Sigma,a}(x)$, where $\vec{y}$ is an enumeration 
of the variables in $\Kmc_{\Sigma,a}$. 
Do the same to construct $\varphi^{\text{GF}}_{\Kmc,\Sigma,b}(x)$,
using $R_{\Dmc,b}$ instead of $R_{\Dmc,a}$.
The formulas $\varphi^{\text{GF}}_{\Kmc,\Sigma,a}$ and $\neg\varphi^{\text{GF}}_{\Kmc,\Sigma,b}$ are in GF and play the same role
as the formulas $\varphi_{\Kmc,\Sigma,a}$ and $\varphi_{\Kmc,\Sigma,b}$. In particular, for any formula $\varphi$ the following are equivalent:
\begin{enumerate}
	\item $\varphi$ strongly $\Sigma$-separates $(\Kmc,\{a\},\{b\})$;
	\item $\varphi$ is an interpolant for $\varphi^{\text{GF}}_{\Kmc,\Sigma,a}(x),\neg \varphi^{\text{GF}}_{\Kmc,\Sigma,b}(x)$.
\end{enumerate}	
The complexity upper bound now follows from the result that interpolant existence in GF is decidable in 3\ExpTime \cite{jung2020living}. 
For FO$^{2}$ we proceed as follows. We introduce for every individual name $c$
in $\Dmc$ a unary relation symbol $A_{c}$ and encode $\Dmc$ using the sentences
$\exists x A_{c}(x)\wedge \forall x\forall y (A_{c}(x) \wedge A_{c}(y) \rightarrow x=y)$,
for $c\in \text{ind}(\Dmc)$, and 
\begin{itemize}
	\item $\exists x \exists y R(x,y) \wedge A_c(x) \wedge A_{c'}(y)$ for every $R(c,c')\in \Dmc$;
	\item $\exists x A(x) \wedge A_c(x)$ for every $A(c)\in \Dmc$.
\end{itemize}
Let $\varphi^{2}_{\Kmc,\Sigma,a}$ be the conjunction of the sentences above, the sentences in $\Omc$, and the formula $A_{a}(x)$, where we also replace all relation symbols in $\Kmc$ that are not in $\Sigma$ by fresh relation symbols $R_{a}$.
Define $\varphi^{2}_{\Kmc,\Sigma,b}$ in the same way with $a$ replaced by $b$
and where the unary relation symbols $A_{c}'$ used to encode individuals $c$ are disjoint from the unary relation symbols used for this purpose in $\varphi^{2}_{\Kmc,\Sigma,a}$. Then we have that $\varphi^{2}_{\Kmc,\Sigma,a}$ and $\varphi^{2}_{\Kmc,\Sigma,b}$ are in FO$^{2}$ and a formula $\varphi$ strongly $\Sigma$-separates $(\Kmc,\{a\},\{b\})$ iff it is an interpolant for $\varphi_{\Kmc,\Sigma,a}^{2}(x),\neg \varphi_{\Kmc,\Sigma,b}^{2}(x)$.
The complexity upper bound now follows from the result that interpolant 
non-existence in FO$^{2}$ is decidable in {\sc N}\TwoExpTime \cite{jung2020living}.  

The complexity lower bounds can be proved by generalizing in a straightforward way the reductions from interpolant existence to the existence of strongly separating formulas from the languages in $\DLS$ in the previous section to GF and FO$^{2}$ and using the complexity lower bounds for interpolant existence proved in~\cite{jung2020living}.

\cleardoublepage


\begin{thebibliography}{}
	
	\bibitem[\protect\citeauthoryear{Andr{\'{e}}ka, N{\'{e}}meti, and {van
			Benthem}}{1998}]{ANvB98}
	Andr{\'{e}}ka, H.; N{\'{e}}meti, I.; and {van Benthem}, J.
	\newblock 1998.
	\newblock Modal languages and bounded fragments of predicate logic.
	\newblock {\em J. Philos. Log.} 27(3):217--274.
	
	\bibitem[\protect\citeauthoryear{Arenas, Diaz, and
		Kostylev}{2016}]{DBLP:conf/www/ArenasDK16}
	Arenas, M.; Diaz, G.~I.; and Kostylev, E.~V.
	\newblock 2016.
	\newblock Reverse engineering {SPARQL} queries.
	\newblock In {\em Proc.\ of {WWW}},  239--249.
	
	\bibitem[\protect\citeauthoryear{Artale \bgroup et al\mbox.\egroup
	}{2021}]{AJMOW-AAAI21}
	Artale, A.; Jung, J.~C.; Mazzullo, A.; Ozaki, A.; and Wolter, F.
	\newblock 2021.
	\newblock Living without {B}eth and {C}raig: Definitions and interpolants in
	description logics with nominals and role inclusions.
	\newblock In {\em Proc.\ of AAAI}.
	
	\bibitem[\protect\citeauthoryear{Baader \bgroup et al\mbox.\egroup
	}{2017}]{DL-Textbook}
	Baader, F.; Horrocks, I.; Lutz, C.; and Sattler, U.
	\newblock 2017.
	\newblock {\em An Introduction to Description Logics}.
	\newblock Cambride University Press.
	
	\bibitem[\protect\citeauthoryear{Borgida, Toman, and
		Weddell}{2016}]{DBLP:conf/kr/BorgidaTW16}
	Borgida, A.; Toman, D.; and Weddell, G.~E.
	\newblock 2016.
	\newblock On referring expressions in query answering over first order
	knowledge bases.
	\newblock In {\em Proc. of {KR}},  319--328.
	
	\bibitem[\protect\citeauthoryear{Botoeva \bgroup et al\mbox.\egroup
	}{2016}]{DBLP:conf/rweb/BotoevaKLRWZ16}
	Botoeva, E.; Konev, B.; Lutz, C.; Ryzhikov, V.; Wolter, F.; and Zakharyaschev,
	M.
	\newblock 2016.
	\newblock Inseparability and conservative extensions of description logic
	ontologies: {A} survey.
	\newblock In {\em Proc. \ of {R}easoning {W}eb},  27--89.
	\newblock Springer.
	
	\bibitem[\protect\citeauthoryear{Chandra, Kozen, and
		Stockmeyer}{1981}]{chandraAlternation1981}
	Chandra, A.~K.; Kozen, D.~C.; and Stockmeyer, L.~J.
	\newblock 1981.
	\newblock Alternation.
	\newblock {\em J. ACM} 28:114--133.
	
	\bibitem[\protect\citeauthoryear{Chang and Keisler}{1998}]{modeltheory}
	Chang, C., and Keisler, H.~J.
	\newblock 1998.
	\newblock {\em Model Theory}.
	\newblock Elsevier.
	
	\bibitem[\protect\citeauthoryear{Comer}{1969}]{comer1969}
	Comer, S.~D.
	\newblock 1969.
	\newblock Classes without the amalgamation property.
	\newblock {\em Pacific J. Math.} 28(2):309--318.
	
	\bibitem[\protect\citeauthoryear{Craig}{1957}]{craig}
	Craig, W.
	\newblock 1957.
	\newblock Three uses of the herbrand-gentzen theorem in relating model theory
	and proof theory.
	\newblock {\em J. Symb. Log.} 22(3):269--285.
	
	\bibitem[\protect\citeauthoryear{Funk \bgroup et al\mbox.\egroup
	}{2019}]{DBLP:conf/ijcai/FunkJLPW19}
	Funk, M.; Jung, J.~C.; Lutz, C.; Pulcini, H.; and Wolter, F.
	\newblock 2019.
	\newblock Learning description logic concepts: When can positive and negative
	examples be separated?
	\newblock In {\em Proc. of {IJCAI}},  1682--1688.
	
	\bibitem[\protect\citeauthoryear{Ghilardi, Lutz, and
		Wolter}{2006}]{DBLP:conf/kr/GhilardiLW06}
	Ghilardi, S.; Lutz, C.; and Wolter, F.
	\newblock 2006.
	\newblock Did {I} damage my ontology? {A} case for conservative extensions in
	description logics.
	\newblock In {\em Proc.\ of {KR}},  187--197.
	\newblock {AAAI} Press.
	
	\bibitem[\protect\citeauthoryear{Goranko and Otto}{2007}]{goranko20075}
	Goranko, V., and Otto, M.
	\newblock 2007.
	\newblock Model theory of modal logic.
	\newblock In {\em Handbook of Modal Logic}. Elsevier.
	\newblock  249--329.
	
	\bibitem[\protect\citeauthoryear{Gr{\"{a}}del, Kolaitis, and
		Vardi}{1997}]{DBLP:journals/bsl/GradelKV97}
	Gr{\"{a}}del, E.; Kolaitis, P.~G.; and Vardi, M.~Y.
	\newblock 1997.
	\newblock On the decision problem for two-variable first-order logic.
	\newblock {\em Bull. Symb. Log.} 3(1):53--69.
	
	\bibitem[\protect\citeauthoryear{Gr{\"{a}}del}{1999}]{DBLP:journals/jsyml/Gradel99}
	Gr{\"{a}}del, E.
	\newblock 1999.
	\newblock On the restraining power of guards.
	\newblock {\em J. Symb. Log.} 64(4):1719--1742.
	
	\bibitem[\protect\citeauthoryear{Grau \bgroup et al\mbox.\egroup
	}{2008}]{DBLP:journals/jair/GrauHKS08}
	Grau, B.~C.; Horrocks, I.; Kazakov, Y.; and Sattler, U.
	\newblock 2008.
	\newblock Modular reuse of ontologies: Theory and practice.
	\newblock {\em J. Artif. Intell. Res.} 31:273--318.
	
	\bibitem[\protect\citeauthoryear{Guti{\'e}rrez-Basulto, Jung, and
		Sabellek}{2018}]{GuJuSa-IJCAI18}
	Guti{\'e}rrez-Basulto, V.; Jung, J.~C.; and Sabellek, L.
	\newblock 2018.
	\newblock Reverse engineering queries in ontology-enriched systems: The case of
	expressive {Horn} description logic ontologies.
	\newblock In {\em Proc. of {IJCAI-ECAI}}.
	
	\bibitem[\protect\citeauthoryear{Hernich \bgroup et al\mbox.\egroup
	}{2020}]{DBLP:journals/tocl/HernichLPW20}
	Hernich, A.; Lutz, C.; Papacchini, F.; and Wolter, F.
	\newblock 2020.
	\newblock Dichotomies in ontology-mediated querying with the guarded fragment.
	\newblock {\em {ACM} Trans. Comput. Log.} 21(3):20:1--20:47.
	
	\bibitem[\protect\citeauthoryear{Hoder \bgroup et al\mbox.\egroup
	}{2012}]{DBLP:conf/aplas/HoderHKV12}
	Hoder, K.; Holzer, A.; Kov{\'{a}}cs, L.; and Voronkov, A.
	\newblock 2012.
	\newblock Vinter: {A} vampire-based tool for interpolation.
	\newblock In Jhala, R., and Igarashi, A., eds., {\em Proc.\ of {APLAS}},
	148--156.
	\newblock Springer.
	
	\bibitem[\protect\citeauthoryear{Hoogland and
		Marx}{2002}]{DBLP:journals/sLogica/HooglandM02}
	Hoogland, E., and Marx, M.
	\newblock 2002.
	\newblock Interpolation and definability in guarded fragments.
	\newblock {\em Stud. Log.} 70(3):373--409.
	
	\bibitem[\protect\citeauthoryear{Jung and Wolter}{2021}]{jung2020living}
	Jung, J.~C., and Wolter, F.
	\newblock 2021.
	\newblock Living without {B}eth and {C}raig: Definitions and interpolants in
	the guarded and two-variable fragments.
	\newblock In {\em Proc. \ of LICS}.
	
	\bibitem[\protect\citeauthoryear{Jung \bgroup et al\mbox.\egroup
	}{2017}]{JLMSW17}
	Jung, J.; Lutz, C.; Martel, M.; Schneider, T.; and Wolter, F.
	\newblock 2017.
	\newblock Conservative extensions in guarded and two-variable fragments.
	\newblock In {\em Proc.\ of {ICALP}},  108:1--108:14.
	\newblock Schloss Dagstuhl -- LZI.
	
	\bibitem[\protect\citeauthoryear{Jung \bgroup et al\mbox.\egroup }{2020}]{KR}
	Jung, J.~C.; Lutz, C.; Pulcini, H.; and Wolter, F.
	\newblock 2020.
	\newblock Logical separability of incomplete data under ontologies.
	\newblock In {\em Proc.\ of {KR}}.
	
	\bibitem[\protect\citeauthoryear{Jung, Lutz, and Wolter}{2020}]{aaaithis}
	Jung, J.~C.; Lutz, C.; and Wolter, F.
	\newblock 2020.
	\newblock Least general generalizations in description logic: Verification and
	existence.
	\newblock In {\em Proc. of {AAAI}},  2854--2861.
	\newblock {AAAI} Press.
	
	\bibitem[\protect\citeauthoryear{Jung, Lutz, and Zeume}{2020}]{JLZ-KR20}
	Jung, J.~C.; Lutz, C.; and Zeume, T.
	\newblock 2020.
	\newblock On the decidability of expressive description logics with transitive
	closure and regular role expressions.
	\newblock In {\em Proc.\ of KR}.
	
	\bibitem[\protect\citeauthoryear{Konev \bgroup et al\mbox.\egroup
	}{2009}]{KonevLWW09}
	Konev, B.; Lutz, C.; Walther, D.; and Wolter, F.
	\newblock 2009.
	\newblock Formal properties of modularisation.
	\newblock In {\em Modular Ontologies}, volume 5445 of {\em Lecture Notes in
		Computer Science}. Springer.
	\newblock  25--66.
	
	\bibitem[\protect\citeauthoryear{Krahmer and van
		Deemter}{2012}]{DBLP:journals/coling/KrahmerD12}
	Krahmer, E., and van Deemter, K.
	\newblock 2012.
	\newblock Computational generation of referring expressions: {A} survey.
	\newblock {\em Comput. Linguist.} 38(1):173--218.
	
	\bibitem[\protect\citeauthoryear{Lehmann and
		Hitzler}{2010}]{DBLP:journals/ml/LehmannH10}
	Lehmann, J., and Hitzler, P.
	\newblock 2010.
	\newblock Concept learning in description logics using refinement operators.
	\newblock {\em Mach. Learn.} 78:203--250.
	
	\bibitem[\protect\citeauthoryear{Lutz, Piro, and Wolter}{2011}]{TBoxpaper}
	Lutz, C.; Piro, R.; and Wolter, F.
	\newblock 2011.
	\newblock Description logic {T}{B}oxes: Model-theoretic characterizations and
	rewritability.
	\newblock In {\em Proc. of {IJCAI}}.
	
	\bibitem[\protect\citeauthoryear{Lutz, Walther, and
		Wolter}{2007}]{DBLP:conf/ijcai/LutzWW07}
	Lutz, C.; Walther, D.; and Wolter, F.
	\newblock 2007.
	\newblock Conservative extensions in expressive description logics.
	\newblock In {\em Proc.\ of {IJCAI}},  453--458.
	
	\bibitem[\protect\citeauthoryear{Maksimova and Gabbay}{2005}]{GabMaks}
	Maksimova, L., and Gabbay, D.
	\newblock 2005.
	\newblock {\em Interpolation and Definability in Modal and Intuitionistic
		Logics}.
	\newblock Clarendon Press.
	
	\bibitem[\protect\citeauthoryear{Martins}{2019}]{martins2019reverse}
	Martins, D. M.~L.
	\newblock 2019.
	\newblock Reverse engineering database queries from examples: State-of-the-art,
	challenges, and research opportunities.
	\newblock {\em Inf. Syst.}
	
	\bibitem[\protect\citeauthoryear{Ortiz}{2019}]{DBLP:conf/gcai/Ortiz19}
	Ortiz, M.
	\newblock 2019.
	\newblock Ontology-mediated queries from examples: a glimpse at the {DL-Lite}
	case.
	\newblock In {\em Proc. of {GCAI}},  1--14.
	
	\bibitem[\protect\citeauthoryear{Petrova \bgroup et al\mbox.\egroup
	}{2017}]{DBLP:conf/semweb/PetrovaSGH17}
	Petrova, A.; Sherkhonov, E.; Grau, B.~C.; and Horrocks, I.
	\newblock 2017.
	\newblock Entity comparison in {RDF} graphs.
	\newblock In {\em Proc.\ of {ISWC}},  526--541.
	
	\bibitem[\protect\citeauthoryear{Petrova \bgroup et al\mbox.\egroup
	}{2019}]{DBLP:conf/semweb/PetrovaKGH19}
	Petrova, A.; Kostylev, E.~V.; Grau, B.~C.; and Horrocks, I.
	\newblock 2019.
	\newblock Query-based entity comparison in knowledge graphs revisited.
	\newblock In {\em Proc.\ of {ISWC}},  558--575.
	\newblock Springer.
	
	\bibitem[\protect\citeauthoryear{Pigozzi}{1971}]{Pigozzi71}
	Pigozzi, D.
	\newblock 1971.
	\newblock Amalgamation, congruence-extension, and interpolation properties in
	algebras.
	\newblock {\em Algebra Univers.} (1):269--349.
	
	\bibitem[\protect\citeauthoryear{Segoufin and ten
		Cate}{2013}]{DBLP:journals/corr/SegoufinC13}
	Segoufin, L., and ten Cate, B.
	\newblock 2013.
	\newblock Unary negation.
	\newblock {\em Log. Methods Comput. Sci.} 9(3).
	
	\bibitem[\protect\citeauthoryear{ten Cate, Franconi, and
		Seylan}{2013}]{TenEtAl13}
	ten Cate, B.; Franconi, E.; and Seylan, I.
	\newblock 2013.
	\newblock Beth definability in expressive description logics.
	\newblock {\em J. Artif. Intell. Res.} 48:347--414.
	
	\bibitem[\protect\citeauthoryear{Vardi}{1998}]{DBLP:conf/icalp/Vardi98}
	Vardi, M.~Y.
	\newblock 1998.
	\newblock Reasoning about the past with two-way automata.
	\newblock In {\em Proc. of ICALP},  628--641.
	
\end{thebibliography}
\end{document}